\newcommand{\sid}{1_\sigma}
\newcommand{\pid}{1_\pi}
\newcommand{\pidc}{\overline{1}_\pi}
\newcommand{\para}{\|}
\newcommand{\M}{\mathscr{M}}
\newcommand{\V}{\mathscr{V}}
\newcommand{\Su}{\mathscr{S}}
\newcommand{\N}{\mathscr{N}}
\newcommand{\T}{\mathscr{T}}
\newcommand{\U}{\mathscr{U}}
\newtheorem{theorem}{Theorem}[section]
\newtheorem{proposition}[theorem]{Proposition}
\newtheorem{lemma}[theorem]{Lemma}
\newtheorem{corollary}[theorem]{Corollary}
\theoremstyle{definition}
\begin{document}

\title{Taming Multirelations}

\author{Hitoshi Furusawa\\Kagoshima University,
    Japan\and Georg Struth\\University of Sheffield, UK}

\maketitle 

\begin{abstract}
  Binary multirelations generalise binary relations by associating
  elements of a set to its subsets. We study the structure and algebra
  of multirelations under the operations of union, intersection,
  sequential and parallel composition, as well as finite and infinite
  iteration. Starting from a set-theoretic investigation, we propose
  axiom systems for multirelations in contexts ranging from bi-monoids
  to bi-quantales.
\end{abstract}

\pagestyle{plain}


\section{Introduction}\label{S:introduction}

Multirelations generalise binary relations in that elements of a given
set are not related to single elements of that set, but to its
subsets. Hence a multirelation over a set $X$ is a relation of type
$X\times 2^X$ and elements of a multirelation are pairs $(a,B)$ with
$a\in X$ and $B\subseteq X$. Multirelations have found applications in
program semantics and models or logics for games that require
alternation~\cite{ChandraKozenStockmeyer}, that is, two dual kinds of
nondeterministic choice. The first one occurs within individual pairs
$(a,B)$ when selecting an output $b\in B$ to an input $a$; the second
one occurs between pairs $(a,B)$ and $(a,C)$ when selecting an output
$B$ or $C$ to an input $a$.

Multirelations occur, in slightly generalised form, as transition
relations of alternating
automata~\cite{BrzozowskiL80,ChandraKozenStockmeyer} with universal
choices $(a,\sigma,B)$ forcing simultaneous moves from state $a$ along
$\sigma$-labelled edges into all states in $B$, whereas existential
choices between $(a,\sigma, B)$ and $(a,\sigma,C)$ for instance, allow
the selection of a transition from $a$ into either $B$ or $C$ labelled
with $\sigma$.  Similarly, in two-player games, moves $(a,B)$ are made
by an antagonist, and a protagonist must be prepared to play from each
state in $B$; whereas a protagonist can select between moves $(a,B)$
and
$(a,C)$~\cite{BenthemGL08,ChandraKozenStockmeyer,Parikh83,Parikh85,PaulyParikh03}. Otherwise,
in multirelational semantics of computing
systems~\cite{Back98,CavalcantiWD06,MartinCR07,Rewitzky03,RewitzkyB06}
choices $(a,B)$ are controlled by the environment (the system must be
correct under all of them), whereas choices between $(a,B)$ and
$(a,C)$ are controlled by the system (the system must be correct under
at least one of them). Moreover, in the semantics of concurrent
dynamic
logic~\cite{Goldblatt92,NerodeWijesekera90,Peleg87,Peleg87a,WijesekeraNerode05},
an element $(a,B)$ indicates an execution of a parallel program or
component starting in state $a$ and terminating in the states within
$B$, whereas $(a,B)$ and $(a,C)$ correspond to different
executions. In these contexts, the first kind of choice is often
called \emph{external}, \emph{universal} or \emph{demonic}; the second
one is known as \emph{internal}, \emph{existential} or \emph{angelic}.

To reason about systems modelled by multirelations, modal logics such
as game logics~\cite{Parikh83,Parikh85} and concurrent dynamics
logics~\cite{Goldblatt92,NerodeWijesekera90,Peleg87,Peleg87a,WijesekeraNerode05}
have been developed. Despite its obvious significance, however, the
algebra of multirelations itself has so far received rather limited
attention---except for the special case of up-closed
multirelations~\cite{FurusawaNT09,MartinCR07,NishizawaTF09,Rewitzky03,RewitzkyB06},
which are isomorphic to monotone predicate
transformers~\cite{Parikh83}. The general algebra of multirelations,
which forms the semantics of concurrent dynamics logics and captures
alternation in its purest form, remains to be investigated in depth.

A first step towards taming multirelations has been an algebraic
reconstruction of Peleg's concurrent dynamic logic from a minimalist
axiomatic basis~\cite{FurusawaS14,FurusawaS15}.  This article takes the
next steps towards axiom systems for multirelations in the spirit of
Tarski's relation algebra (cf.~\cite{Maddux}). Though there are many
similarities, this is not straightforward: the sequential composition
of multirelations, for instance, is more complex than its relational
counterpart and not associative; the parallel composition of
multirelations has no relational counterpart; the relational converse
makes no sense for multirelations. Thus alternative axioms,
definitions and proofs for multirelational concepts are required.

Our main contributions are as follows.  To obtain a basis for our
axiom systems, we extend our previous investigation of multirelational
properties by new interaction laws between sequential and parallel
composition and an explicit definition of a multirelational domain
operation. We also identify several subclasses of
multirelations---sequential and parallel subidentities, vectors or
terminal multirelations, nonterminal multirelations---and mappings
between these classes (cf. Figure~\ref{F:magictriangle} and
\ref{F:extendeddiagram}).  Based on these set-theoretic preliminaries
we axiomatise multirelations at four different layers.

First we expand weak bi-monoids with operations for sequential and
parallel composition by sequentiality-parallelism interaction axioms.
We define a domain operation explicitly on these c-monoids and show
that the domain elements form a sub-semilattice in which sequential
and parallel composition coincide. We show that previous domain axioms
for sequential monoids \cite{DesharnaisJS09} and concurrent
monoids \cite{FurusawaS14} are derivable in this setting.

Second, we expand c-monoids to c-trioids by adding an operation of
internal choice and providing another interaction axiom. The
domain elements now form a distributive sublattice. All concurrent dynamic
algebra axioms~\cite{Peleg87,FurusawaS14} become derivable in the
presence of a Kleene star and a few more simple multirelational
properties.

Third, we study bounded distributive lattices with operations of
sequential and parallel composition, to which we add different
sequentiality-parallelism interaction axioms. The resulting c-lattices
are also c-trioids, and a large number of laws can be derived in this
setting. In particular, we prove that the algebras of subidentities
and vectors are isomorphic (only sequential composition is usually not
preserved) and characterise the subalgebras of these elements as well
as that of nonterminal elements.  In the latter, in fact, we find
greater similarity to binary relations.

Finally, we consider notions of finite and infinite iteration over
multirelations in an expansion of c-lattices to c-quantales. Due to
the lack of distributivity and associativity laws in algebras of
multirelations, our results are weaker than those for relations.

Because of the complexity of reasoning with multirelations and the
task of minimising algebraic axiom systems, we have formalised all
structures and proofs with the Isabelle/HOL theorem
prover~\cite{NipkowPW02}.  Our investigations are therefore also a
study in formalised mathematics. Our Isabelle theories and a detailed
proof document can be found online in the \emph{Archive of Formal
  Proofs}~\cite{Struth15} and consulted together with our
article. Many proofs in this article are syntactic manipulations,
which may be tedious, but carry litte insight---consequently they are
not displayed. Instead we provide human-readable Isabelle proofs
whenever suitable and we have added pointers to the facts in this
article to the Isabelle proof document. However we show some
interesting proofs and counterexamples, as they are given, but not
internally verified, by Isabelle.

The remainder of this article is organised as
follows. Section~\ref{S:preliminaries} provides the basic definitions,
operations and properties of multirelations. Section~\ref{S:seqsubids}
introduces the sets of sequential subidentities, parallel
subidentities and vectors and the isomorphisms between them. These are
summarised in the three diagrams of Figure~\ref{F:magictriangle} and
\ref{F:extendeddiagram}, on which much of the algebraic investigation
is based. It also introduces some basic properties of nonterminal
multirelations, which do not allow any pairs $(a,\emptyset)$ and
provides the multirelational laws needed for algebraic soundness
proofs in later sections. Section~\ref{S:cmonoids} studies c-monoids,
our first axiom system for multirelations, Section~\ref{S:ctrioids}
and \ref{S:clattices} introduce c-trioids and
c-lattices. Section~\ref{S:cda} explains the relationship between
c-trioids and concurrent dynamic algebras. Definitions of a domain
operation in c-lattices are presented in
Section~\ref{S:clatticedomain}.  The subalgebras of subidentities and
vectors and the associated isomorphisms are studied in
Section~\ref{S:clatticesubalgs} and \ref{S:clatticeiso}.  Functions
separating terminal and nonterminal elements are defined in
Section~\ref{S:nabladelta} and properties of these functions are
presented.  Notions of finite and infinite iterations for c-quantales
are studied in Section~\ref{S:finiteiteration} and
\ref{S:divergence}. Section~\ref{S:counterexamples} presents some
counterexamples; Section~\ref{S:upclosed} sketches how up-closed
multirelations arise in our setting. Finally,
Section~\ref{S:conclusion} presents a conclusion.


\section{Basic Definitions and Properties}\label{S:preliminaries}

This section follows Peleg~\cite{Peleg87} in defining the operations of
sequential and parallel composition on multirelations. Sequential
composition is different from the one used for up-closed
multirelations introduced by Parikh~\cite{Parikh83,Parikh85}. We
then outline a few set-theoretic properties of multirelations which
are important for the algebraic developments in later sections, and we
sketch some examples.

A \emph{multirelation} $R$ over a set $X$ is a subset of $X\times
2^X$.  We write $\M(X)$ for the set of all multirelations over $X$. The
\emph{sequential composition} of $R,S\in \M(X)$ is the multirelation
\begin{equation*}
  R\cdot S = \{(a,A)\mid \exists B.\ (a,B)\in R \wedge \exists f.\ (\forall b\in B.\ (b,f(b))\in S) \wedge A=\bigcup f(B)\}.
\end{equation*}
The \emph{unit of sequential composition} is the multirelation
\begin{equation*}
\sid = \{(x,\{x\})\mid x\in X\}.
\end{equation*}
The \emph{parallel composition} of $R$ and $S$ is the multirelation
\begin{equation*}
  R\para S = \{(a,A\cup B) \mid  (a,A)\in R\wedge (a,B)\in S\}.
\end{equation*}
The \emph{unit of parallel composition} is the multirelation 
\begin{equation*}
\pid=
\{(x,\emptyset)\mid x\in X\}.
\end{equation*} 
The \emph{universal multirelation} over $X$ is
 \begin{equation*}
U =\{(a,A)\mid a \in X \wedge A \subseteq X\}.
\end{equation*} 
 
A pair $(a,B)$ is in $R\cdot S$ if $R$ relates $a$ to some
intermediate set $C$ and $S$ relates each element $c\in C$ to a set
$B_c$ in such a way that $B=\bigcup_{c\in C} B_c$. This can be
motivated in various ways, as discussed
by~\cite{FurusawaS14}. Peleg's original intended interpretation, for
instance, associates $R\cdot S$ with the sequential composition of
parallel programs or concurrent components, such that program $R$
reaches the concurrent state in $C$ from $a$ in a parallel execution,
after which program $S$ reaches the concurrent state $B$ in parallel
executions from each state $c\in C$, reaching concurrent sub-states
$B_c$ from each of these states. Alternatively, in the context of
computation trees based on alternation or dual nondeterminism,
component $R$ makes an internal choice of moving to state $C$, which
resolves the external choices for transitions from state $a$ and
represents one level of the computation tree. After that, component
$S$ makes internal choices of moving from each state $c\in C$ to a set
$B_c$, thus resolving its external choices. The states in
$\bigcup_{c\in C} B_c$ form the next level of the computation
tree. This interpretation reflects, for instance, the construction of
a run or computation tree of an alternating automaton whose
transitions are modelled by a multirelation.

A pair $(a,B)$ is in $R\para S$ if $B$ can be decomposed with respect
to sets $C$ and $D$ such that $(a,C)\in R$ and $(a,D)\in S$, that is,
the parallel execution of $R$ and $S$ from $a$ produces the global
state $B$. In other words, the external choices in $R\para S$ arise as
the union of the external choices in $R$ and those in $S$ from each
particular state; whereas the internal choices are not combined. This
is dual to $R\cup S$, where the union of the internal choices of $R$
and $S$ is taken while the external choices are not combined.

A multirelation $R$ is a \emph{sequential subidentity} if $R\subseteq
\sid$.  The sequential subidentities form a boolean algebra with
least element $\emptyset$ and greatest element $\sid$. Join is $\cup$ and
meet is $\cdot$. The complement of a subidentity $R$ is formed by the
set $\{(a,\{a\})\mid (a,\{a\})\not\in P\}$. A \emph{parallel subidentity} is
a multirelation $R\subseteq \pid$. We write
\begin{equation*}
  \Su(X)=\{R\in \M(X) \mid R \subseteq \sid\},\qquad
 \T(X)=\{R\in \M(X) \mid R \subseteq \pid\},
\end{equation*}
for the set of all sequential and parallel subidentities of $X$. The
name $\T(X)$ is justified by the fact that parallel subidentities
can be identified with terminal multirelations; see Section~\ref{S:seqsubids}.

We consider three more sets of multirelations. A multirelation $R$ is
a \emph{vector} if whenever $(a,A) \in R$ for some $A\subseteq X$,
then $(a,A)\in A$ for all $A\subseteq X$. A multirelation $R$ is
\emph{up-closed} if $(a,A) \in R$ implies $(a,B)\in R$ for all
$B\supseteq A$ and $B\subseteq X$. Finally, we define $\pidc$ to
be the complement of $\pid$ in $\M(X)$. We write
\begin{align*}
  \V(X) &= \{R\in \M(X)\ \mid (\exists A\subseteq X. (a,A) \in
  R)\Rightarrow (\forall A \subseteq X. (a,A)\in R\},\\
\U(X) &= \{R\in \M(X)\mid \forall A\subseteq X,B\subseteq X. ((a,A) \in
R\wedge A \subseteq B\Rightarrow  (a,B) \in R\},\\
\N(X) &= \{R\in \M(X)\mid R \subseteq \pidc\}.
\end{align*}
The elements in $\N(X)$ are called \emph{nonterminal} multirelations;
see again Section~\ref{S:seqsubids}.

Multirelations form proto-dioids~\cite{FurusawaS14}, which are
defined in Section~\ref{S:ctrioids}.  At this stage it suffices to
mention the following laws.
  \begin{gather*}
R\cup (S\cup T)=(R\cup S)\cup T,\qquad
R\cup S=S\cup R,\qquad
R\cup \emptyset =R,\qquad
R\cup R=R,\\
(R\cdot S)\cdot T\subseteq R\cdot (S\cdot T),\qquad
    \sid \cdot R = R,\qquad
    R\cdot \sid = R,\\
R\cdot S \cup R\cdot T \subseteq R\cdot (S\cup T),\qquad
(R\cup S)\cdot T = R\cdot T\cup S\cdot T,\qquad
\emptyset\cdot R = R,\\
R\para (S\para T) = (R\para S)\para T,\qquad
R\para  S = S\para  R,\qquad
\pid\para  R =R,\\
R\para (S\cup T)=R\cdot S\cup R\cdot T,\qquad
\emptyset\para  R =\emptyset,\\
(R\para  S) \cdot T\subseteq (R\cdot T) \para  (S\cdot T). 
\end{gather*}
Sequential composition is not associative and $R\cdot \emptyset$ is
generally not $\emptyset$.  More generally, pairs $(a,\emptyset)\in R$
persist in any sequential composition $R\cdot S$---whence the name
\emph{terminal}.

Sequential subidentities satisfy stronger properties. First of all
$(R\cdot S)\cdot T= R\cdot (S\cdot T)$ if one of $R$, $S$, $T$ is in
$\Su(X)$.  Second,  $R\cdot (S\cup T)= R\cdot S\cup R\cdot T$ if
$R\in \Su(X)$.

The interaction between sequential and parallel composition is
captured by the following properties, among others.
\begin{lemma}\label{P:idemlemma}
Let $R\in \Su(X)$ and $S \in \T(X)$. Then
  \begin{enumerate}
  \item $R \para  R = R$, 
   \item $S \para S = S$, 
 \item $U\para  U = U$,
\item $\pidc\para  \pidc = \pidc$.
  \end{enumerate}
\end{lemma}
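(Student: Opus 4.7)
The plan is to prove each of the four identities by unfolding the definition
\[
R\para S=\{(a,A\cup B)\mid (a,A)\in R\wedge (a,B)\in S\}
\]
and showing the two inclusions. In every case the $\subseteq$ direction follows by noting that membership in the class constrains the shape of the second component, so that $A\cup B$ is forced to have the required shape; the $\supseteq$ direction follows by witnessing each pair $(a,C)$ of the class as $(a,C\cup C)$.

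For part (1), I would observe that $R\subseteq\sid$ means every pair in $R$ has the form $(x,\{x\})$. Hence if $(a,A),(a,B)\in R$ then $A=B=\{a\}$, so $A\cup B=\{a\}$ and $(a,A\cup B)=(a,\{a\})\in R$; conversely every $(a,\{a\})\in R$ arises trivially as the parallel composition of itself with itself. Part (2) is exactly the same argument with $\pid$ replacing $\sid$: pairs in $S$ have second component $\emptyset$, so $A\cup B=\emptyset$ and $S\para S=S$.

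For part (3), since $U$ contains every pair $(a,C)$ with $a\in X$ and $C\subseteq X$, the inclusion $U\para U\subseteq U$ is immediate from $A\cup B\subseteq X$, while the reverse inclusion uses that for any $(a,C)\in U$ we have $(a,C),(a,C)\in U$ and $C=C\cup C$. Part (4) is analogous: unfolding the complement, $\pidc=\{(a,A)\mid a\in X,\ \emptyset\neq A\subseteq X\}$, so if $(a,A),(a,B)\in\pidc$ then $A,B$ are nonempty, hence $A\cup B$ is nonempty and $(a,A\cup B)\in\pidc$; the reverse inclusion again uses $C=C\cup C$ for $(a,C)\in\pidc$.

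There is no real obstacle here; the proof is a routine case analysis on the class-defining constraint. The only point worth flagging is that the argument for parts (3) and (4) relies crucially on the idempotence $C=C\cup C$ of set union together with the observation that each class is closed under taking the second component's union with itself, which is trivial. Consequently I would not display any of these calculations in the article and would simply note that all four items are immediate from the defining set-theoretic shapes of $\Su(X)$, $\T(X)$, $U$, and $\pidc$.
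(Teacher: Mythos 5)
Your proof is correct and follows exactly the route the paper intends: the paper simply states that all four items ``follow immediately from the definitions,'' and your case analysis (using that the shape of the second component is forced in each class, together with $C=C\cup C$) is precisely the routine verification being alluded to.
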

The proofs follow immediately from the definitions.
\begin{lemma}\label{P:interactionlemma}
Let $R,S,T\in \M(X)$. Then
  \begin{enumerate}
\item $(R\cdot \pid)\para  R = R$,
  \item $T\para  T\subseteq T \Rightarrow  (R\para  S)\cdot T= (R\cdot T) \para 
    (S\cdot T)$,
 \item $(R\para  S)\cdot T= (R\cdot T) \para 
    (S\cdot T)$, if $T\in \Su(X)\cup \T(X)\cup\{U,\pidc\}$,
\item $R\cdot (S\para  T)\subseteq (R\cdot S) \para  (R\cdot T)$,
\item $R\cdot (S\para  T) = (R\cdot T) \para  (R\cdot T)$, if $R\in
  \Su(X)$,
\item $R\cdot (S\cdot T) = (R\cdot S)\cdot T$, if one of $R$, $S$, $T$
  is in $\Su(X)\cup \T(X)$,
\item $(R\cap S)\cdot T = R\cdot T\cap S\cdot T$, if $R,S\in\Su(X)$.
  \end{enumerate}
\end{lemma}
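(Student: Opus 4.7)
My plan is to establish each item by direct set-theoretic unfolding of the definitions of $\cdot$ and $\para$, using the basic laws stated just before the lemma. Several items share a common recipe: given witnesses for a parallel composition, glue together the two underlying choice functions on the union of their domains. Items (1), (4), (5) and (7) are essentially routine. For (1), compute $R\cdot\pid=\{(a,\emptyset)\mid\exists B.\ (a,B)\in R\}$, since $(b,f(b))\in\pid$ forces $f(b)=\emptyset$; then $(a,\emptyset)\in R\cdot\pid$ absorbs into any $(a,B)\in R$ via $\emptyset\cup B=B$. For (4), decompose each $f(b)\in S\para T$ as $g(b)\cup h(b)$ with $(b,g(b))\in S$ and $(b,h(b))\in T$, and read off witnesses for $R\cdot S$ and $R\cdot T$ whose union recovers $A$. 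For (5), the non-trivial direction uses $R\in\Su(X)$ to pin the intermediate sets in both $R\cdot S$ and $R\cdot T$ to $\{a\}$, which also yields $(a,\{a\})\in R$; then $(a,A_1)\in S$ and $(a,A_2)\in T$ combine into $(a,A_1\cup A_2)\in S\para T$, whence into $R\cdot(S\para T)$. For (7), the same forcing identifies the intermediate sets in $R\cdot T$ and $S\cdot T$, yielding $(a,\{a\})\in R\cap S$ and $(a,A)\in T$ at once.

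The heart of the argument is (2), which upgrades the last basic law to any $T$ with $T\para T\subseteq T$. One direction is that basic law; for the other, take $(a,A_1\cup A_2)\in(R\cdot T)\para(S\cdot T)$ with witnesses $(a,B_1)\in R$, $(a,B_2)\in S$ and choice functions $f_1,f_2$ into $T$ satisfying $A_i=\bigcup f_i(B_i)$. Define $g\colon B_1\cup B_2\to 2^X$ piecewise as $f_1$, $f_2$ or $f_1\cup f_2$ according to whether $b$ lies in $B_1\setminus B_2$, $B_2\setminus B_1$ or $B_1\cap B_2$. The hypothesis $T\para T\subseteq T$ is exactly what makes $(b,g(b))\in T$ on the overlap, and $\bigcup g(B_1\cup B_2)=A_1\cup A_2$ is immediate. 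Item (3) then drops out of (2) via Lemma~\ref{P:idemlemma}, which supplies $T\para T=T$ for each of the four listed classes.

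Item (6) is the real obstacle. The inclusion $(R\cdot S)\cdot T\subseteq R\cdot(S\cdot T)$ is already a basic law, so the task is the reverse, which I would prove by splitting on which of $R$, $S$, $T$ belongs to $\Su(X)\cup\T(X)$. In the three $\Su(X)$ sub-cases the distinguished multirelation forces its own intermediate set to be a singleton, and the witnessing choice functions for the outer composition can then be read off directly. In the three $\T(X)$ sub-cases everything downstream of the distinguished multirelation collapses to $(a,\emptyset)$; the subtle point arises when $T\in\T(X)$, where one must check that every element of the intermediate set $\bigcup_{b\in B}C_b$ witnessing $(a,D)\in R\cdot S$ actually lies in the carrier of $T$. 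This is guaranteed because each such $c$ already satisfies $(c,\emptyset)\in T$ by hypothesis. These six sub-cases are the only place where the non-associativity of $\cdot$ really bites, and choosing the intermediate set $D$ and function $h\colon D\to 2^X$ appropriately in each of them is the main bookkeeping task in the proof.
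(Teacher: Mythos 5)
Your proposal is correct and matches the paper's approach: for item (2) — the only part the paper proves in detail — you use exactly the same construction, gluing the two choice functions piecewise on $B_1\setminus B_2$, $B_1\cap B_2$ and $B_2\setminus B_1$ with their union on the overlap, which is precisely where $T\para T\subseteq T$ is invoked, and you derive (3) from (2) together with Lemma~\ref{P:idemlemma} just as the paper does. The remaining items, which the paper relegates to routine unfolding and Isabelle, are handled by you with the expected direct set-theoretic case analysis and are sound.
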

We present one example proof to illustrate the style of reasoning with
multirelations.
\begin{proof}{(Lemma~\ref{P:interactionlemma}(2))}
  $T\para T\subseteq T$ implies that 
  \begin{equation*}
    (a,B)\in T\wedge (a,C)\in T\Rightarrow (a,B\cup C)\in T
  \end{equation*}
holds for all $a\in X$. Therefore
  \begin{align*}
    &(a,A)\in (R\cdot T)\para (S\cdot T)\\
 &\Leftrightarrow \exists B,C.\ A=B\cup C\wedge (\exists D.\ (a,D)\in
 R\wedge \exists f.\ (\forall d\in D.\ (d,f(d))\in T)\wedge B=\bigcup f(D))\\
&\hspace{3.4cm} \wedge (\exists E.\ (a,E)\in S\wedge \exists g.\ (\forall e\in E.\ (e,g(e))\in T)\wedge C=\bigcup g(E))\\
&\Leftrightarrow \exists D,E.\ (a,D\cup E)\in R\para S\wedge
\exists f,g.\ (\forall d\in D,e\in E.\ (d,f(d))\in T \wedge  (e,g(e))\in T)\\
&\hspace{5.6cm} \wedge A = \bigcup f(D)\cup g(E)\\
&\Rightarrow \exists D,E.\ (a,D\cup E)\in R\para S\wedge \exists f,g.\ (\forall d\in D-E,e\in E-D,x\in 
D\cap E.\\
&\hspace{1.5cm}(d,f(d))\in T  \wedge (e,g(e))\in T\wedge (x,(f\cup
g)(x))\in T)\\
&\hspace{1.5cm}\wedge 
 A = \bigcup f(D-E)\cup g(E-D)\cup (f\cup g)(D\cup E)\\
&\Leftrightarrow \exists D,E.\ (a,D\cup E)\in R\para S 
\wedge \exists h.\ (\forall b\in D\cup E.\ (b,h(b))\in T) \wedge A = \bigcup h(D\cup E)\\
&\Leftrightarrow \exists B.\ (a,B)\in R\para S 
\wedge \exists h.\ (\forall b\in B.\ (b,h(b))\in T) 
\wedge A = \bigcup h(B)\\
&\Leftrightarrow (a,A)\in (R\para S)\cdot T. 
  \end{align*}
  The third step uses the assumption $T\para T\subseteq T$;  the fourth
  one uses the function 
  \begin{equation*}
    h(x) =
    \begin{cases}
      f(x) &\text{if $x\in D-E$},\\
(f\cup g)(x) &\text{if $x\in D\cap E$},\\
g(x) &\text{if $x\in E-D$}. 
    \end{cases}
  \end{equation*}
The converse inclusion has been proved in~\cite{FurusawaS14}.
\end{proof}


\section{Subalgebras and Isomorphisms}\label{S:seqsubids}

This section studies the relationship between the sets of sequential
subidentities, parallel subidentities and vectors as well as the
special case of nonterminal multirelations.  We use these sets and
their relationships to extract algebraic axioms for multirelations in
later sections.  Most of the properties outlined in this section are
verified rigorously in the algebraic setting of later sections.

The constants $\sid$, $\pid$, $\emptyset$, $U$ and $\pidc$ play an important role in our
considerations.  The first lemma of this section describes their action on
multirelations.

\begin{lemma}\label{P:constprops}
  Let $R\in \M(X)$. Then 
  \begin{enumerate}
 \item $R\cdot \pid =\{(a,\emptyset) \mid \exists B.\  (a,B)\in R\}$,
  \item $R\cap\pid = R\cdot\emptyset =\{(a,\emptyset) \mid
    (a,\emptyset) \in R\}$,
  \item $R\cap\sid = \{(a,\{a\}\mid (a,\{a\}) \in R\}$,
\item $R\cdot U = \{(a,A)\mid A=\emptyset \wedge (a,A)\in R\}\cup
  \{(a,A)\mid \exists B\neq\emptyset.\ (a,B)\in R\}$,
  \item $R\para  U = \{(a,A)\mid \exists B.\ (a,B) \in R \wedge B\subseteq 
  A\}$.
\end{enumerate}
\end{lemma}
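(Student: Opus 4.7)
The plan is to prove each of the five parts by straightforwardly unfolding the definition of the operation involved ($\cdot$, $\cap$, or $\para$) and using the explicit description of the constant in question. Each equality reduces to a routine set-theoretic equivalence after the unfolding, so the main care goes into handling the boundary case $B=\emptyset$ correctly inside the definition of sequential composition.

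For part (1), I would expand $(a,A)\in R\cdot\pid$ to the existence of $B$ with $(a,B)\in R$ and of $f$ with $(b,f(b))\in\pid$ for all $b\in B$ and $A=\bigcup f(B)$. Since $\pid=\{(x,\emptyset)\mid x\in X\}$, the constraint forces $f(b)=\emptyset$ for every $b\in B$, hence $A=\emptyset$; if $B=\emptyset$ the constraints are vacuous and $A=\emptyset$ anyway. Conversely, given $(a,B)\in R$, the choice $f\equiv\emptyset$ witnesses $(a,\emptyset)\in R\cdot\pid$. Part (2) then follows similarly: $R\cap\pid$ is immediate from the definition of $\pid$, while for $R\cdot\emptyset$ the universal condition $(b,f(b))\in\emptyset$ can be satisfied only when $B=\emptyset$, giving $A=\emptyset$ and the requirement $(a,\emptyset)\in R$. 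Part (3) is a direct read-off from $\sid=\{(x,\{x\})\mid x\in X\}$.

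For part (4), the condition $(b,f(b))\in U$ is automatically satisfied, so $(a,A)\in R\cdot U$ reduces to the existence of $B$ with $(a,B)\in R$ and a function $f:B\to 2^X$ with $A=\bigcup f(B)$. I would split on whether $B=\emptyset$: if so, $A=\emptyset$ and we need $(a,\emptyset)\in R$, giving the left disjunct; if not, then $f$ can be chosen freely and $\bigcup f(B)$ can be made to equal any subset $A\subseteq X$ (for a given target $A$, take $f(b)=A$ for all $b\in B$), giving the right disjunct. For part (5), $(a,A)\in R\para U$ unfolds to the existence of $C,D$ with $A=C\cup D$, $(a,C)\in R$ and $(a,D)\in U$; the second condition is automatic for any $D\subseteq X$, so the existential over $D$ is equivalent to $C\subseteq A$, yielding the stated form.

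The calculations are routine once the definitions are written out; the only point demanding attention is the vacuous case $B=\emptyset$ in the definition of sequential composition, which behaves differently in parts (1) and (2) (in (1) it still produces $(a,\emptyset)$, in (2) it is essentially the \emph{only} contributing case), and the observation in (4) that the freedom of choosing $f$ when $B\ne\emptyset$ makes $\bigcup f(B)$ range over all subsets of $X$. No deeper argument is needed.
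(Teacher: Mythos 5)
Your proof is correct and matches the paper's approach: the paper simply declares these proofs ``straightforward,'' and your direct unfolding of the definitions (with the careful treatment of the vacuous case $B=\emptyset$ in sequential composition and the observation that $\bigcup f(B)$ ranges over all subsets when $B\neq\emptyset$ in part (4)) is exactly the intended routine verification.
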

The proofs are straightforward. Intuitively, $R\cdot \pid$ replaces
every pair $(a,A)\in R$ by $(a,\emptyset)$, overwriting $A$ by
$\emptyset$, whereas $R\cap \pid$ and $R\cdot\emptyset$ both project
on the pairs $(a,\emptyset)\in R$. The multirelation $R\cap \sid$
projects on the pairs $(a,\{a\})\in R$; and the product $R\para U$
computes the up-closure of $R$. Thus
\begin{align*}
\Su(X) &= \{R\in\M(X)\mid R\cap \sid = R\},\\
\T(X) &= \{R\in\M(X)\mid R\cdot\pid = R\},\\
  \V(X)&=\{R\in \M(X)\mid (R\cdot\pid)\para U= R\},\\
  \U(X)&=\{R\in\M(X)\mid R\para U = R\},\\
  \N(X)&=\{R\in\M(X)\mid R\cap \pidc = R\}.
\end{align*}
The functions $(\cap\sid)=\lambda x.\ x\cap\sid$, $(\cdot\pid)=
\lambda x.\ x\cdot\pid$, $(\para U)=\lambda x.\ x\para U$, whose
fixpoints determine the sets $\Su(X)$, $\T(x)$ and $\V(X)$, and the
map $(\cdot U)=\lambda x.\ x\cdot U$ play an important structural
role. When specialising their sources and targets to $\Su(X)$, $\V(X)$
or $\T(X)$ they serve as bijective pairs showing that these sets are
isomorphic. More precisely, the maps in Figure~\ref{F:magictriangle}
are isomorphisms, and we verify this fact in Proposition~\ref{P:iso2}.
\begin{figure}[h]
  \centering 
\begin{equation*}
  \def\labelstyle{\normalsize}
\xymatrix @R=6pc @C=6pc{
 \Su(X)\ar@/^/[d]^{(\cdot U)}\ar@/^/[r]^{(\cdot \pid)} &
 \T(X)\ar@/^/[l]^{(\para  \sid)}\ar@/^/[dl]^{(\para  U)}\\
 \V(X)\ar@/^/[ur]^{(\cdot \pid)}\ar@/^/[u]^{(\cap \sid)}
}
\end{equation*}
  \caption{Isomorphisms between $\Su(X)$, $\T(X)$ and $\V(X)$}
\label{F:magictriangle}
\end{figure}
Under the source and target restrictions indicated, each function in
the diagram is bijective and pairs of functions between the same sets
compose to identity maps of the appropriate type.  The isomorphism
between $\Su(X)$ and $\V(X)$ is well known in the setting of binary
relations; the same maps are used for implementing it. The
isomorphisms with $\T(X)$ are particular to multirelations. More
generally it can be shown that all triangles in this diagram commute.

The pair $(\para \sid)\circ (\cdot\pid) = 1_{\Su(X)}$ generalises to a
map $\lambda x.\ (x\cdot\pid)\para \sid:\M(X)\to \Su(X)$. Applied to a
multirelation $R$, the function $(\cdot \pid)$ overwrites any $A$ in a
pair $(a,A)\in R$ by $\emptyset$; the function $(\para\sid)$ further
overwrites it by $\{a\}$. In other words,
\begin{equation*}
  (R\cdot\pid)\para\sid =\{(a,\{a\})\mid \exists B. (a,B)\in R\},
\end{equation*}
which represents the domain of $R$. We can thus define the domain of a
multirelation  explicitly as
\begin{equation*}
  d(R)=(R\cdot\pid)\para\sid,
\end{equation*}
that is, the following diagram commutes.
\begin{equation*}
  \def\labelstyle{\normalsize}
\xymatrix @R=2pc @C=3pc{
 \M(X)\ar[dr]_{d}\ar[r]^{(\cdot \pid)} &
 \T(X)\ar[d]^{(\para  \sid)}\\
& \Su(X) 
}
\end{equation*}
Moreover, $d(R)=R\cap \sid$ if $R\in\V(X)$ and $d(R)=R\para\sid$ if
$R\in\T(X)$, such that the maps $(\cap\sid)$ and $(\para\sid)$ in
Figure~\ref{F:magictriangle} can be replaced by $d$. The result is
shown in the left-hand diagram of Figure~\ref{F:extendeddiagram}.
\begin{figure}[h]
  \centering 
\begin{equation*}
  \def\labelstyle{\normalsize}
\xymatrix @R=2pc @C=2pc{
\M(X)\ar[dr]^{d}\ar@/^1pc/[drrr]^{(\cdot \pid)}&&&\\
 &\Su(X)\ar@/^/[dd]^{(\cdot U)}\ar@/^/[rr]^{(\cdot \pid)} &&
 \T(X)\ar@/^/[ll]^{d}\ar@/^/[ddll]^{(\para  U)}\\
&&&\\
 &\V(X)\ar@/^/[uurr]^{(\cdot \pid)}\ar@/^/[uu]^{d}
}
\xymatrix @R=2pc @C=2pc{
\N(X)\ar[dr]^{d}\ar@/^1pc/[drrr]^{(\cdot \pid)}\ar@/_1pc/[dddr]_{(\cdot \pidc)}&&&\\
 &\Su(\N(X))\ar@/^/[dd]^{(\cdot \pidc)}\ar@/^/[rr]^{(\cdot \pid)} &&
 \T(\N(X))\ar@/^/[ll]^{d}\ar@/^/[ddll]^{(\para  \pidc)}\\\
&&&\\
 &\V(\N(X))\ar@/^/[uurr]^{(\cdot\pid)}\ar@/^/[uu]^{d}
}
\end{equation*}
  \caption{Isomorphisms between $\Su(X)$, $\T(X)$ and $\V(X)$ for
    general and nonterminal multirelations}
\label{F:extendeddiagram}
\end{figure}
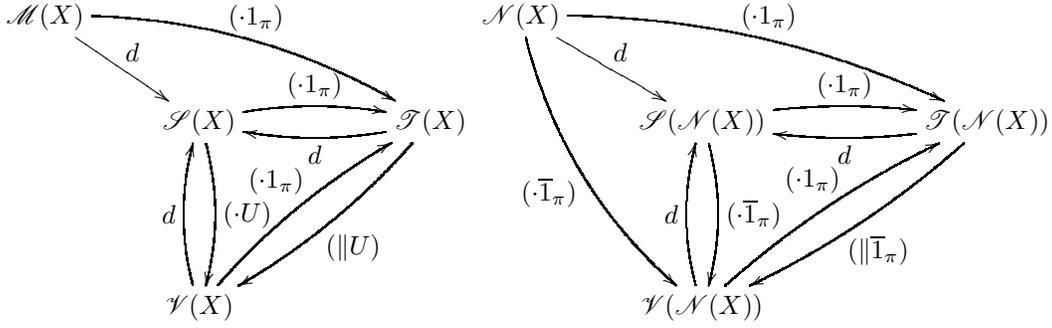
The right-hand diagram shows the situation restricted to nonterminal
multirelations. The sets of sequential subidentities,
parallel subidentities and vectors remain isomorphic, but in the
bijections, $\pidc$ replaces $U$.  In addition, vectors are now
obtained by $(\cdot \pidc):\M(X)\to\V(X)$ similarly to binary
relations.

The following properties can be justified from these diagrams; they are
needed for verifying soundness of the algebraic axioms in later
sections.
\begin{lemma}\label{P:mraxioms}
  \begin{enumerate}
  \item $((R \cdot \pid) \para \sid)\cdot S = (R\cdot\pid)\para S$,
\item $R\cdot\pid \subseteq \pid$,
\item $R\cdot\pid\cup R\cdot\pidc = R\cdot U$,
\item $\pid \cap (R \cup \pidc) = R\cdot \emptyset$,
\item $((R \cap \sid)\cdot \pid)\para \sid = R\cap \sid$,
\item $((R\cap\pidc)\cdot \pid)\para \sid = \sid \cap (R\cap 
  \pidc)\cdot \pidc$,
\item $((R\cap\pidc)\cdot\pid)\para \pidc =(R\cap \pidc)\cdot\pidc$. 
  \end{enumerate}
\end{lemma}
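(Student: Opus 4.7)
The plan is to prove each of the seven identities by direct element-chasing: unfold both sides into explicit predicates on pairs $(a,A)$ using Lemma~\ref{P:constprops} together with the definitions of $\cdot$ and $\para$, then match. None of the claims invokes the troublesome non-associativity of sequential composition on arbitrary multirelations, so the arguments are purely set-theoretic and break into (a) routine unfoldings, (b) one witness-reconstruction, and (c) two computations specific to the nonterminal setting.

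Items (1), (2), (4) and (5) are essentially one-line unfoldings. For (1), Lemma~\ref{P:constprops}(1) gives $R\cdot\pid=\{(a,\emptyset)\mid \exists B.\ (a,B)\in R\}$, so $(R\cdot\pid)\para\sid=\{(a,\{a\})\mid \exists B.\ (a,B)\in R\}$ is a sequential subidentity; composing this on the left with $S$ and computing $(R\cdot\pid)\para S$ both yield $\{(a,A)\mid (\exists B.\ (a,B)\in R)\wedge (a,A)\in S\}$. Item~(2) is immediate from the same description of $R\cdot\pid$. For~(4), note that $\pid$ contains only pairs of second coordinate $\emptyset$ and $\pidc$ contains none such, so $\pid\cap(R\cup\pidc)=\{(a,\emptyset)\mid (a,\emptyset)\in R\}$, which equals $R\cdot\emptyset$ by Lemma~\ref{P:constprops}(2). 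Item~(5) says $d(R\cap\sid)=R\cap\sid$: if $R\cap\sid$ contains $(a,B)$ then $B=\{a\}$, so $\{(a,\{a\})\mid\exists B.\ (a,B)\in R\cap\sid\}=R\cap\sid$.

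Item~(3) is the main step, since it asserts a distributivity $R\cdot\pid\cup R\cdot\pidc=R\cdot U$ that does not hold for arbitrary $S,T$ in place of $\pid,\pidc$. The inclusion $\subseteq$ is monotonicity together with $\pid\cup\pidc=U$. For $\supseteq$, given $(a,A)\in R\cdot U$ witnessed by $(a,B)\in R$ and $f\colon B\to 2^X$ with $A=\bigcup f(B)$, I split on whether $A=\emptyset$. If $A=\emptyset$ (in particular when $B=\emptyset$), the constant function $f\equiv\emptyset$ witnesses $(a,\emptyset)\in R\cdot\pid$. If $A\neq\emptyset$, fix $a_0\in A$ and replace $f$ by $f'(b)=f(b)$ when $f(b)\neq\emptyset$ and $f'(b)=\{a_0\}$ otherwise; then every $f'(b)$ is nonempty, so $(b,f'(b))\in\pidc$, while $\bigcup f'(B)=A\cup\{a_0\}=A$, placing $(a,A)\in R\cdot\pidc$.

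Items~(6) and~(7) concern the nonterminal setting $S=R\cap\pidc$, where $\pidc$ takes the role that $U$ plays in the general isomorphism diagram. For~(6), unfolding the left-hand side gives $d(S)=\{(a,\{a\})\mid \exists B.\ (a,B)\in S\}$; for the right-hand side, $(a,\{a\})\in S\cdot\pidc$ requires a choice function valued in nonempty subsets of $\{a\}$ on a nonempty domain $B$ (since $S\subseteq\pidc$), which forces $f\equiv\{a\}$ and so exists precisely when $\exists B.\ (a,B)\in S$. For~(7), Lemma~\ref{P:constprops}(1) and the definition of $\para$ reduce the left-hand side to $\{(a,A)\mid A\neq\emptyset\wedge\exists B.\ (a,B)\in S\}$; conversely, given $(a,B)\in S$ (hence $B\neq\emptyset$) and any nonempty $A$, the constant function $f\equiv A$ witnesses $(a,A)\in S\cdot\pidc$. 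The only real obstacle is the witness-reconstruction in~(3); everything else is mechanical unfolding once Lemma~\ref{P:constprops} is available.
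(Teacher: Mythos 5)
Your proof is correct, and it takes the approach the paper intends: the paper states Lemma~\ref{P:mraxioms} without a displayed proof (deferring to the diagrams of Figure~\ref{F:extendeddiagram} and the Isabelle formalisation), and a direct element-chasing verification via Lemma~\ref{P:constprops} and the definitions of $\cdot$ and $\para$ is exactly what is required. The one step that goes beyond mechanical unfolding --- repairing the choice function in item~(3) so that all its values are nonempty by padding with $\{a_0\}$ for a fixed $a_0\in A$ --- is handled correctly, as is the use of the fact that $R\cap\pidc\subseteq\pidc$ forces the intermediate set $B$ to be nonempty in items~(6) and~(7).
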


Equation (1) states that $d(R)\cdot S= (R\cdot\pid)\para S$,
generalising commutation of the diagrams $(\cdot U)\circ d= (\para
U)\circ (\cdot\pid)$ and $(\cdot \pidc)\circ d =(\para\pidc)\circ
(\cdot \pid)$ in Figure~\ref{F:extendeddiagram} to $(\cdot S)\circ d =
(\para S)\circ (\cdot \pid)$. For $S=\sid$, (1) specialises to the
explicit domain definition $d(R)=(R\cdot\pid)\para \sid$. Equation (2)
reflects the fact that $R\cdot\pid\in\V(X)$. Equations (3) and (4)
arise from the fact that $\pidc$ and $\pid$ are complements. Equation
(5) states that $d(R)=R$ for all $R\in\Su(X)$, which is the
isomorphism condition
$(\para\sid)\circ(\cdot\pid)=1_{\Su(X)}$. Equation (6) states that
$d(R)=\sid\cap R\cdot\pidc$ for all $R\in\N(X)$, that is, the diagram
$d=d\circ (\cdot\pidc) =(\cap\sid)\circ (\cdot\pid)$ commutes. This
domain definition is analogous to the relational case; however it does
not generalise to $\M(X)$. Equation (7) states that
$d(R)\cdot\pidc=R\cdot\pidc$ for all $R\in\N(X)$, that is, the diagram
$(\cdot\pidc)=(\cdot\pidc)\circ d$ commutes. Again, this reflects a
relational fact, which does not generalise to $\M(X)$.

The domain of a binary relation can be defined explicitly as well:
either as $d(R)=1\cap R\cdot U$ or as $d(R)=1\cap R\cdot R^\smile$,
where $1$ denotes the identity relation, $U$ the universal relation
and $R^\smile$ the converse of $R$. However, for
$R=\{(a,\emptyset)\}$, we have $\sid\cap R\cdot U=\emptyset\subset
\{(a,\{a\})\}=d(R)$ and the converse of a multirelation does not seem
to make sense.


\section{c-Monoids}\label{S:cmonoids}

We now introduce a first axiom system for multirelations in a
minimalist bi-monoidal setting, where only sequential and parallel
composition and the corresponding units are present. It allows us to
use the explicit domain definition
\begin{equation}
  d(x) = (x\cdot \pid)\para  \sid,\label{eq:domdef}\tag{d}
\end{equation}
which has been verified in the multirelational model, to derive domain
axioms similar to those for domain monoids~\cite{DesharnaisJS09}, and
to verify some properties of the $\M(X)$, $\Su(X)$, $\T(X)$ triangle.

A \emph{proto-monoid} is a structure $(S,\cdot,\sid)$ such that $1_
\sigma \cdot x = x = x\cdot \sid$ holds for all $x\in S$. Hence
composition is not required to be associative. A
\emph{proto-bi-monoid} is a structure $(S,\cdot,\para ,\sid,\pid)$
such that $(S,\cdot,\sid)$ is a proto-monoid and $(S,\para ,\pid)$ a
commutative monoid. A \emph{concurrent monoid} (\emph{c-monoid}) is a
proto-bi-monoid that satisfies the axioms
\begin{align}
 (x\cdot \pid) \para  x &=x, \label{eq:c1}\tag{c1}\\
  ((x\cdot \pid)\para  \sid)\cdot y &= (x \cdot \pid)\para 
  y, \label{eq:c2}\tag{c2}\\
(x\para  y)\cdot \pid &= (x\cdot \pid)\para (y\cdot \pid), \label{eq:c3}\tag{c3}\\
(x\cdot y)\cdot \pid &= x\cdot (y\cdot \pid),\label{eq:c4}\tag{c4}\\
\sid\para \sid &=\sid .\label{eq:c5}\tag{c5}
\end{align}
These axioms are sound with respect to the multirelational model.
 \begin{proposition}\label{P:dommonoid}
  $(\M(X),\cdot,\para ,\sid,\pid)$ forms a c-monoid. 
\end{proposition}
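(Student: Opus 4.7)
The plan is to verify each piece of the c-monoid signature and each of the five axioms by invoking the set-theoretic facts already established in Section~\ref{S:preliminaries}, so that the proof becomes a bookkeeping exercise rather than a fresh calculation.

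First I would dispose of the proto-bi-monoid structure. The proto-monoid laws $\sid\cdot R=R$ and $R\cdot \sid=R$, as well as associativity and commutativity of $\para$ together with $\pid\para R=R$, are all listed explicitly among the basic multirelational laws preceding Lemma~\ref{P:idemlemma}. No work is needed here beyond citing them.

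Next I would check the five axioms in turn, each time pointing at a previously stated fact. Axiom~\eqref{eq:c1} is exactly clause~(1) of Lemma~\ref{P:interactionlemma}. Axiom~\eqref{eq:c2} is exactly clause~(1) of Lemma~\ref{P:mraxioms}. For axiom~\eqref{eq:c3}, I would observe that $\pid\in\T(X)$, so by Lemma~\ref{P:idemlemma}(2) we have $\pid\para\pid=\pid\subseteq\pid$; then clause~(2) of Lemma~\ref{P:interactionlemma} applied with $T=\pid$ yields $(R\para S)\cdot\pid=(R\cdot\pid)\para(S\cdot\pid)$. Alternatively, clause~(3) of the same lemma applies directly since $\pid\in\T(X)$. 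For axiom~\eqref{eq:c4}, clause~(6) of Lemma~\ref{P:interactionlemma} gives associativity of $\cdot$ whenever one factor lies in $\Su(X)\cup\T(X)$; taking the rightmost factor to be $\pid\in\T(X)$ gives $(R\cdot S)\cdot\pid=R\cdot(S\cdot\pid)$. Finally, axiom~\eqref{eq:c5} follows from Lemma~\ref{P:idemlemma}(1) applied to $R=\sid\in\Su(X)$, yielding $\sid\para\sid=\sid$.

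There is no real obstacle, since the burden of proof was shifted into the set-theoretic lemmas of Section~\ref{S:preliminaries}. The only subtlety worth flagging is that the two axioms~\eqref{eq:c3} and~\eqref{eq:c4} are soundness-by-specialisation: they exploit that $\pid$ is simultaneously the unit of $\para$ and a parallel subidentity, so that the conditional distributivity and associativity laws for sequential composition become unconditional in the precise form demanded by the c-monoid axioms. Once these observations are made, the proposition is immediate and the proof can simply refer back to the relevant clauses of Lemma~\ref{P:idemlemma}, Lemma~\ref{P:interactionlemma} and Lemma~\ref{P:mraxioms}.
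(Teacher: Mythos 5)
Your proposal is correct and matches the paper's own proof essentially clause for clause: (c1) from Lemma~\ref{P:interactionlemma}(1), (c2) from Lemma~\ref{P:mraxioms}(1), (c3) from Lemma~\ref{P:interactionlemma}(2) together with Lemma~\ref{P:idemlemma}(2), (c4) from Lemma~\ref{P:interactionlemma}(6), and (c5) from Lemma~\ref{P:idemlemma}(1). The only difference is that you spell out the specialisations to $T=\pid$ and $R=\sid$ explicitly, which the paper leaves implicit.
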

 \begin{proof}
   The following axioms have been verified in the following Lemmas:
   (\ref{eq:c1}) in \ref{P:interactionlemma}(1); (\ref{eq:c2}) in
   \ref{P:mraxioms}(1); (\ref{eq:c3}) in
   \ref{P:interactionlemma}(2) and \ref{P:idemlemma}(2); (\ref{eq:c4})
   in \ref{P:interactionlemma}(6); (\ref{eq:c5})
   in \ref{P:idemlemma}(1).
  \end{proof}
  Next we prove a property that is crucial for showing closure of
  subalgebras.
  \begin{lemma}\label{P:cmonoidretractions}
    In every c-monoid, the maps $d$ and $(\cdot\pid)=\lambda x.\ x\cdot\pid$ are
    retractions, that is, $d(d(x))= d(x)$ and $(x\cdot\pid)\cdot\pid =
    x \cdot\pid$.
  \end{lemma}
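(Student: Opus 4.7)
The plan is to first extract the auxiliary identity $d(x)\cdot \pid = x\cdot \pid$, from which both retraction properties fall out almost immediately.

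First I would instantiate axiom (\ref{eq:c2}) at $y=\pid$. The left-hand side is exactly $d(x)\cdot \pid$ by definition of $d$, while the right-hand side is $(x\cdot \pid)\para \pid$, which collapses to $x\cdot \pid$ because $\pid$ is the unit of $\para$. This yields the key lemma
\begin{equation*}
  d(x)\cdot \pid = x\cdot \pid.
\end{equation*}

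Given this, idempotence of $d$ is a direct unfolding: $d(d(x)) = (d(x)\cdot \pid)\para \sid = (x\cdot \pid)\para \sid = d(x)$. For the second retraction, I would compute $d(x)\cdot \pid$ a second way using (\ref{eq:c3}). Writing $d(x)\cdot \pid = ((x\cdot \pid)\para \sid)\cdot \pid$ and applying (\ref{eq:c3}) with $x$ replaced by $x\cdot \pid$ and $y$ by $\sid$ gives $((x\cdot \pid)\cdot \pid)\para (\sid\cdot \pid)$. Since $\sid$ is the unit of $\cdot$, $\sid\cdot \pid = \pid$, and since $\pid$ is the unit of $\para$, the whole expression simplifies to $(x\cdot \pid)\cdot \pid$. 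Combining with the key lemma yields $(x\cdot \pid)\cdot \pid = x\cdot \pid$, as required.

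There is no substantial obstacle here; the only real step is spotting that (\ref{eq:c2}) instantiated at $y=\pid$ already encodes most of what one needs, and that the two-step computation of $d(x)\cdot \pid$ (once via the key lemma, once via (\ref{eq:c3})) bridges to the retraction of $(\cdot\pid)$. Notably, this argument does not invoke (\ref{eq:c1}) or (\ref{eq:c4}) at all, only (\ref{eq:c2}), (\ref{eq:c3}) and the two unit laws.
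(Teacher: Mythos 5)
Your proof is correct: instantiating (c2) at $y=\pid$ gives exactly the identity $d(x)\cdot\pid=x\cdot\pid$ (recorded in the paper as Lemma~\ref{P:auxdomprops}(3)), and your two computations of $d(x)\cdot\pid$ cleanly yield both retraction equations. The paper relegates this proof to its Isabelle formalisation, but your route is the natural one implicit in its development; the observation that (c1) and (c4) are never needed is a nice bonus.
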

  It is a general property of a retraction $f:X\to X$ that $x\in
  f(X)\Leftrightarrow f(x)=x$, where $f(X)$ is the image of $S$ under
  $f$.  For every c-monoid $S$ we define the sets of domain elements
  and terminal elements
  \begin{equation*}
    d(S)=\{x\in S\mid d(x)=x\},\qquad \T(S)=\{x\in S\mid x\cdot\pid = x\}
  \end{equation*}
  and use the fixpoint characterisations $d(x)=x$ and $x\cdot\pid = x$
  for typing their elements. Sets of sequential subidentities, vectors
  and nonterminal elements, however, cannot yet be expressed in the
  c-monoid setting.

\begin{lemma}\label{P:auxdomprops}
   In every c-monoid,
  \begin{enumerate}
  \item $d(x)\cdot y = (x\cdot \pid)\para  y$,
\item $d(x\cdot \pid)\cdot y = (x\cdot \pid)\para  y$,
\item  $d(x)\cdot \pid = x\cdot \pid$,
\item $d(x\cdot\pid)= d(x)$,
\item $\pid \cdot \pid= \pid$,
\item $d(\pid) = \sid$.
  \end{enumerate}
\end{lemma}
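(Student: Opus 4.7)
The plan is to prove the six items in a dependency-driven order: (1) first (as a direct consequence of the domain definition and axiom (\ref{eq:c2})), then use it to derive (3), (5), (6), (4) and finally (2), with the retraction lemma~\ref{P:cmonoidretractions} playing a supporting role in (2) and (4).

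For (1), I would simply unfold $d(x)=(x\cdot\pid)\para\sid$ and apply (\ref{eq:c2}) directly. For (3), I specialise (1) at $y=\pid$, obtaining $d(x)\cdot\pid = (x\cdot\pid)\para\pid$, and collapse the right-hand side with the parallel unit law from the proto-bi-monoid structure. For (4), I rewrite $d(x\cdot\pid)=((x\cdot\pid)\cdot\pid)\para\sid$ and apply the retraction identity $(x\cdot\pid)\cdot\pid=x\cdot\pid$ from Lemma~\ref{P:cmonoidretractions} to fold back into $d(x)$. For (2), I first use (4) to rewrite $d(x\cdot\pid)$ as $d(x)$ and then apply (1).

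The two items that require a nonobvious instantiation are (5) and (6). For (5), the trick is to instantiate (\ref{eq:c1}) at $x=\pid$: this yields $(\pid\cdot\pid)\para\pid=\pid$, and since $\pid$ is the unit of $\para$ the left-hand side reduces to $\pid\cdot\pid$, giving the claim. For (6), I expand $d(\pid)=(\pid\cdot\pid)\para\sid$, apply (5) to simplify the first factor to $\pid$, and then observe that $\pid\para\sid=\sid$, which comes from instantiating (\ref{eq:c1}) at $x=\sid$ (so that $\sid\cdot\pid=\pid$ by the sequential unit law).

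The main obstacle, if any, will be spotting the right instantiations of the two core axioms (\ref{eq:c1}) and (\ref{eq:c2}) for (5) and (6); after that every step is a one-line manipulation using a unit law or a retraction. No associativity of $\cdot$ is invoked, and (\ref{eq:c3})--(\ref{eq:c5}) are not needed for this lemma, so there is no real risk of an illegal rewrite.
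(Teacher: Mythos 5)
Your proof is correct and follows the same routine equational route the paper relies on (the paper relegates this proof to its Isabelle formalisation): (1) is (\ref{eq:c2}) unfolded, (3) and (5) come from the parallel unit law applied to (1) and to (\ref{eq:c1}) at $x=\pid$, and (4), (2), (6) follow by the retraction identity and substitution. One small caveat: your closing claim that (\ref{eq:c3})--(\ref{eq:c5}) are not needed is not quite right, since the retraction identity $(x\cdot\pid)\cdot\pid=x\cdot\pid$ from Lemma~\ref{P:cmonoidretractions}, which you invoke for (4), itself rests on (\ref{eq:c4}) together with your item (5); this does not affect the validity of your argument, only the dependency claim.
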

Properties (1) and (2) paraphrase axiom (\ref{eq:c2}). In particular,
(2) implies that $d(w)=w\para \sid$ for each $w\in\T(S)$. These facts
are shown in the left-hand subdiagram of
Figure~\ref{F:extendeddiagram} below. Properties (3) and (4)
correspond to the right-hand subdiagram.

\begin{equation*}
  \def\labelstyle{\normalsize}
\xymatrix @R=2pc @C=2pc{
S\ar[r]^d\ar[d]_{(\cdot\pid)} &
d(S)\ar[d]^{(\cdot y)}\\
\V(S)\ar[r]_{(\para y)}& S 
}
\qquad\qquad 
\xymatrix @R=2pc @C=2.5pc{
S\ar[r]^{\cdot\pid}\ar[dr]_{d} &
\T(S)\ar@/^/[d]^{d}\\
& d(S)\ar@/^/[u]^{\cdot\pid}
}
\end{equation*}

We can now show that the sets $d(S)$ and $\T(S)$ are isomorphic.
\begin{proposition}\label{P:dtermiso}
  Let $S$ be a c-monoid. The two functions
     $d:\T(S)\to d(S)$ and $(\cdot\pid):d(S)\to \T(S)$
 form a bijective pair; the following diagrams commute.
\begin{equation*}
  \def\labelstyle{\normalsize}
\xymatrix @R=2pc @C=.2pc{
&\T(S)\ar[dr]^d\\
d(S)\ar[ur]^{(\cdot\pid)}\ar[rr]_{1_{d(S)}}&&d(S) 
}
\qquad\qquad 
\xymatrix @R=2pc @C=.2pc{
&d(S)\ar[dr]^{(\cdot\pid)}\\
\T(S)\ar[ur]^{d}\ar[rr]_{1_{\V(S)}}&&\T(S) 
}
\end{equation*}
Therefore
$  d(S)\cong \T(S)$.
\end{proposition}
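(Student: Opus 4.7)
The plan is to read the two diagrams as the statements $d\circ(\cdot\pid)=1_{d(S)}$ and $(\cdot\pid)\circ d=1_{\T(S)}$, and to verify both identities directly from the equations already collected in Lemma~\ref{P:auxdomprops}, once the maps are shown to be well-typed.

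First I would verify that the two maps land in the stated codomains. For $x\in d(S)$, I need $x\cdot\pid\in\T(S)$, i.e.\ $(x\cdot\pid)\cdot\pid=x\cdot\pid$; this is exactly the retraction property of $(\cdot\pid)$ from Lemma~\ref{P:cmonoidretractions}. For $w\in\T(S)$, I need $d(w)\in d(S)$, i.e.\ $d(d(w))=d(w)$; this is the other half of Lemma~\ref{P:cmonoidretractions}. So neither direction requires a fresh set-theoretic argument.

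Next I would chase the two triangles. For $x\in d(S)$ (so $d(x)=x$), Lemma~\ref{P:auxdomprops}(4) gives
\begin{equation*}
d(x\cdot\pid)=d(x)=x,
\end{equation*}
which is the left-hand triangle. For $w\in\T(S)$ (so $w\cdot\pid=w$), Lemma~\ref{P:auxdomprops}(3) gives
\begin{equation*}
d(w)\cdot\pid=w\cdot\pid=w,
\end{equation*}
which is the right-hand triangle. Together these two equations say that $d$ and $(\cdot\pid)$ are mutually inverse bijections between $d(S)$ and $\T(S)$, whence $d(S)\cong\T(S)$.

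There is no real obstacle here: every ingredient has been prepared in Lemmas~\ref{P:cmonoidretractions} and \ref{P:auxdomprops}. If anything needs care, it is only bookkeeping of the typing (checking that the two retractions $d$ and $(\cdot\pid)$ pick out subsets whose elements are characterised by the fixpoint conditions $d(x)=x$ and $x\cdot\pid=x$, respectively), so that the two identities above literally say $d\circ(\cdot\pid)=1_{d(S)}$ and $(\cdot\pid)\circ d=1_{\T(S)}$.
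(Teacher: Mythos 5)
Your proof is correct and follows essentially the same route as the paper: both reduce the two commuting triangles to Lemma~\ref{P:auxdomprops}(3) and (4), the paper phrasing the round trips as $d(d(x)\cdot\pid)=d(x)$ and $d(x\cdot\pid)\cdot\pid=x\cdot\pid$ for arbitrary $x$, while you use the equivalent fixpoint characterisations of $d(S)$ and $\T(S)$. Your explicit well-typedness check via Lemma~\ref{P:cmonoidretractions} is a harmless (and slightly more careful) addition that the paper leaves implicit.
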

\begin{proof}
  Functions are bijections if and only if they are invertible. It
  therefore suffices to check that $d(d(x)\cdot \pid)=d(x)$ and
  $d(x\cdot\pid)\cdot\pid = x\cdot \pid$, which follows directly from
  Lemma~\ref{P:auxdomprops}(3) and (4) or from commutation of the
  associated diagram.
\end{proof}

Next we derive the domain axioms proposed in~\cite{DesharnaisJS09}.
\begin{proposition}\label{P:domaxverif}
In every c-monoid,
  \begin{enumerate}
  \item $d(x\para  y)= d(x)\para  d(y)$,
  \item $d(x)\para  d(y)= d(x)\cdot d(y)$,
  \item $d(x)\cdot x = x$,
  \item $d(x\cdot d(y))= d(x\cdot y)$,
  \item $d(d(x)\cdot y) = d(x)\cdot d(y)$,
  \item $d(x)\cdot d(y)=d(y)\cdot d(x)$,
  \item $d(\sid)= \sid$.
  \end{enumerate}
\end{proposition}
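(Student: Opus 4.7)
The plan is to dispatch the seven identities in a dependency-respecting order using three workhorses: the explicit definition $d(x) = (x\cdot\pid)\para\sid$, Lemma~\ref{P:auxdomprops}(1) and (3) (which let one translate between $d(x)\cdot y$ and $(x\cdot\pid)\para y$, and absorb $d$ against $\cdot\pid$), and the c-monoid axioms (c3)--(c5). Everything reduces to reshuffling a $\para$-expression until the target shape appears. I would prove the items in the order (7), (3), (1), (2), (6), (4), (5).

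The two quick cases open the proof. For (7): $d(\sid) = (\sid\cdot\pid)\para\sid = \pid\para\sid = \sid$, using the $\sid$-unit law and then the $\pid$-unit law for $\para$. For (3): $d(x)\cdot x = (x\cdot\pid)\para x = x$, by Lemma~\ref{P:auxdomprops}(1) followed by (c1).

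The central trick, used to prove (1), (2) and (5), is to duplicate the trailing $\sid$ via (c5). For (1), one computes
\begin{equation*}
d(x\para y) = ((x\para y)\cdot\pid)\para\sid = ((x\cdot\pid)\para(y\cdot\pid))\para\sid
\end{equation*}
by (c3), then inserts $\sid = \sid\para\sid$ by (c5) and regroups by commutativity and associativity of $\para$ to obtain $((x\cdot\pid)\para\sid)\para((y\cdot\pid)\para\sid) = d(x)\para d(y)$. For (2) one starts from $d(x)\cdot d(y) = (x\cdot\pid)\para d(y) = (x\cdot\pid)\para((y\cdot\pid)\para\sid)$ via Lemma~\ref{P:auxdomprops}(1) and the definition of $d$, and performs the same $\sid$-duplication to reach $d(x)\para d(y)$. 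Item (6) is then immediate: $d(x)\cdot d(y) = d(x)\para d(y) = d(y)\para d(x) = d(y)\cdot d(x)$ by (2) and commutativity of $\para$.

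For the associativity-type items I would use (c4) to push $\pid$ through a sequential composition. Proof of (4): unfold $d(x\cdot d(y))$, apply (c4) to get $(x\cdot(d(y)\cdot\pid))\para\sid$, rewrite $d(y)\cdot\pid = y\cdot\pid$ by Lemma~\ref{P:auxdomprops}(3), and apply (c4) in reverse to fold back into $d(x\cdot y)$. Proof of (5): $d(d(x)\cdot y) = (d(x)\cdot(y\cdot\pid))\para\sid$ by (c4), which equals $((x\cdot\pid)\para(y\cdot\pid))\para\sid$ by Lemma~\ref{P:auxdomprops}(1); then (c3) rewrites this as $((x\para y)\cdot\pid)\para\sid = d(x\para y)$, and (1) together with (2) finishes as $d(x)\para d(y) = d(x)\cdot d(y)$. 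The only real bookkeeping obstacle is keeping the $\para$-reshuffles in items (1), (2) and (5) straight; the $\sid$-duplication via (c5) is the single idea that makes them go through.
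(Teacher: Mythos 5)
Your proof is correct: each of the seven items follows by exactly the chain of rewrites you describe, and I checked in particular that the $\para$-regroupings in (1), (2) and (5) are licensed by commutativity/associativity of $\para$ together with (\ref{eq:c5}), and that (4) and (5) only need (\ref{eq:c4}) and Lemma~\ref{P:auxdomprops}(1),(3). The paper does not display its own proof (it is delegated to the Isabelle formalisation, with the remark that all of (\ref{eq:c1})--(\ref{eq:c5}) are needed), and your derivation uses precisely those axioms, so it matches the intended argument.
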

All axioms (\ref{eq:c1})-(\ref{eq:c5}) are needed in these
proofs. Isabelle/HOL generates counterexamples in their absence.
Equations (1) and (2) are domain proto-trioid
axioms~\cite{FurusawaS14}, which are part of the axiomatisation of
concurrent dynamic algebras; the others are domain monoid
axioms~\cite{DesharnaisJS09}.

We can now characterise the subalgebra of domain elements, whereas the
c-monoid axioms are too weak to characterise that of parallel
subidentities.
  \begin{proposition}\label{P:domsemilat}
    Let $S$ be a c-monoid. Then $d(S)$ forms a sub-semilattice with
    multiplicative unit $\sid$ in which sequential and parallel
    composition coincide.
  \end{proposition}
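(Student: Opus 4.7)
The plan is to assemble the conclusion directly from Proposition~\ref{P:domaxverif} and the c-monoid axioms, verifying in turn: coincidence of $\cdot$ and $\para$ on $d(S)$; closure of $d(S)$ under both operations; the semilattice laws; and that $\sid$ is a member of $d(S)$ acting as a two-sided unit.

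First I would dispose of coincidence. For any $u,v\in d(S)$ the fixpoint characterisation gives $u=d(u)$ and $v=d(v)$, whence Proposition~\ref{P:domaxverif}(2) yields $u\para v=d(u)\para d(v)=d(u)\cdot d(v)=u\cdot v$. Closure under $\para$ then follows from Proposition~\ref{P:domaxverif}(1):
\begin{equation*}
 u\para v = d(u)\para d(v)= d(u\para v),
\end{equation*}
so $u\para v$ is a fixpoint of $d$ and therefore lies in $d(S)$. Closure under $\cdot$ is then free: by coincidence, $u\cdot v=u\para v\in d(S)$.

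Next I would verify the semilattice axioms on $d(S)$. Associativity and commutativity of $\para$ are inherited from the ambient proto-bi-monoid, and by coincidence they transfer to $\cdot$ on $d(S)$ (commutativity is also directly Proposition~\ref{P:domaxverif}(6)). For idempotence, I would apply Proposition~\ref{P:domaxverif}(3) with $d(x)$ substituted for $x$: since $d(d(x))=d(x)$ by Lemma~\ref{P:cmonoidretractions}, this gives $d(x)\cdot d(x)=d(x)$, whence also $d(x)\para d(x)=d(x)$ by coincidence.

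Finally, $\sid=d(\sid)\in d(S)$ by Proposition~\ref{P:domaxverif}(7), and the proto-monoid identities $\sid\cdot x=x=x\cdot\sid$ make $\sid$ a two-sided unit on all of $S$, hence in particular on $d(S)$. I do not foresee any substantive obstacle: the only mild subtlety is the apparent interplay between closure and coincidence, but Proposition~\ref{P:domaxverif}(1)-(2) supply exactly the fixpoint identity $d(u)\para d(v)=d(u\para v)$ needed to cut through this in one step.
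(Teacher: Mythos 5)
Your proof is correct and follows essentially the same route as the paper's: closure of $d(S)$ under both compositions via Proposition~\ref{P:domaxverif}(1)--(2), coincidence of $\cdot$ and $\para$ via \ref{P:domaxverif}(2), the semilattice laws, and $\sid=d(\sid)$ as unit. You merely supply more detail than the paper (which leaves the associativity, commutativity, idempotence and unit checks implicit), and your derivations of idempotence from \ref{P:domaxverif}(3) with Lemma~\ref{P:cmonoidretractions}, and of associativity by transferring from $\para$, are sound.
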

\begin{proof}
  The closure conditions are verified by checking $d(d(x)\cdot
  d(y))=d(x)\cdot d(y)$, $d(d(x)\para d(y))=d(x)\para d(y)$ and
  $d(\sid)=\sid$. Sequential and parallel composition of
  subidentities coincides due to Lemma~\ref{P:domaxverif}(2). It
  remains to check that sequential composition of domain elements is
  associative, commutative and idempotent, and that $\sid$ is a
  multiplicative unit.
\end{proof}
Depending on the order-dual interpretations of sequential composition
as join or meet, $\sid$ becomes the least or greatest semilattice
element. As similar result has been established
in~\cite{DesharnaisJS09}, but the proof does not transfer due to the
lack of associativity of sequential composition. Next we derive three
interaction laws.
\begin{lemma}\label{eq:trianglecmon}
  Let $S$ be a c-monoid. For all $x,y\in S$, $w\in d(S)$ and $z\in\T(S)$,
  \begin{enumerate}
 \item $z\para z = z$,
\item $w\para w= w$,
\item $w\cdot (x\para  y) = (w\cdot x) \para  (w\cdot y)$.
  \end{enumerate}
\end{lemma}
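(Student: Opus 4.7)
The plan is to prove the three parts in order, using each to support the next.

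For part (1), I expect the proof to be immediate from axiom~(\ref{eq:c1}). Since $z\in\T(S)$ means $z\cdot\pid = z$, substituting in $(z\cdot\pid)\para z = z$ gives $z\para z = z$ directly. No other axioms are needed.

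For part (2), I plan to use Proposition~\ref{P:domaxverif}(2), which converts parallel composition of domain elements into sequential composition: $w\para w = d(w)\para d(w) = d(w)\cdot d(w)$, using that $w = d(w)$. It then remains to establish idempotence $d(w)\cdot d(w) = d(w)$. This follows from Proposition~\ref{P:domaxverif}(3) and~(5): setting $y=x$ in (5) gives $d(d(x)\cdot x) = d(x)\cdot d(x)$, and by (3) the left-hand side is $d(x)$. Chaining these gives $w\para w = w$. (Alternatively, this is exactly idempotence in the sub-semilattice of Proposition~\ref{P:domsemilat}.)

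For part (3), the plan is to rewrite both sides using Lemma~\ref{P:auxdomprops}(1), which says $d(u)\cdot v = (u\cdot\pid)\para v$. On the left, $w\cdot(x\para y) = d(w)\cdot(x\para y) = (w\cdot\pid)\para x\para y$. On the right, the same rewrite gives $(w\cdot x)\para(w\cdot y) = ((w\cdot\pid)\para x)\para((w\cdot\pid)\para y)$, which by associativity and commutativity of $\para$ equals $(w\cdot\pid)\para(w\cdot\pid)\para x\para y$. So the task reduces to showing $(w\cdot\pid)\para(w\cdot\pid) = w\cdot\pid$.

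Here is where I would apply part~(1) to $z := w\cdot\pid$, which requires checking $z\in\T(S)$, i.e.\ $(w\cdot\pid)\cdot\pid = w\cdot\pid$. This follows either from Lemma~\ref{P:cmonoidretractions} directly, or by combining~(\ref{eq:c4}) with $\pid\cdot\pid = \pid$ from Lemma~\ref{P:auxdomprops}(5). With that in hand, part~(1) gives the needed collapse of duplicated $w\cdot\pid$, completing the proof. The only mildly delicate step is this typing check, but it is routine given the earlier lemmas.
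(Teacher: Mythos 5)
Your proof is correct and follows the route the paper clearly intends (the paper omits this proof, deferring to its Isabelle formalisation): (1) is an instance of axiom~(\ref{eq:c1}) via the fixpoint characterisation of $\T(S)$, (2) is the idempotence already needed for Proposition~\ref{P:domsemilat} obtained from Proposition~\ref{P:domaxverif}(2), (3) and (5), and (3) reduces via Lemma~\ref{P:auxdomprops}(1) and commutativity/associativity of $\para$ to $(w\cdot\pid)\para(w\cdot\pid)=w\cdot\pid$, which part (1) delivers once Lemma~\ref{P:cmonoidretractions} certifies $w\cdot\pid\in\T(S)$. All cited lemmas are available at this point in the development, so there is no gap.
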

Finally, we refute additional interaction and associativity laws.
\begin{lemma}\label{P:domrefs}
  There are c-monoids in which, for some elements $x$, $y$ and $z$,
  \begin{enumerate}
  \item $(x\para  y)\cdot d(z) \neq (x \cdot d(z)) \para  (y\cdot d(z))$,
  \item $(x\cdot y)\cdot d(z)\neq x\cdot (y\cdot d(z))$,
\item $\pid \cdot x \neq \pid$.
  \end{enumerate}
\end{lemma}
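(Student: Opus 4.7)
The plan is to exhibit, for each of the three items, a finite c-monoid together with specific elements witnessing the claimed inequality. An immediate check rules out the multirelational model as a source of such counterexamples: $d(z)\in\Su(X)$ by the explicit formula (\ref{eq:domdef}), so Lemma~\ref{P:interactionlemma}(3) forces equality in (1) and Lemma~\ref{P:interactionlemma}(6) forces equality in (2); and for (3), unfolding the definition of sequential composition gives $\pid\cdot R=\{(a,\emptyset)\mid a\in X\}=\pid$ for every $R\in\M(X)$. Consequently, the refutations necessarily lie outside the multirelational model and cannot be produced by any rearrangement of pairs $(a,A)$; they have to be genuine non-standard c-monoids.

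The approach I would take is to search for minimal finite c-monoids by model enumeration, which is precisely what Isabelle/HOL's Nitpick is designed for. The signature consists of two binary operations and two constants, and the defining axioms (\ref{eq:c1})--(\ref{eq:c5}) are short equations, so one sets up each conjectured identity as a target, lets Nitpick enumerate tabulations of $\cdot$ and $\para$ on carriers of increasing small cardinality, and collects the first example in which (\ref{eq:c1})--(\ref{eq:c5}) hold but the identity fails. Past experience with such axiom systems suggests that two- or three-element carriers already suffice. Once a candidate is produced, the discharging work is routine: one transcribes its Cayley tables and checks the five axioms plus the failing law by finite case analysis.

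The main obstacle is conceptual rather than technical. Because the counterexamples must be non-multirelational, no guiding intuition from pairs $(a,A)$ is available; one is at the mercy of what the model finder produces, and different equations may be falsified by genuinely different little algebras. There is also the secondary obstacle of \emph{joint} optimisation: one would like small examples that also serve later sections (e.g.\ to show that corresponding c-trioid or c-lattice axioms do not imply these laws either), which may force slightly larger carriers than strictly necessary for a single refutation. Beyond this, no substantive mathematical content is involved, which is consistent with the author's remark that Isabelle generates the counterexamples.
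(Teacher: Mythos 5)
Your approach coincides with the paper's: the authors likewise exhibit no explicit models for this lemma and simply report that Nitpick produces (unrevealing) counterexamples, so deferring to finite model enumeration over the signature $(\cdot,\para,\sid,\pid)$ constrained by (\ref{eq:c1})--(\ref{eq:c5}) is exactly what is done there. Your preliminary observation that the witnesses must lie outside $\M(X)$ is also correct --- $d(z)\in\Su(X)$ forces equality in (1) and (2) via Lemma~\ref{P:interactionlemma}(3) and (6), and $\pid\cdot R=\pid$ holds multirelationally since $(a,B)\in\pid$ forces $B=\emptyset$ and hence $A=\bigcup f(\emptyset)=\emptyset$.
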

Isabelle's counterexample generator Nitpick presents
counterexamples~\cite{Struth15}. Since they are not very revealing, we
do not show them.


\section{c-Trioids}\label{S:ctrioids}

This section considers variants of dioids, that is, additively
idempotent semirings, endowed with a sequential and a parallel
composition and with additional interaction laws between these
operations.  Our intention is to derive the domain axioms of
concurrent dynamic logic~\cite{Peleg87} from the explicit domain
definition (d) in a minimalist setting. We cannot expect to prove more
properties from Figure~\ref{F:extendeddiagram}, since neither a
maximal element $U$ nor a meet operation is available.

A \emph{proto-dioid}~\cite{FurusawaS14} is a structure $(S,+,\cdot,
,0,\sid)$ such that $(S,+,0)$ is a semilattice with least element
$0$, $(S,\cdot,\sid)$ is a proto-monoid, and the following axioms
hold:
\begin{equation*}
  x \cdot y + x \cdot z \le x\cdot (y+z),\qquad (x+y)\cdot z = x \cdot 
  z+y\cdot z,\qquad 0\cdot x = 0. 
\end{equation*}
The relation $\le$ is the standard semilattice order  defined as 
$x\le y \Leftrightarrow x+y=y$. A \emph{dioid} is a proto-dioid in 
which multiplication is associative and the left distributivity law 
$x\cdot (y+z)= x\cdot y+ x\cdot z$ holds. A dioid is 
\emph{commutative} if multiplication is. A \emph{proto-trioid} is a 
structure $(S,+,\cdot,\para ,0,\sid,\pid)$ such that 
$(S,+,\cdot,0,\sid)$ is a proto-dioid and $(S,+,\para ,0,\pid)$ a 
commutative dioid.  In every proto-dioid, multiplication is 
left-isotone, that is, $x\le y\Rightarrow z\cdot x\le z\cdot 
y$. Moreover, every proto-trioid is a proto-bi-monoid. 

A \emph{concurrent trioid} (\emph{c-trioid}) is a proto-trioid in
which the c-monoid axioms and
\begin{equation}
  x\cdot \pid \le \pid\label{eq:c6}\tag{c6}
  \end{equation}
  hold. Independence of the axioms has been checked with
  Isabelle/HOL's automated theorem provers and counterexample
  generators.

\begin{figure}[h]
  \centering 
\begin{equation*}
  \def\labelstyle{\normalsize}
\xymatrix @R=1pc @C=1pc{
&& \mathsf{cT}\\
& \mathsf{pT}\ar@{-}[ur] && \mathsf{cM}\ar@{-}[ul]\\
\mathsf{pD}\ar@{-}[ur] & &\mathsf{pbM}\ar@{-}[ul]\ar@{-}[ur]\\
& \mathsf{pM}\ar@{-}[ur]\ar@{-}[ul]
}
\end{equation*}
  \caption{Class inclusions for proto-monoids ($\mathsf{pM}$),
  proto-bi-monoids ($\mathsf{pbM}$), c-monoids ($\mathsf{cM}$),
  proto-dioids ($\mathsf{pD}$), proto-trioids ($\mathsf{pT}$) and 
  c-trioids ($\mathsf{cT}$)}
\label{F:ctrioidclasses}
\end{figure}
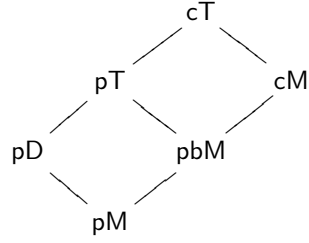

 \begin{proposition}\label{P:domtrioid}
    $(\M(X),\cup,\cdot,\para ,\emptyset,\sid,\pid)$ forms a c-trioid. 
  \end{proposition}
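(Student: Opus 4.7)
The proof is essentially a bookkeeping exercise of pointing to earlier results, so my plan is to structure it by unpacking the definition of a c-trioid layer by layer and citing the appropriate verifications.

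First I would recall that a c-trioid requires (i) a proto-trioid structure, (ii) the c-monoid axioms (\ref{eq:c1})--(\ref{eq:c5}), and (iii) axiom (\ref{eq:c6}). For (ii), I would simply invoke Proposition~\ref{P:dommonoid}, which already verified the c-monoid axioms against the multirelational model (with the detailed correspondence to Lemmas~\ref{P:idemlemma}, \ref{P:interactionlemma}, and \ref{P:mraxioms}). For (iii), the inequality $R\cdot\pid\subseteq\pid$ is precisely Lemma~\ref{P:mraxioms}(2).

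The main work is to check (i), that $(\M(X),\cup,\cdot,\para,\emptyset,\sid,\pid)$ is a proto-trioid. I would split this into two halves. For the sequential part, $(\M(X),\cup,\cdot,\emptyset,\sid)$ is a proto-dioid: $(\M(X),\cup,\emptyset)$ is obviously a bounded semilattice with least element $\emptyset$; $(\M(X),\cdot,\sid)$ is a proto-monoid by the unit laws $\sid\cdot R=R=R\cdot\sid$ stated in Section~\ref{S:preliminaries}; and the subdistributivity, right-distributivity, and left annihilation laws ($R\cdot S\cup R\cdot T\subseteq R\cdot(S\cup T)$, $(R\cup S)\cdot T=R\cdot T\cup S\cdot T$, $\emptyset\cdot R=\emptyset$) are all listed there as well. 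For the parallel part, $(\M(X),\cup,\para,\emptyset,\pid)$ must be a (commutative) dioid, which is a stronger requirement: associativity and commutativity of $\para$, the unit law $\pid\para R=R$, full distributivity $R\para(S\cup T)=R\para S\cup R\para T$, and annihilation $\emptyset\para R=\emptyset$ are all among the basic laws collected in Section~\ref{S:preliminaries}.

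The only potentially delicate point is to verify that no asymmetry hides in the parallel dioid requirement: full (not just sub-)distributivity really does hold because $\para$ is associative and commutative and both distributivity directions follow directly from the set-theoretic definition of $R\para S$ together with the commutativity of union. Other than that, there is no real obstacle---each clause of the c-trioid definition matches a previously established lemma, and the proof proposal amounts to the table of citations analogous to the one given for Proposition~\ref{P:dommonoid}, extended by the pointer to Lemma~\ref{P:mraxioms}(2) for axiom (\ref{eq:c6}) and to the basic laws of Section~\ref{S:preliminaries} for the proto-trioid part.
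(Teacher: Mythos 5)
Your proof is correct and takes essentially the same approach as the paper's: the same three-way decomposition into proto-trioid axioms, c-monoid axioms via Proposition~\ref{P:dommonoid}, and axiom (\ref{eq:c6}) via Lemma~\ref{P:mraxioms}(2). The only difference is that the paper discharges the proto-trioid part by citing~\cite{FurusawaS14}, whereas you verify it directly against the basic laws listed in Section~\ref{S:preliminaries}---which works, and in doing so you correctly state the laws $\emptyset\cdot R=\emptyset$ and $R\para(S\cup T)=R\para S\cup R\para T$ that appear with typographical errors in that list.
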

  \begin{proof}
    The proto-trioid axioms have been verified in~\cite{FurusawaS14};
    the c-monoid axioms in Proposition~\ref{P:dommonoid} and Axiom~(\ref{eq:c6}) in
    Lemma~\ref{P:mraxioms}(2).
  \end{proof}

In the setting of a c-trioid $S$ we can now define
\begin{equation*}
  \Su(S)=\{x\in S\mid x\le \sid\},\qquad \N(S)=\{x\in S\mid x\cdot 0 = 0\}.
\end{equation*}
It turns out that $d(S)\subseteq \Su(S)$, whereas the converse
inclusion need not hold. For $S=\emptyset$ and trioid operations
defined by $0<\pid\le \sid$ and $\pid\cdot \pid =\pid$ (remember that
$\sid \para \sid = \sid$ is an axiom) we have $\pid\le \sid$, but
$d(\pid)=(\pid\cdot \pid)\para \sid= \pid\para \sid =\sid \neq \pid$.

Similarly, $\T(S)\subseteq \{x\in S\mid x\le \pid\}$, whereas the
converse inclusion need not hold. For $S=\emptyset$ and trioid
operations defined by $0<\sid<\pid$ and $\pid\cdot\pid=\pid$ we have
$\sid\le \pid$, but $\sid\cdot\pid =\pid\neq \sid$.

This shows that, for c-trioids, the relationships between
subidentities, domain elements and fixpoints of $(\cdot\pid)$ are not
as tight as expected. The set $\N(S)$ of terminal elements is studied
in more detail in later sections. However we can derive the domain
proto-trioid axioms of~\cite{FurusawaS14}.
\begin{proposition}\label{P:cprototrioid}
  Every c-trioid satisfies the domain axioms of domain proto-trioids.
\end{proposition}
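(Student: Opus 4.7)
The plan is to read off the list of domain proto-trioid axioms of~\cite{FurusawaS14} and to verify each one using either Proposition~\ref{P:domaxverif} (which was established purely from the c-monoid axioms and therefore carries over verbatim to c-trioids) or the freshly available additive structure and axiom~(\ref{eq:c6}). Concretely, items (1), (2), (5) of Proposition~\ref{P:domaxverif} already match the multiplicative/parallel portion of the domain proto-trioid axiomatisation, so nothing new is needed there.

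The genuinely new obligations concern the interaction of $d$ with $+$ and $0$ and, depending on the precise axiomatisation, the subidentity condition $d(x)\le \sid$. For additivity I would compute
\begin{equation*}
  d(x+y) = ((x+y)\cdot \pid)\para \sid = (x\cdot \pid + y\cdot \pid)\para \sid = (x\cdot \pid)\para \sid + (y\cdot \pid)\para \sid = d(x)+d(y),
\end{equation*}
using right distributivity of $\cdot$ over $+$ in the proto-dioid reduct and right distributivity of $\para$ over $+$ in the commutative-dioid reduct. For strictness I would use $0\cdot \pid = 0$ and $0\para \sid = 0$, both of which follow from the respective left annihilation laws, giving $d(0)=0$. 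For $d(x)\le \sid$ I would invoke axiom~(\ref{eq:c6}) to obtain $x\cdot \pid\le \pid$, and then apply isotonicity of $\para$ together with the parallel unit law to conclude $d(x)=(x\cdot \pid)\para \sid \le \pid\para \sid = \sid$.

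I expect no serious obstacle: the c-monoid proofs behind Proposition~\ref{P:domaxverif} already handle the hard non-additive content, and the remaining verifications are one-line calculations inside the proto-dioid and commutative-dioid reducts, with axiom~(\ref{eq:c6}) doing the work whenever $\sid$-boundedness of domain elements is needed. The only mild subtlety is to remember that $\cdot$ satisfies only sub-distributivity on the left, so in the additivity proof one must keep the crucial occurrences of $+$ on the \emph{right} of $\cdot$ and $\para$; the calculation above is arranged exactly so that this never becomes an issue.
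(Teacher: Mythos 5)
Your proposal is correct and follows essentially the same route as the paper: the multiplicative/parallel domain axioms are inherited from Proposition~\ref{P:domaxverif} in the c-monoid reduct, additivity $d(x+y)=d(x)+d(y)$ follows from right distributivity of $\cdot$ and $\para$, $d(0)=0$ from the annihilation laws, and $d(x)\le\sid$ from axiom~(\ref{eq:c6}) with isotonicity of $\para$. The only cosmetic discrepancy is that the multiplicative axioms actually required are those verified in Proposition~\ref{P:domaxverif}(1)--(4) (including $x\le d(x)\cdot x$ and $d(x\cdot d(y))=d(x\cdot y)$) rather than items (1), (2), (5), but you hedged on the precise axiom list and all of these hold in every c-monoid anyway.
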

\begin{proof}
  Every c-trioid is a c-monoid, hence the properties from
  Section~\ref{S:cmonoids} hold. The following domain axioms of domain
  proto-dioids must be verified:
\begin{itemize}
 \item $x\le
  d(x)\cdot x$, which holds by Proposition~\ref{P:domaxverif}(3);
\item  $d(x\cdot
  d(y))=d(x\cdot y)$, which holds by Proposition~\ref{P:domaxverif}(4);
\item $d(x\para  y)=d(x)\para d(y)$, which holds by Proposition~\ref{P:domaxverif}(1);
\item $d(x)\para d(y)=d(x)\cdot d(y)$, which  holds by~\ref{P:domaxverif}(2);
\item $d(x+y)=d(x)+d(y)$, which holds by right distributivity of
  $\cdot$ and $\para$;
\item  $d(x)\le \sid$, which follows from (\ref{eq:c6});
\item $d(0)=0$, which is immediate from the domain definition.
\end{itemize}
\end{proof}
This does not mean, however, that every c-trioid is a domain
proto-trioid: additional axioms are assumed in the latter class. The
relationship between c-trioids and concurrent dynamic algebra is
discussed in detail in Section~\ref{S:cda}.
\begin{proposition}\label{P:ctrioidbdl}
  Let $S$ be a c-trioid. Then $d(S)$ forms a bounded distributive
  lattice with least element $0$ and greatest element $\sid$, and in
  which sequential and parallel composition coincide.
\end{proposition}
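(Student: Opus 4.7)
The plan is to lift the semilattice structure of Proposition~\ref{P:domsemilat} to a bounded distributive lattice, keeping $\cdot$ (equivalently $\para$, by Proposition~\ref{P:domaxverif}(2)) as meet and using the restricted additive semilattice as join. Closure of $d(S)$ under $\cdot$ and $\para$ and the coincidence $\cdot = \para$ on $d(S)$ have already been recorded in Proposition~\ref{P:domsemilat}; what remains is to install the bounds, identify $+$ as join and $\cdot$ as meet in the dioid order $\le$, and verify distributivity.

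First I would pin down the bounds. Closure of $d(S)$ under $+$ follows from $d(x+y) = d(x)+d(y)$, derived en route to Proposition~\ref{P:cprototrioid}. The least element is $0 = d(0)$ (using $0\cdot\pid = 0$ and annihilation $0\para\sid = 0$ from the commutative dioid $(S,+,\para,0,\pid)$), and the greatest is $\sid = d(\sid)$ by Proposition~\ref{P:domaxverif}(7); the bound $d(x)\le\sid$ comes from (\ref{eq:c6}) via isotonicity of $\para$, since $d(x) = (x\cdot\pid)\para\sid \le \pid\para\sid = \sid$.

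Next I would verify that $\cdot$ realises meets on $d(S)$ in the dioid order. For the lower bounds, $d(y)\le\sid$ with left-argument isotonicity (from right distributivity) gives $d(y)\cdot d(x)\le\sid\cdot d(x)=d(x)$, and with right-argument isotonicity (from subdistributivity) gives $d(x)\cdot d(y)\le d(x)\cdot\sid = d(x)$; commutativity (Proposition~\ref{P:domaxverif}(6)) then yields $d(x)\cdot d(y)\le d(y)$ as well. For the greatest-lower-bound property, if $z\in d(S)$ satisfies $z\le d(x)$ and $z\le d(y)$, then idempotence $z = z\cdot z$ from Proposition~\ref{P:domsemilat} combined with two one-sided isotonicities gives $z = z\cdot z\le d(x)\cdot d(y)$.

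Distributivity is then immediate via the commutativity trick: since $d(y)+d(z)\in d(S)$,
\begin{equation*}
d(x)\cdot(d(y)+d(z)) = (d(y)+d(z))\cdot d(x) = d(y)\cdot d(x) + d(z)\cdot d(x) = d(x)\cdot d(y) + d(x)\cdot d(z),
\end{equation*}
using commutativity (Proposition~\ref{P:domaxverif}(6)) at the first and last steps and right distributivity of the proto-dioid in the middle. The main obstacle, as I see it, is the identification of $\cdot$ with meet in the \emph{dioid} order (rather than merely in the intrinsic semilattice order on $d(S)$): this is where the bound $d(x)\le\sid$ from (\ref{eq:c6}) together with the one-sided isotonicities of proto-dioids becomes essential, because neither full left distributivity nor associativity of $\cdot$ is available in the ambient c-trioid.
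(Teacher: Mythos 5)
Your proposal is correct and follows the same overall strategy as the paper: lift the semilattice of Proposition~\ref{P:domsemilat}, install $0$ and $\sid$ as bounds, take $+$ as join and $\cdot$ (equivalently $\para$) as meet, and establish distributivity. The one place where you genuinely diverge is in how the lattice structure itself is certified. The paper verifies the absorption laws $x + x\cdot y = x$ and $x\cdot(x+y) = x$ equationally (together with both distributive laws), which is the standard equational recipe for gluing two semilattices into a lattice. You instead argue order-theoretically: you show that $\cdot$ computes greatest lower bounds in $d(S)$ with respect to the dioid order $\le$, using $d(y)\le\sid$ from (\ref{eq:c6}), the two one-sided isotonicities available in a proto-dioid, commutativity from Proposition~\ref{P:domaxverif}(6), and idempotence of $z\in d(S)$ for the step $z = z\cdot z \le d(x)\cdot d(y)$. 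This is exactly the content of Lemma~\ref{P:ctrioidprops}(1), which the paper records only \emph{after} this proposition; your proof in effect establishes that lemma en route and then gets absorption for free. Your derivation of distributivity via $d(x)\cdot(d(y)+d(z)) = (d(y)+d(z))\cdot d(x)$ is also a nice economy: it needs only right distributivity of the proto-dioid plus closure of $d(S)$ under $+$ (so that commutativity applies to $d(y)+d(z)=d(y+z)$), and the dual distributive law then follows by general lattice theory rather than by separate verification. Both routes are sound; yours makes more explicit where the one-sided isotonicities and the bound $d(x)\le\sid$ do the work that full left distributivity and associativity would otherwise provide.
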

\begin{proof}
  First, by Proposition~\ref{P:domsemilat}, $d(S)$ forms a meet
  semilattice, and it is clear that $\sid$ is the greatest element with
  respect to the semilattice order.

  Second, the closure condition for $0$ has already been checked in
  the proof of Proposition~\ref{P:cprototrioid} and $d(x+y)=x+y$ holds
  for all $x,y\in d(S)$ by domain additivity and idempotency. Thus
  $d(S)$ is a join semilattice with respect to $+$ and $0$ is the
  least element with respect to the semilattice order.

  Third, for all $x,y,z\in d(S)$, the absorption laws and
  distributivity laws $x+ x\cdot y=x$, $x\cdot (x+y)=x$, $(x+y)\cdot
  z=x\cdot z+y\cdot z$ and $x+y\cdot z=(x+y)\cdot (x+z)$ must be
  verified.
\end{proof}
Once more, this result is similar to that for domain 
proto-trioids~\cite{FurusawaS14}, but proofs need to be revised due to 
different axioms.

Finally sequential composition of domain elements and parallel
composition of parallel subidentities are greatest lower bound
operations, hence meets.
\begin{lemma}\label{P:ctrioidprops}
Let $S$ be a c-trioid. Then 
  \begin{enumerate}
\item $z\le x\wedge z\le y\Leftrightarrow z\le 
  x\cdot y$, for all $x,y,z\in d(S)$,
\item $z\le x\wedge z\le y \Leftrightarrow z\le x \para y$, for all $x,y,z\in\T(S)$. 
\end{enumerate}
\end{lemma}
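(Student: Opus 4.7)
Each part claims that a particular composition is the binary greatest lower bound in a subalgebra. Since Proposition~\ref{P:ctrioidbdl} has already established $d(S)$ as a bounded distributive lattice in which $\cdot$ coincides with $\para$ and acts as meet, part (1) should be essentially a repackaging of that lattice structure. Part (2) is more delicate because $\T(S)$ has not yet been shown to form a lattice, so the two inequalities $x \para y \le x$ and $x \para y \le y$ will have to be verified directly from the c-trioid axioms.

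\textbf{Part (1).} For the forward implication, I would invoke idempotency of $\cdot$ on $d(S)$: combining Proposition~\ref{P:domaxverif}(2) with Lemma~\ref{eq:trianglecmon}(2) gives $z \cdot z = z$, and then bi-isotonicity of multiplication (left-isotonicity from subdistributivity, right-isotonicity from right distributivity in every proto-dioid) yields $z = z \cdot z \le x \cdot y$. For the backward implication, the absorption law $x + x \cdot y = x$ already verified inside the proof of Proposition~\ref{P:ctrioidbdl} gives $x \cdot y \le x$, and commutativity of $\cdot$ on $d(S)$ (Proposition~\ref{P:domaxverif}(6)) upgrades this to $x \cdot y \le y$ as well; transitivity of $\le$ then finishes.

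\textbf{Part (2).} The forward implication again reduces to idempotency: $z \para z = z$ for $z \in \T(S)$ from Lemma~\ref{eq:trianglecmon}(1), followed by isotonicity of $\para$, which holds because $(S, +, \para, 0, \pid)$ is a commutative dioid. The heart of the argument is the backward implication. I would first observe that for every $y \in \T(S)$, the identity $y = y \cdot \pid$ combined with axiom (\ref{eq:c6}) gives $y \le \pid$. Since $\pid$ is the unit of $\para$ and $\para$ distributes over $+$, this lets me compute $x + x \para y = x \para \pid + x \para y = x \para (\pid + y) = x \para \pid = x$, which is exactly $x \para y \le x$. Commutativity of $\para$ then yields $x \para y \le y$, and the desired equivalence follows by transitivity.

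\textbf{Main obstacle.} The one non-mechanical step is the observation $y \le \pid$ for $y \in \T(S)$: it promotes the c-trioid axiom (\ref{eq:c6}), which is a universal inequality $x \cdot \pid \le \pid$, to a direct order comparison on the terminal elements that plugs cleanly into the distributive dioid structure of $\para$. Everything else amounts to lattice-theoretic boilerplate together with standard semiring isotonicity, and no further interaction axiom beyond those encoded in Proposition~\ref{P:ctrioidbdl} and Lemma~\ref{eq:trianglecmon} appears to be needed.
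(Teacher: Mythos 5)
Your proof is correct and follows essentially the route the paper intends (the paper relegates this proof to its Isabelle formalisation): the forward directions come from idempotency of $\cdot$ on $d(S)$ and of $\para$ on $\T(S)$ together with bi-isotonicity, and the backward directions from the lower-bound inequalities $x\cdot y\le x,y$ (via the absorption and commutativity facts already established for $d(S)$) and $x\para y\le x,y$ (via $y\le\pid$, which follows from $y=y\cdot\pid$ and axiom (\ref{eq:c6}), plus distributivity of $\para$ over $+$). Your identification of $y\le\pid$ for $y\in\T(S)$ as the key step is accurate; it is exactly the inclusion $\T(S)\subseteq\{x\mid x\le\pid\}$ noted in Section~\ref{S:ctrioids}.
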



\section{Concurrent Dynamic Algebra}\label{S:cda}

This section explains the relationship between c-trioids and
concurrent dynamic algebras, as formalised in~\cite{FurusawaS14}. In
every proto-monoid or proto-trioid $S$ with a domain operation, a
modal diamond operation can be defined as
\begin{equation*}
  \langle x\rangle p = d(x\cdot p)
\end{equation*}
for all $x\in S$ and $p\in d(S)$.  In the setting of c-trioids, some,
but not all, of the concurrent dynamic algebra axioms can be derived.

Here we call \emph{strong c-trioid} a c-trioid $S$ which satisfies,
for all $x,y\in S$ and $p\in\Su(S)$,
\begin{gather*}
  (x\para y) \cdot p = (x\cdot p)\para (y\cdot p),\\
(p\cdot
  x)\cdot y = p\cdot (x\cdot y), \qquad (x\cdot p)\cdot y= x\cdot (p\cdot y),\qquad(x\cdot 
y)\cdot p = x\cdot (y\cdot p).
\end{gather*}
A \emph{strong c-Kleene algebra} is
a strong c-trioid expanded by a star operation which satisfies, for all $x,y\in S$ and $p\in\Su(S)$,
\begin{equation*}
  \sid +x\cdot x^\ast \le x^\ast,\qquad p+x\cdot y\le 
  y\Rightarrow x^\ast \cdot p\le y,\qquad (x\cdot p)\cdot z = x\cdot 
  (p\cdot z).
\end{equation*}
Soundness has been shown in~\cite{FurusawaS14}.
\begin{lemma}
  In every strong c-trioid, the following concurrent dynamic
  algebra axioms are derivable.
  \begin{enumerate}
  \item  $\langle x+y\rangle p = \langle x\rangle p+\langle y\rangle p$,
\item   $\langle x\cdot y\rangle p = \langle x\rangle\langle y\rangle p$,
\item $\langle p\rangle q = p\cdot q$,
\item $\langle x\para  y\rangle p = \langle x\rangle p\cdot \langle y\rangle p$.
  \end{enumerate}
\end{lemma}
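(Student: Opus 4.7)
The plan is to unfold the definition $\langle x\rangle p = d(x\cdot p)$ in each case and reduce the claim to a short chain of rewrites using the domain properties from Proposition~\ref{P:domaxverif} together with the extra associativity/interaction axioms assumed in a strong c-trioid. Crucially, since $d(S)\subseteq \Su(S)$, the parameter $p\in d(S)$ is automatically a sequential subidentity, so every strong-c-trioid axiom that requires a subidentity on the right applies directly.

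For (1), right distributivity in a proto-dioid gives $(x+y)\cdot p = x\cdot p + y\cdot p$, and additivity of $d$ (a consequence of right distributivity and the definition (\ref{eq:domdef}), used already in the proof of Proposition~\ref{P:cprototrioid}) yields $d(x\cdot p + y\cdot p) = d(x\cdot p) + d(y\cdot p)$. For (3), the hypothesis $p,q\in d(S)$ supplies $d(p)=p$ and $d(q)=q$, after which Proposition~\ref{P:domaxverif}(5) gives
\begin{equation*}
\langle p\rangle q = d(p\cdot q) = d(d(p)\cdot q) = d(p)\cdot d(q) = p\cdot q.
\end{equation*}
For (4), the strong-c-trioid interaction axiom $(x\para y)\cdot p = (x\cdot p)\para (y\cdot p)$ applies because $p\in\Su(S)$. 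Then Proposition~\ref{P:domaxverif}(1) distributes $d$ over $\para$, and Proposition~\ref{P:domaxverif}(2) replaces the resulting $\para$ by $\cdot$ on the domain elements $\langle x\rangle p, \langle y\rangle p$:
\begin{equation*}
\langle x\para y\rangle p = d((x\cdot p)\para (y\cdot p)) = d(x\cdot p)\para d(y\cdot p) = \langle x\rangle p \cdot \langle y\rangle p.
\end{equation*}

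The step that takes the most care is (2), because c-trioids are not associative in general. Here I use the strong-c-trioid axiom $(x\cdot y)\cdot p = x\cdot (y\cdot p)$ (valid since $p\in\Su(S)$) together with Proposition~\ref{P:domaxverif}(4) in the form $d(x\cdot z) = d(x\cdot d(z))$ applied to $z = y\cdot p$:
\begin{equation*}
\langle x\cdot y\rangle p = d((x\cdot y)\cdot p) = d(x\cdot (y\cdot p)) = d(x\cdot d(y\cdot p)) = d(x\cdot \langle y\rangle p) = \langle x\rangle\langle y\rangle p.
\end{equation*}
Here the main obstacle is exactly that, without the strong associativity axiom, one would only get $(x\cdot y)\cdot p \le x\cdot (y\cdot p)$, yielding just one inclusion between $\langle x\cdot y\rangle p$ and $\langle x\rangle\langle y\rangle p$; this is precisely why the axiom is built into the strong-c-trioid definition. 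The Kleene star axioms play no role in any of the four derivations.
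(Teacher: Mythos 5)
Your derivation is correct and is exactly the intended argument: the paper omits the proof (it is Isabelle-verified), but the strong c-trioid axioms and the domain laws of Proposition~\ref{P:domaxverif} are introduced precisely so that each item reduces to the rewrites you give, with the observation $p\in d(S)\subseteq\Su(S)$ licensing the use of the strong axioms. Your remarks on why (2) and (4) need the extra axioms also match the paper's counterexamples in Lemma~\ref{P:ctrioidcounter1}.
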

\begin{lemma}
  In every strong c-Kleene algebra, the remaining concurrent dynamic
  algebra axioms are derivable as well.
  \begin{enumerate}
  \item  $p+\langle x\rangle\langle x^\ast\rangle p= \langle x^\ast \rangle 
  p$,
\item $ \langle x\rangle p\le p\Rightarrow \langle x^\ast\rangle p \le p$.
  \end{enumerate}
\end{lemma}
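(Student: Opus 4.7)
The plan is to prove each part by reducing to the star unfold and induction axioms, using the domain laws of Proposition~\ref{P:domaxverif} (which apply since every strong c-Kleene algebra is a c-trioid, hence a c-monoid), the strong c-trioid associativity axioms through subidentities, and the observation (Proposition~\ref{P:domsemilat}) that every domain element lies in $\Su(S)$ and is $\cdot$-idempotent. For~(1), the inclusion $p + \langle x\rangle\langle x^\ast\rangle p \le \langle x^\ast\rangle p$ follows by right-multiplying the unfold axiom $\sid + x\cdot x^\ast\le x^\ast$ by $p$, applying right distributivity and $(x\cdot y)\cdot p = x\cdot(y\cdot p)$ to obtain $p + x\cdot(x^\ast\cdot p)\le x^\ast\cdot p$, then applying $d$ together with its additivity, $d(p)=p$, and Proposition~\ref{P:domaxverif}(4) in the form $d(x\cdot(x^\ast\cdot p)) = d(x\cdot d(x^\ast\cdot p)) = \langle x\rangle\langle x^\ast\rangle p$.

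The reverse inclusion is the technical heart. Set $q = p + \langle x\rangle\langle x^\ast\rangle p\in d(S)$ and apply star induction with $y = q\cdot x^\ast\cdot p$. The premise $p\le y$ is routine, using $q\ge p$, $\sid\le x^\ast$, and $p\cdot p = p$. For $x\cdot y\le y$, bound $x\cdot y \le x\cdot x^\ast\cdot p$ using $q\le\sid$; then, by Proposition~\ref{P:domaxverif}(3), $x\cdot x^\ast\cdot p = d(x\cdot x^\ast\cdot p)\cdot(x\cdot x^\ast\cdot p)$, where $d(x\cdot x^\ast\cdot p) = \langle x\rangle\langle x^\ast\rangle p\le q$; combining this with $x\cdot x^\ast\le x^\ast$ gives $x\cdot x^\ast\cdot p\le q\cdot x^\ast\cdot p = y$. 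Star induction then yields $x^\ast\cdot p\le y$, and applying $d$ together with Proposition~\ref{P:domaxverif}(5) in the form $d(q\cdot z) = q\cdot d(z)$ closes the argument: $\langle x^\ast\rangle p\le q\cdot\langle x^\ast\rangle p\le q$.

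For~(2), assume $\langle x\rangle p\le p$. Combining this with Proposition~\ref{P:domaxverif}(3) and $p\le\sid$ first yields $x\cdot p = p\cdot x\cdot p$. Apply star induction with $y = p\cdot x^\ast\cdot p$; the premise $p\le y$ is immediate. For $x\cdot y\le y$, rewrite
\begin{equation*}
x\cdot y = (x\cdot p)\cdot(x^\ast\cdot p) = (p\cdot x\cdot p)\cdot(x^\ast\cdot p) = p\cdot(x\cdot p\cdot x^\ast\cdot p)
\end{equation*}
using the hypothesis and the subidentity-associativity axioms, then bound the inner factor by $x^\ast\cdot p$ via $x\cdot p\le x$ and $x\cdot x^\ast\le x^\ast$. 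Star induction gives $x^\ast\cdot p\le p\cdot x^\ast\cdot p$, and applying $d$ delivers $\langle x^\ast\rangle p\le p\cdot\langle x^\ast\rangle p\le p$.

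The main obstacle throughout is the absence of left distributivity of $\cdot$ over $+$ and of full associativity, which blocks the usual Kleene-algebra derivation of an equational unfold $x^\ast = \sid + x\cdot x^\ast$. The workaround is to pass through $d$ as early as possible and to arrange every intermediate product so that the pivot element is a subidentity; this is where the strong c-trioid associativity axioms and the domain laws of Proposition~\ref{P:domaxverif} do the real work.
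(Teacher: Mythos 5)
Your proof is correct; the paper displays no argument for this lemma (it defers to the Isabelle formalisation and to~\cite{FurusawaS14}), but your derivation---right-multiplying star unfold by $p$, pushing $d$ inside via Proposition~\ref{P:domaxverif}(4), and then running star induction on the witnesses $q\cdot(x^\ast\cdot p)$ and $p\cdot(x^\ast\cdot p)$ with the subidentity-associativity axioms doing the re-bracketing---is exactly the route these axioms are designed to support, and every step checks out. The only cosmetic quibble is attribution: $d(S)\subseteq\Su(S)$ comes from axiom (\ref{eq:c6}) as noted in Proposition~\ref{P:cprototrioid}, not from Proposition~\ref{P:domsemilat} (which does, however, supply the idempotency $p\cdot p=p$ you use).
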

Hence every strong c-Kleene algebra is a concurrent dynamic
algebra. The resulting subclass relationships are shown in
Figure~\ref{F:cdaclasses}.
\begin{figure}[h]
  \centering 
\begin{equation*}
  \def\labelstyle{\normalsize}
\xymatrix @R=1pc @C=.2pc{
&&\mathsf{\phantom{c}cKA}^+\\
&\mathsf{dpbKA}\ar@{-}[ur]&& \mathsf{cT}^+\ar@{-}[ul]\\
\mathsf{\phantom{d}dpKA}\ar@{-}[ur]&& \mathsf{dpT}\ar@{-}[ur]\ar@{-}[ul]&& \mathsf{\phantom{c}cT\phantom{i}}\ar@{-}[ul]\\
&\mathsf{dpD}\ar@{-}[ur]\ar@{-}[ul] &&\mathsf{\phantom{pi}pT\phantom{Ai}}\ar@{-}[ul]\ar@{-}[ur]\\
&& \mathsf{pD}\ar@{-}[ur]\ar@{-}[ul]
}
\end{equation*}
  \caption{Class inclusions including domain proto-dioids ($\mathsf{dpD}$),
  domain proto-trioids ($\mathsf{DpT}$),  strong c-trioids 
  ($\mathsf{cT}^+$), domain proto-Kleene algebras ($\mathsf{dpKA}$),
  domain proto-bi-Kleene algebras ($\mathsf{dpbKA}$) and strong c-Kleene 
  algebras ($\mathsf{cKA}^+$)}
\label{F:cdaclasses}
\end{figure}
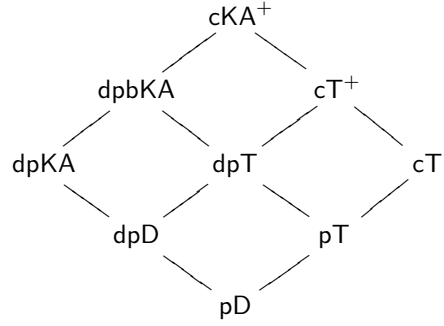

The following counterexample shows that the additional axioms are
necessary.

\begin{lemma}\label{P:ctrioidcounter1}
  In some c-trioid,
  \begin{enumerate}
  \item $(x\para  y)\cdot d(z) \neq (x\cdot d(z))\para (y\cdot d(z))$,
  \item $(x\cdot y)\cdot d(z) \neq x\cdot (y\cdot d(z))$,
  \item   $\langle x\cdot y\rangle p \neq \langle x\rangle\langle 
    y\rangle p$,
  \item $\langle x\para y\rangle p \neq \langle x\rangle p \cdot 
    \langle y\rangle p$. 
  \end{enumerate}
\end{lemma}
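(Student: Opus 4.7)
The approach is to exhibit concrete finite c-trioids in which each of the four equations fails. The natural multirelational model $\M(X)$ is not useful for (1) and (2): since $d(R)\in\Su(X)$ for every $R\in\M(X)$, Lemma~\ref{P:idemlemma}(1) gives $d(z)\para d(z)=d(z)$, whence Lemma~\ref{P:interactionlemma}(2) forces equality in (1); and Lemma~\ref{P:interactionlemma}(6) forces associativity of $\cdot$ when the rightmost factor lies in $\Su(X)$, yielding equality in (2). Hence the counterexamples must come from abstract c-trioids that are not representable as multirelations.

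My plan is to invoke Isabelle/HOL's counterexample generator Nitpick on small finite carriers. For (1) and (2), one can start from the c-monoid counterexamples already produced for Lemma~\ref{P:domrefs}(1,2); since a c-trioid is a proto-trioid satisfying the c-monoid axioms together with (\ref{eq:c6}), it suffices to adjoin a compatible join semilattice structure and to verify the distributivity and annihilation laws of a proto-dioid together with $x\cdot\pid\le\pid$. Nitpick should find such an extension of small cardinality, and Isabelle can then certify that all c-trioid axioms hold on the resulting tables.

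For items (3) and (4), the plan is to reduce them to (1) and (2) by unfolding the diamond. Using $\langle x\rangle p=d(x\cdot p)$ together with Proposition~\ref{P:domaxverif}(4), (3) becomes $d((x\cdot y)\cdot p)\neq d(x\cdot(y\cdot p))$; using additionally Proposition~\ref{P:domaxverif}(1,2) to rewrite $d(u)\cdot d(v)=d(u\para v)$, (4) becomes $d((x\para y)\cdot p)\neq d((x\cdot p)\para(y\cdot p))$. In both cases a counterexample witnessing (2) or (1) respectively, instantiated at $p=d(z)$, should survive application of $d$ in sufficiently generic finite algebras, since $d$ need not collapse the two sides; Nitpick can either confirm this on the same model or supply a slightly different finite refutation.

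The main obstacle is not mathematical depth but the combinatorial overhead of ensuring that every c-trioid axiom really holds on the chosen finite carrier, especially since the failure of associativity and of one-sided distributivity rules out many standard shortcuts for organising multiplication tables. This is precisely the setting in which Isabelle's automated verification earns its keep: once a candidate model is fixed, the axioms can be checked mechanically entry by entry. Following the paper's convention of not displaying the explicit tables, we refer to~\cite{Struth15} for the machine-verified counterexamples.
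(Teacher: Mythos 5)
Your preliminary observations are sound: the multirelational model indeed cannot witness (1) or (2) because $d(z)\in\Su(X)$ and Lemma~\ref{P:interactionlemma}(3),(6) force equality there, so an abstract finite c-trioid is required; and your unfolding of the diamonds correctly reduces (3) to $d((x\cdot y)\cdot p)\neq d(x\cdot(y\cdot p))$ and (4) to $d((x\para y)\cdot p)\neq d((x\cdot p)\para(y\cdot p))$. But the proof itself has a genuine gap: this is an existential statement, and you never exhibit the c-trioid. ``Nitpick should find such an extension of small cardinality'' and ``a counterexample \ldots should survive application of $d$'' are conjectures about what a search will return, not arguments. Two specific things are left unestablished. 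First, it is not automatic that the c-monoid counterexamples behind Lemma~\ref{P:domrefs} admit \emph{any} compatible semilattice structure satisfying the proto-dioid axioms and (\ref{eq:c6}); that compatibility is exactly the nontrivial content, and without a concrete model the lemma is unproved. Second, even granting a model for (1) and (2), $d$ is far from injective (it collapses everything in $\T(S)$ onto $d(S)$, for instance), so the two sides of (2) could well be distinct yet have equal domains; items (3) and (4) therefore need to be checked in the model, not inferred from (1) and (2).

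The paper closes this gap by displaying a single four-element c-trioid on the chain $0<\pid<\sid<a$, with explicit tables for $\cdot$ and $\para$, and verifying all four inequations in it by direct computation: $(a\para\pid)\cdot d(0)=a\cdot 0=\pid\neq 0=(a\cdot d(0))\para(\pid\cdot d(0))$ for (1), $(a\cdot\pid)\cdot 0=0\neq\pid=a\cdot(\pid\cdot 0)$ for (2), and the corresponding $d$-computations $d(0)=0\neq\sid=d(\pid)$ for (3) and (4). Note also that your appeal to ``the paper's convention of not displaying the explicit tables'' does not apply here: that convention is used for Lemma~\ref{P:domrefs}, whereas for this lemma the tables are given in full precisely because the witness is the proof. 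To repair your argument, produce (or transcribe from~\cite{Struth15}) a concrete finite model and carry out the four evaluations explicitly.
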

\begin{proof}
\begin{enumerate}
\item Let $S=\{a\}$ and let the trioid operations be defined by $0<
  \pid <\sid<a$ and the tables for $\para $ and $\cdot$; from which 
  $d$ can be computed. 
    \begin{equation*}
      \begin{array}{c|cccc}
\para   &0&\pid&\sid & a\\
\hline 
 0  & 0 & 0 & 0 &0\\
\pid &  0& \pid & \sid &  a\\
\sid & 0 & \sid & \sid & a\\
a & 0 & a & a & a 
      \end{array}
\qquad\qquad 
     \begin{array}{c|cccc}
\cdot  &0&\pid&\sid & a\\
\hline 
 0  & 0 & 0 & 0 &0\\
\pid &  0& \pid & \pid &  \pid\\
\sid & 0 & \pid & \sid & a\\
a & \pid & \pid & a & a 
      \end{array}
\qquad\qquad 
\begin{array}[ ]{c|c}
  &d\\
\hline 
0 & 0\\
\pid & \sid\\
\sid & \sid\\
a & \sid 
\end{array}
    \end{equation*}
Then $(a\para  \pid)\cdot d(0) = a\cdot 0= \pid \neq 0= \pid\para  0=
  (a\cdot 0)\para  (\pid\cdot 0)=(a\cdot d(0))\para  (\pid\cdot d(0))$. 
\item With the same counterexample as in (1), $(a\cdot \pid)\cdot 0 =
  \pid\cdot 0 = 0\neq \pid= a \cdot 0 = a\cdot (\pid\cdot 0)$. 
\item Again, with the same counterexample, $\langle a\cdot \pid\rangle
  0 = d((a\cdot \pid)\cdot 0)=d(0)=0\neq \sid = d(\pid) = d(a \cdot
  d(\pid\cdot 0)) = \langle a\rangle \langle \pid\rangle0$.
\item Once more with the same counterexample, $\langle a\para
  \pid\rangle 0=d((a\para \pid)\cdot 0)= d(\pid)=\sid \neq
  0=d(0)=d((a\cdot 0)\para (\pid\cdot 0)) = \langle a\rangle 0\cdot
  \langle \pid\rangle 0$.
\end{enumerate}
\end{proof}


\section{c-Lattices}\label{S:clattices}

We have seen that c-monoids and c-trioids do not capture all
isomorphisms between sequential subidentities, parallel subidentities
and vectors outlined in Section~\ref{S:seqsubids}. In addition, they
are too weak to link the c-monoid-based domain definition with the
alternative ones presented in the same section. A meet
operation, and more specifically a bounded distributive lattice
structure, is needed for this purpose.

A \emph{c-lattice} is a structure 
$(S,+,\sqcap,\cdot,\para ,0,\sid,\pid,U,\pidc)$ such that 
$(L,+,\sqcap,0,U)$ is a bounded distributive lattice with least 
element $0$ and greatest element $U$,
$(S,+,\cdot,\para ,0,\sid,\pid)$ is a proto-trioid
and the following axioms hold.
\begin{align}
x\cdot \pid + x\cdot\pidc &= x\cdot U,\label{eq:cl1}\tag{cl1}\\
\pid\sqcap (x+\pidc)&= x\cdot 0,\label{eq:cl2}\tag{cl2}\\
x \cdot (y \para z) &\le (x \cdot y) \para (x \cdot z),\label{eq:cl3}\tag{cl3}\\
z \para z \le z &\Rightarrow (x \para y) \cdot z = (x \cdot z) \para
(y \cdot z),\label{eq:cl4}\tag{cl4}\\
x \cdot (y \cdot (z \cdot 0)) &= (x \cdot y) \cdot (z \cdot
0),\label{eq:cl5}\tag{cl5}\\
(x\cdot 0)\cdot y &= x\cdot (0\cdot y),\label{eq:cl6}\tag{cl6}\\
\sid \para \sid & =\sid,\label{eq:cl7}\tag{cl7}\\
((x \cdot \pid) \para \sid)\cdot y &= (x\cdot \pid) \para y,\label{eq:cl8}\tag{cl8}\\
((x\sqcap \sid)\cdot \pid)\para \sid &= x\sqcap \sid,\label{eq:cl9}\tag{cl9}\\
((x\sqcap \pidc)\cdot \pid)\para\sid &= \sid\sqcap (x\sqcap\pidc)\cdot \pidc,\label{eq:cl10}\tag{cl10}\\
((x\sqcap\pidc)\cdot\pid)\para\pidc &= (x\sqcap\pidc)\cdot \pidc.\label{eq:cl11}\tag{cl11}
\end{align}
Axiom (\ref{eq:cl1}) and (\ref{eq:cl2}) imply that $\pid$ and $\pidc$
are complements, that is,
\begin{equation*}
  \pid + \pidc = U,\qquad \pid\sqcap \pidc = 0.
\end{equation*}
In addition, (\ref{eq:cl2}) implies that $x\sqcap\pid = x\cdot 0$.

The axioms (\ref{eq:cl3})-(\ref{eq:cl6}) express associativity and
interaction properties. Further associativity properties for
sequential and parallel subidentities as well as for $U$ hold in the
multirelational model, but those are either derivable or not directly
needed for the main results in this article. We mention them
explicitly whenever they occur. Axioms (\ref{eq:cl7}) and
(\ref{eq:cl8}) are taken from c-monoids. Axiom (\ref{eq:cl9}) can be
written as $d(x\sqcap \sid)=x\sqcap\sid$; it expresses one of the
isomorphism conditions from Figure~\ref{F:extendeddiagram}. Axiom
(\ref{eq:cl10}) can be written as $d(x\sqcap\pidc)= \sid\sqcap
(x\sqcap\pidc)\cdot\pidc$.  Axiom (\ref{eq:cl11}) can be rewritten as
$d(x\sqcap \pidc)\cdot \pidc = (x\sqcap\pidc)\cdot \pidc$. These
domain properties are reminiscent of the relational case and have been
motivated in Section~\ref{S:seqsubids}.

We have used Isabelle's counterexample generators to analyse
irredundancy of these axioms. However, due to their large number, this
was not always successful. Whether the set of axioms can be compacted
further remains to be seen.

\begin{proposition}\label{P:clatticectrioid}
  Every c-lattice is a c-trioid. 
\end{proposition}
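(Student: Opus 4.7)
The plan is to verify that every c-lattice satisfies each c-monoid axiom (\ref{eq:c1})--(\ref{eq:c5}) together with (\ref{eq:c6}), since a c-lattice is already a proto-trioid by definition. Two of these axioms are given outright: (\ref{eq:c5}) is literally (\ref{eq:cl7}), and (\ref{eq:c2}) is literally (\ref{eq:cl8}). The remaining four require short derivations, and they all pivot on one auxiliary identity together with the fact that $\pid$ is the parallel unit.

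The auxiliary identity is $\pid\cdot 0 = \pid$. Instantiating (\ref{eq:cl2}) with $x := \pid$ gives $\pid\cdot 0 = \pid\sqcap(\pid+\pidc)$, which collapses to $\pid$ since $\pid \le \pid+\pidc$ in the underlying distributive lattice. With this in hand, (\ref{eq:c4}) is immediate from (\ref{eq:cl5}) taken at $z := \pid$: the identity $x\cdot(y\cdot(\pid\cdot 0)) = (x\cdot y)\cdot(\pid\cdot 0)$ simplifies to $x\cdot(y\cdot\pid) = (x\cdot y)\cdot\pid$. Axiom (\ref{eq:c3}) follows from (\ref{eq:cl4}): because $\pid$ is the parallel unit, $\pid\para\pid = \pid$, so the hypothesis $z\para z \le z$ holds at $z := \pid$ and the conclusion is exactly (\ref{eq:c3}).

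For (\ref{eq:c6}), I substitute $x\cdot\pid$ for $x$ in (\ref{eq:cl2}), obtaining $\pid\sqcap((x\cdot\pid)+\pidc) = (x\cdot\pid)\cdot 0$. The right-hand side rewrites via (\ref{eq:cl5}) at $y := \pid$, $z := \sid$, together with $\sid\cdot 0 = 0$, as $(x\cdot\pid)\cdot 0 = x\cdot(\pid\cdot 0)$, which then equals $x\cdot\pid$ by the auxiliary identity. Therefore $x\cdot\pid = \pid\sqcap((x\cdot\pid)+\pidc) \le \pid$.

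The most interesting step is (\ref{eq:c1}). The direction $(x\cdot\pid)\para x \le x$ combines (\ref{eq:c6}) with the isotonicity of $\para$---which itself follows from the right-distributivity of $\para$ over $+$ built into any proto-trioid---giving $(x\cdot\pid)\para x \le \pid\para x = x$. The reverse inequality is where the trick lies: instantiating (\ref{eq:cl3}) with $y := \pid$ and $z := \sid$ yields $x\cdot(\pid\para\sid) \le (x\cdot\pid)\para(x\cdot\sid)$, which simplifies to $x \le (x\cdot\pid)\para x$ since $\pid\para\sid = \sid$ and $x\cdot\sid = x$. Combining both directions delivers (\ref{eq:c1}). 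The main conceptual hurdle is spotting that (\ref{eq:cl3}) applied at precisely the two units produces the required lower bound; once this is found, the remainder is bookkeeping with the unit and distributivity laws.
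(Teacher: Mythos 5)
Your proof is correct and follows the same route the paper takes (the paper delegates the verification to Isabelle and does not display it): an axiom-by-axiom check of (\ref{eq:c1})--(\ref{eq:c6}), with (\ref{eq:c5}), (\ref{eq:c2}) read off from (\ref{eq:cl7}), (\ref{eq:cl8}), the auxiliary fact $\pid\cdot 0=\pid$ from (\ref{eq:cl2}) feeding (\ref{eq:cl5}) to get (\ref{eq:c4}) and (\ref{eq:c6}), the instance $z:=\pid$ of (\ref{eq:cl4}) giving (\ref{eq:c3}), and (\ref{eq:cl3}) at $y:=\pid$, $z:=\sid$ combined with isotonicity of $\para$ giving (\ref{eq:c1}). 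All steps check out.
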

It follows that every c-lattice is a c-monoid. Next we prove a
soundness result.

 \begin{proposition}\label{P:mrclattice}
    $(\M(X),\cap,\cdot,\para ,\emptyset,\sid,\pid,U,\pidc)$ forms a c-lattice.
 \end{proposition}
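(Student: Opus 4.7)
The plan is largely a bookkeeping exercise: the axioms (\ref{eq:cl1})--(\ref{eq:cl11}) have been chosen precisely to match set-theoretic facts already established in Section~\ref{S:preliminaries}, so the proof essentially reduces to pointing to the right lemma in each case. First I would note that $(\M(X),\cup,\cap,\emptyset,U)$ is a bounded distributive lattice (indeed a complete boolean algebra), since $\M(X)=2^{X\times 2^X}$ carries the usual powerset structure with bottom $\emptyset$ and top $U=X\times 2^X$. The proto-trioid structure of $(\M(X),\cup,\cdot,\para,\emptyset,\sid,\pid)$ is already implicit in Proposition~\ref{P:domtrioid}, which verifies the richer c-trioid structure, and also follows from the list of multirelational laws given just after Lemma~\ref{P:constprops}.

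Next I would verify the eleven c-lattice axioms in order. Axioms (\ref{eq:cl1}) and (\ref{eq:cl2}) are read directly off Lemma~\ref{P:mraxioms}(3) and (4), using the observation $x\cdot 0 = x\cdot \emptyset$ to translate the notation. Axiom (\ref{eq:cl3}) is Lemma~\ref{P:interactionlemma}(4) and axiom (\ref{eq:cl4}) is Lemma~\ref{P:interactionlemma}(2). For the associativity axioms (\ref{eq:cl5}) and (\ref{eq:cl6}) I would appeal to Lemma~\ref{P:interactionlemma}(6): in (\ref{eq:cl5}) the rightmost argument $z\cdot 0 = z\cdot \emptyset$ lies in $\T(X)$ by Lemma~\ref{P:constprops}(2), triggering the associativity clause; in (\ref{eq:cl6}) the middle argument $0=\emptyset$ is trivially in $\Su(X)$. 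Axiom (\ref{eq:cl7}) is Lemma~\ref{P:idemlemma}(1) applied to $\sid\in\Su(X)$. Finally, axioms (\ref{eq:cl8})--(\ref{eq:cl11}) are exactly clauses (1), (5), (6) and (7) of Lemma~\ref{P:mraxioms}.

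The only mild subtlety, and therefore the closest thing to a genuine obstacle, is keeping the dictionary between axiom labels and earlier lemma numbers straight, and supplying the small side-observations---chiefly $x\cdot 0\in\T(X)$ and $\emptyset\in\Su(X)$---that make the associativity lemma Lemma~\ref{P:interactionlemma}(6) applicable to (\ref{eq:cl5}) and (\ref{eq:cl6}). Once this correspondence is in hand, each axiom reduces to a one-line citation, so I would present the proof as a short table of references mirroring the style already used for Proposition~\ref{P:dommonoid} and Proposition~\ref{P:domtrioid}.
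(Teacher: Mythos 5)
Your proof is correct and follows essentially the same route as the paper's: the paper likewise verifies the proto-trioid part via Proposition~\ref{P:domtrioid}, axioms (\ref{eq:cl3})--(\ref{eq:cl6}) via Lemma~\ref{P:interactionlemma}, (\ref{eq:cl7})--(\ref{eq:cl8}) via the c-monoid soundness result, and the remaining axioms via Lemma~\ref{P:mraxioms}. Your citations of the individual clauses, including the side-observations that $z\cdot 0\in\T(X)$ and $\emptyset\in\Su(X)$ needed to invoke Lemma~\ref{P:interactionlemma}(6), are all accurate.
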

 \begin{proof}
   We have verified the proto-trioid axioms in
   Proposition~\ref{P:domtrioid}, Axioms (\ref{eq:cl3})-(\ref{eq:cl6})
   in Lemma~\ref{P:interactionlemma}, Axioms (\ref{eq:cl7}) and
   (\ref{eq:cl8}) in Proposition~\ref{P:dommonoid}, and the remaining
   axioms in Lemma~\ref{P:mraxioms}.
  \end{proof}

  We call a c-lattice \emph{boolean} if its lattice reduct forms a
  boolean algebra.  It is easy to see that multirelations form in fact
  boolean c-lattices.  Moreover, infima and suprema of arbitrary sets
  exist in the algebra of multirelations, and an infinite left
  distributivity law for sequential composition with respect to
  suprema and infinite left and right distributivity laws for
  parallel composition with respect to suprema hold. Multirelations
  therefore form quantale-like algebras. This is further explored in
  Sections~\ref{S:finiteiteration} and~\ref{S:divergence}.

For a c-lattice $S$ we can now define also
\begin{equation*}
  \V(S)=\{x\in S\mid (x\cdot\pid)\para U = x\},\qquad \U(S)=\{x\in
  S\mid x \para U = x\}.
\end{equation*}
\begin{lemma}\label{P:subidfix}
  Let $S$ be a c-lattice. Then
  \begin{enumerate}
  \item $d(S)=\Su(S)$,
  \item $\T(S)=\{x\in S\mid x\le \pid\}=\{x\in S\mid x\cdot
    0=x\}$,
\item $\V(S)=\{x\in S\mid d(x)\cdot U = x\}$,
 \item $\N(S)=\{x\in S\mid x\sqcap \pid =0\}=\{x\in S\mid x\le \pidc\}$.
  \end{enumerate}
\end{lemma}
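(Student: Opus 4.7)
The plan is to verify each of the four set equalities by chasing definitions and invoking the c-lattice axioms and the derived facts established immediately after them, notably the consequence $\pid + \pidc = U$, $\pid \sqcap \pidc = 0$ of (\ref{eq:cl1}) and (\ref{eq:cl2}), and the identity $x \sqcap \pid = x \cdot 0$ noted from (\ref{eq:cl2}).

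For (1), I would prove the two inclusions $d(S) \subseteq \Su(S)$ and $\Su(S) \subseteq d(S)$. The forward inclusion is immediate from Proposition~\ref{P:cprototrioid}: if $d(x) = x$ then $x = d(x) \le \sid$. The reverse inclusion is exactly axiom (\ref{eq:cl9}): if $x \le \sid$ then $x \sqcap \sid = x$, and (\ref{eq:cl9}) rewrites as $d(x\sqcap\sid) = x\sqcap\sid$, so $d(x) = x$.

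For (2), I would chain three inclusions. If $x \in \T(S)$, then $x = x\cdot\pid \le \pid$ by axiom (\ref{eq:c6}). If $x \le \pid$, then $x = x \sqcap \pid = x\cdot 0$ by the consequence of (\ref{eq:cl2}) mentioned above, giving $x\cdot 0 = x$. Finally, if $x\cdot 0 = x$, then axiom (\ref{eq:cl6}) and the proto-dioid law $0\cdot\pid = 0$ yield $x\cdot\pid = (x\cdot 0)\cdot \pid = x\cdot(0\cdot\pid) = x\cdot 0 = x$, so $x \in \T(S)$. For (3), the argument is a one-line application of (\ref{eq:cl8}) with $y = U$: $d(x)\cdot U = ((x\cdot\pid)\para\sid)\cdot U = (x\cdot\pid)\para U$, so the condition $d(x)\cdot U = x$ is literally the defining condition of $\V(S)$.

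For (4), I would first use distributivity in the lattice reduct together with $\pid + \pidc = U$ and $\pid\sqcap\pidc = 0$ to show $x \sqcap \pid = 0 \iff x \le \pidc$: the implication $\Leftarrow$ is monotonicity of meet; for $\Rightarrow$, use $x = x\sqcap U = (x\sqcap\pid)+(x\sqcap\pidc) = x\sqcap\pidc$. Then, applying $x\sqcap\pid = x\cdot 0$ once more, the condition $x\sqcap\pid = 0$ translates to $x\cdot 0 = 0$, i.e.\ $x\in\N(S)$. The main obstacle, if any, is just bookkeeping: recognising that each clause of the lemma corresponds directly to one of the c-lattice axioms or one of their stated consequences, and that no new algebraic work (in particular no new associativity or interaction reasoning) is needed.
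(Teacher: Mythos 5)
Your proof is correct and matches the paper's intended argument: the paper omits the proof (deferring to Isabelle), but its commentary following the lemma indicates exactly the chain of equivalences you establish, each clause reducing to one axiom or stated consequence --- (\ref{eq:cl9}) and $d(x)\le\sid$ for (1), (\ref{eq:c6}), $x\sqcap\pid=x\cdot 0$ and (\ref{eq:cl6}) for (2), (\ref{eq:cl8}) for (3), and complementation of $\pid$, $\pidc$ for (4).
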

More specifically, $d(x)=x$ if and only if $x\le \sid$, $x\cdot\pid =
x$ is equivalent to each of $x\le \pid$ and $x\cdot 0 = x$, $(x\cdot
c)\para U = x$ if and only if $d(x)\cdot U =x$, and $x\cdot 0 = 0$ is
equivalent to each of $x\sqcap \pid = 0$ and $x\le \pidc$.  This
correspondence between subidentities, domain elements and terminal
elements captures that in the multirelational model.

The next three lemmas collect some basic properties of c-lattices.  
\begin{lemma}\label{P:clatprops1}
Let $S$ be a c-lattice. Then
  \begin{enumerate}
\item $(x\cdot y)\cdot z = x\cdot (y\cdot z)$, if $z\in\T(S)$,
\item $(x\cdot z)\para (y\cdot z) = (x\para y)\cdot z$, if $z\in \T(S)\cup\Su(S)$,
\item $x\le x\para x$,
\item $x\sqcap y\le x\para y$.
\end{enumerate}
\end{lemma}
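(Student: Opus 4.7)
My plan is to dispatch the four items using the c-monoid and c-trioid facts already at hand: by Proposition~\ref{P:clatticectrioid} every c-lattice is a c-trioid, and hence a c-monoid, so the results from Section~\ref{S:cmonoids} apply.

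For (1), I will invoke Lemma~\ref{P:subidfix}(2), which characterises $\T(S)$ as the set of fixpoints of $(\cdot 0)$. Substituting $z=z\cdot 0$ and applying axiom (\ref{eq:cl5}) yields $(x\cdot y)\cdot z = (x\cdot y)\cdot (z\cdot 0) = x\cdot (y\cdot (z\cdot 0)) = x\cdot (y\cdot z)$, with no further associativity required. For (2), I will apply axiom (\ref{eq:cl4}), whose hypothesis $z\para z\le z$ is discharged in both cases by Lemma~\ref{eq:trianglecmon}: part (1) gives $z\para z=z$ for $z\in\T(S)$, and part (2), together with the identification $d(S)=\Su(S)$ from Lemma~\ref{P:subidfix}(1), gives it for $z\in\Su(S)$.

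Part (3) is the most interesting step, since a naive route via axiom (\ref{eq:c1}) would reduce $x\le x\para x$ to $x\cdot\pid\le x$, which fails in general (one has only $x\cdot\pid\le\pid$). Instead I plan to exploit the conditional left-distributivity (\ref{eq:cl3}) together with the idempotency of $\sid$ under $\para$ expressed by (\ref{eq:cl7}): starting from $x=x\cdot\sid=x\cdot(\sid\para\sid)$ and then applying (\ref{eq:cl3}) with $y=z=\sid$ gives $x\le (x\cdot\sid)\para(x\cdot\sid)=x\para x$. I expect this to be the main obstacle, since spotting the right instantiation of the conditional distributivity is not obvious a priori.

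Finally, (4) will follow from (3) combined with isotonicity of $\para$ in any proto-trioid (a consequence of right distributivity and the absorption of $+$): from $x\sqcap y\le x$ and $x\sqcap y\le y$, isotonicity yields $(x\sqcap y)\para(x\sqcap y)\le x\para y$, and applying (3) to $x\sqcap y$ gives $x\sqcap y\le (x\sqcap y)\para(x\sqcap y)$, closing the argument.
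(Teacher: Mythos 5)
Your proof is correct and follows what is evidently the intended route: (1) via the fixpoint characterisation $z=z\cdot 0$ from Lemma~\ref{P:subidfix}(2) and axiom (\ref{eq:cl5}); (2) via (\ref{eq:cl4}) with the idempotency hypothesis discharged by Lemma~\ref{eq:trianglecmon} and $d(S)=\Su(S)$; (3) via (\ref{eq:cl3}) and (\ref{eq:cl7}) applied to $x=x\cdot(\sid\para\sid)$; and (4) by combining (3) with isotonicity of $\para$. The paper relegates this proof to its Isabelle formalisation, but your derivation uses exactly the axioms and prior lemmas in the way they were designed to be used, so there is nothing to add.
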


\begin{lemma}\label{P:clatprops1a}
Let $S$ be a  c-lattice. Then
\begin{enumerate}
\item $x= (x\sqcap\pidc)+x\cdot 0$,
\item $x\le \pidc$, if $x\in\Su(S)$,
\item $(x\cdot\pid)\sqcap\pidc = (x\cdot 0)\sqcap \pidc = (x\sqcap\pidc)\cdot 0 = 0$.
 \end{enumerate}
\end{lemma}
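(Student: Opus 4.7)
The plan is to exploit the three facts already established in the preamble: (i) $\pid$ and $\pidc$ are complements in the bounded distributive lattice, so $\pid+\pidc=U$ and $\pid\sqcap\pidc=0$; (ii) (cl2) together with distributivity yields $x\sqcap\pid=x\cdot 0$; and (iii) by Proposition~\ref{P:clatticectrioid} every c-lattice is a c-trioid, hence $x\cdot\pid\le\pid$ by (\ref{eq:c6}). Each of the three claims reduces to a short lattice manipulation.

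For (1), I would apply distributivity to $x=x\sqcap U$ using $U=\pid+\pidc$, obtaining $x=(x\sqcap\pid)+(x\sqcap\pidc)$, and then substitute $x\sqcap\pid=x\cdot 0$ and use commutativity of $+$ to arrive at $(x\sqcap\pidc)+x\cdot 0$.

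For (2), the key step is to show $\sid\le\pidc$: from the unit law $\sid\cdot y=y$ one has $\sid\cdot 0=0$, hence $\sid\sqcap\pid=\sid\cdot 0=0$, and complementarity of $\pid,\pidc$ in a distributive lattice forces $\sid\le\pidc$. Transitivity of $\le$ then gives $x\le\sid\le\pidc$ whenever $x\in\Su(S)$.

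For (3), I would establish each of the three equalities by producing an inequality with $\pid\sqcap\pidc=0$ on the right: for the first, $(x\cdot\pid)\sqcap\pidc\le\pid\sqcap\pidc=0$ by (\ref{eq:c6}); for the second, $(x\cdot 0)\sqcap\pidc=(x\sqcap\pid)\sqcap\pidc\le\pid\sqcap\pidc=0$ using (ii) above; and for the third, $(x\sqcap\pidc)\cdot 0=(x\sqcap\pidc)\sqcap\pid\le\pidc\sqcap\pid=0$, again by (ii). All three are non-negative, so the equalities follow. There is no serious obstacle here — the only subtle point is making sure at each step that one is invoking the correct form of (cl2) rather than the stronger boolean reasoning, since the lattice is only assumed distributive, not boolean.
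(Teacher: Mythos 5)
Your proof is correct, and it follows exactly the route the paper intends: the paper prints no proof for this lemma, but its preamble to the c-lattice axioms establishes precisely the facts you use ($\pid+\pidc=U$, $\pid\sqcap\pidc=0$, $x\sqcap\pid=x\cdot 0$), and each of (1)--(3) then follows by the short distributive-lattice manipulations you give, with $x\cdot\pid\le\pid$ legitimately imported via Proposition~\ref{P:clatticectrioid}.
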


\begin{lemma}\label{P:clatprops2}
  In every c-lattice,
  \begin{enumerate}
\item $\pid = U\cdot 0$,
\item $U\para U = U\cdot U= U\cdot \pidc = \pidc \cdot U = U$,
\item $\pid\cdot x = \pidc \cdot \pid = U \cdot\pid = \pid$,
\item $\pidc\para\pidc = U\para \pidc = \pidc\cdot\pidc= \pidc$.
  \end{enumerate}
\end{lemma}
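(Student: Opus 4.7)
The plan is to establish the four items in sequence, since later identities use earlier ones, and to draw throughout on a small toolbox: left- and right-isotony of $\cdot$ (both follow from right distributivity and the subdistributivity $x\cdot y+x\cdot z\le x\cdot(y+z)$); the inequality $x\cdot 0\le\pid$ (immediate from (c6) via $x\cdot 0\le x\cdot\pid\le\pid$, or from (cl2)); the inequality $\sid\le\pidc$ provided by Lemma~\ref{P:clatprops1a}(2); the identity $\pid=\pid\cdot 0$, which holds because $\pid\in\T(S)$ by Lemma~\ref{P:subidfix}(2); the super-idempotency $x\le x\para x$ of Lemma~\ref{P:clatprops1}(3); and right distributivity of $\para$ over $+$.

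Item (1) is a two-sided squeeze: $U\cdot 0\le U\cdot\pid\le\pid$ by left-isotony and (c6), while $\pid=\pid\cdot 0\le U\cdot 0$ by right-isotony. For item (2), $U\para U=U$ combines Lemma~\ref{P:clatprops1}(3) with the fact that $U$ is the top; $U\cdot U=U$ follows from $U=U\cdot\sid\le U\cdot U\le U$; and $U\cdot\pidc=U=\pidc\cdot U$ then come out by applying isotony to the chain $\sid\le\pidc\le U$ on each side respectively. Item (3) rests on a single one-liner using (cl6): $\pid\cdot x=(\pid\cdot 0)\cdot x=\pid\cdot(0\cdot x)=\pid\cdot 0=\pid$. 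The equality $U\cdot\pid=\pid$ then follows from (c6) combined with $\pid=\sid\cdot\pid\le U\cdot\pid$, and $\pidc\cdot\pid$ is sandwiched between $\sid\cdot\pid$ and $U\cdot\pid$ via right-isotony and $\sid\le\pidc\le U$.

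The main obstacle is item (4). For $\pidc\cdot\pidc=\pidc$ I instantiate (cl11) with $x=U$: since $U\sqcap\pidc=\pidc$, this gives $(\pidc\cdot\pid)\para\pidc=\pidc\cdot\pidc$, and applying item (3) together with the $\para$-unit law reduces the left-hand side to $\pid\para\pidc=\pidc$. The delicate step is $\pidc\para\pidc=\pidc$, since no axiom directly closes $\pidc$ under $\para$: the plan is to decompose $\pidc\para\pidc=((\pidc\para\pidc)\sqcap\pidc)+(\pidc\para\pidc)\cdot 0$ via Lemma~\ref{P:clatprops1a}(1) and kill the second summand using (cl4) with $z=0$. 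The hypothesis of (cl4) then reads $0\para 0\le 0$, which is trivial because $0$ is absorbing for $\para$; hence $(\pidc\para\pidc)\cdot 0=(\pidc\cdot 0)\para(\pidc\cdot 0)=0\para 0=0$, using $\pidc\in\N(S)$. Thus $\pidc\para\pidc\le\pidc$, and the reverse inequality is Lemma~\ref{P:clatprops1}(3). Finally $U\para\pidc=\pidc$ drops out of right distributivity: $U\para\pidc=(\pid+\pidc)\para\pidc=\pid\para\pidc+\pidc\para\pidc=\pidc+\pidc=\pidc$.
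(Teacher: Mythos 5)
Your proof is correct. The paper itself displays no proof for this lemma (it is one of the facts delegated to the Isabelle formalisation), so there is nothing explicit to compare against; judged on its own, every step checks out against the c-lattice axioms and the earlier lemmas you cite. In particular the two genuinely delicate points are handled properly: $\pidc\cdot\pidc=\pidc$ via (cl11) at $x=U$ together with $\pidc\cdot\pid=\pid$ from your item (3), and $\pidc\para\pidc\le\pidc$ via the decomposition $x=(x\sqcap\pidc)+x\cdot 0$ of Lemma~\ref{P:clatprops1a}(1), where the terminal summand vanishes because (cl4) applies with $z=0$ (its hypothesis $0\para 0\le 0$ holds since $0$ annihilates $\para$ in the commutative dioid reduct) and $\pidc\cdot 0=0$. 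The ordering of the four items is also consistent: each identity uses only axioms, previously stated lemmas, or earlier items of this lemma.
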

Property (1) gives an explicit definition of $\pid$. Some of the other
properties have already been stated in Lemma~\ref{P:constprops} in the
multirelational model. It follows that all constants are
sequential and parallel idempotents.

The final lemma of this section collects further simplification and
decomposition properties that are interesting for domain definitions.
\begin{lemma}\label{P:clatticeprops3}
  In every c-lattice,
  \begin{enumerate}
  \item $x\cdot y = (x\sqcap\pidc)\cdot y + x\cdot 0$,
\item $\sid\sqcap (x\sqcap\pidc)\cdot y = \sid\sqcap x\cdot y$,
\item $\sid\sqcap x\cdot\pidc = \sid\sqcap x\cdot U$,
\item $\sid\sqcap x\para\pidc = \sid\sqcap x\para U$,
\item $(x\cdot \pid)\para\pidc = (x\sqcap\pidc)\cdot\pidc + (x\cdot 0)\para\pidc$,
\item $(x\cdot \pid)\para U= x\cdot U+(x\cdot 0)\para U$.
\end{enumerate}
\end{lemma}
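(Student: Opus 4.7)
The proof is driven by the identity $x = (x\sqcap\pidc) + x\cdot 0$ from Lemma~\ref{P:clatprops1a}(1) together with the absorption $(x\cdot 0)\cdot y = x\cdot(0\cdot y) = x\cdot 0$, obtained from axiom~(\ref{eq:cl6}) and the proto-dioid law $0\cdot y = 0$. For~(1), I right-distribute the identity over $y$ to get $x\cdot y = (x\sqcap\pidc)\cdot y + (x\cdot 0)\cdot y$ and apply the absorption. For~(2), meeting both sides of~(1) with $\sid$ and distributing reduces the claim to $\sid\sqcap x\cdot 0 = 0$; but $\sid\le\pidc$ by Lemma~\ref{P:clatprops1a}(2), and $x\cdot 0\le\pid$ because Lemma~\ref{P:clatprops1a}(3) gives $(x\cdot 0)\sqcap\pidc = 0$ which, in the distributive lattice with complements $\pid,\pidc$, forces $x\cdot 0\le\pid$; hence $\sid\sqcap x\cdot 0\le\pid\sqcap\pidc = 0$. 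Part~(3) is analogous: axiom~(\ref{eq:cl1}) yields $x\cdot U = x\cdot\pid + x\cdot\pidc$, and the summand $\sid\sqcap x\cdot\pid$ vanishes by the same argument, using $x\cdot\pid\le\pid$ (axiom~(\ref{eq:c6}), available in any c-lattice by Proposition~\ref{P:clatticectrioid}).

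For~(4), the plan is to expand $x\para U = x\para(\pid+\pidc) = x + x\para\pidc$ using right-distributivity of $\para$ over $+$ and $\pid$ being the $\para$-unit. Meeting with $\sid$ produces $(\sid\sqcap x) + (\sid\sqcap x\para\pidc)$, so the claim reduces to $\sid\sqcap x \le x\para\pidc$. Setting $p = \sid\sqcap x$, I observe $p\le\sid\le\pidc$ by Lemma~\ref{P:clatprops1a}(2); moreover $p\in\Su(S) = d(S)$ by Lemma~\ref{P:subidfix}(1), so $p\para p = p$ by Lemma~\ref{eq:trianglecmon}(2); hence $p = p\para p\le x\para\pidc$ by isotonicity of $\para$ applied to $p\le x$ and $p\le\pidc$.

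For~(5), I instantiate~(1) at $y=\pid$ to obtain $x\cdot\pid = (x\sqcap\pidc)\cdot\pid + x\cdot 0$, then $\para$-compose with $\pidc$ on the right and use right-distributivity of $\para$ to get $(x\cdot\pid)\para\pidc = ((x\sqcap\pidc)\cdot\pid)\para\pidc + (x\cdot 0)\para\pidc$; the first summand collapses to $(x\sqcap\pidc)\cdot\pidc$ by axiom~(\ref{eq:cl11}). For~(6), using $U=\pid+\pidc$ and $\pid$ as $\para$-unit, $(x\cdot\pid)\para U = x\cdot\pid + (x\cdot\pid)\para\pidc$, into which~(5) is substituted. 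On the other side, axiom~(\ref{eq:cl1}) combined with~(1) at $y=\pidc$ gives $x\cdot U = x\cdot\pid + (x\sqcap\pidc)\cdot\pidc + x\cdot 0$, and the same expansion of $U$ yields $(x\cdot 0)\para U = x\cdot 0 + (x\cdot 0)\para\pidc$; the two sides then match after absorbing $x\cdot 0\le x\cdot\pid$ (from $0\le\pid$ by right-isotonicity of $\cdot$). The main obstacle is the bookkeeping in~(6): several expansions produce slightly different combinations of the summands $x\cdot\pid$, $(x\sqcap\pidc)\cdot\pidc$, $x\cdot 0$ and $(x\cdot 0)\para\pidc$, and spotting the right absorption $x\cdot 0\le x\cdot\pid$ is what forces agreement.
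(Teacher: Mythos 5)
Your proof is correct and follows the route the paper intends (its own proof is not displayed): everything is driven by the decomposition $x=(x\sqcap\pidc)+x\cdot 0$ of Lemma~\ref{P:clatprops1a}, the complementation of $\pid$ and $\pidc$, and axioms (\ref{eq:cl1}), (\ref{eq:cl6}) and (\ref{eq:cl11}). All the individual steps---including the absorption $x\cdot 0\le x\cdot\pid$ needed in (6) and the argument $\sid\sqcap x=(\sid\sqcap x)\para(\sid\sqcap x)\le x\para\pidc$ in (4)---check out against the c-lattice axioms and the cited lemmas.
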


Equations (2), (3) and (4) can be visualised by the following
subdiagrams of Figure~\ref{F:extendeddiagram}.

\begin{equation*}
  \def\labelstyle{\normalsize}
\xymatrix @R=2pc @C=2pc{
S\ar[r]^{(\sqcap\pidc)} \ar[d]_{(\cdot y)}& \N(S)\ar[r]^{(\cdot y)} & S\ar[d]^{(\sqcap\sid)}\\
S\ar[rr]_{(\sqcap \sid)}& &\Su(S) 
}
\quad 
  \def\labelstyle{\normalsize}
\xymatrix @R=2pc @C=2pc{
S\ar[r]^{(\cdot\pidc)}\ar[d]_{(\cdot U)}& S\ar[d]^{(\sqcap\sid)}\\
S\ar[r]_{(\sqcap\sid)} & \Su(S) 
}
\quad 
  \def\labelstyle{\normalsize}
\xymatrix @R=2pc @C=2pc{
S\ar[r]^{(\para\pidc)}\ar[d]_{(\para U)}& S\ar[d]^{(\sqcap\sid)}\\
S\ar[r]_{(\sqcap\sid)} & \Su(S) 
}
\end{equation*}


\section{Domain in c-Lattices}\label{S:clatticedomain}

This section presents explicit domain definitions and related 
properties for general and nonterminal elements. First, using 
$d(x)=(x\cdot\pid)\para\sid$, recall that we can rewrite some of the c-lattice 
axioms:
\begin{align*}
  d(x)\cdot y &= (x\cdot\pid)\para y,\tag{cl8}\\
d(x\sqcap\sid) &=x\sqcap\sid,\tag{cl9}\\
d(x\sqcap\pidc)&=\sid\sqcap(x\sqcap\pidc)\cdot\pidc,\tag{cl10}\\
d(x\sqcap\pidc)\cdot\pidc &= (x\sqcap\pidc)\cdot\pidc.\tag{cl11}
\end{align*}
Properties (\ref{eq:cl10}) and (\ref{eq:cl11}) are visualised by the
following subdiagram of Figure~\ref{F:extendeddiagram}.
\begin{equation*}
  \def\labelstyle{\normalsize}
\xymatrix @R=2pc @C=4pc{
\N(S)\ar[r]^{(\cdot\pidc)}\ar[dr]_{d} &\V(\N(S))\ar@/^/[d]^{(\sqcap\sid)}\\
& \Su(S)\ar@/^/[u]^{(\cdot\pidc)}
}
\end{equation*}
These two identities admit the following variations.
\begin{lemma}\label{P:dclatprop1}
In every c-lattice,
  \begin{enumerate}
\item $d(U)=d(\pidc)=\sid$,
  \item $d(x\sqcap\pidc) =\sid\sqcap x\cdot\pidc$,
\item $d(x\sqcap\pidc) =\sid\sqcap (x \sqcap \pidc)\cdot U$,
\item $d(x\sqcap\pidc) =\sid\sqcap x\cdot U$,
\item $d(x\sqcap\pidc) = \sid\sqcap ((x\sqcap \pidc)\cdot 
  \pid)\para\pidc$,
\item $d(x\sqcap\pidc) = \sid\sqcap ((x\sqcap \pidc)\cdot \pid)\para U$. 
  \end{enumerate}
\end{lemma}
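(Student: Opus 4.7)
The plan is to derive all six identities by chaining axiom (\ref{eq:cl10}), axiom (\ref{eq:cl11}), and the substitution lemmas already established, using the explicit definition $d(x) = (x\cdot\pid)\para\sid$ throughout.

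For (1), I would simply unfold the definition. Since Lemma~\ref{P:clatprops2}(3) gives $U\cdot\pid=\pid$ and $\pidc\cdot\pid=\pid$, and since $\pid$ is the unit of $\para$, we immediately get
\begin{equation*}
d(U)=(U\cdot\pid)\para\sid=\pid\para\sid=\sid,\qquad d(\pidc)=(\pidc\cdot\pid)\para\sid=\pid\para\sid=\sid.
\end{equation*}

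For (2) I would start from axiom (\ref{eq:cl10}), which in the form $d(x\sqcap\pidc)=\sid\sqcap(x\sqcap\pidc)\cdot\pidc$ is the key hinge. Specialising Lemma~\ref{P:clatticeprops3}(2) at $y:=\pidc$ gives $\sid\sqcap(x\sqcap\pidc)\cdot\pidc=\sid\sqcap x\cdot\pidc$, and (2) falls out. For (4) I would chain (2) with Lemma~\ref{P:clatticeprops3}(3), which states $\sid\sqcap x\cdot\pidc=\sid\sqcap x\cdot U$. For (3) I would instead substitute $x:=x\sqcap\pidc$ directly into Lemma~\ref{P:clatticeprops3}(3) and combine with the hinge (\ref{eq:cl10}).

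For (5) and (6) I would use axiom (\ref{eq:cl11}), rewritten as $((x\sqcap\pidc)\cdot\pid)\para\pidc=(x\sqcap\pidc)\cdot\pidc$; meeting both sides with $\sid$ and applying (\ref{eq:cl10}) yields (5). Finally, substituting $x:=(x\sqcap\pidc)\cdot\pid$ into Lemma~\ref{P:clatticeprops3}(4) gives $\sid\sqcap((x\sqcap\pidc)\cdot\pid)\para\pidc=\sid\sqcap((x\sqcap\pidc)\cdot\pid)\para U$, which combined with (5) delivers (6).

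No step looks genuinely hard: the whole lemma is a bookkeeping exercise around the hinge identity (\ref{eq:cl10}). The only mild obstacle is being careful about which variable receives each substitution in Lemma~\ref{P:clatticeprops3}, since some parts absorb the $\sqcap\pidc$ into $x$ and others do not. Once the right substitution instances are selected, each clause reduces to a one-line rewrite.
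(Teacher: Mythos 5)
Your proposal is correct and follows essentially the same route the paper indicates: the paper gives no displayed proof but explicitly says these identities are obtained by combining axioms (\ref{eq:cl10}) and (\ref{eq:cl11}) with the decomposition facts of Lemma~\ref{P:clatticeprops3}, which is precisely your chain of substitutions (plus the direct unfolding of $d$ via $U\cdot\pid=\pidc\cdot\pid=\pid$ for part (1)).
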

Identity (5) corresponds to another subdiagram of
Figure~\ref{F:extendeddiagram}.
\begin{equation*}
  \def\labelstyle{\normalsize}
\xymatrix @R=2pc @C=2pc{
\N(S)\ar[r]^{(\cdot\pid)}\ar[d]_d& \T(\N(S))\ar[d]^{(\para\pidc)}\\
\Su(S) & \V(\N(S))\ar[l]^{(\sqcap\sid)}
}
\end{equation*}
The remaining properties are obtained by combining the diagrams for
(\ref{eq:cl10}), (\ref{eq:cl11}) and (5) with those for
Lemma~\ref{P:clatticeprops3}. Equation (3), for instance, corresponds
to the following commuting diagram.
\begin{equation*}
  \def\labelstyle{\normalsize}
\xymatrix @R=2pc @C=2pc{
\N(S)\ar[r]^{(\cdot\pidc)}\ar[dr]_d\ar[d]_{(\cdot U)}& \V(\N(S))\ar[d]^{(\sqcap\sid)}\\
S\ar[r]_{(\sqcap\sid)} & \Su(S)
}
\end{equation*}

Since $d(x)=d(x\sqcap\pidc)+d(x\cdot 0)$ by Lemma~\ref{P:clatprops1}
and Proposition~\ref{P:cprototrioid}, we can  generate explicit 
definitions of $d(x)$ by inserting those for 
$d(x\sqcap\pidc)$ and $d(x\cdot 0)$, for instance 
\begin{equation*}
  d(x)= (\sid\sqcap x\cdot U)+(x\cdot 0)\para\sid. 
\end{equation*}
This explains, in particular, the difference between relational and 
multirelational domain definitions. We can also use the various 
definitions of $d(x\sqcap\pidc)$ and $d(x\cdot 0)$ to generate more 
compact definitions. 

\begin{lemma}\label{P:explicitdomverif}
In every c-lattice,
  \begin{enumerate}
  \item $d(x) = \sid\sqcap (x\cdot \pid)\para \pidc$,
 \item $d(x) = \sid\sqcap (x\cdot \pid)\para U$,
  \item $d(x)= (\pid\sqcap x\cdot U)\para \sid$. 
\end{enumerate}
\end{lemma}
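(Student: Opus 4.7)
My strategy is: handle (3) by a direct computation using (\ref{eq:cl1}), (\ref{eq:cl2}), (\ref{eq:cl6}), and (\ref{eq:c6}); reduce (1) to a single auxiliary identity about terminal elements via the canonical decomposition $d(x) = d(x\sqcap\pidc)+d(x\cdot 0)$; and derive (2) from (1) by splitting $U = \pid+\pidc$.

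For (3), I start from $\pid\sqcap x\cdot U$ and substitute $x\cdot U = x\cdot\pid+x\cdot\pidc$ using (\ref{eq:cl1}). Lattice distributivity splits the meet into $(\pid\sqcap x\cdot\pid)+(\pid\sqcap x\cdot\pidc)$; the first summand equals $x\cdot\pid$ by (\ref{eq:c6}), and the second equals $x\cdot 0$: the inclusion $x\cdot 0\le\pid\sqcap x\cdot\pidc$ uses left-monotonicity of $\cdot$ together with $x\cdot 0\in\T(S)$, while the converse follows from $\pid\sqcap x\cdot\pidc\le\pid\sqcap(x\cdot\pidc+\pidc) = (x\cdot\pidc)\cdot 0 = x\cdot(\pidc\cdot 0) = x\cdot 0$, using (\ref{eq:cl2}), (\ref{eq:cl6}), and the instance $\pidc\cdot 0 = 0$ of Lemma~\ref{P:clatprops1a}(3). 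Absorbing $x\cdot 0$ into $x\cdot\pid$ gives $\pid\sqcap x\cdot U = x\cdot\pid$, whence $(\pid\sqcap x\cdot U)\para\sid = (x\cdot\pid)\para\sid = d(x)$.

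For (1), the decomposition $d(x) = d(x\sqcap\pidc)+d(x\cdot 0)$ comes from Lemma~\ref{P:clatprops1a}(1) and additivity of $d$ (Proposition~\ref{P:cprototrioid}). On the right of (1), Lemma~\ref{P:clatticeprops3}(5) gives $(x\cdot\pid)\para\pidc = (x\sqcap\pidc)\cdot\pidc+(x\cdot 0)\para\pidc$; meeting with $\sid$ and distributing yields $(\sid\sqcap(x\sqcap\pidc)\cdot\pidc)+(\sid\sqcap(x\cdot 0)\para\pidc)$, the first summand being $d(x\sqcap\pidc)$ by (\ref{eq:cl10}). Since $d(x\cdot 0) = (x\cdot 0)\para\sid$ (the definition of $d$ together with $(x\cdot 0)\cdot\pid = x\cdot 0$ via (\ref{eq:cl6})), (1) reduces to the auxiliary claim that $v\para\sid = \sid\sqcap v\para\pidc$ for every $v\in\T(S)$, applied to $v = x\cdot 0$. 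This is the main obstacle. Setting $p = v\para\sid$, the hypothesis $v\le\pid$ gives $p\le\pid\para\sid = \sid$, hence $p\in\Su(S)$, and further $p\le\pidc$ by Lemma~\ref{P:clatprops1a}(2); combining (\ref{eq:cl9}) with (\ref{eq:cl10}) yields $p = d(p) = \sid\sqcap p\cdot\pidc$. It remains to identify $p\cdot\pidc$ with $v\para\pidc$, which follows from (\ref{eq:cl4}) applied with $z=\pidc$ (licensed by $\pidc\para\pidc = \pidc$ from Lemma~\ref{P:clatprops2}(4)), together with $v\cdot\pidc = v$ (from $v = v\cdot 0$, (\ref{eq:cl6}), and $0\cdot\pidc = 0$) and $\sid\cdot\pidc = \pidc$.

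Finally, (2) collapses to (1): writing $U = \pid+\pidc$ and using that $\pid$ is the $\para$-unit gives $(x\cdot\pid)\para U = (x\cdot\pid)\para\pidc+x\cdot\pid$; distributing the meet with $\sid$, the term $\sid\sqcap x\cdot\pid$ vanishes because $x\cdot\pid\le\pid$ and $\sid\sqcap\pid = 0$, the latter being an instance of (\ref{eq:cl2}) (take $y=\sid$, whence $\pid\sqcap(\sid+\pidc) = \sid\cdot 0 = 0$, so $\pid\sqcap\sid = 0$). Thus $\sid\sqcap(x\cdot\pid)\para U = \sid\sqcap(x\cdot\pid)\para\pidc = d(x)$ by (1).
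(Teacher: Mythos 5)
Your proof is correct and follows the route the paper itself indicates: decompose $d(x)=d(x\sqcap\pidc)+d(x\cdot 0)$ and match it against the decomposition of $(x\cdot\pid)\para\pidc$ from Lemma~\ref{P:clatticeprops3}(5), reducing (1) to the identity $v\para\sid=\sid\sqcap v\para\pidc$ for terminal $v$, with (3) by the direct computation $\pid\sqcap x\cdot U=x\cdot\pid$ and (2) reduced to (1). One small slip: the step $(x\cdot\pidc)\cdot 0=x\cdot(\pidc\cdot 0)$ in your proof of (3) is an instance of (\ref{eq:cl5}) (or of Lemma~\ref{P:clatprops1}(1) with $0\in\T(S)$), not of (\ref{eq:cl6}), which has the shape $(x\cdot 0)\cdot y=x\cdot(0\cdot y)$.
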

These identities correspond to the following commuting subdiagrams of
Figure~\ref{F:extendeddiagram}.
\begin{equation*}
  \def\labelstyle{\normalsize}
\xymatrix @R=2pc @C=5pc{
 S\ar[d]_{(\para\pidc)\circ(\cdot\pid)}\ar[dr]_{d}\ar[r]^{(\para U)\circ (\cdot \pid)} &
 \V(S)\ar[d]^{(\sqcap \sid)}\\
S\ar[r]_{(\sqcap\sid)}& \Su(S) 
}
\qquad\quad 
  \def\labelstyle{\normalsize}
\xymatrix @R=2pc @C=3pc{
 S\ar[dr]^{(\cdot \pid)}\ar[r]^{(\cdot U)} \ar[d]_{d}&
 S\ar[d]^{(\sqcap \pid)}\\
\Su(S)& \T(S) \ar[l]^{(\para  \sid)}
}
\end{equation*}
Finally we mention some variations of Axiom (\ref{eq:cl11}). 
\begin{lemma}\label{P:dompropsverif} 
  In every c-lattice,
  \begin{enumerate}
\item $d(x\sqcap\pidc)\cdot U = (x\sqcap\pidc)\cdot U$,
\item $d(x)\cdot\pidc = (x\sqcap\pidc)\cdot\pidc + (x\cdot 
  0)\para\pidc$,
\item $d(x)\cdot U = (x\sqcap\pidc)\cdot U + (x\cdot 0)\para U$,
  \item $d(x)\cdot U = x\cdot U+(x\cdot 0)\para U$,
\item $x\cdot \pidc =  d(x\sqcap\pidc)\cdot \pidc+ x\cdot 0$. 
\item $x\cdot U= d(x\sqcap\pidc)\cdot U+ x\cdot 0$,
\item $d(x\cdot U) = d(x\cdot\pidc) = d(x)$. 
\end{enumerate}
\end{lemma}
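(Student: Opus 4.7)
The plan is to derive the seven identities one by one, repeatedly exploiting three main tools: the decomposition $x=(x\sqcap\pidc)+x\cdot 0$ from Lemma~\ref{P:clatprops1a}(1), axiom (\ref{eq:cl1}) giving $x\cdot U=x\cdot\pid+x\cdot\pidc$, and the c-monoid relation $d(y)\cdot\pid=y\cdot\pid$ together with its lifting $d(y)\cdot z=(y\cdot\pid)\para z$ from axiom (\ref{eq:cl8}). I will also repeatedly use that $x\cdot 0\in\T(S)$, which follows from (\ref{eq:cl6}) since $(x\cdot 0)\cdot 0=x\cdot(0\cdot 0)=x\cdot 0$, so $(x\cdot 0)\cdot\pid=x\cdot 0$ by Lemma~\ref{P:subidfix}(2).

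For (1), the key move is to split $U$ via (\ref{eq:cl1}): compute $d(x\sqcap\pidc)\cdot U=d(x\sqcap\pidc)\cdot\pid+d(x\sqcap\pidc)\cdot\pidc$ and then rewrite $d(x\sqcap\pidc)\cdot\pid=(x\sqcap\pidc)\cdot\pid$ by Lemma~\ref{P:auxdomprops}(3) and $d(x\sqcap\pidc)\cdot\pidc=(x\sqcap\pidc)\cdot\pidc$ by (\ref{eq:cl11}); reassembling gives $(x\sqcap\pidc)\cdot U$. For (2), begin with $d(x)=d(x\sqcap\pidc)+d(x\cdot 0)$, which follows from the additivity of $d$ (Proposition~\ref{P:cprototrioid}) and Lemma~\ref{P:clatprops1a}(1). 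Right-distribute over $\pidc$, apply (\ref{eq:cl11}) to the first summand, and for the second use (\ref{eq:cl8}) to get $d(x\cdot 0)\cdot\pidc=((x\cdot 0)\cdot\pid)\para\pidc=(x\cdot 0)\para\pidc$. Part (3) is identical in structure with $U$ in place of $\pidc$, using part (1) for the first summand.

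For (4), starting from (3) I would observe that $(x\cdot 0)\cdot U=x\cdot(0\cdot U)=x\cdot 0$ by (\ref{eq:cl6}) and $0\cdot U=0$, so $x\cdot U=(x\sqcap\pidc)\cdot U+x\cdot 0$ after right-distributing the decomposition of Lemma~\ref{P:clatprops1a}(1). Since $x\cdot 0\le\pid$ gives $x\cdot 0=x\cdot 0\para\pid\le(x\cdot 0)\para U$, adding $(x\cdot 0)\para U$ to both expressions absorbs the stray $x\cdot 0$ and identifies $d(x)\cdot U$ with $x\cdot U+(x\cdot 0)\para U$. Parts (5) and (6) are the direct corollaries: decompose $x$ via Lemma~\ref{P:clatprops1a}(1), right-distribute over $\pidc$ respectively $U$, collapse $(x\cdot 0)\cdot\pidc$ and $(x\cdot 0)\cdot U$ to $x\cdot 0$ using (\ref{eq:cl6}), and finally use (\ref{eq:cl11}) and part (1) to replace $(x\sqcap\pidc)\cdot\pidc$ and $(x\sqcap\pidc)\cdot U$ by their domain-wrapped versions.

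For (7), I would invoke the explicit domain formula $d(y)=\sid\sqcap(y\cdot\pid)\para U$ from Lemma~\ref{P:explicitdomverif}(2). Applied to $y=x\cdot U$, axiom (\ref{eq:c4}) gives $(x\cdot U)\cdot\pid=x\cdot(U\cdot\pid)$, and Lemma~\ref{P:clatprops2}(3) supplies $U\cdot\pid=\pid$; hence the formula collapses to $\sid\sqcap(x\cdot\pid)\para U=d(x)$. The same computation with $\pidc$ uses Lemma~\ref{P:clatprops2}(3) to give $\pidc\cdot\pid=\pid$. The main obstacle throughout is the absence of general left distributivity and associativity for $\cdot$; the argument only goes through because each associativity step either lands on $\pid$ (so (\ref{eq:c4}) applies), on a subidentity or terminal element (so Lemma~\ref{P:clatprops1}(1) applies), or on $0$ (so (\ref{eq:cl6}) applies), and each left-distribution is effected through (\ref{eq:cl1}) or (\ref{eq:cl8}) rather than a generic semiring law.
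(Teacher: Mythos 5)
Your proof is correct: each step is licensed by the axioms and lemmas you cite, and it follows the route the paper's machinery makes natural (the paper itself omits the proof, deferring to its Isabelle formalisation). The only remark worth making is that parts (2) and (4) are immediate one-liners once you combine (\ref{eq:cl8}), i.e.\ $d(x)\cdot y=(x\cdot\pid)\para y$, with Lemma~\ref{P:clatticeprops3}(5) and (6), which state exactly $(x\cdot\pid)\para\pidc=(x\sqcap\pidc)\cdot\pidc+(x\cdot 0)\para\pidc$ and $(x\cdot\pid)\para U=x\cdot U+(x\cdot 0)\para U$; your rederivation of these facts via the decomposition $x=(x\sqcap\pidc)+x\cdot 0$ is sound but duplicates work the paper has already done.
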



\section{Subalgebras of c-Lattices}\label{S:clatticesubalgs}

This section studies the structure of the subalgebras of sequential
subidentities, parallel subidentities, vectors and nonterminal
elements.  First we consider the set $\Su(S)$ of sequential
subidentities of a c-lattice $S$. We can freely identify subidentities
with domain elements and use results for $d(S)$ from
Section~\ref{S:clatticedomain}.
\begin{proposition}\label{P:dsubalgebra}
  Let $S$ be a c-lattice.  The set $\Su(S)$ forms a distributive lattice
  bounded by $0$ and $\sid$, and in which sequential and parallel
  composition are meet. It forms a boolean algebra if $S$ is
  boolean.
\end{proposition}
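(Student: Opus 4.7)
The plan is to reduce everything to facts already established for c-trioids. Since every c-lattice is a c-trioid by Proposition~\ref{P:clatticectrioid}, and since $\Su(S)=d(S)$ by Lemma~\ref{P:subidfix}(1), Proposition~\ref{P:ctrioidbdl} immediately yields that $\Su(S)$ is a bounded distributive lattice with bounds $0$ and $\sid$, in which sequential and parallel composition coincide.

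To identify sequential (equivalently parallel) composition with the lattice meet on $\Su(S)$, I would invoke Lemma~\ref{P:ctrioidprops}(1), which asserts $z\le x\cdot y\Leftrightarrow z\le x\wedge z\le y$ for all $x,y,z\in d(S)$. This is precisely the universal property of the greatest lower bound, so $\cdot$ is meet on $\Su(S)$; combining with the coincidence of $\cdot$ and $\para$ from Proposition~\ref{P:ctrioidbdl} (or Proposition~\ref{P:domaxverif}(2)), the same holds for $\para$. In particular, $\cdot$ and $\para$ agree with the ambient $\sqcap$ on $\Su(S)$.

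For the boolean case, I would exploit the general order-theoretic fact that in any boolean algebra, every principal down-set $[0,u]$ forms a boolean subalgebra under the inherited join and meet, with complement given by relative complementation $x\mapsto u\sqcap\overline{x}$. Applying this with $u=\sid$ and $\overline{\cdot}$ the complement in $S$ equips $\Su(S)=\{x\mid x\le\sid\}$ with a boolean algebra structure compatible with the operations already identified on $\Su(S)$.

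I expect no serious obstacle: all the substantive algebraic work has been done in the proofs of Propositions~\ref{P:ctrioidbdl} and~\ref{P:clatticectrioid} and in Lemma~\ref{P:ctrioidprops}. The only minor points worth mentioning are that the join $+$ on $d(S)$ used in the c-trioid argument is the same as the lattice join of the c-lattice (immediate from the signature), and that closure of $\Su(S)$ under $+$ and $\sqcap$ is trivial from $x+y\le\sid$ whenever $x,y\le\sid$ and from $x\sqcap y\le x\le\sid$.
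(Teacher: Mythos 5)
Your proof is correct and follows essentially the same route as the paper: reduce to Proposition~\ref{P:ctrioidbdl} via Proposition~\ref{P:clatticectrioid} and Lemma~\ref{P:subidfix}(1), use the greatest-lower-bound property from Lemma~\ref{P:ctrioidprops}(1) to identify sequential (hence parallel) composition with the ambient meet on $\Su(S)$, and define complements by $x\mapsto\sid\sqcap\overline{x}$ in the boolean case. (The paper checks meet-closure via $d(p\sqcap q)=p\sqcap q$ rather than via the down-set observation, but these are interchangeable.)
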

\begin{proof}
  Relative to Proposition~\ref{P:ctrioidbdl} it remains to check that
  $\Su(S)$ is closed under meets and that meet coincides with
  multiplication in this subalgebra.  Meet closure follows from
  $d(p\sqcap q)=p\sqcap q$ for all $p,q\in\Su(S)$. Sequential
  composition of domain elements is a greatest lower bound operation
  in $\Su(S)$ by Lemma~\ref{P:ctrioidprops}(4). Hence it coincides
  with meet and parallel composition. Finally, if $S$ is a boolean
  algebra with complement $\overline{x}$ for each $x\in S$, we define
  complementation on $\Su(S)$ by $x'=\sid\sqcap\overline{x}$.
\end{proof}

\begin{proposition}\label{P:psubidalg}
  Let $S$ be a c-lattice. The set $\T(S)$ forms a sub-c-lattice
  bounded by $0$ and $\pid$, in which parallel composition is meet,
  and in which all $x\in \T(S)$ are right units of sequential
  composition. It is boolean whenever $S$ is.
\end{proposition}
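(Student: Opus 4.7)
The plan is to work throughout from the characterisation $\T(S)=\{x\in S\mid x\le\pid\}=\{x\in S\mid x\cdot 0 = x\}$ given by Lemma~\ref{P:subidfix}(2), which reduces almost every verification either to lattice reasoning below $\pid$ or to one associativity step using (\ref{eq:cl6}). Throughout, I identify $\T(S)$ with the principal down-set of $\pid$.

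First I would handle the bounds and closure under the lattice operations. Since $0\le \pid$ and $\pid\le\pid$, both bounds lie in $\T(S)$. If $x,y\le \pid$ then $x+y\le \pid$ (join preserves the upper bound) and $x\sqcap y\le x\le \pid$, so $\T(S)$ is closed under $+$ and $\sqcap$. For parallel composition, isotonicity of $\para$ in a proto-trioid together with $\pid\para\pid = \pid$ (since $\pid$ is the $\para$-unit) gives $x\para y\le \pid\para\pid = \pid$, so $\T(S)$ is closed under $\para$.

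Next comes the sequential composition and the right-unit property, which I treat simultaneously. For $y\in \T(S)$, $y = y\cdot 0$, hence for any $x\in S$ axiom (\ref{eq:cl6}) and $0\cdot x=0$ give
\begin{equation*}
y\cdot x = (y\cdot 0)\cdot x = y\cdot (0\cdot x) = y\cdot 0 = y.
\end{equation*}
Specialising $x\in \T(S)$ yields closure under $\cdot$; interpreted in the other direction, each $x\in \T(S)$ satisfies $y\cdot x = y$ for all $y\in \T(S)$, which is precisely the claim that every element of $\T(S)$ is a right unit of sequential composition on the subalgebra.

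For the claim that $\para$ is meet on $\T(S)$, I would invoke Lemma~\ref{P:ctrioidprops}(2), which states exactly that for $x,y,z\in\T(S)$, $z\le x\wedge z\le y\Leftrightarrow z\le x\para y$; this is the universal property of a greatest lower bound. Distributivity of $\sqcap$ and $+$ on $\T(S)$ is inherited from $S$. Finally, if $S$ is boolean with complement $\overline{x}$, I would define the complement of $x$ in $\T(S)$ by $x' = \pid\sqcap\overline{x}$, which lies in $\T(S)$ since $x'\le\pid$, and then verify $x + x' = (x+\pid)\sqcap(x+\overline{x}) = \pid\sqcap U = \pid$ (using $x\le\pid$) and $x\sqcap x' = x\sqcap \overline{x}\sqcap\pid = 0$ by boolean laws in $S$. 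The only mildly delicate point is recognising that the restricted sequential composition collapses to left projection via (\ref{eq:cl6}); everything else is either a direct application of the characterisation $x\le\pid$ or an already established lemma.
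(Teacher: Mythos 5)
Your proposal is correct and follows essentially the same route as the paper: Lemma~\ref{P:subidfix}(2) to identify $\T(S)$, Lemma~\ref{P:ctrioidprops} for $\para$ being a greatest lower bound, axiom (\ref{eq:cl6}) to show sequential composition collapses to left projection on $\T(S)$ (giving both closure and the right-unit claim), and the complement $\pid\sqcap\overline{x}$ in the boolean case. The only cosmetic difference is that you check closure under the lattice operations via the down-set characterisation $x\le\pid$, whereas the paper verifies the equivalent fixpoint identities such as $(x\sqcap y)\cdot 0=(x\cdot 0)\sqcap(y\cdot 0)$; both are licensed by Lemma~\ref{P:subidfix}(2).
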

\begin{proof}
  First, parallel composition of terminal elements is a greatest
  lower bound operation on $\T(S)$ by
  Lemma~\ref{P:ctrioidprops}(7). It therefore coincides with
  meet. Second, the c-lattice operations are closed: $0\cdot 0 = 0$,
  $\pid\cdot 0 = \pid$, $x\cdot 0 + y\cdot 0 = (x+y)\cdot 0$,
 $(x\cdot 0)\sqcap (y\cdot 0) = (x\sqcap y)\cdot 0$,
 $(x\cdot 0)\para (y\cdot 0) = (x\para y)\cdot 0$ and $(x\cdot 0)\cdot (y\cdot 0) = x\cdot 0$. 

 Thus $\T(S)$ forms a c-lattice bounded by $0$ and $\pid$. By $(x\cdot
 0)\cdot (y\cdot 0)=x\cdot 0$, sequential composition is a projection,
 hence every terminal element  a right identity. Finally, for every
 boolean c-lattice $S$, the set $\T(S)$ forms a boolean subalgebra
 with complementation defined similarly to the sequential case.
\end{proof}

The subset $\V(S)$ of vectors of a c-lattice $S$ does not have such
pleasant properties.  In particular, vectors are not closed under
sequential composition.  With $x\in\V(S)$ if and only if $d(x)\cdot U=
x$ it is easy to see that $U\in\V(S)$ and $0\in\V(S)$, whereas $U\cdot
0=\pid\not\in\V(S)$. In addition, vectors in c-lattices need not be
closed under meets, whereas this is the case in the multirelational
model, where $(R\cap S)\cdot T= R\cdot T\cap S\cdot T$ holds for
$R,S\in\Su(X)$. At least we have the following closure properties.
\begin{lemma}\label{P:Vaux}
  In every c-lattice,
  \begin{enumerate}
  \item $d(0)\cdot U = 0$ and $d(U)\cdot U = U$,
  \item $d(x)\cdot z+d(y)\cdot z = d(x+y)\cdot z$,
  \item $(d(x)\cdot U)\para (d(y)\cdot U) = d(x\para y)\cdot U$.
   \end{enumerate}
\end{lemma}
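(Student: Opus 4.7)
My plan is to treat the three parts separately, since they follow by quite different combinations of facts already established in earlier sections.

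For part (1), I would first invoke the domain axiom $d(0)=0$ derived in Proposition~\ref{P:cprototrioid} and combine it with the proto-dioid left annihilator $0\cdot U = 0$; this yields $d(0)\cdot U=0$. For the second identity I would use $d(U)=\sid$ from Lemma~\ref{P:dclatprop1}(1) and then the proto-monoid unit law $\sid\cdot U=U$.

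For part (2), I would use domain additivity $d(x+y)=d(x)+d(y)$, which is one of the domain proto-trioid laws derived in Proposition~\ref{P:cprototrioid} (coming, as explained there, from right-distributivity of sequential and parallel composition over $+$). Combined with right distributivity $(p+q)\cdot z = p\cdot z+q\cdot z$ of a proto-dioid, this gives $d(x+y)\cdot z = (d(x)+d(y))\cdot z = d(x)\cdot z + d(y)\cdot z$ in one step.

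For part (3), the plan is to first rewrite $d(x\para y)$ via the domain law $d(x\para y) = d(x)\para d(y)$ from Proposition~\ref{P:domaxverif}(1), obtaining $d(x\para y)\cdot U = (d(x)\para d(y))\cdot U$. The key remaining step is to push the $(\cdot\, U)$ inside $\para$ using axiom (\ref{eq:cl4}), which requires $U\para U\le U$; this precondition is immediate from $U\para U = U$ in Lemma~\ref{P:clatprops2}(2). Applying (\ref{eq:cl4}) then yields $(d(x)\cdot U)\para (d(y)\cdot U)$, as required.

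I do not expect any genuine obstacle: all three parts are short assemblies of results already in hand. The only subtlety worth flagging is in (3), where one must notice that the hypothesis of axiom (\ref{eq:cl4}) is satisfied at $z=U$ precisely because $U$ is parallel-idempotent; without Lemma~\ref{P:clatprops2}(2) one would be stuck, since (\ref{eq:cl4}) is conditional rather than unconditional.
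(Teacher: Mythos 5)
Your proof is correct and is essentially the intended derivation: part (1) from $d(0)=0$, $0\cdot U=0$, $d(U)=\sid$ and the unit law; part (2) from domain additivity plus right distributivity; part (3) from $d(x\para y)=d(x)\para d(y)$ and axiom (\ref{eq:cl4}) instantiated at $z=U$. The only tiny overstatement is that one would be ``stuck'' without Lemma~\ref{P:clatprops2}(2): the side condition $U\para U\le U$ already holds for free because $U$ is the greatest element of the bounded lattice, so the full equality $U\para U=U$ is not actually needed there.
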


Finally we consider the subset $\N(S)$ of nonterminal elements, where
the analogy with binary relations is more striking.

\begin{proposition}\label{P:Nsubalg}
  Let $S$ be a c-lattice.  The set $\N(S)$ forms a sub-c-lattice of
  $S$ without parallel unit in which $0$ is a right annihilator of
  sequential composition and $\pidc$ the maximal element. It is
  boolean whenever $S$ is.
\end{proposition}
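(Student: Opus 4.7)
The plan is to exploit the characterization $\N(S) = \{x \in S \mid x \le \pidc\}$ from Lemma~\ref{P:subidfix}(4), which reduces most of the work to monotonicity plus the idempotency of $\pidc$ under the c-lattice operations.

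Closure of $\N(S)$ under $+$ and $\sqcap$ is immediate since $\pidc$ is an upper bound stable under both. The constants $0$ and $\sid$ lie in $\N(S)$: trivially for $0$, and for $\sid$ by Lemma~\ref{P:clatprops1a}(2). For sequential composition, if $x,y \in \N(S)$ then $(x \cdot y) \cdot 0 = x \cdot (y \cdot 0) = x \cdot 0 = 0$, the first equality being an instance of Lemma~\ref{P:clatprops1}(1), since $0 \in \T(S)$. For parallel composition, Lemma~\ref{P:clatprops2}(4) yields $\pidc \para \pidc = \pidc$, and monotonicity of $\para$ then gives $x \para y \le \pidc \para \pidc = \pidc$ whenever $x, y \le \pidc$. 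The element $\pidc$ is evidently the top of $\N(S)$, and the identity $x \cdot 0 = 0$ that defines membership in $\N(S)$ is exactly the claim that $0$ is a right annihilator for $\cdot$ on $\N(S)$. Absence of a parallel unit follows from Lemma~\ref{P:clatprops2}(3), which gives $\pid \cdot 0 = \pid \neq 0$, so $\pid \notin \N(S)$.

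The c-lattice axioms (\ref{eq:cl3})--(\ref{eq:cl7}) involve none of $\pid$ or $U$ and are inherited verbatim; the remaining axioms must be reinterpreted in a signature with $\pid$ omitted and $\pidc$ in the role of the top, and the resulting analogues then follow from the identities already established in Sections~\ref{S:clattices} and~\ref{S:clatticedomain}. For the boolean case, assume $S$ is boolean with complementation $\overline{\cdot}$, and on $\N(S)$ define $x' = \pidc \sqcap \overline{x}$. Using $x \le \pidc$ we get $x \sqcap x' = x \sqcap \overline{x} = 0$, and by distributivity $x + x' = (x + \pidc) \sqcap (x + \overline{x}) = \pidc \sqcap U = \pidc$, so $\N(S)$ is a boolean algebra bounded by $0$ and $\pidc$.

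The main obstacle is terminological rather than technical: the phrase \emph{sub-c-lattice without parallel unit} must be read as saying that $\N(S)$ is the closed reduct of $S$ obtained by omitting $\pid$ from the signature and letting $\pidc$ serve as the top, and that the appropriately restricted c-lattice axioms hold in this reduct. Once this reading is fixed, every required verification is a direct appeal to a lemma in Section~\ref{S:clattices} or~\ref{S:clatticedomain}.
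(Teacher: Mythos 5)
Your proposal is correct and follows essentially the same route as the paper: both reduce everything to the closure of $\N(S)$ under the c-lattice operations via the characterisation of Lemma~\ref{P:subidfix}(4), then note $\pid\notin\N(S)$, invoke Lemma~\ref{P:clatprops1a}(3) for right annihilation, and observe that $\pidc$ is the top by definition; you merely supply a few more explicit details (e.g.\ the use of Lemma~\ref{P:clatprops1}(1) with $0\in\T(S)$ for closure under $\cdot$, and the complement $x'=\pidc\sqcap\overline{x}$ in the boolean case) than the paper records. The only cosmetic caveat is that $\pid\neq 0$, and hence $\pid\notin\N(S)$, requires the algebra to be non-trivial, which the paper states explicitly and you leave implicit.
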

\begin{proof}
  It needs to be checked that $0\sqcap \pidc= 0$, $\sid\sqcap \pidc =
  \sid$ and that $(x+y)\sqcap \pidc = x +y$, $(x\sqcap y)\sqcap \pidc
  = x \sqcap y$, $(x\cdot y)\sqcap \pidc = x \cdot y$, $(x\para
  y)\sqcap \pidc = x \para y$ hold for all $x,y\in\N(S)$.

Also, in every non-trivial algebra, $\pid\not\in \N(S)$.  Finally,
$(x\sqcap\pidc)\cdot 0 = 0$ by Lemma~\ref{P:clatprops1a}(3) and $\pidc$
is the maximal element by definition.
\end{proof}

Next we consider the subalgebras $\Su(\N(S))$ and $\V(\N(S))$. First,
$\Su(S)\subseteq \N(S)$, hence $\Su(\N(S))=\Su(S)$ and
Proposition~\ref{P:dsubalgebra} applies without modification to
$\N(S)$.
\begin{corollary}\label{P:Nsubalgcor}
  Let $S$ be a c-lattice.  The set $\Su(\N(S))=\Su(S)$ forms a bounded
  distributive sublattice of $\N(S)$. It is a boolean algebra whenever
  $S$ is.
\end{corollary}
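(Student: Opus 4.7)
The plan is to deduce this corollary essentially for free from results already established, with only a small inclusion check needed to identify $\Su(\N(S))$ with $\Su(S)$.

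First I would verify the claim $\Su(\N(S))=\Su(S)$. By definition $\Su(\N(S))=\{x\in\N(S)\mid x\le\sid\}$, so one inclusion is immediate: any $x\in\Su(\N(S))$ lies in $S$ and satisfies $x\le\sid$, whence $x\in\Su(S)$. For the reverse inclusion I would invoke Lemma~\ref{P:clatprops1a}(2), which says that every $x\in\Su(S)$ satisfies $x\le\pidc$; combined with the characterisation $\N(S)=\{x\in S\mid x\le\pidc\}$ from Lemma~\ref{P:subidfix}(4), this gives $\Su(S)\subseteq\N(S)$, and hence $\Su(S)\subseteq\Su(\N(S))$.

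Once the equality $\Su(\N(S))=\Su(S)$ is in place, the bounded distributive lattice structure is exactly the content of Proposition~\ref{P:dsubalgebra}, which furthermore states that this subalgebra is boolean whenever $S$ is. To conclude that $\Su(S)$ is a \emph{sublattice of $\N(S)$}, rather than merely of $S$, I would appeal to Proposition~\ref{P:Nsubalg}: since $\N(S)$ is itself a sub-c-lattice of $S$, its lattice operations $+$ and $\sqcap$ are just the restrictions of those on $S$, so the lattice structure on $\Su(S)$ inherited from $S$ and the one inherited from $\N(S)$ coincide.

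There is really no obstacle here; the only mildly non-trivial step is the containment $\Su(S)\subseteq\N(S)$, which could feel counterintuitive because sequential subidentities $x\le\sid$ and nonterminal elements $x\le\pidc$ are defined via apparently unrelated bounds, but Lemma~\ref{P:clatprops1a}(2) collapses this apparent tension in one line. Everything else is simply a matter of quoting Propositions~\ref{P:dsubalgebra} and~\ref{P:Nsubalg}.
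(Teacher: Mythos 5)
Your proposal is correct and follows essentially the same route as the paper: the paper's argument is precisely that $\Su(S)\subseteq\N(S)$ (which rests on Lemma~\ref{P:clatprops1a}(2) together with Lemma~\ref{P:subidfix}(4), exactly as you cite), so that $\Su(\N(S))=\Su(S)$ and Proposition~\ref{P:dsubalgebra} applies without modification. Your extra remark that the lattice operations of $\N(S)$ are restrictions of those of $S$ (via Proposition~\ref{P:Nsubalg}) is a harmless elaboration of what the paper leaves implicit.
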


Next we consider the subalgebra of vectors. First we derive a variant
of \emph{Tarski's rule} from relation algebra (cf.~\cite{Maddux}).
\begin{lemma}\label{P:tarski}
Let $R\in \M(X)$. Then 
$    R \cap \pidc\neq\emptyset \ \Rightarrow \ \pidc \cdot  ((R
    \cap\pidc) \cdot \pidc) = \pidc$.
\end{lemma}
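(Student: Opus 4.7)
The plan is to prove both inclusions directly from the set-theoretic definitions of sequential composition and complementation, exploiting the fact that $\pidc = \{(a,A) \mid A \neq \emptyset\}$.

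First I would establish the easier inclusion $\pidc \cdot ((R\cap\pidc)\cdot\pidc) \subseteq \pidc$. The key observation is that $\pidc\cdot S \subseteq \pidc$ whenever $S\subseteq \pidc$: if $(a,A)\in \pidc\cdot S$, then $(a,B)\in\pidc$ for some $B$ (so $B\neq\emptyset$) and there is $f$ with $(b,f(b))\in S\subseteq\pidc$ for every $b\in B$, which forces $f(b)\neq\emptyset$, hence $A=\bigcup f(B)\neq\emptyset$. Applying this twice yields that both $(R\cap\pidc)\cdot\pidc\subseteq \pidc$ and then $\pidc\cdot((R\cap\pidc)\cdot\pidc)\subseteq \pidc$.

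Next I would prove the reverse inclusion $\pidc \subseteq \pidc \cdot ((R\cap\pidc)\cdot\pidc)$, which is where the hypothesis $R\cap\pidc\neq\emptyset$ is used. Pick a witness $(a_0,B_0)\in R\cap\pidc$, so $B_0\neq\emptyset$. The crucial intermediate claim is that for every nonempty $A\subseteq X$, the pair $(a_0,A)$ lies in $(R\cap\pidc)\cdot\pidc$: this is witnessed by the constant function $g\colon B_0\to 2^X$ with $g(b)=A$ for all $b\in B_0$, since then $(b,A)\in\pidc$ for each $b\in B_0$ and $\bigcup g(B_0)=A$. Given an arbitrary $(a,A)\in\pidc$, I can then take intermediate set $B=\{a_0\}$ (which is nonempty, so $(a,\{a_0\})\in\pidc$) and the function $f(a_0)=A$, for which $(a_0,A)\in (R\cap\pidc)\cdot\pidc$ by the claim and $\bigcup f(\{a_0\})=A$. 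This places $(a,A)$ in $\pidc\cdot((R\cap\pidc)\cdot\pidc)$.

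The only real subtlety is the intermediate claim in the second paragraph: one has to realise that any prescribed nonempty $A$ can be produced as $\bigcup g(B_0)$ for some $g$ with $(b,g(b))\in\pidc$, and the constant assignment $g\equiv A$ is the slick way to do this regardless of the size or content of $B_0$. The rest is bookkeeping with the definition of sequential composition. This argument is the natural multirelational analogue of Tarski's rule $U\cdot R\cdot U = U$ for nonempty $R$ in relation algebra, with $\pidc$ playing the role of the universal relation on the side of nonterminal content.
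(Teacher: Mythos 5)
Your proof is correct and is exactly the direct set-theoretic verification one expects here (the paper does not display its proof of this lemma): the upper bound follows because composing pairs with nonempty intermediate sets and nonempty images yields nonempty unions, and the lower bound uses a witness $(a_0,B_0)\in R\cap\pidc$ together with the constant assignment $g\equiv A$ and the singleton intermediate set $\{a_0\}$. Both inclusions check out against the definition of $\cdot$ and of $\pidc$ as the set of pairs with nonempty second component, so there is nothing to add.
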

As common in relation algebra we keep Tarski's rule separate from the other
axioms since it is not even a quasi-identity. We can use it to prove
the following fact.
\begin{lemma}\label{P:vecprop}
  Let $S$ be a c-lattice in which Tarski's rule and $d(x)\cdot (y\cdot
  z)=(d(x)\cdot y)\cdot z$ hold. Let $x,y\in \V(\N(S))$. Then
  \begin{equation*}
    x\cdot y = 
    \begin{cases}
      0, &\text{if } y = 0,\\
      x, &\text{if } y\neq 0.
    \end{cases}
  \end{equation*}
\end{lemma}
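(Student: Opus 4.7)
The plan is to split on whether $y = 0$, with the case $y=0$ being routine and the case $y\neq 0$ being the substantive one, combining Tarski's rule with the hypothesized associativity in a roughly one-line calculation once the right intermediate facts are in place.

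For the case $y=0$, since $x\in\V(\N(S))\subseteq\N(S)$, we have $x\le \pidc$, i.e.\ $x\sqcap\pidc=x$. Then Lemma~\ref{P:clatprops1a}(3) gives $x\cdot 0=(x\sqcap\pidc)\cdot 0=0$, as required.

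For the case $y\neq 0$, I first want to unpack what being in $\V(\N(S))$ means in c-lattice terms. By the right-hand diagram of Figure~\ref{F:extendeddiagram}, membership in $\V(\N(S))$ is characterized by $x=d(x)\cdot\pidc$ (the analogue of $x=d(x)\cdot U$ for $\V(S)$, with $U$ replaced by $\pidc$ in the nonterminal world). Thus $x=d(x)\cdot\pidc$ and $y=d(y)\cdot\pidc$. Next I record the auxiliary identity $y\cdot\pidc=y$: using the associativity hypothesis with the domain element $d(y)$ together with Lemma~\ref{P:clatprops2}(4),
\begin{equation*}
y\cdot\pidc \;=\; (d(y)\cdot\pidc)\cdot\pidc \;=\; d(y)\cdot(\pidc\cdot\pidc) \;=\; d(y)\cdot\pidc \;=\; y.
\end{equation*}

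Now comes the key step. Since $y\in\N(S)$ and $y\neq 0$, we have $y\sqcap\pidc=y\neq 0$, so Tarski's rule (Lemma~\ref{P:tarski}, assumed for $S$) yields $\pidc\cdot((y\sqcap\pidc)\cdot\pidc)=\pidc$, which simplifies to $\pidc\cdot(y\cdot\pidc)=\pidc$, and then by the preceding identity $y\cdot\pidc=y$, to $\pidc\cdot y=\pidc$. Combining everything via the assumed associativity $d(x)\cdot(u\cdot v)=(d(x)\cdot u)\cdot v$, we compute
\begin{equation*}
x\cdot y \;=\; (d(x)\cdot\pidc)\cdot y \;=\; d(x)\cdot(\pidc\cdot y) \;=\; d(x)\cdot\pidc \;=\; x.
\end{equation*}

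The only real obstacle is spotting that Tarski's rule should be fed the vector $y$ (so that $y\sqcap\pidc=y$) and then converted, via the idempotence $y\cdot\pidc=y$, into the statement $\pidc\cdot y=\pidc$ that plugs cleanly into the associativity hypothesis. Everything else is bookkeeping with the characterizations already established in Lemmas~\ref{P:clatprops1a} and~\ref{P:clatprops2}.
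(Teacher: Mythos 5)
Your proof is correct and uses exactly the ingredients the lemma's hypotheses are designed for: Tarski's rule applied to $y=y\sqcap\pidc\neq 0$ to obtain $\pidc\cdot y=\pidc$ (after reducing $y\cdot\pidc$ to $y$ via the assumed associativity and $\pidc\cdot\pidc=\pidc$), followed by $x\cdot y=(d(x)\cdot\pidc)\cdot y=d(x)\cdot(\pidc\cdot y)=d(x)\cdot\pidc=x$. The paper omits its proof (it is mechanised in Isabelle), but this is evidently the intended argument, and your handling of the $y=0$ case via $x\in\N(S)$ is likewise fine.
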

\begin{proposition}\label{P:VNsubalg}
  Let $S$ be a c-lattice in which Tarski's rule and 
  \begin{equation*}
    (d(x)\sqcap d(y))\cdot z = d(x)\cdot z \sqcap d(y)\cdot z,\qquad
    (d(x)\cdot y)\cdot z =d(x)\cdot (y\cdot z)
  \end{equation*}
  hold. Then $\V(\N(S))$ is a sub-c-lattice of $\N(S)$ bounded by
  $0$ and  $\pidc$, in which $0$ is a left annihilator
  and parallel composition is meet.
\end{proposition}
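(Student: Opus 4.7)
The plan is to verify closure of $\V(\N(S))$ under each c-lattice operation plus the claims that $0$ is a left annihilator and $\para$ coincides with $\sqcap$. First I would pin down the working characterization of the set: for $x\in\N(S)$, the associativity hypothesis $(d(x)\cdot y)\cdot z = d(x)\cdot(y\cdot z)$ together with $\pid = U\cdot 0$ (Lemma~\ref{P:clatprops2}(1)) gives $d(x)\cdot\pid = (d(x)\cdot U)\cdot 0 = x\cdot 0 = 0$ whenever $d(x)\cdot U = x$, so by (\ref{eq:cl1}) membership in $\V(S)$ collapses on $\N(S)$ to $x = d(x)\cdot\pidc$. Both endpoints then lie in $\V(\N(S))$: $0$ trivially, and $\pidc$ because $d(\pidc)\cdot\pidc = \sid\cdot\pidc = \pidc$ by Lemma~\ref{P:dclatprop1}(1). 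Boundedness by $0$ and $\pidc$ is immediate from Lemma~\ref{P:subidfix}(4).

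Closure under $+$ follows from additivity of $d$ (Proposition~\ref{P:cprototrioid}) and right distributivity. For $\sqcap$, given $x,y\in\V(\N(S))$, I would rewrite
\begin{equation*}
  x\sqcap y \;=\; d(x)\cdot\pidc \sqcap d(y)\cdot\pidc \;=\; (d(x)\sqcap d(y))\cdot\pidc
\end{equation*}
via the first additional hypothesis, set $p = d(x)\sqcap d(y)\in\Su(S)$, and verify $d(p\cdot\pidc) = p$ by computing $d(d(p)\cdot\pidc) = d(p)\cdot d(\pidc) = p\cdot\sid = p$ (using Proposition~\ref{P:domaxverif}(5) and Lemma~\ref{P:dclatprop1}(1)). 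Thus $d(x\sqcap y)\cdot\pidc = p\cdot\pidc = x\sqcap y$, with nonterminality inherited from Proposition~\ref{P:Nsubalg}.

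Closure under $\cdot$ and the annihilator claim fall out of Lemma~\ref{P:vecprop}: $x\cdot y\in\{0,x\}\subseteq\V(\N(S))$ for all $x,y\in\V(\N(S))$, and in particular $0\cdot y = 0$ (also immediate from the proto-dioid axiom). For $\para$ I would show directly that $x\para y = x\sqcap y$, which then subsumes closure under $\para$. Since $\pidc\para\pidc = \pidc$ (Lemma~\ref{P:clatprops2}(4)), axiom (\ref{eq:cl4}) yields $x\para y = (d(x)\cdot\pidc)\para(d(y)\cdot\pidc) = (d(x)\para d(y))\cdot\pidc$; Proposition~\ref{P:domaxverif}(2) rewrites this as $(d(x)\cdot d(y))\cdot\pidc = (d(x)\sqcap d(y))\cdot\pidc$ (sequential composition being meet on $\Su(S)$ by Proposition~\ref{P:dsubalgebra}), and the first additional hypothesis delivers $x\sqcap y$.

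The main obstacle is bookkeeping rather than any one hard step: the two additional hypotheses must be invoked at exactly the points where the generic c-lattice axioms fall short, and each intermediate expression must be verified to sit in $\N(S)$ so that the nonterminal vector characterization $x = d(x)\cdot\pidc$ keeps applying. Tarski's rule itself enters the argument only indirectly, via Lemma~\ref{P:vecprop}.
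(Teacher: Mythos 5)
Your proof is correct and follows essentially the same route as the paper, which simply lists the closure conditions $(\,\cdot\,)\cdot\pidc = (\,\cdot\,)$ for each operation together with $x\para y = x\sqcap y$; your working characterisation $x = d(x)\cdot\pidc$ is equivalent to the paper's fixpoint condition $x\cdot\pidc = x$ on $\N(S)$ via (\ref{eq:cl8}) and (\ref{eq:cl11}), and you supply the details the paper leaves implicit. The only small wobble is that your opening ``collapse'' from $d(x)\cdot U = x$ is one-directional (the converse fails, e.g.\ for $\{(a,A)\mid A\neq\emptyset\}$ in $\M(X)$), but since you then work exclusively with $x = d(x)\cdot\pidc$ --- the correct characterisation of $\V(\N(S))$ --- nothing downstream is affected.
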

\begin{proof}
  We need to verify the closure conditions $0\cdot \pidc = 0$ and
  $\pidc\cdot\pidc = \pidc$ as well as $(x+y)\cdot\pidc = x + y$,
  $(x\sqcap y)\cdot \pidc = x\sqcap y$, $(x\cdot y)\cdot\pidc = x\cdot
  y$ and $(x\para y)\cdot \pidc = x\para y$ for all $x,y\in \N(S)$,
  and $x\para y=x\sqcap y$ for all $x,y\in \N(S)$.
\end{proof}
Note that meet-closure is enforced by assuming $(d(x)\sqcap d(y))\cdot
z = d(x)\cdot z \sqcap d(y)\cdot z$. This and all additional
assumptions on multirelations used in this section have, of course,
been verified in the multirelational model. Whether stronger
properties hold in situations where sequential composition is
associative remains to be seen.


\section{Isomorphisms in c-Lattices}\label{S:clatticeiso}

This section finally verifies the isomorphisms between sequential
subidentities, parallel subidentities and vectors from
Figure~\ref{F:magictriangle} and~\ref{F:extendeddiagram} in the
context of c-lattices and for their nonterminal elements.  

We also characterise the structure that is preserved by these
mappings. Given the results on subalgebras from the previous section
it cannot be expected that sequential composition is preserved. This
is indeed confirmed by the multirelational counterexamples in this
section---apart from one exception.  The other c-lattice operations
are preserved. In particular, all isomorphisms
are constructed from the operations and constants of c-lattices, so
that their properties can be checked within the c-lattice setting by
simple equational reasoning.

\begin{proposition}\label{P:iso2}
  Let $S$ be a c-lattice. 
\begin{enumerate}
\item The maps $(\cdot U)$ and $d$ as well as
  $(\para U)$ and $(\cdot\pid)$ in the following diagrams form
  bijective pairs; the  diagrams commute.
\begin{equation*}
  \def\labelstyle{\normalsize}
\xymatrix @R=2pc @C=.5pc{
&\V(S)\ar[dr]^{d}\\
\Su(S)\ar[ur]^{(\cdot U)}\ar[rr]_{1_{\Su(S)}}&&\Su(S) 
}
\qquad\qquad 
\xymatrix @R=2pc @C=.5pc{
&\Su(S)\ar[dr]^{(\cdot U)}\\
\V(S)\ar[ur]^{d}\ar[rr]_{1_{\V(S)}}&&\V(S) 
}
\end{equation*}
\begin{equation*}
  \def\labelstyle{\normalsize}
\xymatrix @R=2pc @C=.5pc{
&\V(S)\ar[dr]^{(\cdot\pid)}\\
\T(S)\ar[ur]^{(\para U)}\ar[rr]_{1_{\T(S)}}&&\T(S) 
}
\qquad\qquad 
\xymatrix @R=2pc @C=.5pc{
&\T(S)\ar[dr]^{(\para U)}\\
\V(S)\ar[ur]^{(\cdot\pid)}\ar[rr]_{1_{\V(S)}}&&\V(S) 
}
\end{equation*}
\item Therefore $\Su(S)\cong \T(S)\cong\V(S)$. 
\end{enumerate}
\end{proposition}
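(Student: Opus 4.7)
The plan is to verify each of the four commuting triangles by checking well-definedness of the arrows and one equational identity per triangle, and then to derive part~(2) by transitivity (or by direct appeal to Proposition~\ref{P:dtermiso}).

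First I would check that the maps land in the sets indicated. For $d:\V(S)\to\Su(S)$ the bound $d(x)\le\sid$ (a domain axiom) together with Lemma~\ref{P:subidfix}(1) is enough. For $(\cdot\pid):\V(S)\to\T(S)$ axiom~(\ref{eq:c6}) gives $x\cdot\pid\le\pid$, which places $x\cdot\pid$ in $\T(S)$ by Lemma~\ref{P:subidfix}(2). For $(\cdot U):\Su(S)\to\V(S)$ I would use that $\Su(S)=d(S)$, so $p=d(p)$, and conclude $p\cdot U\in\V(S)$ once the identity $d(p\cdot U)=p$ is in hand. For $(\para U):\T(S)\to\V(S)$, applying Proposition~\ref{P:domaxverif}(1),(2), Lemma~\ref{P:dclatprop1}(1) and Lemma~\ref{P:auxdomprops}(1) with $t\cdot\pid=t$ yields
\begin{equation*}
 d(t\para U)\cdot U \;=\; (d(t)\cdot d(U))\cdot U \;=\; d(t)\cdot U \;=\; (t\cdot\pid)\para U \;=\; t\para U,
\end{equation*}
so $t\para U\in\V(S)$.

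Next I would verify the four commutation identities. Two of them, $d(x)\cdot U=x$ and $(x\cdot\pid)\para U=x$ for $x\in\V(S)$, are immediate from the two fixpoint characterisations of $\V(S)$ recorded in Lemma~\ref{P:subidfix}(3) and in Section~\ref{S:clattices}. The identity $d(p\cdot U)=p$ for $p\in\Su(S)$ follows by a short calculation: since $p=d(p)$, Proposition~\ref{P:domaxverif}(5) and $d(U)=\sid$ give
\begin{equation*}
 d(p\cdot U)=d(d(p)\cdot U)=d(p)\cdot d(U)=p\cdot\sid=p.
\end{equation*}

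The remaining identity, $(t\para U)\cdot\pid=t$ for $t\in\T(S)$, is the only place where real work is needed, and will be the main obstacle. I would discharge it via axiom~(\ref{eq:cl4}): because $\pid\para\pid=\pid\le\pid$, the axiom applies with $z=\pid$, giving
\begin{equation*}
 (t\para U)\cdot\pid=(t\cdot\pid)\para (U\cdot\pid)=t\para\pid=t,
\end{equation*}
where the last two steps use $t\cdot\pid=t$ (Lemma~\ref{P:subidfix}(2)), $U\cdot\pid=\pid$ (Lemma~\ref{P:clatprops2}(3)), and the parallel unit law. Part~(2) then follows at once: the bijective pairs of~(1) yield $\Su(S)\cong\V(S)$ and $\T(S)\cong\V(S)$, so transitivity of $\cong$ gives $\Su(S)\cong\T(S)$; alternatively this last isomorphism is already a direct consequence of Proposition~\ref{P:dtermiso} combined with $d(S)=\Su(S)$.
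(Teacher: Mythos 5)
Your proposal is correct and follows essentially the same route as the paper: the paper reduces the statement to the four composite identities $d(d(x)\cdot U)=d(x)$, $d((x\cdot\pid)\para U)\cdot U=(x\cdot\pid)\para U$, $((x\cdot\pid)\para U)\cdot\pid=x\cdot\pid$ and $(((x\cdot\pid)\para U)\cdot\pid)\para U=(x\cdot\pid)\para U$ (deferring the calculations to Isabelle) and invokes Proposition~\ref{P:dtermiso} with $d(S)=\Su(S)$ for the remaining isomorphism, exactly as you do. Your explicit derivations of these identities via Proposition~\ref{P:domaxverif}(5), Lemma~\ref{P:subidfix} and axiom~(\ref{eq:cl4}) are sound and simply fill in the details the paper omits.
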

\begin{proof}
  The isomorphism between $d(S)$ and $\T(S)$ has been verified in
  Proposition~\ref{P:dtermiso}; moreover $d(S)=\Su(S)$ by
  Lemma~\ref{P:subidfix}.  It remains to check that
  \begin{align*}
    d(d(x)\cdot U) &= d(x),&
    d((x\cdot\pid)\para U)\cdot U &= (x\cdot\pid)\para U,\\
((x\cdot\pid)\para U)\cdot\pid &= x\cdot\pid,&
(((x\cdot\pid)\para U)\cdot\pid)\para U &= (x\cdot\pid)\para U. 
  \end{align*}
\end{proof}
\begin{proposition}\label{P:iso3}
  Let $S$ be a c-lattice.  
\begin{enumerate}
\item The maps $(\cdot\pid)$ and $d$, 
  $(\cdot\pidc)$ and $d$, and $(\para\pidc)$ and $(\cdot\pid)$ in the
  following diagrams form bijective pairs; the diagrams
  commute.
\begin{equation*}
  \def\labelstyle{\normalsize}
\xymatrix @R=2pc @C=.5pc{
&\T(\N(S))\ar[dr]^{d}\\
\Su(\N(S))\ar[ur]^{(\cdot \pid)}\ar[rr]_{1_{\Su(\N(S))}}&&\Su(\N(S)) 
}
\qquad\qquad 
\xymatrix @R=2pc @C=.5pc{
&\Su(\N(S))\ar[dr]^{(\cdot \pid)}\\
\T(\N(S))\ar[ur]^{d}\ar[rr]_{1_{\T(\N(S))}}&&\T(\N(S)) 
}
\end{equation*}
 \begin{equation*}
  \def\labelstyle{\normalsize}
\xymatrix @R=2pc @C=.5pc{
&\V(\N(S))\ar[dr]^{d}\\
\Su(\N(S))\ar[ur]^{(\cdot \pidc)}\ar[rr]_{1_{\Su(\N(S))}}&&\Su(\N(S)) 
}
\qquad\qquad 
\xymatrix @R=2pc @C=.5pc{
&\Su(\N(S))\ar[dr]^{(\cdot \pidc)}\\
\V(\N(S))\ar[ur]^{d}\ar[rr]_{1_{\V(\N(S))}}&&\V(\N(S)) 
}
\end{equation*}
\begin{equation*}
  \def\labelstyle{\normalsize}
\xymatrix @R=2pc @C=.5pc{
&\V(\N(S))\ar[dr]^{(\cdot\pid)}\\
\T(\N(S))\ar[ur]^{(\para \pidc)}\ar[rr]_{1_{\T(\N(S))}}&&\T(S) 
}
\qquad\qquad 
\xymatrix @R=2pc @C=.5pc{
&\T(\N(S))\ar[dr]^{(\para \pidc)}\\
\V(\N(S))\ar[ur]^{(\cdot\pid)}\ar[rr]_{1_{\V(\N(S))}}&&\V(\N(S)) 
}
\end{equation*}
\item Therefore $\Su(\N(S))\cong \T(\N(S))\cong\V(\N(S))$. 
\end{enumerate}
\end{proposition}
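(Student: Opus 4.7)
The plan is to follow the same strategy as in Proposition~\ref{P:iso2}: for each of the three pairs of maps in part~(1), verify the two round-trip identities that witness mutual inversion, after first confirming closure of each map into its stated codomain. Part~(2) then follows by composing the three bijections.

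Before touching the identities I would pin down the domains. By Corollary~\ref{P:Nsubalgcor} we have $\Su(\N(S)) = \Su(S)$, since $\sid \le \pidc$ by Lemma~\ref{P:clatprops1a}(2) makes every sequential subidentity automatically nonterminal; this lets the first pair be treated essentially as in Proposition~\ref{P:dtermiso}. For $\T(\N(S))$ and $\V(\N(S))$ I would use the analogues of the fixpoint characterisations of Lemma~\ref{P:subidfix}, with $U$ replaced by $\pidc$, so that $y \in \T(\N(S))$ satisfies $y \cdot \pid = y$ (hence $y \in \T(S)$) and $x \in \V(\N(S))$ satisfies $(x \cdot \pid) \para \pidc = x$ (hence $x \le \pidc$).

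For the first pair, $d(p \cdot \pid) = p$ for $p \in \Su(S)$ and $d(y) \cdot \pid = y$ for $y \in \T(\N(S)) \subseteq \T(S)$ are instances of Lemma~\ref{P:auxdomprops}(3) and (4). For the second pair, $d(p \cdot \pidc) = p$ unfolds as $d(p \cdot d(\pidc)) = d(p \cdot \sid) = d(p) = p$ via Proposition~\ref{P:domaxverif}(4), Lemma~\ref{P:dclatprop1}(1), and axiom~(\ref{eq:cl9}); the reverse identity $d(x) \cdot \pidc = x$ for $x \in \V(\N(S))$ is just axiom~(\ref{eq:cl11}) rewritten through~(\ref{eq:domdef}), using that $x \sqcap \pidc = x$ for nonterminal $x$. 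For the third pair, the identities $(y \para \pidc) \cdot \pid = y$ and $(x \cdot \pid) \para \pidc = x$ can be obtained either by composing the first two pairs (which already give a bijection $\T(\N(S)) \to \Su(\N(S)) \to \V(\N(S))$) or directly by unfolding $d$ in axioms~(\ref{eq:cl10}) and~(\ref{eq:cl11}).

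The main obstacle I anticipate is the closure half of the argument: verifying that each map actually lands in the claimed codomain. For example, one must check that $p \cdot \pidc \in \V(\N(S))$ for $p \in \Su(S)$ (which needs both $p \cdot \pidc \le \pidc$ via Lemma~\ref{P:clatprops1a} and the vector fixpoint $((p \cdot \pidc) \cdot \pid) \para \pidc = p \cdot \pidc$), that $y \para \pidc \in \V(\N(S))$ for $y \in \T(\N(S))$, and that $d$ of any of these elements actually lies in $\Su(S)$ rather than merely in $d(S)$. These reduce to equational manipulations using axioms~(\ref{eq:cl9})--(\ref{eq:cl11}), Lemma~\ref{P:clatprops1a}, and Lemma~\ref{P:dclatprop1}, paralleling the closure checks in Proposition~\ref{P:iso2}. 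Once done, composing the three bijections yields the isomorphism chain $\Su(\N(S)) \cong \T(\N(S)) \cong \V(\N(S))$ of part~(2).
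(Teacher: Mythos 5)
Your proposal is correct and follows essentially the same route as the paper: the paper's proof likewise reduces the statement to the six round-trip identities (written there for elements of the form $x\sqcap\pidc$, $(x\sqcap\pidc)\cdot\pid$ and $(x\sqcap\pidc)\cdot\pidc$), which are exactly the ones you list, derived from (\ref{eq:cl8})--(\ref{eq:cl11}), Lemma~\ref{P:auxdomprops} and Proposition~\ref{P:dtermiso}. Your additional attention to the closure conditions and the explicit unfolding of $d(p\cdot\pidc)=d(p\cdot d(\pidc))=d(p)$ only makes the argument more detailed than the paper's, not different in kind.
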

\begin{proof}
The following conditions must be checked.
\begin{align*}
  d(d(x \sqcap \pidc)\cdot\pid &= d(x\sqcap\pidc),
& d((x\sqcap\pidc)\cdot\pid)\cdot \pid &= (x\sqcap \pidc)\cdot\pid,\\
    d(d(x\sqcap\pidc)\cdot \pidc) &= d(x\sqcap\pidc),&
d((x \sqcap \pidc)\cdot\pidc)\cdot\pidc &= (x\sqcap\pidc)\cdot\pidc,\\
 (((x\sqcap\pidc)\cdot\pid)\para \pidc)\cdot\pid &= (x\sqcap\pidc)\cdot\pid,&
(((x\sqcap\pidc)\cdot\pidc)\cdot\pid)\para \pidc &= (x\sqcap\pidc) \cdot\pidc.
\end{align*}
\end{proof}

We now investigate structure preservation of these bijections.

\begin{proposition}\label{P:structurepreservation}
  Let $S$ be a c-lattice.  The maps
  \begin{align*}
 (\cdot\pid) &: \Su(S)\to \T(S),&d&:\T(S)\to \Su(S),\\
(\cdot U)&:\Su(S)\to \V(S),&d&:\V(S)\to \Su(S),\\
(\para U)&:\T(S)\to \V(S), &(\cdot\pid)&:\V(S)\to \T(S) 
  \end{align*}
  preserve addition, meet and parallel composition, minimal elements
  and maximal elements of the subalgebras. The last map also preserves
  sequential composition.
 \end{proposition}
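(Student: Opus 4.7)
The plan is to verify, for each of the six bijections listed, the preservation identities for the bottom element, the top element of each target subalgebra, addition, meet, and parallel composition, plus sequential composition for the final map. Bijectivity and the underlying correspondences are already in hand from Proposition~\ref{P:iso2}, so the proof reduces to equational reasoning in a c-lattice. The key structural facts I would exploit are those of Propositions~\ref{P:dsubalgebra} and~\ref{P:psubidalg}: in $\Su(S)$, meet coincides with sequential and parallel composition; in $\T(S)$, meet coincides with parallel composition while sequential composition is left projection ($u\cdot v=u$); and every vector admits the normal form $x=d(x)\cdot U$ by Lemma~\ref{P:subidfix}(3).

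First I would dispatch the trivial cases. Preservation of $0$ follows from $0\cdot\pid=0$, $0\cdot U=0$ and $0\para U=0$ together with $d(0)=0$. Preservation of top elements uses $\sid\cdot\pid=\pid$, $\sid\cdot U=U$, $\pid\para U=U$, $d(\pid)=\sid$ (Lemma~\ref{P:auxdomprops}(6)), and $d(U)=\sid$ (Lemma~\ref{P:dclatprop1}(1)). Preservation of addition is immediate from right distributivity of $\cdot$ and $\para$ over $+$ and additivity of $d$ from Proposition~\ref{P:cprototrioid}.

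The substantive work lies in the meet and parallel cases. For $(\cdot\pid):\Su(S)\to\T(S)$, meet preservation reduces to the identity $(p\sqcap q)\cdot\pid=(p\cdot\pid)\para(q\cdot\pid)$, which I would prove by rewriting $p\sqcap q=p\cdot q$ in $\Su(S)$, then applying Lemma~\ref{P:clatprops1}(1) to get $(p\cdot q)\cdot\pid=p\cdot(q\cdot\pid)$, and finally invoking axiom~(\ref{eq:cl8}) in the form $d(p)\cdot y=(p\cdot\pid)\para y$ with $y=q\cdot\pid$. Parallel preservation uses $(p\para q)\cdot\pid=(p\cdot\pid)\para(q\cdot\pid)$ obtained from~(\ref{eq:cl4}) together with $\pid\para\pid=\pid$. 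The cases for $d:\T(S)\to\Su(S)$ and the two maps into and out of $\V(S)$ are then handled analogously, using $U\para U=U$ from Lemma~\ref{P:clatprops2}, idempotence of $\pid$ and $U$ under $\para$ for (\ref{eq:cl4}), and Lemma~\ref{P:Vaux} to collapse expressions of the form $(d(x)\cdot U)\para(d(y)\cdot U)$.

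Finally, preservation of sequential composition by $(\cdot\pid):\V(S)\to\T(S)$ amounts to showing $(x\cdot y)\cdot\pid=(x\cdot\pid)\cdot(y\cdot\pid)$ for $x,y\in\V(S)$. Since $x\cdot\pid,y\cdot\pid\in\T(S)$, the right-hand side collapses to $x\cdot\pid$ by the projection property of $\cdot$ on $\T(S)$ (Proposition~\ref{P:psubidalg}), so the task reduces to $(x\cdot y)\cdot\pid=x\cdot\pid$. I would rewrite this as $x\cdot(y\cdot\pid)=x\cdot\pid$ by associativity from Lemma~\ref{P:clatprops1}(1) with $\pid\in\T(S)$, substitute the vector normal form $x=d(x)\cdot U=(x\cdot\pid)\para U$, and apply~(\ref{eq:cl4}) (legal because $y\cdot\pid\para y\cdot\pid\le y\cdot\pid$) to split the product across $\para$; the two resulting factors reduce via the $\T(S)$-projection property and $U\cdot(y\cdot\pid)\le\pid$ together with~(\ref{eq:cl8}), collapsing back to $x\cdot\pid$. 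I expect this last identity to be the main obstacle, since it is the only step that genuinely combines the vector normal form, the delicate associativity behaviour of $\cdot$, and the interaction axiom~(\ref{eq:cl4})---and it is the unique case in the proposition where sequential composition, which generally fails to be preserved, happens to be forced into preservation by the degenerate structure of $\cdot$ on $\T(S)$.
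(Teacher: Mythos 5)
The paper itself does not display a proof of this proposition (it is deferred to the Isabelle formalisation), so your argument can only be judged on its own terms; it assembles exactly the lemmas the paper has prepared for this purpose, and in outline it is the intended equational verification. Two points deserve attention. First, a minor one: for meet preservation by the three maps landing in $\V(S)$ you need to know that meet on vectors coincides with parallel composition, i.e.\ $v\sqcap w=v\para w$ for $v,w\in\V(S)$; this follows from Lemma~\ref{P:upclosedpar}(1) since vectors have the form $z\para U$, and it is the missing link between your use of Lemma~\ref{P:Vaux} (which handles $\para$) and the meet claim.

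Second, and more substantively, the final step of the sequential-composition case is not quite right as written. After reducing the problem to $x\cdot(y\cdot\pid)=((x\cdot\pid)\para U)\cdot(y\cdot\pid)=((x\cdot\pid)\cdot(y\cdot\pid))\para(U\cdot(y\cdot\pid))=(x\cdot\pid)\para(U\cdot(y\cdot\pid))$, you propose to conclude via $U\cdot(y\cdot\pid)\le\pid$ together with (\ref{eq:cl8}). That inequality is insufficient: both $x\cdot\pid$ and $U\cdot(y\cdot\pid)$ lie in $\T(S)$, where $\para$ is a greatest lower bound (Lemma~\ref{P:ctrioidprops}), so $(x\cdot\pid)\para w$ can be strictly below $x\cdot\pid$ for a general $w\le\pid$; and (\ref{eq:cl8}) merely rewrites the expression as $d(x)\cdot(U\cdot(y\cdot\pid))$, which by left isotonicity again yields only $\le d(x)\cdot\pid=x\cdot\pid$. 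What is actually needed is the equality $U\cdot(y\cdot\pid)=\pid$, and it is derivable: $U\cdot(y\cdot\pid)=(U\cdot y)\cdot\pid\le\pid$ by Lemma~\ref{P:clatprops1}(1) and (\ref{eq:c6}), while $\pid=U\cdot 0\le U\cdot(y\cdot\pid)$ by Lemma~\ref{P:clatprops2}(1) and left isotonicity of sequential composition. With this equality the product collapses to $(x\cdot\pid)\para\pid=x\cdot\pid$ because $\pid$ is the unit of $\para$, matching the right-hand side $(x\cdot\pid)\cdot(y\cdot\pid)=x\cdot\pid$ obtained from the projection property of $\cdot$ on $\T(S)$. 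Everything else in your sketch checks out.
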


\begin{proposition}\label{P:nstructurepreservation}
  Let $S$ be a c-lattice.  The maps
  \begin{align*}
 (\cdot\pid) &: \Su(\N(S))\to \T(\N(S))&
d&:\T(\N(S))\to \Su(\N(S))\\
(\cdot \pidc)&:\Su(\N(S))\to \V(\N(S))&
d&:\V(\N(S))\to \Su(\N(S))\\
(\para \pidc)&:\T(\N(S))\to \V(\N(S)) &
(\cdot\pid)&:\V(\N(S))\to \T(\N(S)) 
  \end{align*}
  preserve addition, meet and parallel composition, minimal elements
  and maximal elements of the subalgebras. 
\end{proposition}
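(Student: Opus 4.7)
The plan is to verify, for each of the six maps separately, that it preserves $0$, the respective maximum element, addition, meet, and parallel composition. The key simplifying observation is that Corollary~\ref{P:Nsubalgcor} gives $\Su(\N(S))=\Su(S)$, so the three maps with $\Su(\N(S))$ as source or target are literally the same as in Proposition~\ref{P:structurepreservation} for the unrestricted case; and the remaining three maps are obtained from those of Proposition~\ref{P:structurepreservation} by uniformly replacing $U$ with $\pidc$. The strategy is therefore to rerun the earlier argument with this substitution, leveraging Lemma~\ref{P:clatprops2}(4) ($\pidc\para\pidc=\pidc$) wherever the earlier proof used $U\para U=U$.

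First I would note that addition is preserved by every map, immediately: the $\cdot$-maps and $\para$-maps use right distributivity of $\cdot$ and $\para$ over $+$ from the proto-trioid axioms, and the two $d$-maps use domain additivity $d(x+y)=d(x)+d(y)$ established in the proof of Proposition~\ref{P:cprototrioid}. For the extremal elements, $0$ is preserved by all $\cdot$- and $\para$-maps via the annihilation laws, by the $d$-maps via $d(0)=0$, and the maxima satisfy $\sid\cdot\pid=\pid$, $\sid\cdot\pidc=\pidc$, $\pidc\cdot\pid=\pid$ (by Lemma~\ref{P:clatprops2}(3)), $\pid\para\pidc=\pidc$ (parallel unit), and $d(\pid)=d(\pidc)=\sid$ (by Lemma~\ref{P:dclatprop1}(1)).

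For meet and parallel composition, recall that meet in $\Su(\N(S))$ coincides with $\cdot$ (Proposition~\ref{P:dsubalgebra}), while meet in $\T(\N(S))$ and $\V(\N(S))$ coincides with $\para$ (Propositions~\ref{P:psubidalg} and~\ref{P:VNsubalg}). The critical identities I would invoke are: axiom~(\ref{eq:c3}), $(x\para y)\cdot\pid=(x\cdot\pid)\para(y\cdot\pid)$, for both $(\cdot\pid)$-maps; axiom~(\ref{eq:cl4}) with $z=\pidc$, applicable because $\pidc\para\pidc=\pidc$, giving $(x\para y)\cdot\pidc=(x\cdot\pidc)\para(y\cdot\pidc)$ for $(\cdot\pidc):\Su(\N(S))\to\V(\N(S))$; Proposition~\ref{P:domaxverif}(1,2), $d(x\para y)=d(x)\para d(y)=d(x)\cdot d(y)$, for the two $d$-maps; and associativity/commutativity of $\para$ together with $\pidc\para\pidc=\pidc$ for $(\para\pidc):\T(\N(S))\to\V(\N(S))$. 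Where meet in $\Su(\N(S))$ is involved, I would additionally use that on subidentities $p\para q=p\cdot q$ (Proposition~\ref{P:dsubalgebra}), so meet preservation reduces to parallel-composition preservation.

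The main obstacle is entirely localised in the step $(p\cdot q)\cdot\pidc=(p\cdot\pidc)\para(q\cdot\pidc)$ for $p,q\in\Su(S)$: this is where the full c-lattice interaction axiom~(\ref{eq:cl4}) is indispensable, and it is only available because of the sequential-idempotence property of $\pidc$ from Lemma~\ref{P:clatprops2}(4). All other verifications are routine rewriting using axioms already in hand, so no further structural arguments are needed.
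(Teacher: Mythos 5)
Your plan matches what the paper does: Proposition~\ref{P:nstructurepreservation} is one of the results whose proof is suppressed as routine, Isabelle-checked equational rewriting, and the identities you assemble---right distributivity over $+$, the annihilation laws, $d(\pid)=d(\pidc)=\sid$, $\pidc\cdot\pid=\pid$, $d(x\para y)=d(x)\para d(y)$, and (\ref{eq:cl4}) instantiated at $z=\pid$ and $z=\pidc$ via $\pidc\para\pidc=\pidc$---are exactly the ones needed. The one citation you should repair is the appeal to Proposition~\ref{P:VNsubalg} for ``meet on $\V(\N(S))$ coincides with $\para$'': that proposition assumes Tarski's rule and two further identities which are not hypotheses of the present statement, so as written your reduction of meet preservation to $\para$-preservation for the maps into $\V(\N(S))$ rests on unavailable assumptions. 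The gap closes without them: every element of $\V(\N(S))$ has the form $p\cdot\pidc$ with $p\in\Su(S)$ (by (\ref{eq:cl11})), and for such elements
\begin{equation*}
(p\sqcap q)\cdot\pidc \;\le\; (p\cdot\pidc)\sqcap(q\cdot\pidc)\;\le\;(p\cdot\pidc)\para(q\cdot\pidc)\;=\;(p\para q)\cdot\pidc\;=\;(p\sqcap q)\cdot\pidc,
\end{equation*}
using isotonicity of $(\cdot\,\pidc)$ in the first step, Lemma~\ref{P:clatprops1}(4) in the second, (\ref{eq:cl4}) in the third and $p\sqcap q=p\para q$ on subidentities (Proposition~\ref{P:dsubalgebra}) in the last; hence $\sqcap$ and $\para$ coincide on $\V(\N(S))$ in any c-lattice, and meet preservation follows from the $\para$-preservation you already establish. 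The analogous sandwich, using $y\le\pid$ to get $x\para y\para\pidc\le x\para\pidc$, covers $(\para\pidc):\T(\N(S))\to\V(\N(S))$. With that substitution the argument is complete.
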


We complement these results by refuting preservation of sequential
composition for the remaining maps between sequential identities,
subidentities and vectors.

\begin{lemma}
   \begin{enumerate}
  \item No isomorphism in $\M(X)$ except $(\cdot \pid)$
    preserves sequential composition.
\item No isomorphism in $\N(X)$ preserves sequential composition. 
  \end{enumerate}
\end{lemma}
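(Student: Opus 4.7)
The plan is to produce explicit counterexamples over the single-point universe $X=\{a\}$. The common theme is that on $\Su(X)$ sequential composition coincides with intersection (so $R\cdot S = R\cap S$ can be empty), while on $\T(X)$ every element is a right identity (so $T\cdot T' = T$); each of the isomorphisms in question records the domain of its argument in a terminal or vector-like pair that survives a subsequent composition with the image of $\emptyset$. The formal engine is Lemma~\ref{P:constprops}(2): $R\cdot \emptyset = R\cap\pid$, which is nonempty precisely when $R$ already contains a terminal pair $(a,\emptyset)$, even though $R\cap\emptyset=\emptyset$.

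For part~(1), I would fix $R = \{(a,\{a\})\}$ and $S = \emptyset$ in $\Su(X)$, so that $R\cdot S = R\cap S = \emptyset$. Then $(\cdot\pid):\Su(X)\to\T(X)$ fails because $(R\cap S)\cdot\pid = \emptyset$ while $(R\cdot\pid)\cdot(S\cdot\pid) = \{(a,\emptyset)\}\cdot\emptyset = \{(a,\emptyset)\}$. For $d:\T(X)\to\Su(X)$, I would pass to the images $R' = R\cdot\pid = \{(a,\emptyset)\}$ and $S'=\emptyset$: here $R'\cdot S' = R'$ so $d(R'\cdot S') = \{(a,\{a\})\}$, but $d(R')\cdot d(S') = \{(a,\{a\})\}\cap\emptyset = \emptyset$. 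The remaining three non-exceptional maps $(\cdot U):\Su\to\V$, $d:\V\to\Su$ and $(\para U):\T\to\V$ are refuted analogously, using $R\cdot U = \{a\}\times 2^X$ and $\{(a,\emptyset)\}\para U = \{a\}\times 2^X$ from Lemma~\ref{P:constprops}. The sixth isomorphism $(\cdot\pid):\V(X)\to\T(X)$ is excluded since Proposition~\ref{P:structurepreservation} has already shown that it preserves sequential composition.

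For part~(2), the five analogues of the above maps in $\N(X)$ are refuted by the very same counterexamples, because the chosen $R$ and $S$ lie in the corresponding nonterminal subalgebras (note $\Su(\N(X))=\Su(X)$ and that the images land in the right places). The principal new case is $(\cdot\pid):\V(\N(X))\to\T(\N(X))$. In the general setting this map preserves composition because any vector $R\in\V(X)$ contains $(a,\emptyset)$ for every $a\in\operatorname{dom}(R)$, so in $R\cdot S$ one may pick the intermediate set $B=\emptyset$ and obtain $(a,\emptyset)\in R\cdot S$ regardless of $S$. In $\V(\N(X))$ the vectors have the shape $D\times(2^X\setminus\{\emptyset\})$: they contain no terminal pair, so in $R\cdot S$ the intermediate set is forced to be a nonempty subset of $\operatorname{dom}(S)$, and if $S=\emptyset$ then $R\cdot S = \emptyset$. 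Taking $R = \{(a,\{a\})\}$, which is the nonterminal vector with domain $\{a\}$ over $X=\{a\}$, and $S = \emptyset$, yields $(R\cdot S)\cdot\pid = \emptyset$ while $(R\cdot\pid)\cdot(S\cdot\pid) = \{(a,\emptyset)\}\cdot\emptyset = \{(a,\emptyset)\}$.

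The main obstacle is exactly this last case: one must identify the structural feature of vectors in $\V(X)$ (the automatic inclusion of the pair $(a,\emptyset)$) that makes $(\cdot\pid)$ compatible with sequential composition in the general setting, and then notice that its systematic absence in nonterminal vectors is precisely what breaks the preservation law. Once this is understood, the concrete counterexample is essentially forced and the remaining cases are routine table-chasing from Lemma~\ref{P:constprops}.
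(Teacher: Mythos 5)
Your part (1) is correct and is essentially the paper's own argument: the pair $R=\{(a,\{a\})\}$, $S=\emptyset$ and its images under $(\cdot\,\pid)$, $(\cdot\, U)$ and $(\para U)$ refute the five non-exceptional maps, the computations via Lemma~\ref{P:constprops} check out, and the exception $(\cdot\,\pid):\V(X)\to\T(X)$ is legitimately discharged by Proposition~\ref{P:structurepreservation}. Your analysis of why $(\cdot\,\pid):\V(\N(X))\to\T(\N(X))$ fails in the nonterminal setting is also correct and matches the paper.

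The gap is in part (2), in the claim that the remaining five nonterminal cases are ``refuted by the very same counterexamples.'' For $(\cdot\,\pidc):\Su(\N(X))\to\V(\N(X))$ and $d:\V(\N(X))\to\Su(\N(X))$ this is false, and it fails for exactly the structural reason you yourself isolate at the end: nonterminal vectors contain no pair $(a,\emptyset)$, so right-composition with $\emptyset$ annihilates them. Concretely, with $R=\{(a,\{a\})\}$ and $S=\emptyset$ one has $(R\cdot S)\cdot\pidc=\emptyset$, but also $(R\cdot\pidc)\cdot(S\cdot\pidc)=(R\cdot\pidc)\cap\pid=\emptyset$ by Lemma~\ref{P:constprops}(2), since $R\cdot\pidc=\{a\}\times(2^X\setminus\{\emptyset\})$ has no terminal pair; both sides coincide. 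Likewise $d((R\cdot\pidc)\cdot\emptyset)=\emptyset=d(R\cdot\pidc)\cdot d(\emptyset)$. Enlarging $X$ does not help as long as the second factor is $\emptyset$: these two maps require a genuinely different counterexample with a \emph{nonempty} second factor whose domain avoids that of the first. The paper uses $X=\{a,b\}$, $R=\{(a,\{a\}),(a,\{b\}),(a,\{a,b\})\}$ and $S=\{(b,\{a\}),(b,\{b\}),(b,\{a,b\})\}$ in $\V(\N(X))$, for which $R\cdot S=R$ while $d(R)\cdot d(S)=\emptyset$; this single example refutes both $d:\V(\N(X))\to\Su(\N(X))$ and, reading it through the bijection, $(\cdot\,\pidc):\Su(\N(X))\to\V(\N(X))$. (Your treatment of $(\para\pidc):\T(\N(X))\to\V(\N(X))$ does survive, because there the nonempty factor sits on the left and $\{(a,\emptyset)\}\para\pidc\neq\emptyset$.)
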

\begin{proof}
  \begin{itemize}
  \item Let $R=\{(a,\{a\})\}$ and $S=\emptyset$, both of which are in
    $\Su(\N(X))$ and $\V(\N(X))$.  Then
\begin{gather*}
(R\cdot S)\cdot\pid = \emptyset \subset
    \{(a,\emptyset)\} = (R\cdot\pid)\cdot(S\cdot\pid),\\
   (R\cdot S)\cdot U = \emptyset\subset \{(a,\emptyset)\}=(R\cdot 
    U)\cdot \emptyset= (R\cdot U)\cdot (S\cdot U).
 \end{gather*}
\item Let $R=\{(a,\emptyset)\}$ and $S=\emptyset$, both of which are
  in $\T(\N(X))$ and $\V(X)$. Then
  \begin{gather*}
    d(R\cdot S)= d(R)=\{(a,\{a\})\} \supset \emptyset =
    d(R)\cdot\emptyset = d(R)\cdot d(S),\\
   (R\cdot S) \para U= \{(a,\emptyset),(a,\{a\})\} \supset R =
    (R\para U)\cdot\emptyset = (R\para U) \cdot (S\para U).
  \end{gather*}
\item 
Let $R=\{(a,\{a\}),(a,\{b\}),(a,\{a,b\})\}$,
$S=\{(b,\{a\}),(b,\{b\}),(b,\{a,b\})\}$, both of which are in
$\V(\N(X))$. Then
$d(R\cdot S)=d(R)=\{(a,\{a\})\} \supset\emptyset =d(R)\cdot d(S)$.
   \end{itemize}
  \end{proof}


\section{Terminal and Nonterminal Elements}\label{S:nabladelta}

Algebras of multirelations share some features with algebras of
languages with finite and infinite words. Both form trioids with
parallel composition corresponding to shuffle in the language case.
However, $(X\para Y)\cdot Z \subseteq (X\cdot Z)\para (Y\cdot Z)$ does
not generally hold in languages (consider $X=\{a\}$, $Y=\{b\}$ and
$Z=\{cd\}$), the algebra of sequential subidentities is trivial (it
consists only of the empty and the empty word language), and the
notion of vector does not seem to make sense.

For an alphabet $\Sigma$, a finite word language is a subset of
$\Sigma^\ast$, the set of all finite words over $\Sigma$. An infinite
word language is a subset of $\Sigma^\omega$, the set of all strictly
infinite words over $\Sigma$. Languages in which finite and infinite
words are mixed are subsets of
$\Sigma^\infty=\Sigma^\ast\cup\Sigma^\omega$.  One can then use
$\mathsf{fin}:2^{\Sigma^\infty}\to 2^{\Sigma^\ast}$ and
$\mathsf{inf}:2^{\Sigma^\infty}\to 2^{\Sigma^\omega}$ to project on
the finite and infinite words in a language.

Analogously, the maps $\mathsf{fin}$ to $\tau:\M(X)\to 
\T(X)$ and $\mathsf{inf}$ to $\nu:\M(X)\to \N(X)$ defined by
\begin{equation*}
  \tau=\lambda x.\ x\cdot 0,\qquad \nu=\lambda x.\ x\sqcap\pidc
\end{equation*}
project on the terminal and the nonterminal part of a
multirelation. More abstractly, we define such functions $\tau:S\to \T(S)$
and $\nu:S\to \N(S)$ on a c-lattice $S$.

Many properties of $\mathsf{fin}$ and $\mathsf{inf}$ (cf.~\cite{Park}
) are shared with $\nu$ and $\tau$ , but there are also
differences. 

\begin{lemma}\label{P:nutauinterior}
In every c-lattice,
\begin{enumerate}
\item  the functions $\tau$ and $\nu$ are interior operators, and
  therefore retractions,
\item $\tau(x)+\nu(x)=x$ and $\tau(x)\sqcap\nu(x)=0$,
\item $\tau(\nu(x))=0$ and $\nu(\tau(x))=0$.
\end{enumerate}
\end{lemma}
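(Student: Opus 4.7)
The plan is to exploit the reformulation $\tau(x) = x\cdot 0 = x\sqcap\pid$, which is noted immediately after axiom (\ref{eq:cl2}). This places $\tau$ and $\nu(x)=x\sqcap\pidc$ on equal footing, both being maps of the form $x\mapsto x\sqcap c$ for a fixed constant $c\in\{\pid,\pidc\}$. I would also invoke the facts that $\pid+\pidc=U$ and $\pid\sqcap\pidc=0$, which follow from (\ref{eq:cl1}) and (\ref{eq:cl2}) and are recorded explicitly after the axiom list.

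For part (1), I would observe that any map of the form $x\mapsto x\sqcap c$ on a bounded distributive lattice is automatically deflationary, monotone, and idempotent by the semilattice laws for meet. Hence both $\nu$ and $\tau$ are interior operators; in particular they are retractions onto $\N(S)$ and $\T(S)$, respectively.

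For part (2), distributivity combined with complementation of $\pid$ and $\pidc$ gives directly
\[\tau(x)+\nu(x)=(x\sqcap\pid)+(x\sqcap\pidc)=x\sqcap(\pid+\pidc)=x\sqcap U=x,\]
and similarly $\tau(x)\sqcap\nu(x)=x\sqcap(\pid\sqcap\pidc)=x\sqcap 0=0$.

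For part (3), both required cancellations are already available in Lemma~\ref{P:clatprops1a}(3), which asserts in particular $(x\sqcap\pidc)\cdot 0=0$ and $(x\cdot 0)\sqcap\pidc=0$. Thus $\tau(\nu(x))=\nu(x)\cdot 0=(x\sqcap\pidc)\cdot 0=0$ and $\nu(\tau(x))=\tau(x)\sqcap\pidc=(x\cdot 0)\sqcap\pidc=0$. The only real obstacle is cosmetic: recognising the reformulation $\tau(x)=x\sqcap\pid$. Once that step is taken, the entire lemma reduces to distributive-lattice bookkeeping plus a single appeal to Lemma~\ref{P:clatprops1a}.
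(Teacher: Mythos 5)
Your proposal is correct and follows the same route the paper indicates: reduce everything to the meet formulations $\tau(x)=x\cdot 0=x\sqcap\pid$ and $\nu(x)=x\sqcap\pidc$ (the identity $x\sqcap\pid=x\cdot 0$ is recorded right after axiom (cl2)), check the three interior-operator conditions for maps of the form $x\mapsto x\sqcap c$, and get (2) and (3) from distributivity, $\pid+\pidc=U$, $\pid\sqcap\pidc=0$ and Lemma~\ref{P:clatprops1a} (indeed, $\tau(x)+\nu(x)=x$ is literally Lemma~\ref{P:clatprops1a}(1)). No gaps.
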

The properties $\tau(x)\le x$, $\tau(\tau(x))=\tau(x)$ and $x\le
y\Rightarrow \tau(x)\le \tau(y)$ must be verified to show that $\tau$
is an interior operator, and likewise for $\nu$. The next lemmas are essentially 
transcriptions of properties verified in the proofs of
Proposition~\ref{P:psubidalg} and \ref{P:Nsubalg}.
\begin{lemma}\label{P:nutauconst}
In every c-lattice,
  \begin{enumerate}
  \item $\tau(0)=0$ and $\nu(0)=0$,
  \item $\tau(\sid)=0$ and $\nu(\sid)=\sid$,
  \item $\tau(\pid)=\pid$ and $\nu(\pid)=0$,
  \item $\tau(\pidc)=0$ and $\nu(\pidc)=\pidc$,
  \item $\tau(U)=\pidc$ and $\nu(U)=\pidc$.
  \end{enumerate}
\end{lemma}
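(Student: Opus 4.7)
The plan is to prove the ten equations by direct calculation, using the definitions $\tau(x)=x\cdot 0$ and $\nu(x)=x\sqcap\pidc$ together with the already-established behaviour of the five constants $0,\sid,\pid,\pidc,U$ in a c-lattice. No induction or novel construction is required; each clause reduces in at most one line to a prior identity, so the work is a case analysis over the five constants.

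First I would dispatch the five $\nu$-clauses, which are purely lattice-theoretic. From $\nu(x)=x\sqcap\pidc$, the identities $\nu(0)=0$ and $\nu(\pidc)=\pidc$ are immediate from absorption and idempotency of meet, while $\nu(U)=\pidc$ uses $\pidc\le U$ (since $U$ is the top of the bounded lattice). The identity $\nu(\pid)=0$ is the complement law $\pid\sqcap\pidc=0$, noted in the paper as an immediate consequence of axioms (\ref{eq:cl1}) and (\ref{eq:cl2}). Finally, $\nu(\sid)=\sid$ requires $\sid\le\pidc$, which is Lemma~\ref{P:clatprops1a}(2) applied to the sequential subidentity $\sid$.

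Next I would handle the five $\tau$-clauses, each of which is a computation of $x\cdot 0$ for one of the constants. The identity $\tau(0)=0$ uses right-annihilation $0\cdot x=0$ from the proto-dioid axioms, and $\tau(\sid)=0$ uses the proto-monoid unit law $\sid\cdot 0=0$. The identity $\tau(\pid)=\pid$ is an instance of $\pid\cdot x=\pid$ recorded in Lemma~\ref{P:clatprops2}(3). The identity $\tau(\pidc)=0$ follows because $\pidc$ satisfies $\pidc\le\pidc$, and therefore lies in $\N(S)$; combining this with the characterisation $\N(S)=\{x\in S\mid x\cdot 0=0\}$ in Lemma~\ref{P:subidfix}(4) yields the claim. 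Lastly, $\tau(U)=\pidc$ reduces to an evaluation of $U\cdot 0$ using the constant identities collected in Lemma~\ref{P:clatprops2}.

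There is no genuine obstacle: the whole proof is bookkeeping over the various characterisations of $\T(S)$ and $\N(S)$. The only mild subtlety is choosing the right equivalent description of $\N(S)$ from Lemma~\ref{P:subidfix}(4) when proving $\tau(\pidc)=0$, and remembering that $\pid$ and $\pidc$ are genuine lattice complements (not merely $\pid+\pidc=U$) so that the meet-manipulations in the $\nu$-clauses go through cleanly. Because all these facts have already been derived within the c-lattice axioms, no appeal to the multirelational model is necessary.
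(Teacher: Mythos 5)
Your route for nine of the ten clauses is sound and matches what the paper intends: it gives no displayed proof, remarking only that these identities are transcriptions of the closure computations in the proofs of Propositions~\ref{P:psubidalg} and~\ref{P:Nsubalg}, and your reductions to $\sid\le\pidc$ (Lemma~\ref{P:clatprops1a}(2)), $\pid\sqcap\pidc=0$, $\pid\cdot x=\pid$ (Lemma~\ref{P:clatprops2}(3)) and $(x\sqcap\pidc)\cdot 0=0$ are exactly the right ingredients.

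However, your treatment of clause (5) for $\tau$ contains a step that fails as you describe it. You say that $\tau(U)=\pidc$ ``reduces to an evaluation of $U\cdot 0$ using the constant identities collected in Lemma~\ref{P:clatprops2}''---but the relevant identity there is Lemma~\ref{P:clatprops2}(1), which reads $\pid=U\cdot 0$. Hence your own computation gives $\tau(U)=U\cdot 0=\pid$, not $\pidc$, and equally directly $\tau(x)=x\cdot 0=x\sqcap\pid$ by (\ref{eq:cl2}) forces $\tau(U)=U\sqcap\pid=\pid$. In any c-lattice in which $\pid\neq\pidc$ (e.g.\ $\M(X)$ for nonempty $X$, where $U\cdot\emptyset=\{(a,\emptyset)\mid a\in X\}=\pid$ by Lemma~\ref{P:constprops}(2)), the equation $\tau(U)=\pidc$ is therefore false; the statement of clause (5) should read $\tau(U)=\pid$. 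You should have flagged this discrepancy rather than asserting that the reduction goes through: as written, your argument purports to derive a claim that your cited lemma contradicts. With the corrected right-hand side, your one-line reduction is exactly right and the rest of the proof stands.
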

\begin{lemma}\label{P:nutauhoms}
  In every c-lattice,
  \begin{enumerate}
  \item $\tau(x+y)=\tau(x)+\tau(y)$ and $\nu(x+y)=\nu(x)+\nu(y)$,
  \item $\tau(x\sqcap y)=\tau(x)\sqcap \tau(y)$ and $\nu(x\sqcap
    y)=\nu(x)\sqcap\nu(y)$,
  \item $\tau(x\para y) =\tau(x)\para\tau(y)$ and $\nu(x\para y) =
    d(\tau(x))\cdot\nu(y) + d(\tau(y))\cdot\nu(x) +\nu(x)\para \nu(y)$,
\item $\tau(x\cdot y)=\tau(x)+\nu(x)\cdot\tau(y)$.
  \end{enumerate}
\end{lemma}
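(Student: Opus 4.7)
The plan rests on the decomposition $x=\nu(x)+\tau(x)$ from Lemma~\ref{P:clatprops1a}(1) together with the orthogonality $\tau(u)\sqcap\nu(v)\le\pid\sqcap\pidc=0$, which holds for all $u,v$ since $\tau(u)\le\pid$ and $\nu(v)\le\pidc$.

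Item (1) is immediate: $\tau(x+y)=(x+y)\cdot 0=x\cdot 0+y\cdot 0$ by right distributivity in a proto-trioid, and the $\nu$-case follows from distributivity of $\sqcap$ over $+$ in the lattice reduct. For item (2), the $\nu$-case uses only associativity, commutativity and idempotency of $\sqcap$. For the $\tau$-case, I would expand $(\nu(x)+\tau(x))\sqcap(\nu(y)+\tau(y))$ by distributivity and discard the two orthogonal cross terms to obtain $x\sqcap y=(\nu(x)\sqcap\nu(y))+(\tau(x)\sqcap\tau(y))$. Right-multiplying by $0$ annihilates the first summand by Lemma~\ref{P:clatprops1a}(3) (since anything below $\pidc$ is sent to $0$) and fixes the second because $\T(S)$ is closed under $\sqcap$ by Proposition~\ref{P:psubidalg} and every element of $\T(S)$ satisfies $z\cdot 0=z$.

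The $\tau$-part of (3) is a one-line application of Axiom~(\ref{eq:cl4}) with $z=0$, using $0\para 0=0$. The $\nu$-part is where the main obstacle lies. I would expand $x\para y$ into four summands via bi-distributivity of $\para$ over $+$ and meet with $\pidc$. The terminal-terminal summand $\tau(x)\para\tau(y)\le\pid\para\pid=\pid$ is annihilated; the nonterminal-nonterminal summand $\nu(x)\para\nu(y)\le\pidc\para\pidc=\pidc$ by Lemma~\ref{P:clatprops2}(4) and so survives intact; and each mixed summand is rewritten via the key identity
\begin{equation*}
\tau(x)\para\nu(y)=(x\cdot 0)\para\nu(y)=((x\cdot 0)\cdot\pid)\para\nu(y)=d(\tau(x))\cdot\nu(y),
\end{equation*}
where the middle step applies Axiom~(\ref{eq:cl6}) together with $0\cdot\pid=0$ and the final step uses Lemma~\ref{P:auxdomprops}(1). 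Since $d(\tau(x))\le\sid\le\pidc$ by Lemma~\ref{P:clatprops1a}(2), and $\N(S)$ is closed under sequential composition by Proposition~\ref{P:Nsubalg}, this mixed term lies in $\N(S)$ and is therefore preserved by the meet with $\pidc$. Symmetrically for $\nu(x)\para\tau(y)=d(\tau(y))\cdot\nu(x)$.

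Finally for (4), I would split $x\cdot y=\nu(x)\cdot y+\tau(x)$ by Lemma~\ref{P:clatticeprops3}(1), right-multiply by $0$, and reduce using $(\nu(x)\cdot y)\cdot 0=\nu(x)\cdot(y\cdot 0)=\nu(x)\cdot\tau(y)$, which is Axiom~(\ref{eq:cl5}) applied with $z=\sid$, together with $\tau(x)\cdot 0=(x\cdot 0)\cdot 0=x\cdot(0\cdot 0)=x\cdot 0=\tau(x)$ by Axiom~(\ref{eq:cl6}). The real difficulty throughout is concentrated in the $\nu$-case of (3): the bookkeeping of the four summands and the rewriting of the mixed terms via the explicit domain identity is the only step that is not essentially formal manipulation of the lattice and distributivity axioms.
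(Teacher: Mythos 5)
Your proof is correct, and it follows essentially the route the paper implies: the $\tau$-parts of (1)--(3) are exactly the closure conditions listed in the proof of Proposition~\ref{P:psubidalg}, the $\nu$-parts of (1)--(2) are lattice trivialities, and your handling of the genuinely nontrivial cases --- rewriting the mixed terms $\tau(x)\para\nu(y)$ as $d(\tau(x))\cdot\nu(y)$ via (\ref{eq:cl6}) and (\ref{eq:cl8}), and deriving (4) from Lemma~\ref{P:clatticeprops3}(1) with (\ref{eq:cl5}) --- is sound. (Minor streamlining: $\tau(x\sqcap y)=\tau(x)\sqcap\tau(y)$ is immediate from $\tau(x)=x\sqcap\pid$, which (\ref{eq:cl2}) gives directly, without the four-term expansion.)
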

Lemma~\ref{P:nutauconst} and~\ref{P:nutauhoms} show that $\tau$ and
$\nu$ preserve the constants, addition and
meet.  Moreover, $\tau$ preserves  parallel
composition. The next lemma refutes such a property for the remaining
operations.

\begin{lemma}\label{P:nutauhomcounter}
\begin{enumerate}
  \item There are $R,S\in \M(X)$ such that $\tau(R\cdot S)\neq
  \tau(R)\cdot\tau(S)$.
\item There are $R,S\in \M(X)$ such that $\nu(R\cdot S)\neq
  \nu(R)\cdot \nu(S)$.
\item There are $R,S\in \M(X)$ such that $\nu(R\para S)\neq
  \nu(R)||\nu(S)$.
\end{enumerate}
\end{lemma}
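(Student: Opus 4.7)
For each of the three inequalities the statement is existential, so the plan is simply to exhibit small multirelations over a finite set $X$ that witness the failure. In each case I pick the underlying set and the multirelations so that the contribution from a terminal pair $(a,\emptyset)$ is essential on one side but is killed on the other side by first applying $\tau$ or $\nu$. Concretely, the underlying intuition is that sequential composition lets a terminal pair on one factor generate a terminal pair in the product (via an empty union of images), while $\tau$ and $\nu$ project those dependencies away; parallel composition can similarly promote a terminal pair in $R$ to a nonterminal pair in $R\para S$ by taking the union with a nonempty $B$ on the $S$-side.

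For part (1) I would take $X=\{a,b\}$, $R=\{(a,\{b\})\}$, $S=\{(b,\emptyset)\}$. A direct computation from the definition of sequential composition (using $B=\{b\}$ and $f(b)=\emptyset$) gives $R\cdot S=\{(a,\emptyset)\}$, hence $\tau(R\cdot S)=\{(a,\emptyset)\}$. On the other hand $\tau(R)=\emptyset$ since $R$ has no terminal pair, so by the absorption law $\emptyset\cdot T=\emptyset$ from Section~\ref{S:preliminaries} we obtain $\tau(R)\cdot\tau(S)=\emptyset$.

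For part (2) a single-branch example collapses, so I would use a two-branch example: $X=\{a,b,c,d\}$, $R=\{(a,\{b,c\})\}$, $S=\{(b,\{d\}),(c,\emptyset)\}$. The only sequential composite arises from $f(b)=\{d\}$ and $f(c)=\emptyset$, so $R\cdot S=\{(a,\{d\})\}$ and therefore $\nu(R\cdot S)=\{(a,\{d\})\}$. However $\nu(R)=R$ while $\nu(S)=\{(b,\{d\})\}$, so any function witnessing a pair in $\nu(R)\cdot\nu(S)$ would have to assign an image to $c\in\{b,c\}$ from $\nu(S)$, which has no pair with first component $c$; hence $\nu(R)\cdot\nu(S)=\emptyset$. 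For part (3) I would take $X=\{a,b\}$, $R=\{(a,\emptyset)\}$, $S=\{(a,\{b\})\}$. Then $R\para S=\{(a,\emptyset\cup\{b\})\}=\{(a,\{b\})\}$, so $\nu(R\para S)=\{(a,\{b\})\}$, whereas $\nu(R)=\emptyset$ forces $\nu(R)\para\nu(S)=\emptyset$ by the law $\emptyset\para T=\emptyset$.

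There is no serious obstacle here; the work is entirely in the case analysis for part (2), where the subtlety is that a single-pair $S$ is not enough because $\nu(R)\cdot\nu(S)$ and $\nu(R\cdot S)$ then coincide. The two-branch construction is designed so that discarding the terminal pair $(c,\emptyset)$ from $S$ destroys the only available witness function for the composition, while the composition itself still produces a nonterminal pair by unioning $\{d\}$ with $\emptyset$. The remaining verification is a routine unfolding of the definitions of $\cdot$, $\para$, $\tau=\lambda x.\,x\cdot 0$, and $\nu=\lambda x.\,x\sqcap\pidc$.
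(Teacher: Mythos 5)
Your proposal is correct and takes essentially the same approach as the paper: both exhibit small concrete multirelations in which a terminal pair is essential to forming the composite but is discarded when $\tau$ or $\nu$ is applied first, and all three of your witnesses check out under a direct unfolding of the definitions. The specific examples differ in inessential details (the paper uses $R=\{(a,\emptyset),(b,\{a\})\}$, $S=\{(a,\emptyset)\}$ for (1), $R=\{(a,\{a,b\})\}$, $S=\{(a,\emptyset),(b,\{a,b\})\}$ for (2), and $R=\{(a,\{a\})\}$, $S=\{(a,\emptyset)\}$ for (3)), but the mechanism — including your observation that the branching $R$ in (2) is needed so that deleting the terminal pair of $S$ destroys the only witness function — is the same.
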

\begin{proof}
\begin{enumerate}
\item If $R=\{(a,\emptyset),(b,\{a\})\}$ and
  $S=\{(a,\emptyset)\}$, then
$\tau(R)\cdot\tau(S)=S\subset  R\cdot\pid=\tau(R\cdot S)$. 
\item  If $R=\{(a,\{a,b\})\}$ and $S=\{(a,\emptyset),(b,\{a,b\})\}$, then 
$  \nu(R\cdot S) = R \neq \emptyset = \nu(R)\cdot\nu(S)$. 
\item If $R=\{(a,\{a\})\}$ and $S=\{(a,\emptyset)\}$, then $\nu(R||S)=R\neq \emptyset = \nu(R)||\nu(S)$. 
\end{enumerate}
\end{proof}
We do not have a compositional characterisation of $\nu(x\cdot y)$.  On the
one hand, $\nu(x\cdot y) = \nu(\nu(x)\cdot y)$, but on the other hand,
without left distributivity, this cannot easily be decomposed further.
In the multirelational model, elements $(b,\emptyset)\in S$ can
obviously contribute to pairs $(a,A)\in R\cdot S$ with
$A\neq\emptyset$. This makes the situation different from the language
case, where $\mathsf{fin}(X\cdot
Y)=\mathsf{fin}(X)\cdot\mathsf{fin}(Y)$. Interestingly, however, this
does not rule out simple decomposition theorems for sequential and
parallel composition.
\begin{lemma}\label{P:taunusplit}
In every c-lattice,
  \begin{enumerate}
  \item $x\cdot y = \tau(x)+\nu(x)\cdot y$,
\item   $x\para y=\nu(x)\para\nu(y)+d(\nu(x))\cdot\tau(y)+d(\nu(y))\cdot\tau(x)+\tau(x)\para\tau(y)$. 
  \end{enumerate}
\end{lemma}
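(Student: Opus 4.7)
Part (1) is essentially a restatement of Lemma~\ref{P:clatticeprops3}(1) after unfolding notation: $\tau(x) + \nu(x)\cdot y = x\cdot 0 + (x\sqcap\pidc)\cdot y$, which up to commutativity of $+$ is exactly the content of that earlier lemma. The underlying derivation would begin from the splitting $x = (x\sqcap\pidc) + x\cdot 0$ of Lemma~\ref{P:clatprops1a}(1), apply right distributivity of sequential composition, and absorb $(x\cdot 0)\cdot y = x\cdot 0$ by way of axiom~(\ref{eq:cl6}) together with $0\cdot y=0$.

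For part (2), the plan is to decompose both factors and expand symmetrically. Using $x = \nu(x) + \tau(x)$ and $y = \nu(y) + \tau(y)$ together with two-sided distributivity of $\para$ over $+$, which is available since $(S,+,\para,0,\pid)$ is a commutative dioid inside the proto-trioid, one obtains the four-term expansion
\begin{equation*}
x\para y = \nu(x)\para\nu(y) + \nu(x)\para\tau(y) + \tau(x)\para\nu(y) + \tau(x)\para\tau(y).
\end{equation*}
The extreme terms already appear on the right-hand side of the lemma, so the remaining task is to identify the mixed term $\nu(x)\para\tau(y)$ with $d(\nu(x))\cdot\tau(y)$ and, symmetrically, $\tau(x)\para\nu(y)$ with $d(\nu(y))\cdot\tau(x)$.

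The key algebraic tool will be axiom~(\ref{eq:cl8}), which after unfolding the definition of domain reads $d(u)\cdot v = (u\cdot \pid)\para v$ for all $u,v$. Applied with $u = \nu(x)$ and $v = \tau(y)$ this reduces the target to showing $\nu(x)\para\tau(y) = (\nu(x)\cdot\pid)\para\tau(y)$, that is, that composing with $\pid$ on the nonterminal side becomes invisible once we combine in parallel with a terminal element. The supporting facts available are $\tau(y)\in\T(S)$ by Lemma~\ref{P:subidfix}(2), $\tau(y)\cdot\pid=\tau(y)$ via axiom~(\ref{eq:c4}) and $0\cdot\pid=0$, and the various interaction laws of Lemma~\ref{P:clatprops1}, plus the closure properties of $\T(S)$ from Proposition~\ref{P:psubidalg}.

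The main obstacle is precisely this last identification. Absorbing $\cdot\pid$ on the nonterminal side is delicate: at the level of multirelations $\nu(x)\para\tau(y)$ preserves the nonempty images of $\nu(x)$ while $(\nu(x)\cdot\pid)\para\tau(y)$ collapses them to $\emptyset$, so any direct algebraic identification must carefully exploit that $\tau(y)$ is a terminal element and that the mixed term is summed alongside $\tau(x)\para\tau(y)$ and $\nu(x)\para\nu(y)$. If no direct rewrite succeeds within the bare c-lattice axioms, the fallback is to sidestep the symmetric expansion altogether and instead compute $x\para y$ as $\tau(x\para y) + \nu(x\para y)$ via Lemma~\ref{P:nutauinterior}(2), then substitute the explicit formulas of Lemma~\ref{P:nutauhoms}(3) and rearrange the result into the four-term form stated.
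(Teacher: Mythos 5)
Part (1) is fine as you present it: it is Lemma~\ref{P:clatticeprops3}(1) rewritten with $\tau(x)=x\cdot 0$ and $\nu(x)=x\sqcap\pidc$, and your derivation from Lemma~\ref{P:clatprops1a}(1), right distributivity and axiom~(\ref{eq:cl6}) is exactly what is needed.

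For part (2), the ``main obstacle'' you flag is not a technical difficulty to be engineered around: the identification $\nu(x)\para\tau(y)=d(\nu(x))\cdot\tau(y)$ is simply false, for precisely the reason you give, and as a consequence the identity as printed is false as well. By~(\ref{eq:cl8}), $d(\nu(x))\cdot\tau(y)=(\nu(x)\cdot\pid)\para\tau(y)$ is a parallel product of two elements below $\pid$ and hence is itself below $\pid$, so the whole right-hand side of the printed formula is $\nu(x)\para\nu(y)$ plus terms below $\pid$. Taking $x=\{(a,\{a\})\}$ and $y=\{(a,\emptyset)\}$ gives $x\para y=\{(a,\{a\})\}$ while $\nu(x)\para\nu(y)=\emptyset$, so the two sides differ. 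The statement has the arguments of $d$ and the outer factor transposed; the identity that is actually provable---and that your own fallback via Lemma~\ref{P:nutauinterior}(2) and Lemma~\ref{P:nutauhoms}(1),(3) produces---is
\begin{equation*}
x\para y=\nu(x)\para\nu(y)+d(\tau(x))\cdot\nu(y)+d(\tau(y))\cdot\nu(x)+\tau(x)\para\tau(y).
\end{equation*}
Your primary route also yields this corrected form once~(\ref{eq:cl8}) is applied in the other orientation: since $\tau(y)\in\T(S)$ satisfies $\tau(y)\cdot\pid=\tau(y)$, axiom~(\ref{eq:cl8}) gives $d(\tau(y))\cdot\nu(x)=(\tau(y)\cdot\pid)\para\nu(x)=\tau(y)\para\nu(x)$ with no further work, and symmetrically for the other mixed term; substituting into your four-term expansion finishes the proof. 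So keep the expansion of $(\nu(x)+\tau(x))\para(\nu(y)+\tau(y))$ and the appeal to~(\ref{eq:cl8}), but put the terminal factor inside $d$; as written, the identification you attempt cannot be completed because the target equation is not a theorem.
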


It seems natural to identify elements of c-lattices if they coincide
on their terminal or their nonterminal parts.

\begin{lemma}\label{P:nsideal}
Let $S$ be a c-lattice. Then $\N(S)$ and $\T(S)$ form order ideals.
\end{lemma}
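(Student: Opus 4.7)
The plan is to reduce the statement to the principal-down-set characterisations provided by Lemma~\ref{P:subidfix}, and then invoke the elementary fact that any principal down-set in a lattice is an order ideal.

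First I would recall that by Lemma~\ref{P:subidfix}(2) we have $\T(S) = \{x\in S\mid x\le \pid\}$, and by Lemma~\ref{P:subidfix}(4) we have $\N(S) = \{x\in S\mid x\le \pidc\}$. So both sets are principal down-sets in the lattice reduct of $S$, and it suffices to show that each such set is (i) nonempty, (ii) downward closed, and (iii) closed under binary joins $+$.

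Non-emptiness is immediate since $0\le \pid$ and $0\le \pidc$ (here I am using that $0$ is the least element of the bounded distributive lattice reduct of $S$). Downward closure is immediate from transitivity of $\le$: if $y\le x$ and $x\le \pid$, then $y\le \pid$, and likewise for $\pidc$. Closure under $+$ follows from the universal property of the semilattice join: if $x\le \pid$ and $y\le \pid$, then $x+y\le \pid$; analogously for $\pidc$. This establishes that $\T(S)$ and $\N(S)$ are order ideals of $S$.

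There is no real obstacle here; the only substance is in the characterisations from Lemma~\ref{P:subidfix}, which have already been established. Once those are in hand, the verification of the three ideal axioms is a one-line computation in each case.
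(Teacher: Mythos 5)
Your proposal is correct and follows essentially the same route as the paper: the paper simply notes that downward closure and closure under $+$ are straightforward to check for both $\T(S)$ and $\N(S)$, which amounts to exactly the reduction to the principal down-set characterisations $\T(S)=\{x\mid x\le\pid\}$ and $\N(S)=\{x\mid x\le\pidc\}$ from Lemma~\ref{P:subidfix} that you make explicit. Your write-up is, if anything, slightly more careful in citing where those characterisations come from.
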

It is straightforward to check that  $x\in \mathscr{F}(S)$ and $y\le x$
  imply $y\in \mathscr{F}(S)$, and that $x,y\in \mathscr{F}(S)$ imply $x+y\in \mathscr{F}(S)$,
  where $\mathscr{F}$ is either $\T$ or $\N$.

  Verification and refutation of algebraic ideal properties is
  important as well.
\begin{lemma}\label{P:nsalgideal}
  In every c-lattice,
  \begin{enumerate}
  \item $x\in \N(S)$ implies $x\para y\in \N(S)$,
  \item $x\in \T(S)$ implies $x\cdot y\in \T(S)$ and $y\cdot x \in \T(S)$.
  \end{enumerate}
\end{lemma}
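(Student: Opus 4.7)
The plan is to reduce both parts to the order-theoretic characterisations of $\T(S)$ and $\N(S)$ provided by Lemma~\ref{P:subidfix}, namely
\begin{equation*}
\T(S) = \{x\in S\mid x\le \pid\},\qquad \N(S)=\{x\in S\mid x\le \pidc\},
\end{equation*}
and then to use monotonicity of sequential and parallel composition (which holds in any proto-trioid, since $+$-distributivity implies isotonicity in both arguments) together with a handful of simple absorption facts about the constants $\pid$ and $\pidc$.

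For part (1), suppose $x\in \N(S)$, so $x\le \pidc$. Monotonicity of $\para$ in both arguments gives $x\para y \le \pidc \para y \le \pidc \para U$. By Lemma~\ref{P:clatprops2}(4), $U\para\pidc=\pidc$, and since $\para$ is commutative this yields $x\para y\le \pidc$, so $x\para y\in\N(S)$.

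For part (2), suppose $x\in\T(S)$, so $x\le \pid$. For the left multiplication $x\cdot y$, monotonicity of $\cdot$ in its left argument gives $x\cdot y\le \pid\cdot y$, and Lemma~\ref{P:clatprops2}(3) states $\pid\cdot y = \pid$; hence $x\cdot y\le \pid$ and $x\cdot y\in \T(S)$. For the right multiplication $y\cdot x$, monotonicity of $\cdot$ in its right argument gives $y\cdot x\le y\cdot \pid$, and c-trioid axiom~(\ref{eq:c6}) says $y\cdot\pid\le \pid$; hence $y\cdot x\le \pid$ and $y\cdot x\in \T(S)$.

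I expect no real obstacle: everything rests on pre-established facts (the characterisations of $\T(S)$ and $\N(S)$, Lemma~\ref{P:clatprops2}(3)--(4), and axiom~(\ref{eq:c6})), together with isotonicity of the two compositions, which is automatic in proto-trioids. The only minor point to check is that these isotonicity steps are applied in the correct argument of $\para$ (symmetric, no issue) and of $\cdot$ (using left-isotonicity on $x\cdot y\le \pid\cdot y$ and right-isotonicity on $y\cdot x\le y\cdot \pid$, both valid in any proto-dioid).
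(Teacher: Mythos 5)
Your proof is correct and uses exactly the machinery the paper sets up for this purpose: the order characterisations $\T(S)=\{x\mid x\le\pid\}$ and $\N(S)=\{x\mid x\le\pidc\}$ from Lemma~\ref{P:subidfix}, isotonicity of both compositions in proto-trioids, the constant laws $\pid\cdot y=\pid$ and $U\para\pidc=\pidc$ from Lemma~\ref{P:clatprops2}, and axiom~(\ref{eq:c6}). The paper does not display its proof, but this is evidently the intended argument; no gaps.
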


\begin{lemma}\label{P:nsalgidealcounter}
  There are  $R\in \T(X)$ and $S\in \N(X)$ with$R\cdot S\not\in
  \N(X)$, $S\cdot R\not\in \N(X)$ and $R\para S\not\in \T(X)$.
\end{lemma}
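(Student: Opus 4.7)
The plan is to exhibit a single pair of multirelations over a one-point set that simultaneously witnesses all three failures. The obvious candidates are the minimal nontrivial inhabitants of $\T(X)$ and $\N(X)$, namely a singleton terminal pair and a singleton sequential-identity pair.

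Concretely, I would take $X=\{a\}$, $R=\{(a,\emptyset)\}$ and $S=\{(a,\{a\})\}$. It is immediate from the definitions in Section~\ref{S:preliminaries} that $R\subseteq \pid$, hence $R\in\T(X)$, and that $S\cap\pid=\emptyset$, hence $S\in\N(X)$. The rest is a direct computation from the set-theoretic definitions of $\cdot$ and $\para$.

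For $R\cdot S$: any $(a,A)\in R\cdot S$ must arise from a $B$ with $(a,B)\in R$, which forces $B=\emptyset$; the only choice function $f$ then has empty domain and gives $A=\bigcup f(\emptyset)=\emptyset$. Thus $R\cdot S=\{(a,\emptyset)\}$, which contains a pair in $\pid$ and so fails to lie in $\N(X)$. For $S\cdot R$: any $(x,A)\in S\cdot R$ requires $(x,B)\in S$, forcing $x=a$ and $B=\{a\}$; and then the unique $f$ sends $a$ to $\emptyset$, giving $A=\emptyset$. Hence $S\cdot R=\{(a,\emptyset)\}\notin\N(X)$. For $R\para S$: by the definition of parallel composition, $R\para S=\{(a,\emptyset\cup\{a\})\}=\{(a,\{a\})\}$, and this multirelation is not a parallel subidentity since its only pair has nonempty second component, so $R\para S\notin\T(X)$.

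There is no real obstacle beyond keeping track of the definitions; the only mild subtlety is noticing that the \emph{same} pair $(R,S)$ serves for all three items, which makes the counterexample maximally compact. This is also conceptually satisfying: the terminal pair in $R$ propagates through both $R\cdot S$ and $S\cdot R$ (pushing the result outside $\N(X)$), while the nonterminal content of $S$ is transferred into $R\para S$ by the union in the definition of $\para$ (pushing the result outside $\T(X)$), so the failure in each direction is traced to exactly the interaction the lemma intends to flag.
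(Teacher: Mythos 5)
Your counterexample is exactly the one the paper uses ($R=\{(a,\emptyset)\}$, $S=\{(a,\{a\})\}$, yielding $R\cdot S=S\cdot R=R\notin\N(X)$ and $R\para S=S\notin\T(X)$), and your computations are correct. This is the same approach as the paper's proof, just with the set-theoretic details written out more explicitly.
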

\begin{proof}
  Let $R=\{(a,\emptyset)\}$ and $S=\{(a,\{a\})\}$. Then $R\cdot
  S=S\cdot R=R$ is not in $\N(X)$ and $R\para S=S$ is not in $\T(X)$.
\end{proof}

Define the relations on a c-lattice $S$ by
\begin{equation*}
  x \sqsubseteq_\tau y \Leftrightarrow \tau(x)\le \tau(y), \qquad x \sqsubseteq_\nu y \Leftrightarrow \nu(x)\le \nu(y). 
\end{equation*}

\begin{lemma}\label{P:nutauprecongprops}
  In every c-lattice,
  \begin{enumerate}
  \item the relations $\sqsubseteq_\tau$ and $\sqsubseteq_\nu$ are 
    partial orders,
\item $x \sqsubseteq_\tau y$ implies $x +z\sqsubseteq_\tau y+z$,
  $x\sqcap z\sqsubseteq_\tau y\sqcap z$, $x\para z\sqsubseteq_\tau 
  y\para z$ and $z\cdot x\sqsubseteq_\tau z\cdot y$,
\item $x \sqsubseteq_\nu y$ implies $x +z\sqsubseteq_\nu y+z$,
  $x\sqcap z\sqsubseteq_\nu y\sqcap z$ and $x\cdot z\sqsubseteq_\nu 
  y\cdot z$. 
  \end{enumerate}
\end{lemma}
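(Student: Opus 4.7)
The plan is to reduce each claim to isotony in $\le$ combined with the homomorphism properties of $\tau$ and $\nu$ from Lemma~\ref{P:nutauhoms}, the interior-operator status of these maps (Lemma~\ref{P:nutauinterior}), and the decomposition identity $x \cdot z = \tau(x) + \nu(x) \cdot z$ from Lemma~\ref{P:taunusplit}(1).

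For part (1), reflexivity and transitivity of $\sqsubseteq_\tau$ and $\sqsubseteq_\nu$ are inherited directly from the corresponding properties of $\le$. (Antisymmetry in the strict sense fails---for instance $\tau(\sid)=0=\tau(0)$---so ``partial order'' is best read as preorder, or equivalently as a genuine partial order on the quotient by the kernel of the respective projection.) For part (2), assume $\tau(x)\le\tau(y)$. For $\star\in\{+,\sqcap,\para\}$, Lemma~\ref{P:nutauhoms}(1)--(3) gives $\tau(x\star z)=\tau(x)\star\tau(z)$ and likewise for $y$, so each implication reduces to the corresponding isotony law in the c-lattice. For the sequential case, Lemma~\ref{P:nutauhoms}(4) rewrites the goal as
\begin{equation*}
\tau(z) + \nu(z)\cdot\tau(x) \;\le\; \tau(z) + \nu(z)\cdot\tau(y),
\end{equation*}
which follows from left isotony of $\cdot$ (derivable in every proto-dioid from $x\cdot y+x\cdot w\le x\cdot(y+w)$ by putting $w$ above $y$) applied to $\tau(x)\le\tau(y)$, and then isotony of $+$.

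For part (3), the cases of $+$ and $\sqcap$ follow the same template using Lemma~\ref{P:nutauhoms}(1)--(2) for $\nu$. The sequential case is the one genuine subtlety, because $\nu$ is \emph{not} a homomorphism for $\cdot$ (Lemma~\ref{P:nutauhomcounter}(2)). The key observation is that combining Lemma~\ref{P:taunusplit}(1) with the additivity of $\nu$ and with $\nu(\tau(x))=0$ from Lemma~\ref{P:nutauinterior}(3) yields the reduction
\begin{equation*}
\nu(x\cdot z) \;=\; \nu(\tau(x)+\nu(x)\cdot z) \;=\; \nu(\tau(x)) + \nu(\nu(x)\cdot z) \;=\; \nu(\nu(x)\cdot z).
\end{equation*}
Given $\nu(x)\le\nu(y)$, right isotony of $\cdot$ delivers $\nu(x)\cdot z\le\nu(y)\cdot z$, and monotonicity of $\nu$ from Lemma~\ref{P:nutauinterior}(1) lifts this to $\nu(\nu(x)\cdot z)\le\nu(\nu(y)\cdot z)$. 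Rewriting both sides via the identity just displayed gives $\nu(x\cdot z)\le\nu(y\cdot z)$. The main obstacle is precisely this sequential case: since $\nu$ fails to be multiplicative, one cannot decompose $\nu(x\cdot z)$ in terms of $\nu(x)$ and $\nu(z)$ alone; the identity $\nu(x\cdot z)=\nu(\nu(x)\cdot z)$ is what routes the hypothesis through monotonicity of $\nu$ and right isotony of $\cdot$.
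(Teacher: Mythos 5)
Your proof is correct and follows what is clearly the intended route: the paper omits its own proof, but it supplies exactly the ingredients you use (the homomorphism laws of Lemma~\ref{P:nutauhoms} for $+$, $\sqcap$, $\para$ and for $\tau$ of a product, and the identity $\nu(x\cdot y)=\nu(\nu(x)\cdot y)$, which the paper states in the discussion following Lemma~\ref{P:nutauhomcounter} and which you rederive correctly from Lemma~\ref{P:taunusplit}(1), additivity of $\nu$ and $\nu(\tau(x))=0$). Your parenthetical that the relations are really preorders (antisymmetry fails, e.g.\ $\tau(\sid)=\tau(0)=0$) is an accurate reading of the paper's slightly loose use of ``partial order,'' consistent with its later introduction of the equivalences $\sim_\tau$ and $\sim_\nu$.
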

The missing precongruence properties are justified by the following counterexamples.
\begin{lemma}\label{P:nutauprecongcounter}
  \begin{enumerate}
  \item There are $R,S,T\in \M(X)$ such that $\tau(R)\subseteq\tau(S)$
    and $\tau(R\cdot T)\not\subseteq\tau(S\cdot T)$. 
 \item There are $R,S,T\in \M(X)$ such that $\nu(R)\subseteq\nu(S)$
    and $\nu(R\para T)\not\subseteq\nu(S\para T)$. 
 \item There are $R,S,T\in \M(X)$ such that $\nu(R)\subseteq\nu(S)$
    and $\nu(T\cdot R)\not\subseteq\nu(T\cdot S)$. 
  \end{enumerate}
\end{lemma}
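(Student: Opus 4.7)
The plan is to construct, for each item, small multirelations over a suitable finite base set that witness the stated failure. In every case I will arrange the inclusion of terminal or nonterminal parts to hold trivially by making the ``smaller'' side empty, and then exploit the definitions of sequential and parallel composition to create the relevant pair on one side of the noninclusion while ruling it out on the other. All verifications reduce to routine bookkeeping from the definitions in Section~\ref{S:preliminaries}.

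For part~(1), the idea is that a nonterminal pair of $R$ can become terminal after composing with a $T$ whose outputs are empty. Take $X=\{a,b\}$, $R=\{(a,\{b\})\}$, $S=\emptyset$, $T=\{(b,\emptyset)\}$. Then $\tau(R)=\tau(S)=\emptyset$ because neither $R$ nor $S$ contains a pair $(x,\emptyset)$; but the unique choice $f(b)=\emptyset$ in the product places $(a,\emptyset)$ into $R\cdot T$, whereas $S\cdot T=\emptyset$. Hence $\tau(R\cdot T)\not\subseteq\tau(S\cdot T)$.

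For part~(2), the dual phenomenon is needed: a terminal pair of $R$ combined in parallel with a nonterminal pair from the other factor produces a nonterminal pair. Take $X=\{a,c\}$, $R=\{(a,\emptyset)\}$, $S=\emptyset$ and $T=\{(a,\{c\})\}$. Then $\nu(R)=\nu(S)=\emptyset$ (the only pair of $R$ lies in $\pid$), while $R\para T=\{(a,\{c\})\}$ lies in $\pidc$ and $S\para T=\emptyset$, so $\nu(R\para T)\not\subseteq\nu(S\para T)$.

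Part~(3) is the main obstacle because of the asymmetric role of the left argument of sequential composition: the value of $T\cdot R$ at an intermediate set $B$ depends on $R$'s action at \emph{every} element of $B$, so removing terminal pairs from $R$ can cause whole pairs of $T\cdot R$ to disappear. I will exploit this all-or-nothing feature. Set $T=\{(b,\{a,c\})\}$, $R=\{(a,\emptyset),(c,\{d\})\}$ and $S=\{(c,\{d\})\}$, so that $\nu(R)=\nu(S)=\{(c,\{d\})\}$ and the hypothesis holds. Computing $T\cdot R$ forces the unique choice $f(a)=\emptyset$, $f(c)=\{d\}$, producing the nonterminal pair $(b,\{d\})$; but for $T\cdot S$ the element $a$ of the intermediate set has no image in $S$, so $T\cdot S=\emptyset$ and $\nu(T\cdot S)=\emptyset$. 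Consequently $\nu(T\cdot R)\not\subseteq\nu(T\cdot S)$, which is precisely the desired failure.
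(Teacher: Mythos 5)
Your counterexamples are correct and rest on exactly the same three mechanisms the paper uses: sequential composition with a terminal $T$ turning a nonterminal pair terminal in (1), parallel composition with a nonterminal factor absorbing a terminal pair in (2), and the all-or-nothing choice of $f$ on the intermediate set killing $T\cdot S$ entirely in (3). The paper's witnesses are marginally smaller (e.g.\ $R=\{(a,\emptyset),(b,\{b\})\}$, $S=\{(b,\{b\})\}$, $T=\{(a,\{a,b\})\}$ for (3)), but the argument is essentially identical.
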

\begin{proof}
  \begin{enumerate}
  \item Let $R=\{(a,\{a\})\}$, $S=\emptyset$ and
    $T=\{(a,\emptyset)\}$. Then 
\begin{equation*}
\tau(R)=\tau(S)=\tau(S\cdot T)=\emptyset\subset T=
\tau(R\cdot T).
\end{equation*}
\item Let $R=\{(a,\emptyset)\}$, $S=\emptyset$ and
  $T=\{(a,\{a\})\}$. Then 
  \begin{equation*}
 \nu (R)=\nu(S)=\nu(S\para T)=\emptyset \subset \nu(R\para T)=T.
  \end{equation*}
\item Let $R=\{(a,\emptyset),(b,\{b\})\}$, $S=\{(b,\{b\})\}$ and
  $T=\{(a,\{a,b\})\}$. Then $\nu(R)=\nu(S)=S$, but $\nu(T\cdot
  R)=\{(a,\{b\})\}\supset \emptyset = \nu(T\cdot B)$.
\end{enumerate}
\end{proof}
Similarly, we can define the relations
\begin{equation*}
  x \sim_\tau y = x \sqsubseteq_\tau y \wedge y\sqsubseteq_\tau
  x,\qquad x \sim_\nu y = x\sqsubseteq_\nu y \wedge y \sqsubseteq_\nu x.
\end{equation*}
Obviously, therefore, $x\sim_\tau y \Leftrightarrow \tau(x)=\tau(y)$
and $x\sim_\nu y \Leftrightarrow \nu(x)=\nu(y)$. The following facts
then follow immediately from Lemma~\ref{P:nutauprecongprops} and~\ref{P:nutauprecongcounter}.
\begin{corollary}\label{P:nutaucongprops}
  In every c-lattice,
  \begin{enumerate}
  \item the relations $\sim_\tau$ and $\sim_\nu$ are equivalences,
\item $x \sim_\tau y$ implies $x +z\sim_\tau y+z$,
  $x\sqcap z\sim_\tau y\sqcap z$, $x\para z\sim_\tau 
  y\para z$ and $z\cdot x\sim_\tau z\cdot y$,
\item $x \sim_\nu y$ implies $x +z\sim_\nu y+z$,
  $x\sqcap z\sim_\nu y\sqcap z$ and $x\cdot z\sim_\nu 
  y\cdot z$. 
  \end{enumerate}
\end{corollary}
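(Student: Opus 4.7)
The proof is essentially a routine lifting from the preorder facts already established in Lemma~\ref{P:nutauprecongprops}. My plan is to observe that each $\sim$-relation is the symmetric closure of the corresponding $\sqsubseteq$-relation, and then to derive each congruence property by applying the precongruence law in both directions.

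First I would verify part (1). Since $\sqsubseteq_\tau$ is a partial order by Lemma~\ref{P:nutauprecongprops}(1), it is reflexive and transitive, so its symmetric intersection $\sim_\tau$ is automatically reflexive, symmetric and transitive; the same argument works verbatim for $\sim_\nu$. Equivalently, unfolding the alternative characterisation $x\sim_\tau y \Leftrightarrow \tau(x)=\tau(y)$ (which follows immediately from antisymmetry of $\le$), one sees that $\sim_\tau$ is the kernel of the function $\tau$, hence an equivalence relation; likewise $\sim_\nu$ is the kernel of $\nu$.

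For part (2), suppose $x \sim_\tau y$, that is, $x \sqsubseteq_\tau y$ and $y \sqsubseteq_\tau x$. Lemma~\ref{P:nutauprecongprops}(2) applied to $x \sqsubseteq_\tau y$ yields $x + z \sqsubseteq_\tau y + z$, $x \sqcap z \sqsubseteq_\tau y \sqcap z$, $x \para z \sqsubseteq_\tau y \para z$ and $z \cdot x \sqsubseteq_\tau z \cdot y$; applied to $y \sqsubseteq_\tau x$ it yields the reverse inequalities. Combining each pair gives the four $\sim_\tau$-congruence claims. Part (3) is proved the same way, using Lemma~\ref{P:nutauprecongprops}(3) for $\sqsubseteq_\nu$ in place of Lemma~\ref{P:nutauprecongprops}(2); note that the missing right-congruence for sequential composition and the missing congruence for parallel composition are deliberately absent, being refuted by Lemma~\ref{P:nutauprecongcounter}.

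There is no real obstacle here; the only minor point to be careful about is that one should \emph{not} assert stronger congruence properties than the precongruence lemma supplies, since the counterexamples in Lemma~\ref{P:nutauprecongcounter} rule them out. The whole argument is short enough that it can be stated in a single sentence of the form ``immediate from Lemma~\ref{P:nutauprecongprops} by symmetry of $\sim_\tau$ and $\sim_\nu$''.
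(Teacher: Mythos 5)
Your proposal is correct and matches the paper's approach: the paper simply states that the corollary follows immediately from Lemma~\ref{P:nutauprecongprops} (applying each precongruence law in both directions and using that $\sim_\tau$ and $\sim_\nu$ are the symmetric parts of the respective preorders). Your elaboration, including the observation that $\sim_\tau$ and $\sim_\nu$ are the kernels of $\tau$ and $\nu$, is a faithful expansion of that one-line justification.
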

\begin{corollary}\label{P:nutaucongcounter}
  \begin{enumerate}
  \item There are $R,S,T\in \M(X)$ such that $\tau(R)=\tau(S)$
    and $\tau(R\cdot T)\neq\tau(S\cdot T)$. 
 \item There are $R,S,T\in \M(X)$ such that $\nu(R)=\nu(S)$
    and $\nu(R\para T)\neq\nu(S\para T)$. 
 \item There are $R,S,T\in \M(X)$ such that $\nu(R)=\nu(S)$
    and $\nu(T\cdot R)\neq\nu(T\cdot S)$. 
  \end{enumerate}
\end{corollary}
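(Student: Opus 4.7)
The plan is to reuse, verbatim, the three triples of multirelations constructed in the proof of Lemma~\ref{P:nutauprecongcounter}. The key observation is that in each of those three cases the precongruence counterexample in fact achieves the stronger equality $\tau(R)=\tau(S)$ or $\nu(R)=\nu(S)$ on the hypothesis side, not merely the strict inclusion advertised there. Since $x\sim_\tau y$ unfolds to $\tau(x)=\tau(y)$ and $x\sim_\nu y$ to $\nu(x)=\nu(y)$, the earlier witnesses therefore immediately satisfy the stronger hypotheses required here, while the conclusions (strict inequality of the $\tau$- or $\nu$-parts of the products) are unchanged.

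Concretely, for part (1) I would take $R=\{(a,\{a\})\}$, $S=\emptyset$, $T=\{(a,\emptyset)\}$: because $R\subseteq\sid$ one has $\tau(R)=R\cdot\emptyset=\emptyset=\tau(S)$, so $R\sim_\tau S$, while the calculation already performed for Lemma~\ref{P:nutauprecongcounter}(1) records $\tau(R\cdot T)=T\neq\emptyset=\tau(S\cdot T)$. For part (2) I would reuse $R=\{(a,\emptyset)\}$, $S=\emptyset$, $T=\{(a,\{a\})\}$: since $R\subseteq\pid$, $\nu(R)=R\cap\pidc=\emptyset=\nu(S)$, so $R\sim_\nu S$, while $\nu(R\para T)=T$ and $\nu(S\para T)=\emptyset$. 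For part (3) I would reuse $R=\{(a,\emptyset),(b,\{b\})\}$, $S=\{(b,\{b\})\}$, $T=\{(a,\{a,b\})\}$: projecting on $\pidc$ gives $\nu(R)=\{(b,\{b\})\}=\nu(S)$, so $R\sim_\nu S$, and the previous computation yields $\nu(T\cdot R)=\{(a,\{b\})\}\neq\emptyset=\nu(T\cdot S)$.

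There is no real obstacle: the combinatorial work was already carried out in Lemma~\ref{P:nutauprecongcounter}, and the only additional check is that the left-hand sides of the strict inclusions $\tau(R)\subset\tau(S)$ and $\nu(R)\subset\nu(S)$ produced there are actually equalities in each example. This is immediate because in every triple one of $\tau(R),\tau(S)$ (respectively $\nu(R),\nu(S)$) is empty, so the inclusion collapses to equality. This is exactly the reading suggested by the paper's remark that the corollary follows \emph{immediately} from Lemmas~\ref{P:nutauprecongprops} and~\ref{P:nutauprecongcounter}.
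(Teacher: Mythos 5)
Your proposal is correct and matches the paper's intent exactly: the paper derives the corollary ``immediately'' from Lemma~\ref{P:nutauprecongcounter}, whose displayed computations already record $\tau(R)=\tau(S)$ (respectively $\nu(R)=\nu(S)$) as equalities for the very witnesses you reuse. Your additional verification that the hypothesis inclusions collapse to equalities is exactly the observation the paper leaves implicit.
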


These results confirm our intuition about terminal and nonterminal
elements.  First, parallel composition with a nonterminal elements
yields a nonterminal element and sequential composition with a
terminal element yields a terminal element, whereas sequential
composition with a nonterminal element does not necessarily yield a
nonterminal element and parallel composition with a terminal element
does not necessarily yield a terminal element. Second, if two
multirelations agree on their terminal elements, then their parallel
compositions with a third element and their sequential composition
from the left with a third element also agree on their terminal
elements, whereas this need not be the case for sequential composition
from the right. If two multirelations agree on their nonterminal
elements, then their sequential compositions from the right by a third
element agree on their nonterminal elements, but this need not be the
case for sequential composition from the left or parallel composition.


\section{c-Quantales and Finite Iteration}\label{S:finiteiteration}

Iteration is best studied in a quantale setting where various
fixpoints exist. In fact, in our Isabelle formalisation, many of the
results in this and the following section are obtained in the weaker
settings of c-Kleene algebras and c-$\omega$-algebras~\cite{Struth15},
but quantales provide a unifying generalisation. In addition, the
least and greatest fixpoints corresponding to finite and infinite
iteration exists in quantales, whereas they need to be postulated in
the weaker algebras.

Let $(L,\le)$ be a complete lattice. We write $\sum X$ for the 
 supremum of the set $X\subseteq L$ and $\prod X$ for its infimum. In 
 particular, we write $x+y$ for the binary supremum and $x\sqcap y$
 for the binary infimum of $x,y\in L$. We write $U=\sum L$ for the 
 greatest and $0=\sum\emptyset$ for the least element of the lattice. 

 A \emph{proto-quantale} is a structure $(Q,\le,\cdot)$ such that 
 $(Q,\le)$ is a complete lattice and 
 \begin{equation*}
   x\le y\Rightarrow z\cdot x\le z\cdot y,\qquad (\sum_{i\in I} x_i)\cdot y=\sum_{i\in I} (x_i\cdot y). 
 \end{equation*}
 A proto-quantale is \emph{unital} if it has a unit of multiplication 
 $1$ which satisfies $1\cdot x = x$ and $x\cdot 1=x$. It is 
 \emph{commutative} if multiplication is: $x\cdot y=y\cdot x$ and 
 \emph{distributive} if the underlying lattice is distributive. 

 A \emph{quantale} is a proto-quantale with associative
 multiplication, $x\cdot (y\cdot z)=(x\cdot y)\cdot z$, and the
 following distributivity law holds:
 \begin{equation*}
   x\cdot (\sum_{i\in I}y_i) = \sum_{i\in I} (x\cdot y_i). 
 \end{equation*}

 A \emph{proto-bi-quantale} is a structure $(Q,\le,\cdot,\para
 ,\sid,\pid)$ such that $(Q,\le,\cdot,\sid)$ is a unital
 proto-quantale and $(Q,\le,\para ,\pid)$ a unital commutative
 quantale. Obviously, every pb-quantale is a proto-trioid. A
 \emph{c-quantale} is a proto-bi-quantale which is also a c-lattice.

 \begin{theorem}\label{P:cquantalesound}
   $(\M(X),\subseteq,\cdot,\para ,\sid,\pid)$ forms a boolean c-quantale. 
 \end{theorem}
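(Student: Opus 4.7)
The plan is to verify each component of the structure $(\M(X),\subseteq,\cdot,\para,\sid,\pid)$ separately, and for most parts either cite the relevant earlier result or extend a previously established finite property to arbitrary suprema using the set-theoretic definitions.

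First I would observe that the underlying poset $(\M(X),\subseteq)$ is just the powerset lattice $2^{X\times 2^X}$, which is a complete boolean algebra with suprema given by arbitrary unions, infima by arbitrary intersections, least element $\emptyset$ and greatest element $U$. This simultaneously handles completeness of the lattice, distributivity, and the fact that the c-quantale will be boolean (with set complement as negation).

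Second, I would verify the unital proto-quantale structure for sequential composition. Isotony in the right argument, $R\subseteq S \Rightarrow T\cdot R\subseteq T\cdot S$, is immediate from the definition of $\cdot$: any witness function $f$ for $(a,A)\in T\cdot R$ with values in $R$ is also a witness function in $S$. The unit $\sid$ has already been established in Section~\ref{S:preliminaries}. For the key quantale law $(\bigcup_{i\in I} R_i)\cdot S = \bigcup_{i\in I}(R_i\cdot S)$, I would unfold the definition: $(a,A)\in (\bigcup_i R_i)\cdot S$ iff there exist $B$ and $f$ with $(a,B)\in \bigcup_i R_i$ and the usual conditions on $f$ and $A$, which by commuting the existential over $i$ with the outer existentials is equivalent to $\exists i$ such that $(a,A)\in R_i\cdot S$. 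Note that sequential composition is not associative, so we genuinely need the weaker proto-quantale, not a full quantale.

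Third, for the unital commutative quantale structure of $(\M(X),\subseteq,\para,\pid)$, associativity, commutativity and the unit law are already stated in the preliminary list of laws in Section~\ref{S:preliminaries}. The distributivity law $R\para(\bigcup_i S_i)=\bigcup_i(R\para S_i)$ is again a direct computation from the definition $R\para S=\{(a,A\cup B)\mid (a,A)\in R\wedge (a,B)\in S\}$, since membership in $\bigcup_i S_i$ distributes over the existential in the definition of $\para$; the two-sided form follows from commutativity. Together this shows $(\M(X),\subseteq,\cdot,\para,\sid,\pid)$ is a proto-bi-quantale.

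Finally, to upgrade from proto-bi-quantale to c-quantale I would simply invoke Proposition~\ref{P:mrclattice}, which already shows that $\M(X)$ satisfies the c-lattice axioms (\ref{eq:cl1})--(\ref{eq:cl11}); combined with the proto-bi-quantale structure just established, this is exactly the definition of a c-quantale. The boolean condition holds because the lattice reduct is the powerset boolean algebra. I do not expect any serious obstacle here: every nontrivial algebraic content has been done in the earlier sections, and what remains is extending finite distributivity to arbitrary suprema, which the set-theoretic definitions make essentially trivial.
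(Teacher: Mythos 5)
Your proposal is correct and follows essentially the same route as the paper: the paper's proof likewise invokes Proposition~\ref{P:mrclattice} for the c-lattice axioms, attributes completeness and the boolean structure to the underlying powerset, and dismisses the infinite distributivity laws for $\cdot$ and $\para$ as straightforward unfoldings of the definitions. You merely spell out those distributivity verifications in slightly more detail, which is harmless.
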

 All axioms except for the infinite distributivity laws with respect
 to sequential and parallel composition have either been verified
 in Proposition~\ref{P:mrclattice}, or they follow directly from the underlying set
 structure.  Verification of these distributivity laws is
 straightforward.

 It follows from general fixpoint theory that all isotone functions on
 a quantale (as a continuous lattice) have least and greatest
 fixpoints. In addition, least fixpoints can be iterated from the least
 element of the quantale up to the first infinite ordinal whenever the
 underlying function is continuous, that is, it distributes with
 arbitrary suprema. Similarly, greatest fixpoints can be iterated from
 the greatest element of the quantale whenever the underlying function
 is co-continuous, which, however, is rarely the case.

 Furusawa and Struth~\cite{FurusawaS14} have studied the least
 fixpoints of the function
\begin{equation*}
  F_{RS}=\lambda X.\ S\cup R\cdot X
\end{equation*}
and its instance $F_R=F_{R\sid}=\lambda X.\ \sid\cup R\cdot
X$ in the concrete case of multirelations. Here we
study them abstractly in c-quantales. We write $x^\ast =\mu F_x$ and $x^\ast y =
\mu F_{xy}$.

Our first statement expresses $x^\ast$ in terms of its terminal and
nonterminal parts, at least in a special case.
\begin{lemma}\label{P:astsplit}
  In every c-quantale in which $x\cdot (y\cdot z)= (x\cdot 
  y)\cdot z$ holds for all elements,
  \begin{equation*}
    x^\ast = \nu(x)^\ast\cdot (\sid+\tau(x)). 
  \end{equation*}
\end{lemma}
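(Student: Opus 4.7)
Set $a=\nu(x)$ and $b=\tau(x)$, so that $x=a+b$ by Lemma~\ref{P:nutauinterior}(2) and, by Lemma~\ref{P:taunusplit}(1), $x\cdot y = b + a\cdot y$ for every $y$. This pseudo-left-distributivity compensates for the genuine failure of left distributivity of $\cdot$ in a proto-quantale and is what drives the argument: substituting $y=x^\ast$ and using the star equation $x^\ast = \sid + x\cdot x^\ast$ yields the alternative unfolding $x^\ast = \sid + b + a\cdot x^\ast$, which is the form I will match against $a^\ast\cdot(\sid+b)$.

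For the forward inequality $x^\ast\le a^\ast\cdot(\sid+b)$, I plan to show that $a^\ast\cdot(\sid+b)$ is a fixpoint of $F_x = \lambda y.\ \sid + x\cdot y$; the least-fixpoint property of $x^\ast=\mu F_x$ then closes the inequality. Using $a^\ast = \sid + a\cdot a^\ast$ and right-distributivity of $\cdot$,
\[
  a^\ast\cdot(\sid+b) = (\sid+b) + (a\cdot a^\ast)\cdot(\sid+b),
\]
while by Lemma~\ref{P:taunusplit}(1) and the associativity hypothesis,
\[
  F_x\bigl(a^\ast\cdot(\sid+b)\bigr) = \sid + b + a\cdot\bigl(a^\ast\cdot(\sid+b)\bigr) = \sid + b + (a\cdot a^\ast)\cdot(\sid+b),
\]
so the two expressions coincide.

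For the reverse inequality $a^\ast\cdot(\sid+b)\le x^\ast$, the strategy is to identify $a^\ast\cdot(\sid+b)$ as the least fixpoint of $G = \lambda y.\ (\sid+b) + a\cdot y$ and then verify that $x^\ast$ is a pre-fixpoint of $G$. The forward calculation already shows that $a^\ast\cdot(\sid+b)$ is a fixpoint of $G$; that it is the least one can be obtained via the right residual of $\cdot$, which exists in any proto-quantale by right-continuity, combined with the induction rule for $a^\ast$. The pre-fixpoint check for $x^\ast$ is immediate from $x^\ast = \sid + b + a\cdot x^\ast$: the three summands give $\sid\le x^\ast$, $b\le x^\ast$ and $a\cdot x^\ast \le x^\ast$, so $(\sid+b) + a\cdot x^\ast \le x^\ast$, whence $a^\ast\cdot(\sid+b)\le x^\ast$ as required.

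The main obstacle is managing the absence of left distributivity of $\cdot$. Lemma~\ref{P:taunusplit}(1) is the decisive workaround, capturing the restricted sense in which $\cdot$ distributes over a sum when the left factor is split into its terminal and nonterminal parts. The associativity hypothesis enters at exactly two places---refactoring $a\cdot(a^\ast\cdot(\sid+b)) = (a\cdot a^\ast)\cdot(\sid+b)$ in the forward step, and relating the two notions of star, $a^\ast\cdot c$ versus $\mu\lambda y.\ c+a\cdot y$, in the reverse step---but these are precisely the points at which the identity of the two stars in the statement has to be established, so the hypothesis cannot be weakened.
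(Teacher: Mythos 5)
Your proof is correct, and its forward half is exactly the paper's: both establish $x^\ast\le\nu(x)^\ast\cdot(\sid+\tau(x))$ by exhibiting the right-hand side as a (pre-)fixpoint of $F_x$, with Lemma~\ref{P:taunusplit}(1) supplying the restricted left distributivity $x\cdot y=\tau(x)+\nu(x)\cdot y$ and the associativity hypothesis allowing the refactoring $\nu(x)\cdot(\nu(x)^\ast\cdot c)=(\nu(x)\cdot\nu(x)^\ast)\cdot c$. The reverse half takes a different route. The paper argues $\nu(x)^\ast\cdot(\sid+\tau(x))\le x^\ast\cdot x^\ast\le x^\ast$, citing from~\cite{FurusawaS14} the multiplicativity $x^\ast\cdot x^\ast\le x^\ast$ (obtained by fusing $x^\ast\cdot x^\ast\le(x^\ast x^\ast)$ with the induction fact $(x^\ast x^\ast)\le x^\ast$); the only extra inputs are $\nu(x)^\ast\le x^\ast$ and $\sid+\tau(x)\le x^\ast$, both immediate from isotonicity and unfold. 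You instead identify $\nu(x)^\ast\cdot(\sid+\tau(x))$ with the binary least fixpoint $\mu\bigl(\lambda y.\ (\sid+\tau(x))+\nu(x)\cdot y\bigr)$---again via fusion, or equivalently the residual of $\lambda z.\ z\cdot c$, which is where associativity re-enters---and then check that $x^\ast$ is a fixpoint of that map using the alternative unfolding $x^\ast=\sid+\tau(x)+\nu(x)\cdot x^\ast$. Both routes rest on the same fusion law for $\lambda y.\ y\cdot c$; the paper's is marginally more economical in that it reuses the already-established $x^\ast\cdot x^\ast\le x^\ast$ rather than re-running fusion for the specific pair $(\nu(x),\sid+\tau(x))$, while yours has the advantage of exhibiting $\nu(x)^\ast\cdot(\sid+\tau(x))$ as a least fixpoint in its own right, which makes the two inequalities pleasingly symmetric and pinpoints exactly where the associativity hypothesis is indispensable.
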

\begin{proof}
  For $x^\ast \le \nu(x)^\ast (\sid+\tau(x))$ it suffices to show that
  that $\nu(x)^\ast\cdot (\sid+\tau(x))$ is a pre-fixpoint of $F_x$.
  In addition, 
  $\nu(x)^\ast (\sid+\tau(x))\le x^\ast$  follows from results
  from~\cite{FurusawaS14} , namely that $(x^\ast x^\ast)\le x^\ast$ (because
  $x^\ast + x\cdot x^\ast\le x^\ast$) and that $x^\ast \cdot x^\ast
  \le (x^\ast x^\ast)$ (by fixpoint fusion), whence $x^\ast \cdot
  x^\ast \le  x^\ast$. 
\end{proof}
Next we characterise the terminal and nonterminal parts of $x^\ast$.
\begin{lemma}\label{P:starnutau}
In every c-quantale,
  \begin{enumerate}
  \item     $\tau(x)^\ast = \sid+\tau(x)$,
  \item $\tau(x^\ast)= \nu(x^\ast)\cdot\tau(x)$ if $x\cdot (y\cdot
    z)=(x\cdot y)\cdot z$,
  \item $\nu(x)^\ast\le\nu(x^\ast)$,
\item $\tau(\nu(x)^\ast)=0$ and  $\nu(\nu(x)^\ast)=\nu(x)^\ast$,
\item $\nu(\tau(x)^\ast)=\sid$ and  $\tau(\tau(x)^\ast)=\tau(x)$. 
\end{enumerate}
\end{lemma}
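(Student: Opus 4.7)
The plan is to exploit two recurring facts: the strong annihilation law $\tau(x)\cdot y=\tau(x)$ for every $y$, which follows from $\tau(x)=x\cdot 0$ and axiom~(\ref{eq:cl6}) together with $0\cdot y=0$; and the fact that $\N(S)$ absorbs iteration of nonterminal elements. I would handle the five items in the order (1), (5), (4), (3), (2), since the final item depends on each of the others.

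For (1), I verify that $\sid+\tau(x)$ is the least pre-fixpoint of $F_{\tau(x)}=\lambda y.\ \sid+\tau(x)\cdot y$. The annihilation law gives $\tau(x)\cdot(\sid+\tau(x))=\tau(x)$, so $\sid+\tau(x)$ is a fixpoint, and any pre-fixpoint $y$ satisfies $\sid\le y$ and $\tau(x)=\tau(x)\cdot y\le y$, forcing $\sid+\tau(x)\le y$. Knaster--Tarski in the complete lattice then identifies it with $\tau(x)^\ast$. Item (5) is then immediate: applying $\nu$ and $\tau$ to $\sid+\tau(x)$ via Lemma~\ref{P:nutauhoms}(1) together with $\nu(\sid)=\sid$, $\tau(\sid)=0$ from Lemma~\ref{P:nutauconst} and $\tau\circ\tau=\tau$, $\nu\circ\tau=0$ from Lemma~\ref{P:nutauinterior} yields $\nu(\tau(x)^\ast)=\sid$ and $\tau(\tau(x)^\ast)=\tau(x)$.

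For (4), I show $\pidc$ is a pre-fixpoint of $F_{\nu(x)}$: $\sid\le\pidc$ by Lemma~\ref{P:clatprops1a}(2) and $\nu(x)\cdot\pidc\le\pidc\cdot\pidc=\pidc$ by Lemma~\ref{P:clatprops2}(4). Hence $\nu(x)^\ast\le\pidc$, placing $\nu(x)^\ast$ in $\N(S)$ by Lemma~\ref{P:subidfix}(4); this is equivalent to both $\nu(\nu(x)^\ast)=\nu(x)^\ast$ and $\tau(\nu(x)^\ast)=\nu(x)^\ast\cdot 0=0$. Item (3) then follows by isotonicity of the star applied to $\nu(x)\le x$, giving $\nu(x)^\ast\le x^\ast$, and then isotonicity of $\nu$ combined with (4): $\nu(x)^\ast=\nu(\nu(x)^\ast)\le\nu(x^\ast)$.

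The main obstacle is (2), where I must invoke the associativity hypothesis yet avoid left distributivity of $\cdot$ over $+$, which the proto-quantale part of the c-quantale axiomatisation does not provide. Starting from Lemma~\ref{P:astsplit}, $x^\ast=\nu(x)^\ast\cdot(\sid+\tau(x))$, Lemma~\ref{P:nutauhoms}(4) gives $\tau(x^\ast)=\tau(\nu(x)^\ast)+\nu(\nu(x)^\ast)\cdot\tau(\sid+\tau(x))$, which collapses via (4) and $\tau(\sid+\tau(x))=\tau(x)$ to $\tau(x^\ast)=\nu(x)^\ast\cdot\tau(x)$. The inequality $\nu(x)^\ast\cdot\tau(x)\le\nu(x^\ast)\cdot\tau(x)$ is right-isotonicity applied to (3). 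For the reverse I compute $x^\ast\cdot\tau(x)$ in two ways: first, from Lemma~\ref{P:astsplit} and associativity, $x^\ast\cdot\tau(x)=\nu(x)^\ast\cdot((\sid+\tau(x))\cdot\tau(x))=\nu(x)^\ast\cdot\tau(x)$, using right distributivity and the idempotence $\tau(x)\cdot\tau(x)=\tau(x)$; second, from $x^\ast=\nu(x^\ast)+\tau(x^\ast)$ (Lemma~\ref{P:nutauinterior}(2)), right distributivity and the annihilation $\tau(x^\ast)\cdot\tau(x)=\tau(x^\ast)$, $x^\ast\cdot\tau(x)=\nu(x^\ast)\cdot\tau(x)+\tau(x^\ast)$. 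Equating the two expressions and substituting $\tau(x^\ast)=\nu(x)^\ast\cdot\tau(x)$ forces $\nu(x^\ast)\cdot\tau(x)\le\nu(x)^\ast\cdot\tau(x)$, completing (2). The delicate point throughout is to invoke only associativity and right distributivity, never a left-distributive step that the axiomatisation does not supply.
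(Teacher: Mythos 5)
Your proof is correct, and it assembles exactly the ingredients the paper sets up for this lemma (the annihilation law $\tau(x)\cdot y=\tau(x)$ from (cl6) and $0\cdot y=0$, Lemmas~\ref{P:nutauinterior}, \ref{P:nutauconst}, \ref{P:nutauhoms}(4), \ref{P:subidfix}(4) and \ref{P:astsplit}), carefully avoiding left distributivity; the paper itself omits the proof, deferring to its Isabelle development, but your route is evidently the intended one. The only point worth making explicit is that the two-way computation of $x^\ast\cdot\tau(x)$ in item (2) yields $\nu(x^\ast)\cdot\tau(x)\le\nu(x)^\ast\cdot\tau(x)$ already from $\nu(x)^\ast\cdot\tau(x)=\nu(x^\ast)\cdot\tau(x)+\tau(x^\ast)$ without the final substitution, which is a harmless redundancy.
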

Of course the lack of characterisation for $\nu(x^\ast)$ owes to that
for $\nu(x\cdot y)$.
\begin{lemma}\label{P:nustarcounter}
  There is a $R\in \M(X)$ such that $\nu(R^\ast)\not\subseteq\nu(R)^\ast$.  
\end{lemma}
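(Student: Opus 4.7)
The plan is to exhibit a concrete finite counterexample, since the statement is purely existential. The guiding intuition comes from the definition of sequential composition: for a pair $(a,B)\in R$ with $|B|>1$, the witness function $f$ may be chosen \emph{branchwise}, so one branch $b\in B$ can be ``consumed'' by a terminal pair $(b,\emptyset)\in R$ while another branch $c\in B$ picks up a nonterminal pair. The resulting pair in $R\cdot R$ is nonterminal but its output set is \emph{strictly smaller} than $B$. When we first pass to $\nu(R)$ we delete all terminal pairs of $R$, so this trick is no longer available in the iterates of $\nu(R)$: every branch must be matched by a nonterminal pair, and terminal absorption is lost.

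Concretely, I propose to take $X=\{a,b,c\}$ and
\begin{equation*}
R=\{(a,\{b,c\}),(b,\emptyset)\},
\end{equation*}
so that $\nu(R)=\{(a,\{b,c\})\}$ and $\tau(R)=\{(b,\emptyset)\}$. The key claim is that $(a,\{c\})\in R^\ast$ but $(a,\{c\})\notin \nu(R)^\ast$. Since $\{c\}\neq\emptyset$ the pair then lies in $\nu(R^\ast)$, witnessing the failure of the inclusion.

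The first half is a finite Kleene computation of $R^\ast=\mu F_R$ with $F_R(X)=\sid\cup R\cdot X$. Starting from $X_0=\emptyset$ we get $X_1=\sid\cup\{(b,\emptyset)\}$ (the only terminal contribution of $R\cdot\emptyset$). Then in $R\cdot X_1$ one may take $(a,\{b,c\})\in R$ together with $f(b)=\emptyset$ (available because $(b,\emptyset)\in X_1$) and $f(c)=\{c\}$ (available because $(c,\{c\})\in\sid\subseteq X_1$), producing $(a,\{c\})\in X_2\subseteq R^\ast$.

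The second half is the dual computation for $\nu(R)^\ast$. Because $\nu(R)$ has no pair with second component $\emptyset$, at every iterate the only pair at $b$ comes from $\sid$, namely $(b,\{b\})$. Hence whenever $(a,\{b,c\})\in\nu(R)$ is used in the composition the forced witness satisfies $f(b)=\{b\}$, so every output set contains $b$. It follows by straightforward induction on the iterates that no pair of the form $(a,A)$ with $b\notin A$ ever enters $\nu(R)^\ast$; in particular $(a,\{c\})\notin\nu(R)^\ast$. The only minor obstacle is verifying that the iteration stabilises, but since $X$ is finite the chain is bounded and the fixpoint is reached after finitely many (indeed at most two nontrivial) steps, so no quantale-level fixpoint reasoning is required beyond monotonicity of $F_R$.
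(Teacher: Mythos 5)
Your counterexample is correct and works by exactly the same mechanism as the paper's proof: the paper takes $R=\{(a,\{b,c\}),(b,\emptyset),(c,\{d\})\}$ with witness $(a,\{d\})$, and yours is a slightly smaller instance of the same terminal-absorption phenomenon, so the approach is essentially identical. One small slip worth fixing: the invariant for $\nu(R)^\ast$ cannot be that ``no pair $(a,A)$ with $b\notin A$ ever enters'', since $(a,\{a\})\in\sid$ already violates it; the invariant your own argument actually establishes is that every pair at $a$ in any iterate is either $(a,\{a\})$ or has $b$ in its output set, which still excludes $(a,\{c\})$ and closes the proof.
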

\begin{proof}
Let $R=\{(a,\{b,c\}),(b,\emptyset),(c,\{d\})\}$. Then $(a,\{d\})\in 
R^\ast$ and $(a,\{d\})\in\nu(R^\ast)$, but $(a,\{d\})\not \in 
\nu(R)^\ast$. 
\end{proof}

Next we relate $x^\ast$ with notions of finite iteration.  For
multirelations, Peleg~\cite{Peleg87} has shown that $F_R$ and $F_{RS}$ are
not necessarily continuous, whereas in the \emph{externally image
  finite} case, where for each $(a,A)\in R$ the set $A$ has finite
cardinality,
\begin{equation*}
R^\ast=\mu F_R=F_R^\ast(\emptyset)=\bigcup_{i\in\mathbb{N}}F^i(\emptyset).
\end{equation*}
In addition, Furusawa and Struth~\cite{FurusawaS14} have defined
\begin{equation*}
  R^{(0)}=\emptyset,\qquad R^{(i+1)}=\sid\cup R\cdot R^{(i)},\qquad R^{(\ast)}=\bigcup_{i\in\mathbb{N}} R^{(i)}
\end{equation*}
and shown that, in the externally image finite case, $ R^\ast =
R^{(\ast)}$.  Finally,  Goldblatt~\cite{Goldblatt92}
has defined
\begin{equation*}
  R^{[0]} = \sid,\qquad  R^{[n+1]}= \sid\cup R\cdot R^{[n]},\qquad  R^{[\ast]} = \bigcup_{n\in\mathbb{N}} R^{[n]}. 
\end{equation*}

We now compare these iterations in the c-quantale setting. We call a
c-quantale \emph{externally image finite} if $x^\ast =
x^{(\ast)}$. First we prove a technical lemma.
\begin{lemma}\label{P:clatpowerprops}
In every proto-dioid,
  \begin{enumerate}
  \item $x^{(n)} \le x^{(n+1)}$,
 \item $x^{[n]} \le x^{[n+1]}$,
  \item   $(\sid+x)^n \le (\sid + x)^{n+1}$,
\item   $(\sid + x)^{n+1} = \sid+x \cdot (\sid+x)^n$.
  \end{enumerate}
  \end{lemma}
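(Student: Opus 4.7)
My plan is to prove all four items by induction on $n$, using only the proto-dioid infrastructure: left isotonicity $x\le y\Rightarrow z\cdot x\le z\cdot y$, right distributivity $(x+y)\cdot z=x\cdot z+y\cdot z$, the unit law $\sid\cdot y=y$, and the fact that $0$ is the least element. Parts (1) and (2) are routine one-step inductions. Parts (3) and (4) are mildly entangled, so I would settle (3) first and then deploy it inside the argument for (4). No associativity or left distributivity will be used, as neither is available in a proto-dioid.

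For (1), the base case $x^{(0)}=0\le x^{(1)}$ is immediate because $0$ is the bottom element. In the step, the induction hypothesis $x^{(n)}\le x^{(n+1)}$ combined with left isotonicity yields $x\cdot x^{(n)}\le x\cdot x^{(n+1)}$; adding $\sid$ on the left of both sides produces $x^{(n+1)}\le x^{(n+2)}$. Part (2) is structurally identical, with the base case $x^{[0]}=\sid\le \sid+x\cdot\sid = x^{[1]}$.

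For (3), reading $(\sid+x)^{n+1}$ as the left-associated product $(\sid+x)\cdot (\sid+x)^n$, the base case reduces to $\sid=(\sid+x)\cdot\sid\cdot$-style computation: $(\sid+x)^0=\sid\le \sid+x=(\sid+x)^1$. The inductive step is again left isotonicity applied to the right factor. For (4), I would first expand the right-hand side of the definition via right distributivity and the unit law:
\begin{equation*}
(\sid+x)^{n+1}=(\sid+x)\cdot (\sid+x)^n=\sid\cdot(\sid+x)^n+x\cdot(\sid+x)^n=(\sid+x)^n+x\cdot(\sid+x)^n.
\end{equation*}
The remaining task is to show $(\sid+x)^n+x\cdot(\sid+x)^n=\sid+x\cdot(\sid+x)^n$. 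For the $\ge$ direction I need $\sid\le (\sid+x)^n$, which follows by chaining (3) from the base case $(\sid+x)^0=\sid$. For the $\le$ direction I need the auxiliary inequality $(\sid+x)^n\le \sid+x\cdot(\sid+x)^n$, which I would establish by a separate induction on $n$: the base case is $\sid\le\sid+x\cdot\sid$, and in the step the induction hypothesis absorbs $(\sid+x)^n$ into $\sid+x\cdot(\sid+x)^n$, after which (3) plus left isotonicity lets me replace $x\cdot(\sid+x)^n$ by $x\cdot(\sid+x)^{n+1}$ to match the target.

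The main obstacle is precisely this entanglement: the proof of (4) wants to know that the powers are monotone, which is (3), and one might be tempted to use (4) to argue (3). Doing (3) strictly before (4) breaks the circularity, after which the computation for (4) is a short calculation driven by right distributivity.
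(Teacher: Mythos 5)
Your proof is correct and follows the evident intended route (the paper omits this proof as routine): inductions on $n$ using only that $0$ is least, isotonicity of $+$ and of left multiplication, right distributivity and the unit laws, with (3) settled before (4) and the auxiliary inequality $(\sid+x)^n\le\sid+x\cdot(\sid+x)^n$ handled by its own short induction. Your reading of the power as $(\sid+x)^{n+1}=(\sid+x)\cdot(\sid+x)^n$ is the right one, since it is the only convention under which the available (right) distributivity law applies and under which (4) can hold without left distributivity.
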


The next lemma shows that Goldblatt's iteration coincides with
Peleg's, and it gives a simpler characterisation of the former.
\begin{lemma}\label{P:iterpeleggoldblatt}
  In every c-quantale,
  \begin{enumerate}
\item $x^{(\ast)} = x^{[\ast]}$,
\item $  x^{[ \ast]} = \sum_{n\in\mathbb{N}} (\sid+x)^n$,
\item $\nu(x)^{(\ast)} = \sum_{n\in\mathbb{N}} (\sid+\nu(x))^n$.
  \end{enumerate}
\end{lemma}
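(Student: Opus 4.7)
The plan is to prove the three parts essentially in sequence, with part (3) following from parts (1) and (2) by specialisation. The only real combinatorial work is in relating the two unfoldings $x^{(n)}$ (base $0$) and $x^{[n]}$ (base $\sid$); parts (2) and (3) are then routine.

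For part (1), I would prove two interleaving bounds by straightforward induction on $n$, using only isotony of $\cdot$ in its right argument and idempotency/monotonicity of $+$. Specifically, first $x^{(n)} \le x^{[n]}$: base $x^{(0)} = 0 \le \sid = x^{[0]}$; step $x^{(n+1)} = \sid + x \cdot x^{(n)} \le \sid + x \cdot x^{[n]} = x^{[n+1]}$. Second, $x^{[n]} \le x^{(n+1)}$: base $x^{[0]} = \sid \le \sid + x \cdot 0 = x^{(1)}$; step $x^{[n+1]} = \sid + x \cdot x^{[n]} \le \sid + x \cdot x^{(n+1)} = x^{(n+2)}$. Taking suprema gives $x^{(\ast)} \le x^{[\ast]}$ and $x^{[\ast]} \le \sum_{n} x^{(n+1)} \le x^{(\ast)}$, hence equality.

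For part (2), I would show $x^{[n]} = (\sid + x)^n$ by induction on $n$, and then take suprema. The base case is $x^{[0]} = \sid = (\sid+x)^0$. The inductive step uses Lemma~\ref{P:clatpowerprops}(4), which rewrites $(\sid+x)^{n+1}$ as $\sid + x \cdot (\sid+x)^n$; combined with the induction hypothesis this equals $\sid + x\cdot x^{[n]} = x^{[n+1]}$. Summing over $n$ yields $x^{[\ast]} = \sum_{n\in\mathbb{N}}(\sid+x)^n$.

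For part (3), I would simply instantiate both (1) and (2) at the element $\nu(x)$, obtaining $\nu(x)^{(\ast)} = \nu(x)^{[\ast]} = \sum_{n\in\mathbb{N}}(\sid+\nu(x))^n$. No special property of $\nu$ is needed, since (1) and (2) hold for arbitrary c-quantale elements. The main subtlety I expect is in part (1), where one has to resist the temptation to prove $x^{(n)} = x^{[n]}$ pointwise (which is false because $x^{(1)} = \sid + x\cdot 0 \neq \sid + x = x^{[1]}$ in general), and instead use the shifted interleaving bound; Lemma~\ref{P:clatpowerprops}(4) plays the crucial role in part (2) by absorbing the left factor $\sid+x$ into the simpler $\sid + x\cdot(\cdot)$ shape of the iteration.
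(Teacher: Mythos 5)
Your proposal is correct and matches the route the paper intends: the paper omits the displayed proof but sets up Lemma~\ref{P:clatpowerprops} precisely as the technical ingredient (your use of part (4) for the induction in (2) is exactly its purpose), and it explicitly notes that (3) follows from (1) and (2), as you do. The interleaving bounds $x^{(n)}\le x^{[n]}\le x^{(n+1)}$ in part (1) are the right way to handle the failure of pointwise equality, and all the steps you use (left isotony of $\cdot$, right distributivity, termwise comparison of suprema) are available in proto-dioids and proto-quantales as stated in the paper.
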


Obviously, (3) follows from (1) and (2). Therefore, the
characterisation of $x^\ast$ can sometimes be simplified in the
externally image finite case.
\begin{corollary}\label{P:staritersplit}
  In every c-quantale in which $x\cdot (y\cdot z)= (x\cdot 
  y)\cdot z$ holds for all elements,
  \begin{equation*}
    x^{(\ast)} = (\sum_{n\in\mathbb{N}} (\sid +\nu(x))^n)\cdot (\sid +\tau(x)).
  \end{equation*}
\end{corollary}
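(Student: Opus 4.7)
The plan is to reduce the statement to a finite algebraic identity by first invoking Lemma~\ref{P:iterpeleggoldblatt} and then carefully expanding the $n$-fold products of $\sid + x$ using the decomposition $x = \nu(x) + \tau(x)$.

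First, by Lemma~\ref{P:iterpeleggoldblatt}(1) and (2), the identity $x^{(\ast)} = x^{[\ast]} = \sum_{n\in\mathbb{N}}(\sid + x)^n$ holds in every c-quantale, without any associativity assumption. Using the decomposition $x = \nu(x) + \tau(x)$ from Lemma~\ref{P:nutauinterior}(2), and setting $\alpha = \sid + \nu(x)$ and $\beta = \tau(x)$, the task reduces to proving
\begin{equation*}
\sum_{n\in\mathbb{N}}(\alpha + \beta)^n = \Bigl(\sum_{n\in\mathbb{N}}\alpha^n\Bigr)\cdot (\sid + \beta).
\end{equation*}

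Second, the key algebraic observation, which is where the associativity assumption of the corollary is used, is that $\beta = \tau(x)$ is left-absorbing: since $\beta\in\T(S)$ means $\beta = \beta\cdot 0$, one has $\beta\cdot y = (\beta\cdot 0)\cdot y = \beta\cdot (0\cdot y) = \beta\cdot 0 = \beta$ for every $y$, using the proto-dioid law $0\cdot y = 0$. I would then prove by induction on $n$ that $(\alpha + \beta)^n = \alpha^n + \alpha^{n-1}\beta$ for all $n \ge 1$, with the base case $(\alpha + \beta)^0 = \sid$ trivial. The inductive step expands $(\alpha + \beta)\cdot(\alpha^n + \alpha^{n-1}\beta)$ via right distributivity and associativity; the cross terms $\beta\cdot\alpha^n$ and $\beta\cdot\alpha^{n-1}\beta$ both collapse to $\beta$ by absorption, and the lone $\beta$ is swallowed by $\alpha^n\beta$ because $\alpha \ge \sid$ forces $\alpha^n \ge \sid$ by isotonicity, hence $\alpha^n\beta \ge \beta$.

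Third, summing over $n$ and applying the infinite left distributivity of a quantale yields
\begin{equation*}
\sum_n(\alpha+\beta)^n = \sum_n\alpha^n + \Bigl(\sum_n\alpha^n\Bigr)\beta = \Bigl(\sum_n\alpha^n\Bigr)(\sid + \beta),
\end{equation*}
which is the desired identity. The main obstacle will be the inductive expansion in the second step: one must verify carefully that every cross term involving a $\beta$ collapses via $\beta\cdot w = \beta$ and that no stray $\beta$ summand escapes absorption into $\alpha^n\beta$. This is where associativity, the proto-dioid law $0\cdot y = 0$, and the terminality of $\tau(x)$ combine; everything else is routine c-quantale bookkeeping and involves no genuinely multirelational reasoning.
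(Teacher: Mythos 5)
Your reduction to $\sum_{n}(\alpha+\beta)^n=(\sum_n\alpha^n)\cdot(\sid+\beta)$ with $\alpha=\sid+\nu(x)$, $\beta=\tau(x)$, and your absorption law $\tau(x)\cdot y=\tau(x)$ (which in fact already follows from axiom (cl6) and $0\cdot y=0$, without the extra associativity hypothesis) are fine. The gap is in the inductive step. The sequential reduct of a c-quantale is only a \emph{proto}-quantale: you have $(\sum_i x_i)\cdot y=\sum_i (x_i\cdot y)$ and the sub-distributivity $x\cdot y+x\cdot z\le x\cdot(y+z)$, but \emph{not} left distributivity $x\cdot(y+z)=x\cdot y+x\cdot z$. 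Expanding $(\alpha+\beta)\cdot(\alpha^n+\alpha^{n-1}\beta)$ by right distributivity only yields $\alpha\cdot(\alpha^n+\alpha^{n-1}\beta)+\beta$, and the first summand cannot be split equationally into $\alpha^{n+1}+\alpha^{n}\beta$; only $\ge$ is available. The same problem recurs at the end when you recombine $\sum_n\alpha^n+(\sum_n\alpha^n)\beta$ into $(\sum_n\alpha^n)(\sid+\beta)$. The paper is emphatic that left distributivity is precisely what multirelations and c-quantales lack (it is why $\nu(x\cdot y)$ has no compositional characterisation and why Lemma~\ref{P:nucounter} fails), and it is independent of the associativity you are allowed to assume. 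Your exact power identity $(\alpha+\beta)^n=\alpha^n+\alpha^{n-1}\beta$ is therefore unproved; in $\M(X)$ (where, admittedly, associativity also fails) it is already false for $n=2$, e.g.\ with $\nu(x)=\{(a,\{b,c\})\}$ and $\tau(x)=\{(b,\emptyset)\}$ the pair $(a,\{c\})$ lies in $(\alpha+\beta)^2$ but not in $\alpha^2+\alpha\beta$.

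The statement itself survives, but by a different route. The paper reads the corollary off Lemma~\ref{P:astsplit}, whose proof is fixpoint-theoretic (one shows $\nu(x)^\ast\cdot(\sid+\tau(x))$ is a pre-fixpoint of $F_x$ and uses $x^\ast\cdot x^\ast\le x^\ast$ for the converse), combined with Lemma~\ref{P:iterpeleggoldblatt}(3) and the identification of $x^{(\ast)}$ with $x^\ast$. If you prefer to stay with iterated powers, you can repair your argument by replacing the exact expansion with two inequalities: prove $(\alpha+\beta)^n\le\alpha^n\cdot(\sid+\beta)$ by induction using only isotonicity, sub-distributivity, absorption and associativity, and conversely $\alpha^n\cdot(\sid+\beta)\le(\alpha+\beta)^{n+1}$ from $\alpha\le\alpha+\beta$, $\sid+\beta\le\alpha+\beta$ and isotonicity; summing and using right distributivity of $\cdot$ over suprema then gives the identity. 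As written, however, the proposal does not go through.
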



\section{c-Quantales and Infinite Iteration}\label{S:divergence}

This section studies three additional notions of iteration for
multirelations: a unary strictly infinite iteration $R^\omega$, a
possibly infinite iteration $R^\infty$ and a binary possibly infinite
iteration $R^\omega S$. These arise as the greatest fixpoints of the
function
\begin{equation*}
 F_{RS}=\lambda X.\ S\cup R\cdot X
\end{equation*}
and its instances $F_R= F_{R\emptyset}=\lambda X.\ R\cdot X$ and
$F_{R\sid}=\lambda X.\ \sid\cup R\cdot X$. More precisely, $R^\omega
=\nu F_R$, $R^\infty=\nu F_{R\sid}$ and $R^\omega S=\nu F_{RS}$ and
these fixpoints exist due to the complete lattice structure of $\M(X)$
and the fact that the three functions under consideration are isotone
on that lattice.

As in the cases of the least fixpoints $\mu F_R$ and $\mu F_{RS}$, we
wish to relate the greatest fixpoints. We can use the following fusion
law for greatest fixpoints.
\begin{theorem}\label{P:nufusion}
  \begin{enumerate}
  \item Let $f$ and $g$ be isotone functions; let $h$ be a co-continuous function over a complete lattice. If単 $h\circ g \ge f\circ h$, then $h(\nu g)\ge \nu f$.
\item Let $f$, $g$ and $h$ be isotone functions over a complete lattice. If $f\circ h\ge h \circ g$, then $\nu f \ge h(\nu g)$. 
  \end{enumerate}
\end{theorem}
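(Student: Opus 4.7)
The plan is to handle the two parts separately, as their difficulties are quite different. Part~2 is the standard co-induction argument and needs only isotonicity. Since $\nu g$ is a fixpoint of $g$, we have $h(\nu g)=h(g(\nu g))$, and the hypothesis $f\circ h\ge h\circ g$ applied at $\nu g$ yields $f(h(\nu g))\ge h(g(\nu g))=h(\nu g)$. Hence $h(\nu g)$ is a post-fixpoint of $f$, and Knaster--Tarski's characterisation $\nu f=\bigsqcup\{y : y\le f(y)\}$ gives $h(\nu g)\le\nu f$, i.e.\ $\nu f\ge h(\nu g)$.

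Part~1 is more delicate because the symmetric move---plugging $\nu g$ into $h\circ g\ge f\circ h$---only makes $h(\nu g)$ a \emph{pre}-fixpoint of $f$, and pre-fixpoints of an isotone map do not in general lie above $\nu f$ (the identity on a two-element chain already refutes it). I would instead use ordinal-indexed iteration from the top. Define $a_0=b_0=\top$, $a_{\alpha+1}=g(a_\alpha)$, $b_{\alpha+1}=f(b_\alpha)$, and $a_\lambda=\bigsqcap_{\beta<\lambda}a_\beta$, $b_\lambda=\bigsqcap_{\beta<\lambda}b_\beta$ at limits. Since $f$ and $g$ are isotone, both descending transfinite chains stabilise, by Knaster--Tarski, at $\nu f$ and $\nu g$ respectively. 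I would then prove by transfinite induction that $h(a_\alpha)\ge b_\alpha$ for every ordinal~$\alpha$: the base case reduces to $h(\top)=\top$, i.e.\ preservation of the empty meet by $h$; the successor step chains $h(a_{\alpha+1})=h(g(a_\alpha))\ge f(h(a_\alpha))\ge f(b_\alpha)=b_{\alpha+1}$, using the hypothesis and isotonicity of $f$ together with the induction hypothesis; and the limit step uses co-continuity of $h$ to push $h$ through the infimum, $h(\bigsqcap_{\beta<\lambda}a_\beta)=\bigsqcap_{\beta<\lambda}h(a_\beta)\ge\bigsqcap_{\beta<\lambda}b_\beta$. Specialising $\alpha$ to any ordinal at which both sequences have stabilised gives $h(\nu g)\ge\nu f$.

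The main obstacle I foresee is the precise reading of ``co-continuous''. For the base case the argument needs $h(\top)=\top$, which is automatic if co-continuity is taken to mean preservation of arbitrary meets (the empty meet being $\top$), but could fail for weaker variants such as preservation of meets of non-empty descending chains only. Under the latter reading one would have to start the iteration not at $\top$ but at some common post-fixpoint of $f$ and $g$ above $\nu f$, or strengthen the hypothesis so as to force $h(\top)=\top$. Apart from this caveat, and the routine verification that the ordinal iteration actually reaches the greatest fixpoint (which is a standard consequence of isotonicity on a complete lattice), the proof is a textbook instance of dualised fixpoint fusion.
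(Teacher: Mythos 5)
Both parts are correct, and since the paper states this fusion theorem without displaying a proof (the authors only point to an Isabelle formalisation of a corresponding statement), your argument is the standard one to compare against. Part~2 is exactly the co-induction step: $h(\nu g)=h(g(\nu g))\le f(h(\nu g))$ makes $h(\nu g)$ a post-fixpoint of $f$, hence below $\nu f$ by Knaster--Tarski. For part~1 you correctly diagnose that the naive dual move only yields a pre-fixpoint of $f$, which is useless, and the transfinite descending iteration from $\top$ with co-continuity applied at limit stages is the standard remedy; the successor and limit steps as you give them are sound, and stabilisation of both chains at $\nu g$ and $\nu f$ follows from isotonicity by the usual cardinality argument. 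Your caveat about the base case is a genuine issue in general --- taking $f=g=\mathrm{id}$ and $h$ the constant bottom map shows the statement is false if co-continuity only covers nonempty meets --- but it is resolved by the paper's own convention: in Section~\ref{S:finiteiteration} ``continuous'' is defined as distributing over \emph{arbitrary} suprema, so dually a co-continuous $h$ preserves the empty infimum and $h(\top)=\top$ holds, which is all your base case needs.
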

\begin{corollary}\label{P:nufusioncor}
  Let $R,S\in \M(X)$. Then
  $    R^\omega \cup\mu F_{RS}\subseteq \nu F_{RS}$.
\end{corollary}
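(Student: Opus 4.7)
The plan is to split the claim into the two inclusions $R^\omega\subseteq \nu F_{RS}$ and $\mu F_{RS}\subseteq \nu F_{RS}$, then combine them using that joins are least upper bounds. Both $F_R=\lambda X.\ R\cdot X$ and $F_{RS}=\lambda X.\ S\cup R\cdot X$ are isotone on the complete lattice $\M(X)$, by left-isotonicity of sequential composition (which holds in every proto-quantale, hence in every c-quantale), so all appeals to fixpoint theory below are legitimate.

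For the inclusion $\mu F_{RS}\subseteq \nu F_{RS}$ I would just invoke the general fact that for an isotone map on a complete lattice, the least fixpoint lies below every fixpoint, in particular below the greatest. No further c-quantale structure is needed.

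For $R^\omega\subseteq \nu F_{RS}$ I would apply Theorem~\ref{P:nufusion}(2) with $f=F_{RS}$, $g=F_R$ and $h$ equal to the identity function. The composition hypothesis $f\circ h\ge h\circ g$ reduces to $F_{RS}(X)\supseteq F_R(X)$, i.e.\ $S\cup R\cdot X\supseteq R\cdot X$, which is immediate from the semilattice order. The theorem then delivers $\nu F_{RS}\supseteq \nu F_R=R^\omega$. A more elementary route, not using Theorem~\ref{P:nufusion} at all, is to note that $F_{RS}(R^\omega)=S\cup R\cdot R^\omega=S\cup R^\omega\supseteq R^\omega$ (using the fixpoint identity $R\cdot R^\omega=R^\omega$), so $R^\omega$ is a post-fixpoint of $F_{RS}$ and hence bounded above by $\nu F_{RS}$ by the Knaster--Tarski characterisation of the greatest fixpoint.

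Combining the two inclusions, $R^\omega\cup\mu F_{RS}\subseteq \nu F_{RS}$ because the join is the least upper bound of two elements both lying below $\nu F_{RS}$. There is no real obstacle here: the result is an immediate corollary of the cited fusion theorem, and the only point worth checking is isotonicity of the two functions, which is a direct consequence of the proto-quantale axiom $x\le y\Rightarrow z\cdot x\le z\cdot y$. The more interesting structural content lies in the failure of the reverse inclusion in general, which the authors presumably address separately.
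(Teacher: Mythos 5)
Your proof is correct and follows essentially the same route as the paper, which likewise derives the corollary from Theorem~\ref{P:nufusion}(2) (the instantiation $h=\mathrm{id}$, together with the standard fact that $\mu F_{RS}\le\nu F_{RS}$, is exactly what is needed). The elementary post-fixpoint argument you sketch as an alternative is also valid, since the $\omega$-unfold law $R^\omega\le R\cdot R^\omega$ already makes $R^\omega$ a post-fixpoint of $F_{RS}$.
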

The proof uses Theorem~\ref{P:nufusion}. In fact, we have also proved
a corresponding abstract statement in the context of proto-quantales
with Isabelle/HOL~\cite{Struth15}. However, the inequality is strict.
\begin{lemma}\label{P:nucounter}
  There are $R,S\in \M(X)$ such that $\nu F_{RS}\neq
  R^\omega \cup\mu F_{RS}$.
\end{lemma}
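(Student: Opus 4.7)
The plan is to construct a concrete, finite counterexample that exploits the gap between inductive and coinductive fixpoints in the presence of external (universal) branching. The intuition is that $\nu F_{RS}$ can accommodate behaviours in which an external choice of $R$ forks into several branches, with one branch looping forever inside $R$ and a parallel branch terminating via $S$; neither $R^\omega$ (which ignores $S$) nor $\mu F_{RS}$ (which only records finite $R$-iterations) can witness such a ``mixed'' pair.

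Concretely, I would take $X=\{a,b\}$, $R=\{(a,\{a,b\})\}$ and $S=\{(b,\{b\})\}$. Then I would first dispatch the easy half: since $R$ has no pair starting in $b$, any fixpoint of $F_R$ contains no $b$-pair, and then the universal branch to $b$ in $(a,\{a,b\})$ cannot be served, so no $a$-pair survives either; hence $R^\omega=\emptyset$. For the same structural reason, $R\cdot S=\emptyset$, so $F_{RS}(S)=S$ and therefore $\mu F_{RS}=S=\{(b,\{b\})\}$. Consequently $R^\omega\cup \mu F_{RS}=\{(b,\{b\})\}$.

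The core step is to exhibit a post-fixpoint of $F_{RS}$ containing an additional pair. I would propose
\begin{equation*}
X^\star=\{(b,\{b\}),\,(a,\{b\}),\,(a,\{a,b\})\}
\end{equation*}
and check directly from the definition of sequential composition that $R\cdot X^\star=\{(a,\{b\}),(a,\{a,b\})\}$: the only $R$-pair at $a$ is $(a,\{a,b\})$, which forces an $f$ with $f(a)\in\{\{b\},\{a,b\}\}$ and $f(b)=\{b\}$, yielding exactly these two unions. Then $S\cup R\cdot X^\star=X^\star$, so $X^\star$ is a fixpoint of $F_{RS}$ and hence $X^\star\subseteq \nu F_{RS}$. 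In particular $(a,\{b\})\in\nu F_{RS}$, while $(a,\{b\})\notin R^\omega\cup\mu F_{RS}$, proving the inequality strict.

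The main obstacle is the choice of example rather than the verification: a pure self-loop like $R=\{(a,\{a\})\}$ makes $R^\omega$ collapse to the full set of $a$-pairs and washes out the intended gap, and a purely deterministic $R$ (no fork) fails to produce the circular justification that the greatest fixpoint needs in order to ``close'' the left branch by referring to itself. The fork $(a,\{a,b\})$ is the minimal gadget that simultaneously kills $R^\omega$ (because $b$ is a dead end for $R$) and allows a coinductive, but not inductive, circular justification once $S$ provides a terminating pair at $b$.
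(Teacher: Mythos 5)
Your proof is correct and follows essentially the same strategy as the paper: exhibit concrete multirelations for which $R^\omega$ collapses to $\emptyset$, compute $\mu F_{RS}$ explicitly, and then display a fixpoint of $F_{RS}$ containing an extra pair, which must lie below $\nu F_{RS}$. Your witness over $X=\{a,b\}$ is even a little smaller than the paper's three-element example $R=\{(a,\{b,c\}),(b,\{a\})\}$, $S=\{(c,\{a\})\}$, and all of your computations of $R^\omega$, $R\cdot S$ and $R\cdot X^\star$ check out.
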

\begin{proof}
  Consider the multirelations $R=\{(a,\{b,c\}),(b,\{a\})\}$ and $S =
  \{(c,\{a\})\}$ over the set $X=\{a,b,c\}$.  Then $R^\omega
  =\emptyset$ can be checked easily. 

  Moreover, $R^\ast
  S=\{(a,\{a\}),(a,\{b,c\}),(b,\{a\}).(b,\{b,c\}),(c,\{a\})\}$. This
  can be checked by verifying  $S\cup R\cdot (R^\ast
  S)=(R^\ast S)$. It follows that
\begin{equation*}
R\cdot (R^\ast
  S)=\{(a,\{a\}),(a,\{b,c\}),(b,\{a\}),(b,\{b,c\})\}.
\end{equation*}

Finally, $R^\omega S=\{(a,S),(b,S),(c,a)\}$ for all $S\subseteq X$,
which can be checked by verifying $S\cup R\cdot (R^\omega S)=(R^\omega
S)$. In particular, $R\cdot (R^\omega S)=\{(a,S),(b,S)\}$.
\end{proof}

This counterexample is not related to the absence of associativity,
but to the lack of left distributivity. There is therefore no hope
that the situation can be resurrected for tests and modalities, as in
the case of the star~\cite{FurusawaS14},

As a consequence, the greatest fixpoint $R^\omega S$ cannot be reduced
to a formula involving the greatest fixpoint $R^\omega$. At the level
of c-quantales we obtain the fixpoint axioms
\begin{equation*}
  x^\omega y \le y+x\cdot (x^\omega y),\qquad  z \le y + x\cdot z \Rightarrow z\le x^\omega y.
\end{equation*}
Since $F_R=F_{R\emptyset}$ by definition, this yields $x^\omega =
x^\omega 0$ and the following unary $\omega$-unfold and
$\omega$-coinduction axioms as special cases:
\begin{equation*}
  x^\omega \le x\cdot x^\omega, \qquad  y \le x\cdot y \Rightarrow y\le x^\omega.
\end{equation*}
These can be used for deriving the following properties.
\begin{lemma}\label{P:omegaprops}
For every c-quantale,
  \begin{enumerate}
  \item $x^\omega = x\cdot x^\omega$,
  \item $x\le y \Rightarrow x^\omega \le y^\omega$.
  \end{enumerate}
\end{lemma}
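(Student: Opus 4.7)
The plan is to derive both statements directly from the two $\omega$-axioms just recalled, namely the $\omega$-unfold axiom $x^\omega \le x \cdot x^\omega$ and the $\omega$-coinduction rule $y \le x \cdot y \Rightarrow y \le x^\omega$, together with the standard isotonicity properties available in any c-quantale. Recall that multiplication is left-isotone in every proto-dioid (stated earlier), and it is right-isotone in a proto-quantale as well, since right distributivity over suprema, $(\sum_{i\in I} x_i)\cdot y = \sum_{i\in I}(x_i\cdot y)$, gives $x\le z \Rightarrow x\cdot y \le z\cdot y$. Both isotonicities are all we need.

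For (1), the inequality $x^\omega \le x\cdot x^\omega$ is exactly the unfold axiom, so only the reverse inclusion $x\cdot x^\omega \le x^\omega$ needs work. The natural strategy is to apply the coinduction rule to the candidate $x\cdot x^\omega$, i.e.\ show $x\cdot x^\omega \le x\cdot(x\cdot x^\omega)$. But this is immediate from unfold together with left-isotonicity of multiplication: applying $(x\cdot{-})$ to $x^\omega \le x\cdot x^\omega$ yields $x\cdot x^\omega \le x\cdot(x\cdot x^\omega)$. Coinduction then delivers $x\cdot x^\omega \le x^\omega$, and combining both inequalities gives the equality.

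For (2), assume $x\le y$ and aim once more at coinduction, this time for $y^\omega$. I would show that $x^\omega$ satisfies the coinduction premise with respect to $y$, namely $x^\omega \le y\cdot x^\omega$. Starting from unfold, $x^\omega \le x\cdot x^\omega$, and using right-isotonicity of multiplication together with the hypothesis $x\le y$, we obtain $x\cdot x^\omega \le y\cdot x^\omega$. Chaining these inequalities gives $x^\omega \le y\cdot x^\omega$, whence coinduction yields $x^\omega \le y^\omega$.

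Neither part presents a real obstacle — both are short syntactic manipulations that mimic the standard proofs in Kleene-algebra style for the dual $\omega$-operator, and associativity of sequential composition is not required anywhere. The only mild subtlety is remembering that right-isotonicity of $\cdot$ in a c-quantale comes from the right-distributivity axiom of proto-quantales rather than from the proto-dioid layer (which only furnishes left-isotonicity); once that is invoked, both items fall out in one line each.
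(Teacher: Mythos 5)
Your proof is correct and follows the same route the paper intends: part (1) is $\omega$-unfold plus coinduction applied to $x\cdot x^\omega$ using left-isotonicity, and part (2) is coinduction applied to $x^\omega$ after weakening $x\cdot x^\omega$ to $y\cdot x^\omega$; neither step needs associativity. (Minor quibble: isotonicity in the left argument already follows from the proto-dioid axiom $(x+y)\cdot z = x\cdot z + y\cdot z$, so you need not appeal to the infinite distributivity law of proto-quantales.)
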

\begin{lemma}\label{P:omegaconsts}
For every c-quantale,
  \begin{enumerate}
  \item $0^\omega = 0$,
\item $\pid^\omega = \pid$,
  \item $\sid^\omega =\pidc^\omega = U^\omega = U$,
   \end{enumerate}
\end{lemma}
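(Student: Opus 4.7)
The plan is to derive all three parts by combining the $\omega$-unfold and $\omega$-coinduction axioms with the constant identities already established in Lemma~\ref{P:clatprops2} (and the proto-dioid/unit axioms). Nothing delicate is required: the main observation is that for absorbing left factors the $\omega$-unfold identity $x^\omega = x\cdot x^\omega$ from Lemma~\ref{P:omegaprops}(1) collapses $x^\omega$, while for left factors that preserve $U$ the coinduction axiom pushes $x^\omega$ up to $U$.

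For (1), starting from $0^\omega = 0\cdot 0^\omega$ and applying the proto-dioid axiom $0\cdot x = 0$ gives $0^\omega = 0$. For (2), the same unfold gives $\pid^\omega = \pid\cdot\pid^\omega$, and then Lemma~\ref{P:clatprops2}(3), which asserts $\pid\cdot x = \pid$ for all $x$, immediately yields $\pid^\omega = \pid$.

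For (3), I would witness each greatest fixpoint by $U$ using $\omega$-coinduction ($y\le x\cdot y \Rightarrow y\le x^\omega$). Taking $y=U$ the required premise in each case is $U\le c\cdot U$ where $c\in\{\sid,\pidc,U\}$: we have $\sid\cdot U = U$ by the sequential unit axiom, and $\pidc\cdot U = U = U\cdot U$ by Lemma~\ref{P:clatprops2}(2). Hence $U\le \sid^\omega$, $U\le \pidc^\omega$, and $U\le U^\omega$; since $U$ is the greatest element of the underlying complete lattice, each inequality collapses to equality.

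The only \emph{potential} obstacle is keeping the direction of the $\omega$-axioms straight---coinduction supplies only lower bounds on $x^\omega$, which is exactly why (1) and (2) must be argued through the unfold identity (where the left factor absorbs), whereas (3) must be argued through coinduction (where the left factor preserves $U$). No new multirelational reasoning is needed.
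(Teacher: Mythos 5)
Your proof is correct and follows exactly the route the paper sets up (the paper omits the proof itself): parts (1) and (2) via the unfold identity $x^\omega = x\cdot x^\omega$ together with $0\cdot x=0$ and Lemma~\ref{P:clatprops2}(3), and part (3) via $\omega$-coinduction with witness $U$, using $\sid\cdot U=\pidc\cdot U=U\cdot U=U$ and the fact that $U$ is the top element. Your remark about which direction each $\omega$-axiom supplies is exactly the right care to take; nothing is missing.
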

The fact that $\sid^\omega$ and $\pidc^\omega$ are $U$, instead of
$\pidc$, is not entirely satisfactory: one would not assume that the
iteration of an element $\N(S)$ leads outside of this set.

The characterisation of terminal parts of elements is still
satisfactory, whereas for nonterminal elements, the lack of
characterisation for products makes the situation less pleasant.

\begin{lemma}\label{P:tauomega}
In every c-quantale,
  \begin{enumerate}
  \item $\tau(x)\le \tau(x^\omega)$,
  \item $\tau(x)^\omega = \tau(x)$,
\item $\tau(x)^\omega \le \tau(x^\omega)$.
  \end{enumerate}
\end{lemma}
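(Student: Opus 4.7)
The plan is to handle (2) first, since (3) will follow from (1) and (2), and then to derive (1) via the unfold identity from Lemma~\ref{P:omegaprops}(1).

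For (2), the strategy is to show that $\tau(x)$ is a sequential idempotent and then to squeeze $\tau(x)^\omega$ between $\tau(x)$ (via $\omega$-coinduction) and $\tau(x)$ (via $\omega$-unfold). Concretely, axiom (\ref{eq:cl6}) gives
\begin{equation*}
  \tau(x)\cdot\tau(x)= (x\cdot 0)\cdot(x\cdot 0)=x\cdot (0\cdot (x\cdot 0))= x\cdot 0 = \tau(x),
\end{equation*}
using that $0$ is a left annihilator in the underlying proto-dioid. Hence $\tau(x)\le \tau(x)\cdot \tau(x)$, so $\omega$-coinduction yields $\tau(x)\le \tau(x)^\omega$. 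Conversely, the $\omega$-unfold axiom gives $\tau(x)^\omega\le \tau(x)\cdot\tau(x)^\omega$, and one more application of (\ref{eq:cl6}) together with $0\cdot y=0$ collapses the right-hand side to $\tau(x)$, so $\tau(x)^\omega\le\tau(x)$.

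For (1), the goal $x\cdot 0\le x^\omega\cdot 0$ will follow by rewriting $x^\omega\cdot 0$ as $x\cdot (x^\omega\cdot 0)$ and then using left-monotonicity. The rewrite uses $x^\omega=x\cdot x^\omega$ from Lemma~\ref{P:omegaprops}(1) together with associativity in the form of Lemma~\ref{P:clatprops1}(1): since $0\in\T(S)$ (because $0\cdot 0=0$), we have $(x\cdot x^\omega)\cdot 0 = x\cdot(x^\omega\cdot 0)$. Then $0\le x^\omega\cdot 0$ and isotonicity of left multiplication give $x\cdot 0\le x\cdot(x^\omega\cdot 0)=x^\omega\cdot 0$, i.e., $\tau(x)\le\tau(x^\omega)$. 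Finally, (3) is immediate: $\tau(x)^\omega=\tau(x)\le\tau(x^\omega)$ by (2) and (1).

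The main obstacle is steering carefully around the absence of general associativity for sequential composition: every step where brackets are rearranged must be licensed either by axiom (\ref{eq:cl6}), which handles a left factor of shape $x\cdot 0$, or by the terminal-associativity of Lemma~\ref{P:clatprops1}(1), which requires the rightmost factor to lie in $\T(S)$. Both conditions are easy to check here, but they are the only permitted moves, so the proof is essentially a bookkeeping exercise in applying exactly these two associativity licences at the right places.
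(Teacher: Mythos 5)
Your proof is correct: the idempotency $\tau(x)\cdot\tau(x)=\tau(x)$ via (cl6) combined with $\omega$-unfold and $\omega$-coinduction gives (2), the regrouping $(x\cdot x^\omega)\cdot 0=x\cdot(x^\omega\cdot 0)$ is properly licensed by Lemma~\ref{P:clatprops1}(1) since $0\in\T(S)$, and (3) follows from (1) and (2). The paper does not display its proof (it is delegated to the Isabelle formalisation), but your argument is the natural one in this setting and every bracket rearrangement is justified by one of the two permitted associativity laws.
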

\begin{lemma}\label{P:omegasplit}
In every c-quantale,  $\nu(x)^\omega + \nu(x)^\ast\cdot\tau(x)\le x^\omega$.
\end{lemma}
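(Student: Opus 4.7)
The plan is to split the inequality into its two summands and handle them independently, then use idempotency of $+$ to combine. For the first summand, $\nu(x)^\omega \le x^\omega$ should follow immediately from $\nu(x) \le x$ (since $\nu$ is an interior operator by Lemma~\ref{P:nutauinterior}) together with the $\omega$-isotonicity property $x \le y \Rightarrow x^\omega \le y^\omega$ from Lemma~\ref{P:omegaprops}(2).

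The main work is showing $\nu(x)^\ast \cdot \tau(x) \le x^\omega$. The strategy is to use the $\omega$-coinduction axiom $y \le x\cdot y \Rightarrow y \le x^\omega$, so it suffices to establish the stronger fixpoint identity
\begin{equation*}
  \nu(x)^\ast \cdot \tau(x) = x \cdot (\nu(x)^\ast \cdot \tau(x)).
\end{equation*}
Starting from the star unfold $\nu(x)^\ast = \sid + \nu(x)\cdot \nu(x)^\ast$, I would right-multiply by $\tau(x)$ and use right distributivity of $\cdot$ over $+$ (available in any proto-dioid, hence in any c-quantale) to obtain $\nu(x)^\ast\cdot\tau(x) = \tau(x) + (\nu(x)\cdot \nu(x)^\ast)\cdot\tau(x)$. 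Since $\tau(x)\in\T(S)$, Lemma~\ref{P:clatprops1}(1) allows me to re-associate, yielding $\nu(x)^\ast\cdot\tau(x) = \tau(x) + \nu(x)\cdot (\nu(x)^\ast\cdot\tau(x))$. Comparing with the decomposition $x\cdot y = \tau(x) + \nu(x)\cdot y$ from Lemma~\ref{P:taunusplit}(1) applied to $y = \nu(x)^\ast\cdot\tau(x)$, the two right-hand sides coincide, giving the desired fixpoint identity.

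Combining the two bounds by monotonicity of $+$ and using $x^\omega + x^\omega = x^\omega$ completes the argument. The main subtle point to watch is the reassociation step, which is exactly where generic c-quantale reasoning breaks down and where we must invoke that $\tau(x)$ is a terminal element; without Lemma~\ref{P:clatprops1}(1) the chain of equalities does not close. Everything else is a routine application of the star unfold, the decomposition $x\cdot y = \tau(x) + \nu(x)\cdot y$, and $\omega$-coinduction.
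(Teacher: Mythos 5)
Your proof is correct and uses exactly the machinery the paper sets up for this purpose: $\nu(x)\le x$ with $\omega$-isotonicity (Lemma~\ref{P:omegaprops}(2)) for the first summand, and for the second the star unfold, right distributivity, re-association via $\tau(x)\in\T(S)$ (Lemma~\ref{P:clatprops1}(1)), the decomposition $x\cdot y=\tau(x)+\nu(x)\cdot y$ (Lemma~\ref{P:taunusplit}(1)), and $\omega$-coinduction. This matches the intended argument; the only point worth noting is that you prove the fixpoint property as an equality where the coinduction rule only needs $\nu(x)^\ast\cdot\tau(x)\le x\cdot(\nu(x)^\ast\cdot\tau(x))$, but that is harmless.
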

The converse implication is ruled out by the counterexample in Lemma~\ref{P:nucounter}.

Next we briefly compare $F_{R\sid}$ with $F_{RS}$, that is, $R^\infty$
with $R^\omega S$.

\begin{lemma}\label{P:inftycounter}
  There are $R,S\in \M(X)$ such that $(R^\omega S)\neq
  R^\infty\cdot S$.
\end{lemma}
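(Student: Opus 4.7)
The approach is to exhibit an explicit pair $(R,S)$ for which $R^\omega S$ and $R^\infty\cdot S$ differ, in the spirit of the counterexamples in Lemma~\ref{P:nucounter} and \ref{P:nustarcounter}. The guiding intuition is that sequential composition with $S$ on the right is destructive on any pair whose intermediate set is not ``covered'' by first components of $S$, whereas the self-referential unfolding $Y\subseteq S\cup R\cdot Y$ permits $R^\omega S$ to keep such pairs through the iteration. In the absence of right distributivity of $\cdot$ over $\cup$, the rewriting $R^\infty\cdot S=(\sid\cup R\cdot R^\infty)\cdot S\stackrel{?}{=}S\cup R\cdot(R^\infty\cdot S)$ fails, so $R^\infty\cdot S$ need not be a post-fixpoint of $F_{RS}$ at all, and in particular need not attain $\nu F_{RS}$.

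My first attempt would be the minimal instance $X=\{a\}$, $R=\sid$, $S=\emptyset$. Since $\sid$ is a left unit for sequential composition, $F_{RS}=\lambda Y.\,\emptyset\cup\sid\cdot Y$ reduces to the identity on $\M(X)$, whose greatest fixpoint is $U=\{(a,\emptyset),(a,\{a\})\}$, so $R^\omega S=U$. Analogously $F_{R\sid}=\lambda Y.\,\sid\cup Y$ has greatest fixpoint $U$, hence $R^\infty=U$. Reading off the definition of sequential composition, the product $U\cdot\emptyset$ retains only pairs $(x,\emptyset)$, because any $B\neq\emptyset$ on the left would force $(b,f(b))\in\emptyset$ for some $b\in B$. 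Thus $R^\infty\cdot S=\{(a,\emptyset)\}\subsetneq U=R^\omega S$, which is already a counterexample.

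If a less degenerate witness is preferred, I would re-use $R=\{(a,\{b,c\}),(b,\{a\})\}$ and $S=\{(c,\{a\})\}$ from Lemma~\ref{P:nucounter}, where $R^\omega S$ was shown to contain pairs $(a,A),(b,A)$ for many second components~$A$. One computes $R^\infty$ as the greatest set whose $a$-slice $A_a$ and $b$-slice $A_b$ satisfy $A_a=\{\{a\}\}\cup\{g\cup\{c\}:g\in A_b\}$ and $A_b=\{\{b\}\}\cup A_a$, obtaining $A_a=\{\{a\}\}\cup\{B:c\in B\}$; then composing with $S$ forces every intermediate set to lie in $\{\emptyset,\{c\}\}$, leaving only $\{(a,\{a\}),(b,\{a\}),(c,\{a\})\}$, which is strictly smaller than $R^\omega S$.

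The main obstacle in the second option is justifying maximality of the candidate for $R^\infty$, since this requires ruling out larger post-fixpoints (a routine but delicate case analysis on the slices $A_a,A_b,A_c$). The first option avoids this entirely because both greatest fixpoints collapse to $U$ for trivial lattice-theoretic reasons, so only the asymmetric behaviour of $(\cdot\,\emptyset)$ has to be verified, and the conceptual point — that the gap is due to failure of right distributivity rather than of associativity — is already visible.
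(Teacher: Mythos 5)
Your first counterexample is correct, and it takes a genuinely different (and simpler) route than the paper. The paper reuses $R=\{(a,\{b,c\}),(b,\{a\})\}$ and $S=\{(c,\{a\})\}$ from Lemma~\ref{P:nucounter} and computes $R^\infty$ and $R^\infty\cdot S$ over a three-element set; you instead take $X=\{a\}$, $R=\sid$, $S=\emptyset$, so that $F_{RS}$ is the identity and $F_{R\sid}=\lambda Y.\ \sid\cup Y$, both with greatest fixpoint $U$, while $U\cdot\emptyset=U\cap\pid=\{(a,\emptyset)\}\subsetneq U$ by Lemma~\ref{P:constprops}(2). This buys a two-line verification with no fixpoint iteration at all, at the price of being degenerate: it isolates only the non-strictness $U\cdot 0\neq 0$ and the failure of $(\cdot\, S)$ to preserve greatest fixpoints, rather than any interplay between iteration and composition. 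Your second option is essentially the paper's own example, and your slice computation of $R^\infty$ there is in fact more careful than the paper's: since $(c,\{c\})\in\sid\subseteq R^\infty$, one gets $(c,\{a\})\in R^\infty\cdot S$, so the paper's claim that $R^\infty\cdot S=\emptyset$ cannot be literally right; the strict inequality nevertheless survives because, e.g., $(a,\emptyset)\in R^\omega S\setminus (R^\infty\cdot S)$.

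Two inaccuracies in your surrounding narrative should be repaired, although neither affects the validity of the counterexample. First, right distributivity $(x+y)\cdot z=x\cdot z+y\cdot z$ is a proto-dioid axiom and does hold for multirelations; the step of your proposed rewriting that fails in general is $(R\cdot R^\infty)\cdot S=R\cdot(R^\infty\cdot S)$, i.e.\ associativity, together with the lack of co-continuity of $\lambda Y.\ Y\cdot S$, which is exactly what the paper blames. Second, in your own witness $R^\infty\cdot S=\{(a,\emptyset)\}$ \emph{is} a fixpoint of $F_{RS}$ (the identity map); the point is not that it fails to be a (post-)fixpoint, but that it is not the greatest one.
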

\begin{proof}
  In the above counterexample, $R^\omega S=\{(a,S),(b,S),(c,a)\}$ for
  all $S\subseteq X$. However, $R^\infty=R^\ast=
  \{(a,\{a\}),(a,\{b,c\}),(b,\{b\}),(b,\{a\}),(b,\{b,c\})\}$ and
  therefore $R^\infty\cdot S= R^\ast\cdot S=\emptyset$.
\end{proof}
Here we cannot even obtain $R^\infty\cdot S\subseteq (R^\omega S)$ by
fixpoint fusion, since this requires co-continuity of $\lambda x.\
x\cdot S$, that is, $(\bigcap_{i\in I}R_i)\cdot S=\bigcap_{i\in
  I}(R_i\cdot S)$, which does not hold. Hence this time the failure of
equality owes to the lack of associativity and co-continuity. Note
that $(R^\omega S)$ need not be equal to $ R^\infty\cdot S$ for binary
relations for similar reasons. In this case, $x^\infty = x^\omega
\sid$ by definition of $F_{R\sid}$ and $F_{RS}$, which yields the
unary laws
\begin{equation*}
  x^\infty\le \sid + x\cdot x^\infty,\quad y\le \sid + x\cdot y\Rightarrow y\le x^\infty.
\end{equation*}

Finally, following~\cite{DesharnaisMS11}, we study a notion of
greatest fixpoint on c-quantales which models the set of all states in
$x$ from which either $\emptyset$ is reachable or infinite $x$-chains
start. Obviously, the set of domain elements is the complete
distributive lattice or boolean subalgebra of the sequential
subidentities. The function $\lambda p.\ q+\langle x\rangle p$ is
isotone and thus has a least (binary) fixpoint $\langle x\rangle ^\ast
q$ which, by the results of~\cite{DesharnaisMS11}, is equal to
$\langle x^\ast\rangle q$. In the context of c-monoids, $\langle
x\rangle p = d(x\cdot p)$.

For the same reasons, the function $\lambda p.\ \langle x\rangle p$ has a greatest
fixpoint which we denote $\nabla x$. It satisfies the unfold and
coinduction axiom
\begin{equation*}
  \nabla x \le \langle x\rangle \nabla x,\qquad p \le \langle x\rangle p \Rightarrow p \le \nabla x.
\end{equation*}
We can use fixpoint fusion again and try to derive the rule
\begin{equation*}
  p \le \langle x\rangle p + q \Rightarrow p \le \nabla x + \langle x^\ast\rangle q.
\end{equation*}
We must instantiate $f = \lambda y. \langle x\rangle y +q$, $g= \lambda y.\ \langle x\rangle y$ and $h= \lambda y.\ \langle x^\ast\rangle q$. Then
\begin{equation*}
  h(g(y)) =\langle x\rangle y + \langle x^\ast\rangle q
=\langle x\rangle y+ q +\langle x\rangle\langle x^\ast\rangle q
\le \langle x\rangle (y+ \langle x^\ast\rangle q) +q
= f(h(y)).
\end{equation*}
Then greatest fixpoint fusion yields once more
\begin{equation*}
  \nabla x + \langle x^\ast\rangle q \le \nu f,
\end{equation*}
but the above counterexample excludes equality without $\langle
x\rangle (p+q)=\langle x\rangle p+\langle x\rangle q$.

In general, $\nabla x$ is the largest subidentity $p$ which satisfies
$p=d(x\cdot p)$. Hence $p$ models the largest set of states from which
executing $x$ either leads to $\emptyset$, or it leads back into $p$ in
the following sense. For each element in $p$ there exists a set $A$
which is reachable via $X$, and all elements of $A$ are in $p$. This
means that from all states in $p$ indeed either infinite executions with $x$
are possible or $\emptyset$ is reachable.

In this case, at least an explicit definition of $\nu(x)^\omega$ is
possible.
\begin{proposition}\label{P:nuomegadef}
  Let $S$ be a c-quantale. If $x\cdot (d(y)\cdot z)= (x \cdot
  d(y))\cdot z$ for $x,y,z\in S$, then
\begin{equation*}
  \nu(x)^\omega = \nabla(\nu(x))\cdot U. 
\end{equation*}
\end{proposition}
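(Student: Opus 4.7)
The plan is to prove the two inclusions separately, using $\omega$-coinduction for one direction and $\nabla$-coinduction for the other.

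First I would show $\nabla(\nu(x)) \cdot U \le \nu(x)^\omega$ by verifying the pre-fixpoint condition $\nabla(\nu(x)) \cdot U \le \nu(x) \cdot (\nabla(\nu(x)) \cdot U)$. The $\nabla$-unfold rule gives $\nabla(\nu(x)) \le d(\nu(x) \cdot \nabla(\nu(x)))$; multiplying on the right by $U$ and using right-isotonicity reduces the task to the identity
\[
d(\nu(x) \cdot \nabla(\nu(x))) \cdot U \;=\; \nu(x) \cdot (\nabla(\nu(x)) \cdot U).
\]
Writing $w = \nu(x) \cdot \nabla(\nu(x))$, Lemma~\ref{P:dompropsverif}(4) decomposes the left-hand side as $w \cdot U + (w \cdot 0)\para U$. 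The key point is that the residue vanishes: $\nabla(\nu(x))$ is a subidentity (any fixpoint of $\langle \nu(x)\rangle$ lies in $\Su(S)$), and by Lemma~\ref{P:clatprops1a}(2) and Lemma~\ref{P:subidfix}(4) subidentities satisfy $p \cdot 0 = 0$; the assumed restricted associativity, instantiated with $d(y) = \nabla(\nu(x))$, then gives $w \cdot 0 = \nu(x) \cdot (\nabla(\nu(x)) \cdot 0) = \nu(x) \cdot 0 = 0$, the last equality holding because $\nu(x) \in \N(S)$. The remaining summand $w \cdot U$ equals $\nu(x) \cdot (\nabla(\nu(x)) \cdot U)$ by the same associativity, and $\omega$-coinduction closes the direction.

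For the reverse inclusion I would set $p = d(\nu(x)^\omega)$ and verify that $p$ is a fixpoint of $\langle \nu(x)\rangle$. Using $\nu(x)^\omega = \nu(x) \cdot \nu(x)^\omega$ (Lemma~\ref{P:omegaprops}(1)) and the locality law $d(x \cdot d(y)) = d(x \cdot y)$ from Proposition~\ref{P:domaxverif}(4), one obtains $p = d(\nu(x) \cdot d(\nu(x)^\omega)) = d(\nu(x) \cdot p) = \langle \nu(x) \rangle p$. The $\nabla$-coinduction rule then yields $d(\nu(x)^\omega) \le \nabla(\nu(x))$, and right-isotonicity of $(\cdot U)$ lifts this to $d(\nu(x)^\omega) \cdot U \le \nabla(\nu(x)) \cdot U$. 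The chain $\nu(x)^\omega \le \nu(x)^\omega \cdot U \le d(\nu(x)^\omega) \cdot U$ (the first step from $\sid \le U$ and left-isotonicity, the second from Lemma~\ref{P:dompropsverif}(4) again) completes the argument.

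The hard part is the first direction: the potential terminal residue $(w \cdot 0)\para U$ in $d(w) \cdot U = w \cdot U + (w \cdot 0)\para U$ would otherwise obstruct the pre-fixpoint inequality. The conjunction of three facts kills this residue neatly: $\nu(x)$ is nonterminal, $\nabla(\nu(x))$ is a subidentity, and the restricted associativity lets us migrate $0$ from outside to inside the product $\nu(x) \cdot \nabla(\nu(x))$. Without the associativity hypothesis in the proposition's statement, the argument breaks down exactly at this step.
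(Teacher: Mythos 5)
Your proof is correct and follows the route the paper intends (the paper relegates the details to its Isabelle formalisation): one inclusion by $\omega$-coinduction from the pre-fixpoint property of $\nabla(\nu(x))\cdot U$, killing the terminal residue in $d(w)\cdot U = w\cdot U + (w\cdot 0)\para U$ via $\nabla(\nu(x))\in\Su(S)\subseteq\N(S)$ and the restricted associativity, and the other by showing $d(\nu(x)^\omega)$ is a fixpoint of $\langle\nu(x)\rangle$ and applying $\nabla$-coinduction. All the lemmas you invoke are available at this point in the development, so no gaps.
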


As in Lemma~\ref{P:omegaconsts}, the infinite iteration of nonterminal
elements contains terminal parts, which seems undesirable, but
unavoidable in this context.

\begin{corollary}\label{P:nuomegaapprox}
  Let $S$ be a c-quantale. If $x\cdot (d(y)\cdot z)= (x \cdot 
  d(y))\cdot z$ for all $x,y,z\in S$, then
  \begin{equation*}
    \nabla(\nu(x))\cdot U +\nu(x)^\ast\cdot\tau(x)\le x^\omega.
  \end{equation*}
\end{corollary}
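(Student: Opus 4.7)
The plan is to observe that this corollary is essentially an immediate consequence of the two preceding results combined, so the main task is just to verify that their hypotheses match and perform the substitution.

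First I would invoke Proposition~\ref{P:nuomegadef}: under the hypothesis $x\cdot (d(y)\cdot z) = (x\cdot d(y))\cdot z$ for all $x,y,z\in S$, we have the explicit identity
\[
\nu(x)^\omega = \nabla(\nu(x))\cdot U.
\]
The hypothesis of the corollary is exactly the same conditional associativity, so this identity applies without modification.

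Second, I would apply Lemma~\ref{P:omegasplit}, which holds unconditionally in every c-quantale and states that
\[
\nu(x)^\omega + \nu(x)^\ast\cdot\tau(x) \le x^\omega.
\]
Substituting the expression from the first step for the left summand on the left-hand side yields
\[
\nabla(\nu(x))\cdot U + \nu(x)^\ast\cdot\tau(x) \le x^\omega,
\]
which is the desired inequality.

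There is no real obstacle here: the corollary is tagged as such precisely because both ingredients are already in place. The only thing to double-check is that the associativity assumption is phrased the same way in both the corollary and Proposition~\ref{P:nuomegadef}, and that Lemma~\ref{P:omegasplit} does not itself require any extra hypothesis (it is stated for every c-quantale). Both checks are immediate from the statements above, so the proof reduces to a one-line substitution.
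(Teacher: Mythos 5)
Your proof is correct and matches the intended derivation: the corollary follows by substituting the identity $\nu(x)^\omega = \nabla(\nu(x))\cdot U$ from Proposition~\ref{P:nuomegadef} into the unconditional inequality of Lemma~\ref{P:omegasplit}, and the associativity hypothesis is identical in both statements.
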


At least for nonterminal multirelations, the situation is the same. In
the general case, terminal parts contribute to infinite iteration,
too. The absence of associativity and distributivity makes the
situation more complicated. Separation of infinite iterations into
terminating and nonterminating parts yields at least an
underapproximation. Using $\nabla$ also yields sharper properties for
nonterminal elements.
\begin{lemma}\label{P:nuomeganabla}
  Let $S$ be a c-quantale. If $x\cdot (d(y)\cdot z)= (x \cdot 
  d(y))\cdot z$ for all $x,y,z\in S$, then
  \begin{enumerate}
  \item $\nu(\nu(x)^\omega) = \nabla (\nu(x))\cdot\pidc$,
  \item $\tau(\nu(x)^\omega) = \nabla (\nu(x))\cdot\pid$.
  \end{enumerate}
\end{lemma}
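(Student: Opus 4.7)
The plan is to exploit Proposition~\ref{P:nuomegadef}, which under the very same associativity hypothesis gives the explicit formula $\nu(x)^\omega = \nabla(\nu(x))\cdot U$. Writing $p=\nabla(\nu(x))$, both parts then reduce to calculations of $\tau(p\cdot U)$ and $\nu(p\cdot U)$. Before doing these calculations I would record that $p\in \Su(S)$ (it is a domain element), and therefore $p\in\N(S)$ as well: by Lemma~\ref{P:nutauconst}(2) we have $\sid\in\N(S)$, and by Lemma~\ref{P:nsideal} the set $\N(S)$ is an order ideal, so every subidentity lies in $\N(S)$. I would also note that $0\in\T(S)$ since $0\le\pid$, so Lemma~\ref{P:clatprops1}(1) makes any product of the form $(y\cdot z)\cdot 0$ associate as $y\cdot(z\cdot 0)$.

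For part (2) the computation is short: $\tau(p\cdot U)=(p\cdot U)\cdot 0 = p\cdot(U\cdot 0) = p\cdot\pid$, using the associativity above together with $U\cdot 0 =\pid$ from Lemma~\ref{P:clatprops2}(1). This is the claimed identity $\tau(\nu(x)^\omega)=\nabla(\nu(x))\cdot\pid$.

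For part (1) I would compute $\nu(p\cdot U)=(p\cdot U)\sqcap\pidc$ as follows. By (\ref{eq:cl1}), $p\cdot U = p\cdot\pid + p\cdot\pidc$, and distributivity of the lattice gives
\begin{equation*}
(p\cdot U)\sqcap\pidc \;=\; (p\cdot\pid\sqcap\pidc)\;+\;(p\cdot\pidc\sqcap\pidc).
\end{equation*}
The first summand is $0$: by axiom (\ref{eq:c6}), $p\cdot\pid\le\pid$, and $\pid\sqcap\pidc=0$ (a consequence of (\ref{eq:cl1}) and (\ref{eq:cl2}), already noted after those axioms). The second summand collapses to $p\cdot\pidc$ once I establish $p\cdot\pidc\le\pidc$, which is the same as $p\cdot\pidc\in\N(S)$ by Lemma~\ref{P:subidfix}(4).

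The main obstacle is precisely this last closure fact $p\cdot\pidc\in\N(S)$. I would verify it by computing $(p\cdot\pidc)\cdot 0$. Using $0\in\T(S)$ and Lemma~\ref{P:clatprops1}(1) again, $(p\cdot\pidc)\cdot 0 = p\cdot(\pidc\cdot 0)$; by Lemma~\ref{P:clatprops1a}(3) applied with $x=\pidc$ we have $\pidc\cdot 0=0$; and finally $p\cdot 0=0$ follows from $p\le\sid$ together with right distributivity (which forces right-isotonicity of sequential composition) and $\sid\cdot 0=0$. Hence $(p\cdot\pidc)\cdot 0 = 0$, which puts $p\cdot\pidc$ in $\N(S)$ and completes the computation $\nu(p\cdot U)=p\cdot\pidc=\nabla(\nu(x))\cdot\pidc$, yielding (1). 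The associativity assumption of the lemma is used only indirectly, through the invocation of Proposition~\ref{P:nuomegadef}; all subsequent steps are standard c-lattice manipulations.
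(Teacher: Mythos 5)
Your proposal is correct and follows the route the paper evidently intends: reduce both claims to $\tau$ and $\nu$ of $\nabla(\nu(x))\cdot U$ via Proposition~\ref{P:nuomegadef} (which is where the associativity hypothesis is consumed), then compute using $U\cdot 0=\pid$, the split $p\cdot U=p\cdot\pid+p\cdot\pidc$ from (\ref{eq:cl1}), lattice distributivity, and the closure fact $p\cdot\pidc\le\pidc$. All the individual steps check out against the cited lemmas (in particular $0\in\T(S)$ licenses the reassociations via Lemma~\ref{P:clatprops1}(1), and the vanishing of $p\cdot\pid\sqcap\pidc$ also follows directly from Lemma~\ref{P:clatprops1a}(3)), so I see no gap.
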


Lemma~\ref{P:nuomeganabla}(2) clearly shows that the infinite
iteration of a nonterminal multirelation has a terminal and a
nonterminal part.  Hence there is no direct relationship between
strictly infinite iteration and terminal or nonterminal elements.
This is in contrast to the language case, where $\mathsf{inf}$ models
infinite or divergent behaviour. The latter, according to
Lemma~\ref{P:nuomegadef}, is captured by $\nabla$ and or $U$ in the
multirelational model, which is similar to the relational case. As
suggested by alternating automata, terminal parts of a multirelation
rather model success or failure states, or winning or loosing states
in a game based scenario, but not nontermination.

The final part of this section sets up the correspondence between
infinite iteration and notions of deflationarity and wellfoundnedness,
as they have been studied in the relational model by~\cite{Struth12}.  We
call an element $x$ \emph{$\omega$-trivial} if $x^\omega = 0$,
\emph{deflationary} if $\forall y.\ (y\le x\cdot y\Rightarrow y=0)$
and \emph{wellfounded} if $\forall y.\ (d(y) \le d(x\cdot
y)\Rightarrow d(y)=0)$.
\begin{proposition}\label{P:wfequiv}
Let $x$ be an element of a c-quantale. The following statements are equivalent:
\begin{enumerate}
\item $x$ is wellfounded;
\item $x$ is deflationary;
\item $x$ is $\omega$-trivial.
\end{enumerate}
\end{proposition}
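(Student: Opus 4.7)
The plan is a cyclic chain $(1)\Rightarrow(2)\Rightarrow(3)\Rightarrow(1)$. For $(1)\Rightarrow(2)$, I first record that the domain map $d(z)=(z\cdot\pid)\para\sid$ is isotone, which follows immediately from right-isotonicity of both $\cdot$ and $\para$ in a proto-trioid. Given $y\le x\cdot y$, isotonicity gives $d(y)\le d(x\cdot y)$, wellfoundedness then forces $d(y)=0$, and Proposition~\ref{P:domaxverif}(3) delivers $y=d(y)\cdot y=0$. For $(2)\Rightarrow(3)$, Lemma~\ref{P:omegaprops}(1) gives $x^\omega=x\cdot x^\omega$, so deflationarity applied at $x^\omega$ immediately yields $x^\omega=0$.

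The substantive direction is $(3)\Rightarrow(1)$. Assume $x^\omega=0$ and $d(y)\le d(x\cdot y)$. The locality law $d(x\cdot y)=d(x\cdot d(y))$ from Proposition~\ref{P:domaxverif}(4) rewrites the hypothesis as $d(y)\le\langle x\rangle d(y)$, so coinduction for the greatest fixpoint $\nabla x$ of $\langle x\rangle$ gives $d(y)\le\nabla x$. It then suffices to show $\nabla x=0$. My plan is to lift the subidentity-level inequality $\nabla x\le d(x\cdot\nabla x)$ to the $\omega$-coinduction premise $\nabla x\cdot U\le x\cdot(\nabla x\cdot U)$, then invoke $\omega$-coinduction to conclude $\nabla x\cdot U\le x^\omega=0$, and finally deduce $\nabla x=\nabla x\cdot\sid\le\nabla x\cdot U=0$, hence $d(y)=0$.

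The principal obstacle lies in this lift. Right-multiplying the coinduction premise by $U$ and expanding via the identity $d(a)\cdot U=a\cdot U+(a\cdot 0)\para U$ from Lemma~\ref{P:dompropsverif}(4) produces $\nabla x\cdot U\le x\cdot(\nabla x\cdot U)+\tau(x)\para U$, with a parasitic terminal summand rather than the clean premise for $\omega$-coinduction. I would absorb this summand by decomposing $x=\nu(x)+\tau(x)$ via Lemma~\ref{P:taunusplit} and handling the nonterminal and terminal parts separately; for the terminal part, Lemma~\ref{P:omegasplit} together with the $\omega$-unfold allows one to reabsorb $\tau(x)\para U$ into $x^\omega$, closing the loop. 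This split-based manipulation, rather than a uniform treatment of $x$, is exactly where the argument departs from the familiar relational case, where full associativity and distributivity make the translation between $\nabla$ and $x^\omega$ routine.
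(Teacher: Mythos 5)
Your directions $(1)\Rightarrow(2)$ and $(2)\Rightarrow(3)$ are fine. The gap is in $(3)\Rightarrow(1)$, at the ``lift'': after expanding $d(x\cdot\nabla x)\cdot U$ via Lemma~\ref{P:dompropsverif}(4), the first summand is $(x\cdot\nabla x)\cdot U$, not $x\cdot(\nabla x\cdot U)$, and the reassociation you perform silently is exactly the instance of associativity $(x\cdot d(y))\cdot z=x\cdot(d(y)\cdot z)$ that is \emph{not} derivable in a c-quantale --- it is precisely the extra hypothesis the paper must add to Proposition~\ref{P:nuomegadef}, Corollary~\ref{P:nuomegaapprox} and Lemma~\ref{P:nuomeganabla}, while Proposition~\ref{P:wfequiv} carries no such hypothesis. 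Your treatment of the parasitic summand is also looser than it needs to be: one cannot ``reabsorb'' $\tau(x)\para U$ into the coinduction premise, but one can kill it outright, since $x\cdot 0\le x\cdot(x\cdot 0)$ gives $\tau(x)\le x^\omega=0$ by $\omega$-coinduction. That repair does not, however, touch the associativity problem, so as it stands the argument does not close.

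The fix is to lift to terminal elements rather than to vectors, i.e.\ to multiply by $\pid$ instead of $U$: associativity with rightmost factor $\pid$ \emph{is} available, namely axiom (\ref{eq:c4}). From $d(y)\le d(x\cdot y)$, right-isotonicity gives $d(y)\cdot\pid\le d(x\cdot y)\cdot\pid$; Lemma~\ref{P:auxdomprops}(3) turns both sides into $y\cdot\pid\le(x\cdot y)\cdot\pid$, and (\ref{eq:c4}) reassociates the right-hand side to $x\cdot(y\cdot\pid)$. Now $\omega$-coinduction (or deflationarity applied directly) yields $y\cdot\pid\le x^\omega=0$, whence $d(y)=(y\cdot\pid)\para\sid=0\para\sid=0$. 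This needs no $\nabla$, no $U$, and no unproved associativity; your $U$-based route could only be rescued under the additional hypothesis $x\cdot(d(y)\cdot z)=(x\cdot d(y))\cdot z$, which would weaken the proposition.
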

In this respect, the behaviour of relations and multirelations is the
same.



\section{Counterexamples}\label{S:counterexamples}

This section collects some counterexamples for multirelations. The
second set, in particular, rules out variants of interchange laws, as
they arise for instance in the context of monoidal categories, in
shuffle languages or for partially ordered
multisets~\cite{BloomEsik,Gischer88,Kelly}. More abstractly, such laws
have been considered in concurrent Kleene algebras~\cite{HoareMSW11}.

\begin{lemma}\label{P:mrcounter1}
  There exist multirelations $R,S,T\in \M(X)$ such that 
  \begin{enumerate}
  \item $R\para R\not\subseteq R$,
  \item $R\not\subseteq R\para S$,
\item $R\para S\cap R\para T\not\subseteq R\para (S\cap T)$,
\item $(R\cdot S)\cdot T\not\subseteq R\cdot (S\cdot T)$,
\item $(R\cdot T)\para (S\cdot T)\not\subseteq (R\para S)\cdot T$,
  \item $(R\cdot S)\para  (R\cdot T)  \not\subseteq R\cdot (S\para T)$, even for $R\para R\subseteq R$, $S\para S\subseteq S$ and $T\para T\subseteq T$. 
  \end{enumerate}
\end{lemma}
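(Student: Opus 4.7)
The plan is to exhibit small finite multirelations on at most four or five points for each of the six stated non-inclusions; items (1)--(3) concern the behaviour of $\para$ in isolation (with $\cup$ and $\cap$), while items (4)--(6) probe the interaction between $\cdot$ and $\para$.

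For (1)--(3) I work over $X=\{a,b\}$ and take $R=\{(a,\{a\}),(a,\{b\})\}$. Then $R\para R$ produces $(a,\{a,b\})\notin R$, giving (1). For (2), any nonempty $R$ together with $S=\emptyset$ forces $R\para S=\emptyset\not\supseteq R$. For (3), I combine the same $R$ with $S=\{(a,\{a\})\}$ and $T=\{(a,\{b\})\}$: the pair $(a,\{a,b\})$ lies in both $R\para S$ and $R\para T$, yet $S\cap T=\emptyset$ makes $R\para(S\cap T)$ empty.

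For (4) I must flag an apparent tension: the preliminary proto-dioid-style laws in Section~\ref{S:preliminaries} record $(R\cdot S)\cdot T\subseteq R\cdot(S\cdot T)$ unconditionally, so the literally stated non-inclusion cannot be witnessed. The genuine failure of associativity is in the opposite direction, $R\cdot(S\cdot T)\not\subseteq(R\cdot S)\cdot T$; assuming this reading, I would supply $R=\{(a,\{b,b'\})\}$, $S=\{(b,\{c\}),(b',\{c\})\}$ and $T=\{(c,\{d_1\}),(c,\{d_2\})\}$. The left side contains $(a,\{d_1,d_2\})$, obtained by choosing $h(b)=\{d_1\}$ and $h(b')=\{d_2\}$ independently inside $S\cdot T$, whereas $R\cdot S=\{(a,\{c\})\}$ collapses $b$ and $b'$ to a single copy of $c$, so the subsequent function into $T$ delivers only singletons. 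For (5), Lemma~\ref{P:interactionlemma}(2) guarantees the inclusion whenever $T\para T\subseteq T$, so the counterexample must violate this; choosing $R=S=\{(a,\{a\})\}$ and $T=\{(a,\{a\}),(a,\{b\})\}$ yields $(R\cdot T)\para(S\cdot T)=T\para T\ni(a,\{a,b\})$ while $(R\para S)\cdot T=T$ does not contain $(a,\{a,b\})$.

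Item (6) is the main obstacle, because all three arguments must be parallel-idempotent, which rules out the natural splittings of internal choice inside $S$ or $T$; the asymmetry must instead be built into $R$. I would take $R=\{(a,\{b\}),(a,\{b'\}),(a,\{b,b'\})\}$, which is closed under union at $a$, with $S=\{(b,\{c\})\}$ and $T=\{(b',\{d\})\}$. Both $R\cdot S=\{(a,\{c\})\}$ and $R\cdot T=\{(a,\{d\})\}$ are produced from different branches of $R$, so $(R\cdot S)\para(R\cdot T)=\{(a,\{c,d\})\}$; but $S$ and $T$ have disjoint first-component domains, hence $S\para T=\emptyset$ and $R\cdot(S\para T)=\emptyset$. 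Parallel idempotence of $S$ and $T$ is automatic from having a single pair each, and of $R$ by construction. Verification of each witness reduces to routine case enumeration directly from the definitions.
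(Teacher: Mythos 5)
Your proposal is correct and follows the same strategy as the paper: exhibiting small concrete witnesses and verifying each non-inclusion by direct computation from the definitions. Items (2), (3), (5) and (6) all check out (in (6) your witnesses are genuinely parallel-idempotent and $S\para T=\emptyset$ because $S$ and $T$ relate disjoint sets of inputs, so $R\cdot(S\para T)=R\cdot\emptyset=\emptyset$ since $R$ contains no pair $(a,\emptyset)$); the paper instead builds its witness for (6) around a terminal pair $(a,\emptyset)\in R$, and it delegates (4) and (5) to a citation rather than displaying witnesses. Two of your deviations are actually improvements. First, your reading of (4) is the right one: the literal non-inclusion $(R\cdot S)\cdot T\not\subseteq R\cdot(S\cdot T)$ contradicts the subassociativity law $(R\cdot S)\cdot T\subseteq R\cdot(S\cdot T)$ listed among the proto-dioid laws in Section~\ref{S:preliminaries}, so the intended statement must be the reversed inclusion, and your witness (two intermediate points collapsing onto a single point $c$ at which $T$ offers two internal choices) correctly separates $R\cdot(S\cdot T)$ from $(R\cdot S)\cdot T$. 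Second, your witness for (1), namely $R=\{(a,\{a\}),(a,\{b\})\}$, is the correct one: the paper's displayed choice $R=\{(a,\{a\}),(b,\{b\})\}$ lies in $\Su(X)$ and hence satisfies $R\para R=R$ by Lemma~\ref{P:idemlemma}(1), so the union $(a,\{a,b\})$ claimed there cannot arise; one needs two distinct internal choices at the \emph{same} input, exactly as you arranged.
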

\begin{proof}
  \begin{enumerate}
  \item Let $R=\{(a,\{a\}),(b,\{b\})\}$. Then $R\para R = \{(a,\{a\}),(b,\{b\}),(a,\{a,b\})\}\not\subseteq R$. 
  \item Let $R=\{(a,\{a\})\}$ and $S=\{(a,\{a,b\})\}$. Then 
    $R\not\subseteq S = R\para S$. 
  \item Let $R=\{(a,\{b,c\})\}$, $S=\{(a,\{b\})\}$ and 
    $T=\{(a,\{c\})\}$. Then 
    \begin{equation*}
      R\para S\cap R\para T=R\para S=R\para T=R\supset \emptyset = S\cap T = R\para (S\cap T). 
    \end{equation*}
\item A counterexample has been given in~\cite{FurusawaS14}. 
\item A counterexample  has again been given in~\cite{FurusawaS14}. 
  \item Let $R=\{(a,\{a\}),(a,\emptyset)\}$, $S=\{(a,\{a\})\}$ and 
    $T=\emptyset$. Then 
    \begin{equation*}
      R\cdot (S\para T) = \{(a,\emptyset)\} \subset \{(a,\{a\}),(a,\emptyset)\} = (R\cdot S) \para (R\cdot T). 
    \end{equation*}
    It is straightforward to check that $R\para R\subseteq R$,
    $S\para S\subseteq S$ and $T\para T\subseteq T$. 
  \end{enumerate}
\end{proof}

Our next counterexamples explains the difference between algebras of
multirelations and concurrent Kleene algebras. The latter are based on
a full sequential dioid and a commutative one, with shared units of
sequential and concurrent composition. The sequentiality-concurrency
interaction is captured by the interchange law
\begin{equation*}
  (w\para x)\cdot (y\para z)\le (w\cdot y)\para (x\cdot z). 
\end{equation*}
From this law, the small interchange laws $ (x\para y)\cdot z\le
x\para (y\cdot z)$, $x\cdot (y\para z)\le (x\cdot y)\para z$ and
$x\cdot y\le x\para y$ are derivable in the presence of a shared unit
for sequential and concurrent composition.  These laws hold, in
particular, in certain pomset languages and in word languages under
the regular operations and with concurrent composition interpreted as
shuffle. However, the next lemma refutes all variants of interchange
between sequential and concurrent composition.  We write $R\asymp S$
if $R\not\subseteq S$ and $R\not\supseteq S$.

\begin{lemma}\label{P:interchangecounter}
  There exist $R,S,T,U\in \M(X)$ such that 
  \begin{enumerate}
  \item $(R\para S)\cdot (T\para U)\asymp (R\cdot T)\para (S\cdot U)$,
\item $(R\para S)\cdot T\asymp R\para (S\cdot T)$,
\item $R\cdot (S\para T)\asymp (R\cdot S)\para T$,
\item $R\cdot S\asymp R\para S$. 
  \end{enumerate}
\end{lemma}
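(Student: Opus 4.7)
The plan is to construct explicit small finite examples for each of the four claims, exploiting the mismatch between how sequential and parallel composition interact with source states of pairs. Recall that $(a,A)\in R\cdot S$ requires $R$ to move $a$ to some intermediate $B$ and then $S$ to act on \emph{each} element of $B$, whereas $(a,C)\in R\para S$ requires both $R$ and $S$ to already carry a pair with source $a$. This asymmetry is what forces all four interchange laws to fail in both directions.

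For (4) the minimal example $R=\{(a,\{a\})\}$ and $S=\{(a,\{b\})\}$ over $X=\{a,b\}$ should work: sequential composition substitutes $a$ by $\{b\}$ and yields $\{(a,\{b\})\}$, while parallel composition yields $\{(a,\{a,b\})\}$, and these are clearly incomparable. For (2) and (3) the idea is to engineer $T$ so that it contains two pairs with different sources---one matching the source of $R$ (or $S$) and another matching an intermediate state reached through $R$---so that each order of composition picks up a different pair of $T$. Concretely, for (3) I propose $R=\{(a,\{b\})\}$, $S=\{(b,\{c\})\}$, $T=\{(b,\{d\}),(a,\{e\})\}$: then $R\cdot(S\para T)$ uses $T$'s $b$-pair (as $R$ routes through $b$) and gives $\{(a,\{c,d\})\}$, whereas $(R\cdot S)\para T$ uses $T$'s $a$-pair (as the parallel composition happens at the outer level) and gives $\{(a,\{c,e\})\}$. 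For (2) I propose the dually structured $R=\{(a,\{b\})\}$, $S=\{(a,\{c\})\}$, $T=\{(b,\{d\}),(c,\{e\})\}$: the left-hand side $(R\para S)\cdot T$ rewrites both branches via $T$ and produces $\{(a,\{d,e\})\}$, while $R\para(S\cdot T)$ keeps the raw $b$ from $R$ and only rewrites the $c$-branch, giving $\{(a,\{b,e\})\}$.

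For (1), I would combine the (2)-style asymmetry with a fourth multirelation $U$ that mirrors $T$ on a disjoint target alphabet. With $R=\{(a,\{b\})\}$, $S=\{(a,\{c\})\}$, $T=\{(b,\{d\}),(c,\{e\})\}$ and $U=\{(b,\{f\}),(c,\{g\})\}$, the left side $(R\para S)\cdot(T\para U)$ pushes the joint intermediate $\{b,c\}$ through $T\para U$ and outputs $\{(a,\{d,e,f,g\})\}$, whereas the right side $(R\cdot T)\para(S\cdot U)$ only pairs the $b$-branch with $T$ and the $c$-branch with $U$, yielding $\{(a,\{d,g\})\}$; these are incomparable.

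Verification in each case is routine---just unfold the definitions of $\cdot$ and $\para$ on the given finite multirelations and inspect. The only real subtlety, most visible in (1) and (3), is to choose the source states and targets so that \emph{both} sides are nonempty (otherwise only one direction of inclusion would fail, giving strict containment rather than $\asymp$) and so that the output sets differ by at least one pair in each direction simultaneously. Packing two disjoint source possibilities into $T$ (and into $U$ for (1)) is the device that achieves this balance.
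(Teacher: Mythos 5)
Your proposal is correct and takes essentially the same approach as the paper: explicit small finite counterexamples verified by unfolding the definitions of $\cdot$ and $\para$, with both sides arranged to be nonempty singletons containing distinct pairs so that incomparability holds. The only difference is cosmetic---the paper works over a two-element set $X=\{a,b\}$ with heavily reused multirelations, whereas you spread the targets over fresh letters, which makes the bookkeeping more transparent but changes nothing substantive.
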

\begin{proof}
  \begin{enumerate}
  \item Let $R=\{(a,\{a\}),(b,\{a,b\})\}$ and $S=\{(a,\{a\}),(b,\{a\})\}$. Then 
    \begin{equation*}
      (R\para S)\cdot (R\para S) = \{(a,\{a,b\}),(b,\{a,b\})\} \asymp \{(a,\{a\})\} =(R\cdot R) \para  (S\cdot S). 
    \end{equation*}
\item Let $R= \{(a,\{a,b\})\} = S$ and 
  $T=\{(a,\{a\}),(b,\{a\})\}$. Then 
  \begin{equation*}
    (R\para R) \cdot T = R \cdot T = \{(a,\{a\})\} \asymp R = R\para (R\cdot T). 
  \end{equation*}
\item Let $R = \{(a,\{a,b\})\}$, $S=\{(a,\{a\}),(b,\{a\})\}$ and $T= \{(a,\{a\}),(b,\{a,b\})\}$. Then 
  \begin{equation*}
    R\cdot (S\para T) = R \cdot T = R \asymp \{(a,\{a\})\} = (R\cdot S)\para T 
  \end{equation*}
\item Let $R = \{(a,\{a,b\})\}$ and $S=\{(a,\{a\}),(b,\{a\})\}$. Then 
  \begin{equation*}
R\cdot S = \{(a,\{a\})\} \asymp \{(a,\{a,b\})\} = R \para S. 
\end{equation*}
  \end{enumerate}
\end{proof}


\section{Remarks on Up-Closed Multirelations}\label{S:upclosed}


This section briefly discusses the relationship between general
multirelations \`a la Peleg and the up-closed multirelations studied
by Parikh and others. As explained in Section~\ref{S:seqsubids}, a
multirelation is up-closed if $(a,A)\in R$ and $A\subseteq B$ imply
$(a,B)\in R$. This is the case if and only if $R=R\para U$.

For this special case, Parikh has defined sequential composition as
\begin{equation*}
  (a,A)\in R;S \Leftrightarrow \exists B.\ (a,B)\in R \wedge \forall b\in B.\ (b,A)\in S.
\end{equation*}
This corresponds to Peleg's definition with $f$ instantiated to
$\lambda x.A$.  Peleg's definition, however, does not preserve
up-closure.  Let $R'=\{(a,\emptyset)\}$, $R=R'\para U$ and
$S=\emptyset$. Then $R,S\in\U(X)$, whereas $R\cdot
S=R'\not\in\U(X)$. Hence Peleg's definition does not simply specialise
to Parikh's when multirelations are up-closed.  The next lemma
captures the precise relationship.
\begin{lemma}\label{P:parikhpeleg}
  Let $R,S\in \M(X)$. Then $R;(S \para U) = (R\cdot S)\para U$.
\end{lemma}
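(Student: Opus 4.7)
The plan is to fix $(a,A)\in X\times 2^X$ and unfold both sides of the equation using the three relevant definitions: Parikh's sequential composition $;$, Peleg's sequential composition $\cdot$, and the formula $T\para U=\{(a,C)\mid\exists D.\ (a,D)\in T\wedge D\subseteq C\}$ from Lemma~\ref{P:constprops}(5). After unfolding, I expect to see that both sides collapse to the same criterion, namely
\begin{equation*}
\exists B.\ (a,B)\in R\ \wedge\ \exists f.\ \forall b\in B.\ \bigl((b,f(b))\in S\wedge f(b)\subseteq A\bigr).
\end{equation*}

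For the inclusion $(R\cdot S)\para U\subseteq R;(S\para U)$, suppose $(a,A)\in(R\cdot S)\para U$. Then there exist $C\subseteq A$, $B$ with $(a,B)\in R$, and $f$ with $(b,f(b))\in S$ for all $b\in B$ and $\bigcup f(B)=C\subseteq A$. In particular, for each $b\in B$, $f(b)\subseteq A$, so $(b,A)\in S\para U$ by Lemma~\ref{P:constprops}(5). By Parikh's definition, $(a,A)\in R;(S\para U)$.

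For the converse inclusion, suppose $(a,A)\in R;(S\para U)$. Then there is $B$ with $(a,B)\in R$ and, for every $b\in B$, a set $D_b$ with $(b,D_b)\in S$ and $D_b\subseteq A$. Choose $f$ with $f(b)=D_b$; this gives $(b,f(b))\in S$ for all $b\in B$ and $\bigcup f(B)\subseteq A$. Set $C=\bigcup f(B)$. Then $(a,C)\in R\cdot S$ and $C\subseteq A$, so $(a,A)\in(R\cdot S)\para U$.

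There is no real obstacle here, only bookkeeping: the core content is that Parikh's composition is exactly Peleg's composition specialised to constant choice functions $f\equiv A$, and that the up-closure operator $({-})\para U$ absorbs the difference between ``constant'' and ``variable'' choice functions because one can re-collect the union and then re-enlarge to $A$. The only point worth being a little careful about is the (mild) use of choice to collect the per-$b$ witnesses $D_b$ into a single function $f$ in the backward direction.
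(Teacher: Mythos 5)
Your proof is correct and is the natural direct argument: unfold both sides pointwise, use Lemma~\ref{P:constprops}(5) for $(\para U)$, and assemble/disassemble the choice function $f$ from the per-element witnesses $D_b$ (including the vacuous case $B=\emptyset$, which both sides handle identically). The paper does not display a proof for this lemma, deferring to its Isabelle formalisation, but the intended argument is exactly this kind of set-theoretic unfolding, so there is nothing to flag.
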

\begin{equation*}
  \def\labelstyle{\normalsize}
\xymatrix @R=2pc @C=5pc{
 \M(X)\times \M(X)\ar[r]^{1_{\M(X)}\times (\para U)}\ar[d]_{\cdot \ } &
   \M(X)\times \U(X)\ar[d]^{ \ ;}\\
\M(X)\ar[r]_{(\para U)} & \U(X) 
}
\end{equation*}
Consequently, Parikh's sequential composition preserves up-closure,
\begin{equation*}
  (R;(S\para U))\para U = (R \cdot S)\para U\para U = (R \cdot S)\para
  U = R;(S\para U),
\end{equation*}
and join.  Moreover, Parikh composition of up-closed multirelations
can be simulated by Peleg's as
$  (R\para U) ; (S \para U) = ((R\para U)\cdot (S\para U))\para U$.

Parikh composition of up-closed multirelations is associative.  Hence
it might be possible to derive this property in c-lattices, but it
does not seem straightforward, and specific associativity laws for
Peleg's composition in the presence of up-closed elements might be
needed. A deeper investigation is left for future work.

The case of parallel composition in the up-closed case is very simple
in comparison. Rewitzky and Brink~\cite{RewitzkyB06} have studied
parallel composition of up-closed multirelations under the name
\emph{power union} and shown that it simply yields a contrived
definition of meet. We can reproduce this result in the setting of
c-lattices.
\begin{lemma}\label{P:upclosedpar}
  In every c-lattice, 
\begin{enumerate}
\item$x\para U\sqcap y\para U = (x\para U)\para
  (y\para U)$,
  \item $(x\para U) \para (x\para U) = x\para U$,
  \item $(x\para y)\cdot (z\para U) = (x\cdot (z\para U))\para (y\cdot 
    (z\para U))$.
\end{enumerate}
\end{lemma}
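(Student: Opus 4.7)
My plan is to handle the three parts in the order $(2), (1), (3)$, since $(2)$ feeds directly into $(3)$ via axiom $(\ref{eq:cl4})$, and $(1)$ is analogous in style to $(2)$. The key ingredients are all already in hand: $U$ is the top element of the lattice, $\para$ is isotone (as multiplication in a proto-dioid), $U\para U = U$ by Lemma~\ref{P:clatprops2}(2), and the general inequalities $x \le x\para x$ and $x\sqcap y \le x\para y$ from Lemma~\ref{P:clatprops1}(3,4).

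For $(2)$, I would establish both inclusions. For the $\ge$ direction, Lemma~\ref{P:clatprops1}(3) instantiated at $x\para U$ gives $x\para U \le (x\para U)\para(x\para U)$ immediately. For the $\le$ direction, the trivial bound $x\para U \le U$ combined with isotonicity of $\para$ yields
\begin{equation*}
(x\para U)\para(x\para U) \;\le\; U\para(x\para U) \;=\; x\para(U\para U) \;=\; x\para U,
\end{equation*}
using commutativity and associativity of $\para$ together with $U\para U = U$.

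For $(1)$, the $\ge$ direction is immediate from Lemma~\ref{P:clatprops1}(4) applied to $x\para U$ and $y\para U$. For the $\le$ direction I would argue componentwise: since $y\para U \le U$, isotonicity gives $(x\para U)\para(y\para U) \le (x\para U)\para U = x\para U$, and symmetrically $(x\para U)\para(y\para U)\le y\para U$. Taking the meet of these two upper bounds yields $(x\para U)\para(y\para U) \le x\para U \sqcap y\para U$.

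For $(3)$, the observation is that $(2)$ supplies exactly the hypothesis of axiom $(\ref{eq:cl4})$: taking the element $z\para U$ in the role of the axiom's $z$, $(2)$ gives $(z\para U)\para(z\para U) = z\para U \le z\para U$, so $(\ref{eq:cl4})$ fires to yield
\begin{equation*}
(x\para y)\cdot(z\para U) \;=\; (x\cdot(z\para U))\para(y\cdot(z\para U)),
\end{equation*}
as required. I do not anticipate a real obstacle here; the only point worth flagging is the conceptual one that $(2)$ is not merely a corollary but precisely the parallel-idempotence side condition that makes the non-trivial interaction axiom $(\ref{eq:cl4})$ applicable to up-closed elements, which is why it appears as an intermediate step rather than being absorbed into the final calculation.
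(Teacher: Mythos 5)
Your proof is correct and is essentially the intended argument: the paper omits the proof as routine, but the ingredients it sets up for exactly this purpose (Lemma~\ref{P:clatprops1}(3)--(4), $U\para U=U$ from Lemma~\ref{P:clatprops2}, isotonicity of $\para$, and axiom (\ref{eq:cl4}) triggered by the idempotence in part (2)) are precisely the ones you use. Nothing to add.
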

\begin{corollary}\label{P:upclosedparcor}
  If $R,S\in \M(X)$ are up-closed, then $R||S=R\cap S$. 
\end{corollary}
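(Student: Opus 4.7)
The plan is to observe that the corollary is an immediate specialisation of Lemma~\ref{P:upclosedpar}(1) once we unfold the characterisation of up-closed multirelations.

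First, I would recall from Section~\ref{S:seqsubids} that $R \in \U(X)$ is equivalent to $R = R \para U$, and dually $S = S \para U$ when $S \in \U(X)$. Second, I would invoke Proposition~\ref{P:mrclattice}, which ensures $(\M(X), \cap, \cdot, \para, \emptyset, \sid, \pid, U, \pidc)$ is a c-lattice, so that the identity of Lemma~\ref{P:upclosedpar}(1) applies in the concrete multirelational model with $\sqcap$ interpreted as $\cap$. Instantiating $x := R$ and $y := S$ in that identity yields
\begin{equation*}
(R \para U) \cap (S \para U) = (R \para U) \para (S \para U).
\end{equation*}

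Finally, substituting $R = R \para U$ and $S = S \para U$ on both sides collapses the identity to $R \cap S = R \para S$, which is the claim. There is essentially no obstacle here: the real content was already carried by Lemma~\ref{P:upclosedpar}(1), and the corollary amounts to rewriting fixpoints of $(\para U)$ by the elements they fix. The only thing worth flagging is that one should cite the fixpoint characterisation $\U(X) = \{R \in \M(X) \mid R \para U = R\}$ explicitly, so that the reader sees why the hypothesis ``$R, S$ up-closed'' lets us discharge the $(\para U)$-tags on each side.
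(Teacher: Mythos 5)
Your proposal is correct and follows exactly the route the paper intends: the corollary is the specialisation of Lemma~\ref{P:upclosedpar}(1) to the multirelational c-lattice of Proposition~\ref{P:mrclattice}, with the hypothesis of up-closure discharged via the fixpoint characterisation $\U(X)=\{R\in\M(X)\mid R\para U=R\}$. Nothing is missing.
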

By this result, up-closed multirelations are also meet closed, and
therefore closed under the usual operations, which is well known.

Our results for c-lattices and c-quantales hold automatically in the
up-closed case, interpreting parallel composition as intersection
and translating Peleg's sequential composition into Parikh's. In the
up-closed case, in particular, the interaction of sequential and
parallel composition becomes a sub-distributivity law for sequential
composition over meet, which follows directly from the greatest lower
bound properties of meet and isotonicity of sequential composition.

However there are differences as well. First, $\pid\para U =U$, hence
the up-closure of the unit of parallel composition is $U$, which is
consistent with $U$ being the unit of meet. More generally, up-closure
turns parallel subidentities into vectors and therefore
$\T(X)=\V(X)$ in that context. Second, the definitions of
subidentities, of domain as $d(R)=(R\cdot \pid)\para \sid$, and of the
corresponding box and diamond modalities, do not carry over directly
to the up-closed case. In particular, the definitions of the
sequential unit and of sequential subidentities are now based on the
$\in$-relation instead of the function $\lambda x.\{x\}$. We leave a
reconstruction of the subalgebra of up-closed elements in the context
of c-lattices for future work as well. In particular, Parihk's game
logic can be based on the domain and antidomain axioms for pre-dioids
given in~\cite{DesharnaisStruth08} and linked with concepts from
previous sections..

A duality between up-closed multirelations and sets of isotone
predicate transformers has already been noticed by Parikh~\cite{Parikh85}.
By this isomorphism, sequential composition of up-closed
multirelations is associated with function composition of monotone
predicate transformers. Obviously, a similar isomorphism between
Peleg's multirelations and a class of predicate transformers cannot
exist since sequential composition of multirelations is not
associative. Otherwise, a non-associative operation would have to be
defined on predicate transformers, which may not lead to a natural
concept.


\section{Conclusion}\label{S:conclusion}

We have investigated the structure and algebra of multirelations,
which model alternation or dual nondeterminism in a relational
setting, and which form the semantics of Peleg's concurrent dynamic
logic, extending a previous algebraic approach to concurrent dynamic
algebra. Apart from the derivation of a considerable number of
algebraic properties which arise from the union, intersection,
sequential and parallel composition, and finite and infinite iteration
of multirelations, we have also studied the structure of various
subalgebras and the relationships between them, as illustrated in the
diagrams of Figure~\ref{F:extendeddiagram}. In particular we found
that a domain operation, which is important for these investigations,
can be defined explicitly in this setting.

The operations on multirelations are rather complex; their
interactions are intricate.  Sequential composition, for instance,
requires a higher-order definition, its manipulation often depends on
the axiom of choice.  Algebraic axioms similar to Tarski's relation
algebra are therefore desirable to hide this complexity.  To address
this we have developed algebraic axiom systems ranging from c-monoids
to c-quantales and carried out most of our work at that level.  At the
moment, these axiom systems are less compact than those for up-closed
multirelations or even relation algebra. It seems that much of the
power of relation algebra comes from the operation of conversion, to
which we do not know a multirelational counterpart. There is certainly
scope for completing, revising and perhaps simplifying our axioms.

Hence, from a mathematical point of view, more concise and
comprehensive axiom systems seem desirable and questions such as
representability, axiomatisability, and decidability at least of
fragments are interesting---the class of representable relation
algebras, which are isomorphic to algebras of binary relations, is not
finitely axiomatisable~\cite{Monk64} and a similar result might be
expected here. Other directions for research include the investigation
of the up-closed case in relationship with c-lattices and c-quantales,
the study of other classes of multirelations in which sequential
composition is associative, the algebraic reconstruction of Parikh's
game logic and the association of multirelations with predicate
transformer algebras. Beyond that, a reference formalisation of
multirelational algebras in Isabelle, including their modal variants
such as dynamic and game logics, has been developed in parallel to
this article~\cite{Struth15}. Its use in the formal analysis of games
and the verification of computing systems with dual nondeterminism are
next steps towards taming multirelations.

\paragraph*{Acknowledgements.}
  The authors are grateful to Alasdair Armstrong for his help with
  intricate Isabelle proofs. They would like to thank Yasuo Kawahara,
  Koki Nishizawa, Toshiori Takai and Norihiro Tsumagari for helpful
  discussions. The second author is grateful to the Department of
  Mathematics and Computer Science at Kagoshima University and the
  Department of Computer Science at the University of Tokyo for their
  hospitality while working on this article.

\bibliographystyle{plain}
\bibliography{cda}

\end{document}